\Crefname{enumtry}{try1}{}   %Appendix}{Appendices}
\newcommand{\rdrm}{\textsf{Word RAM~}}
\newcommand{\srt}{\mathsf{SORT}}
\newcommand{\ssa}{\mathsf{SSA}}
\DeclarePairedDelimiter{\nor2}{\lVert}{\rVert}
\newtheorem{question}{Question}
\newtheorem{goal}{Goal}
\def\denseformat{
\setlength{\textheight}{9in}
\setlength{\textwidth}{6.9in}
\setlength{\evensidemargin}{-0.2in}
\setlength{\oddsidemargin}{-0.2in}
\setlength{\headsep}{10pt}
\setlength{\topmargin}{-0.3in}
\setlength{\columnsep}{0.375in}
\setlength{\itemsep}{0pt}
}
\newtheorem{theorem}{Theorem}[section]
\newtheorem{definition}[theorem]{Definition}
\newtheorem{claim}[theorem]{Claim}
\newtheorem{lemma}[theorem]{Lemma}
\newtheorem{remark}[theorem]{Remark}
\newtheorem{observation}[theorem]{Observation}
\def\boldhead#1:{\par\vskip 7pt\noindent{\bf #1:}\hskip 10pt}
\def\ithead#1:{\par\vskip 7pt\noindent{\it #1:}\hskip 10pt}
\def\inline#1:{\par\vskip 7pt\noindent{\bf #1:}\hskip 10pt}
\def\midinline#1:{\par\noindent{\bf #1:}\hskip 10pt}
\def\dnsinline#1:{\par\vskip -7pt\noindent{\bf #1:}\hskip 10pt}
\def\ddnsinline#1:{\newline{\bf #1:}\hskip 10pt}
\def\largeinline#1:{\par\vskip 7pt\noindent{\large\bf #1:}\hskip 10pt}
\long\def\commhide #1\commhideend{}
\long\def\commfull #1\commend{#1}
\long\def\commabs #1\commenda{}
\long\def\commtim #1\commendt{#1}
\long\def\commb #1\commbend{}
\long\def\commedit #1\commeditend{} % Editing comments, marked also by $>>>$
\long\def\commB #1\commBend{}       % Omit in 1996 (both TR and Siena)
\long\def\commex #1\commexend{}     % LN home exercise (hide solutions)
\long\def\commsiena #1\commsienaend{}  % omit in Siena, show in TR
\long\def\commBI #1\commBIend{}  % omit in Bar-Ilan
\long\def\CProof #1\CQED{}
\def\qed{\mbox{}\hfill $\Box$\\}
\def\blackslug{\hbox{\hskip 1pt \vrule width 4pt height 8pt
    depth 1.5pt \hskip 1pt}}
\def\QED{\quad\blackslug\lower 8.5pt\null\par}
\long\def\PPP#1{\noindent{\bf Proof:}{ #1}{\quad\blackslug\lower 8.5pt\null}}
\long\def\denspar #1\densend
\newif\ifnotesw\noteswtrue% T to show box & marginal notes; F supresses.
\ifnotesw\marginpar[\hfill\(\top\)]{\(\top\)}\fi}%
\ifnotesw\marginpar[\hfill\(\bot\)]{\(\bot\)}\fi}
\newcommand{\mnote}[1]%
    {\ifnotesw\marginpar%
        [{\scriptsize\it\begin{minipage}[t]{\marginparwidth}
        \raggedleft#1%
                        \end{minipage}}]%
        {\scriptsize\it\begin{minipage}[t]{\marginparwidth}
        \raggedright#1%
                        \end{minipage}}%
    \fi}
\def\MathF{\hbox{\rm I\kern-2pt F}}
\def\MathP{\hbox{\rm I\kern-2pt P}}
\def\MathR{\hbox{\rm I\kern-2pt R}}
\def\MathZ{\hbox{\sf Z\kern-4pt Z}}
\def\MathN{\hbox{\rm I\kern-2pt I\kern-3.1pt N}}
\def\MathC{\hbox{\rm \kern0.7pt\raise0.8pt\hbox{\footnotesize I}
\kern-4.2pt C}}
\def\MathQ{\hbox{\rm I\kern-6pt Q}}
\newsavebox{\ttop}\newsavebox{\bbot}
\def\eps{\epsilon}
\def\epsi{\varepsilon}
\newcommand{\mst}{\mathrm{MST}}
\newcommand{\pathg}{\mathsf{path~greedy}}
\newcommand{\greedy}{\mathsf{greedy}}
\newcommand{\msttilde}{\widetilde{\mst}}
\newcommand{\Ftilde}{\widetilde{F}}
\newcommand{\Ttilde}{\widetilde{T}}
\newcommand{\Ptilde}{\widetilde{P}}
\newcommand{\tm}{\mathsf{Time}}
\newcommand{\source}{\mathsf{source}}
\newcommand{\lt}{\mathsf{Light}}
\newcommand{\high}{\mathsf{high}}
\newcommand{\lowp}{\mathsf{low}^+}
\newcommand{\lowm}{\mathsf{low}^-}
\newcommand{\cone}{\mathsf{Cone}}
\newcommand{\prune}{\mathsf{pruned}}
\newcommand{\geom}{\mathsf{Geom}}
\newcommand{\gen}{\mathsf{Gen}}
\newcommand{\minor}{\mathsf{Minor}}
\newcommand{\internal}{\mathsf{intrnl}}
\newcommand{\prefix}{\mathsf{pref}}
\def\eps{\epsilon}
\DeclareMathAlphabet{\mathpzc}{OT1}{pzc}{m}{it}
\newcommand{\dm}{\mathsf{Dm}}
\newcommand{\adm}{\mathsf{Adm}}
\newcommand {\ignore} [1] {}
\newcommand{\ma}{\mathcal{A}}
\newcommand{\mb}{\mathcal{B}}
\newcommand{\mc}{\mathcal{C}}
\newcommand{\md}{\mathcal{D}}
\newcommand{\mg}{\mathcal{G}}
\newcommand{\mv}{\mathcal{V}}
\newcommand{\me}{\mathcal{E}}
\renewcommand{\mp}{\mathcal{P}}
\newcommand{\mk}{\mathcal{K}}
\newcommand{\mx}{\mathcal{X}}
\newcommand{\my}{\mathcal{Y}}
\newcommand{\mz}{\mathcal{Z}}
\newcommand{\mi}{\mathcal{I}}
\newcommand{\mbe}{\mathbf{e}}
\DeclareMathOperator{\MST}{\mathrm{MST}}
\DeclareMathOperator{\defi}{\overset{\mathrm{def.}}{=} }
\date{}
\title{A Unified Framework of Light Spanners I: \\Fast (Yet Optimal) Constructions\thanks{This paper is the first of two papers that correspond together to our STOC 2023 paper, titled ``A Unified Framework for Light Spanners''. The STOC 2023 paper contains the formal statements of the main results, providing only sketched proofs or no proofs for most of the results. These two papers extend the STOC 2023 paper significantly, containing together the full details and proofs of all results. The preprint of the other paper can be found at \url{https://arxiv.org/abs/2111.13748}}}
\author{Hung Le}
\affil{University of Massachusetts Amherst}
\author{Shay Solomon}
\affil{Tel Aviv University}
\begin{document}
\pagenumbering{gobble}
\maketitle
\begin{abstract}
We present a {\em unified framework} for constructing light spanners in a variety of graph classes. Informally, the framework boils down to a {\em transformation} from sparse spanners to light spanners; since the state-of-the-art for sparse spanners is much more advanced than that for light spanners, such a transformation is powerful. Our framework is developed in two papers. The current paper is the first of the two --- it lays the {\em basis of the unified framework} and then applies it to design {\em fast} constructions with {\em optimal lightness} for several graph classes.  Our new constructions are significantly faster than the state-of-the-art for every graph class studied in this paper; the running times of our constructions are near-linear and usually optimal. 

Among various applications and implications of our framework, we highlight here the following (for simplicity assume $\eps> 0$ is fixed):
\begin{itemize}
\item In \emph{low-dimensional Euclidean spaces}, we present a construction of $(1+\eps)$-spanners for $n$-point sets with lightness and degree both bounded by constants,
running in $O(n\log n)$ time in the {\em algebraic computation tree (ACT)} (or {\em real-RAM}) model, which is the basic model used in Computational Geometry.  Our construction is optimal with respect to all the involved quality measures --- running time, lightness, and degree --- and it resolves a major problem in the area of geometric spanners, which was open for three decades.  

\item In \emph{general graphs}, we present a near-linear time algorithm for constructing light spanners of graphs with $n$ vertices and $m$ edges. Specifically, for any $k \ge 2$, we construct a $(2k-1)(1+\epsilon)$-spanner  with lightness $O(n^{1/k})$  in $O(m \alpha(m,n))$ time, where $\alpha(\cdot,\cdot)$ is the inverse-Ackermann function ; the lightness bound matches Erd\H{o}s' girth conjecture up to the $\eps$-dependency.
\end{itemize}

\paragraph{Remark} 
Our companion paper builds on the basis laid in this paper,  aiming to achieve optimality in a more refined sense, which takes into account a \emph{wider range of involved parameters}, most notably $\eps$, but also others such as the Euclidean dimension or the minor size (in minor-free graphs). 
\end{abstract}

\pagebreak

\tableofcontents

\pagenumbering{arabic}

\clearpage

\section{Introduction}\label{sec:intro}

For an edge-weighted graph $G = (V,E,w)$ and a {\em stretch parameter} $t \ge 1$, a subgraph $H = (V,E')$ of $G$ is called a \emph{$t$-spanner} if $d_H(u,v) \le t \cdot d_G(u,v)$, for every two vertices $u$ and $v$, where $d_G(u,v)$ and $d_H(u,v)$ are the distances between $u$ and $v$ in $G$ and $H$, respectively.
Graph spanners were introduced in two celebrated papers from 1989 \cite{PS89,PU89} for unweighted graphs, where it is shown that for any $n$-vertex graph $G = (V,E)$ and integer $k \ge 1$, there is an $O(k)$-spanner with $O(n^{1+ 1/k})$ edges.
We shall sometimes use a normalized notion of size, {\em sparsity}, which is the ratio of the size of the spanner to the size of a spanning tree, namely $n-1$.
Since then, graph spanners have been extensively studied, both for general weighted graphs and for restricted graph families, such as Euclidean spaces and minor-free graphs.  In fact, spanners for Euclidean spaces---{\em Euclidean spanners}---were studied implicitly already in the pioneering SoCG'86 paper of Chew~\cite{Chew86}, who showed that any finite point set in 2-dimensional Euclidean space admits a spanner of $O(n)$ edges and stretch $\sqrt{10}$, and later improved the stretch to 2~\cite{Chew89}.

As with the sparsity parameter, its weighted variant, lightness, has been extremely well-studied; the \emph{lightness} is the ratio of the weight of the spanner to $w(MST(G))$.  Seminal works on {\em light} spanners over the years provide spanners with optimal {\em lightness} in various graph classes, such as in general graphs~\cite{CW16}, Euclidean spanners \cite{das1994fast} and minor-free graphs~\cite{BLW17}. {\bf Despite the large body of work on light spanners,  the stretch-lightness tradeoff is not nearly as well-understood as the stretch-sparsity tradeoff}, and the intuitive reason behind that is clear: Lightness seems inherently more challenging to optimize than sparsity since different edges may contribute disproportionately to the overall lightness due to differences in their weights.  The three shortcomings of light spanners that emerge, when considering the large body of work in this area, are: (1) The techniques are ad hoc per graph class and thus can't be applied broadly (e.g., some require large stretch and are thus suitable to general graphs, while others are naturally suitable to stretch $1 + \eps$). 
(2) The running times of these constructions are usually far from optimal.
(3) These constructions are optimal in the standard and crude sense but not in a refined sense that takes into account a wider range of involved parameters.

We set out to address these shortcomings by presenting a {\em unified framework} of light spanners in a variety of graph classes. Informally, the framework boils down to a {\em transformation} from sparse spanners to light spanners; since the state-of-the-art for sparse spanners is much more advanced than that for light spanners, such a transformation is powerful.

Our framework is developed in two papers.  {\bf The current paper is the first of the two --- \bf it lays the {\em basis of the unified framework} and then applies it to design {\em fast} constructions with {\em optimal lightness}} for several graph classes. More specifically, this paper will address the first two shortcomings mentioned above, while the third shortcoming will be addressed by the second paper.  Our ultimate goal is to bridge the gap in the understanding between light and sparse spanners.  This gap is very prominent when considering the construction time.  To exemplify this statement, we next survey results on light spanners in several basic graph classes, focusing mostly on the construction time. Subsequently, we present our new constructions, all of which are derived as applications and implications of the unified framework developed in this work; our constructions are significantly faster than the state-of-the-art {\em for every examined graph class}; as will be shown, our running times are near-linear or linear and usually optimal.

\paragraph{Euclidean spanners in the algebraic computation tree (ACT) model} 
Spanners have had special success in geometric settings, especially in low-dimensional Euclidean spaces.  The reason Euclidean spanners have been extensively studied over the years --- in both theory and practice --- is that one can achieve stretch arbitrarily close to 1 together with constant sparsity and lightness (ignoring dependencies on $\eps$ and the dimension $d$).
In general metrics, on the other hand, a stretch better than 3 requires sparsity and lightness of $\Omega(n)$.   The {\em algebraic computation tree (ACT)} model is used extensively in computational geometry, and in the area of Euclidean spanners in particular; this model, introduced by Ben-Or~\cite{BenOr83}, is intimately related (and equivalent, if ignoring uniformity issues) to the real random access machine (real RAM) model. (The reader can refer to \cite{BenOr83} and Chapter 3 in the book \cite{NS07} for a detailed description of the ACT model;  we provided a brief description of this model in~\Cref{algapp}.)

In the ACT model, computing $(1+\epsilon)$-spanners for point sets in $\mathbb{R}^d$, $d = O(1)$, requires $\Omega(n\log n)$ time~\cite{CDS01,FP16}.\footnote{If one allows to use {\em indirect addressing}, then the lower bound of $\Omega(n \log n)$ no longer applies.}  There are various algorithms (see, e.g.,~\cite{CK95,KG92,Salowe91,Vaidya1991}) for computing $(1+\epsilon)$-spanners with constant sparsity, which achieve an optimal running time of $O(n\log n)$ in this model, for any fixed $\eps \in (0,1)$ and in any constant-dimensional Euclidean space. However, the lightness of the spanners produced by those algorithms is unbounded.
 
Starting in the late 80s, there has been a large body of work on light Euclidean spanners~\cite{LL89,CDNS92,DHN93,das1994fast,DNS95,das1996constructing,ADMSS95,RS98,GLN02,NS07,elkin2015optimal,LS19}. Light Euclidean spanners are not only important in their own right, but they also find applications in other contexts.
In particular, the breakthrough result of Rao and Smith \cite{RS98} gave an $O(n \log n)$-time approximation scheme for the Euclidean TSP, assuming that a $(1+\eps)$-spanner with constant lightness can be computed within time $O(n \log n)$. 
Also, Czumaj and Lingas \cite{czumaj2000fast} gave approximation schemes for Euclidean minimum-cost multi-connectivity problems under the same assumption.  
The assumption used in the results of \cite{RS98,czumaj2000fast} was made by relying on a spanner construction due to Arya et al.\ \cite{ADMSS95}, which was later shown to be flawed.  Gudmundsson, Levcopoulos, and Narasimhan (hereafter, GLN)~\cite{GLN02}, building on and improving over several previous works \cite{LL89,CDNS92,DHN93,das1994fast,DNS95,ADMSS95}, gave the first (correct) algorithm for constructing Euclidean $(1+\eps)$-spanners with constant lightness in $O(n \log n)$ running time, but their algorithm assumes {\em indirect addressing}.
A variation of the GLN algorithm, which applies to the ACT model, takes time $O(n\frac{\log^2n}{\log \log n})$; this is the state-of-the-art running time for constructing $(1+\epsilon)$-spanners with constant lightness in the ACT model, even in $\mathbb{R}^2$, and even allowing a super-constant lightness bound (of at most $o(\log n)$).   The question of whether one can compute such a spanner in optimal $O(n \log n)$ time in the ACT model was asked explicitly several times, including in the GLN paper~\cite{GLN02} and in the spanner book by Narasimhan and Smid \cite{NS07}.

 \begin{question} \label{ques:algebraic}
 	Can one   construct a Euclidean $(1+\epsilon)$-spanner with constant lightness
	within the optimal time of $O(n \log n)$ in the ACT model for any fixed $\eps \in (0,1)$?
 \end{question}

Constant lightness does not imply any sparsity bound. A stronger result would be to achieve a constant bound on both the lightness and   sparsity and even further, one could try to achieve a constant bound on the maximum degree too;
indeed,  there are $O(n \log n)$-time constructions of Euclidean spanners of bounded degree in the ACT model \cite{ADMSS95,NS07}. Euclidean spanners of bounded degree have found various applications.  In compact routing schemes low degree spanners give rise to routing tables of small size (see, e.g., ~\cite{CGMZ16,DBLP:conf/soda/GottliebR08,DBLP:conf/cocoon/BrankovicGR20}), and more generally, the (maximum) degree of the spanner determines the local memory constraints when using spanners also for other purposes, such as constructing network synchronizers and efficient broadcast protocols.  Moreover, in some applications, the degree of a vertex (or processor) represents its {\em load}, hence a low degree spanner guarantees that the load on all the processors in the network will be low.

 \begin{question} [Question 22 in \cite{NS07}] \label{ques:algebraic2}
 	Can one construct a Euclidean $(1+\epsilon)$-spanner with constant lightness and maximum degree (and thus constant sparsity) in optimal time $O(n \log n)$ in the ACT model, for any fixed $\eps \in (0,1)$? 
 \end{question}

\paragraph{General weighted graphs} 
The aforementioned results of \cite{PS89,PU89} for general graphs were strengthened in \cite{ADDJS93}, where it was shown that for every $n$-vertex \emph{weighted} graph $G = (V,E,w)$ and integer $k \ge 1$, there is a {\em greedy} algorithm for constructing a $(2k-1)$-spanner with $O(n^{1+1/k})$ edges, which is optimal under Erd\H{o}s' girth conjecture. Moreover, there is an $O(m)$-time algorithm for constructing $(2k-1)$-spanners in unweighted graphs with sparsity $O(n^{\frac{1}{k}})$~\cite{HZ96}. Therefore, not only is the stretch-sparsity tradeoff in general graphs optimal (up to Erd\H{o}s' girth conjecture), but one can achieve it in optimal time. For weighted graphs, one can construct $(2k-1)$-spanners with sparsity $O(k n^{\frac{1}{k}})$ within time $O(k m)$~\cite{baswana2007simple,roditty2005deterministic}.

Alth\"{o}fer et al.~\cite{ADDJS93}   showed that the lightness of the greedy spanner is $O(n/k)$. Chandra et al.~\cite{CDNS92} improved this lightness bound to $O(k \cdot n^{(1+\eps)/{(k-1)}} \cdot (1/\eps)^2)$, for any $\eps > 0$; another, somewhat stronger, form of this tradeoff from \cite{CDNS92}, is stretch $(2k-1)\cdot(1+\eps)$,
$O(n^{1/k})$ sparsity and $O(k \cdot n^{1/{k}} \cdot (1/\eps)^{2})$ lightness.
In a sequence of works from recent years \cite{ENS14,CW16,FS16},
it was shown that the lightness of the greedy spanner is $O(n^{1/k} (1/\eps)^{3+2/k})$ (this lightness bound is due to \cite{CW16}; the fact that this bound holds for the greedy spanner is due to \cite{FS16}). The best running time for the same lightness bound in prior work is super-quadratic in $n$: $O(n^{2 + 1/k + \epsilon'})$~\cite{ADFSW19} for any fixed constant $\epsilon' < 1$. 

\begin{question}\label{ques:gen}
 	Can one construct a $(2k-1)(1+\eps)$-spanner in general weighted graphs with lightness $O(n^{1/k})$, within near-linear time for any fixed $\eps \in (0,1)$?
 \end{question}

\paragraph{Unit disk graphs}
Given a set of $n$ points $P \subseteq \mathbb{R}^d$, a \emph{unit ball graph} for $P$, denoted by $U = U(P)$, 
is the \emph{geometric graph} with vertex set $P$, where there is an edge between two points $p \not= q \in P$ (with weight $\nor2{p,q}$) iff $\nor2{p,q}\leq 1$.\footnote{Throughout we use $\nor2{p,q}$ to denote the Euclidean distance between a pair $p,q$ of points in $\mathbb{R}^d$.} When $d = 2$, we call $U$ a \emph{unit disk graph (UDG)}; 
for convenience, we'll use the term {\em unit disk graph} also for $d > 2$.
(See~\Cref{sec:prelim} for a more detailed discussion on geometric graphs.)

There is a large body of work on spanners for UDGs; 
see~\cite{LCW02,LiCWW03,LW04,GaoGHZZ05,WangL07,PK08,PelegR10,FurerK12,biniaz2020plane}, and the references therein.
One conclusion that emerges from the previous work (see \cite{PelegR10} in particular) is that if one does not care about the running time, then constructing $(1+\eps)$-spanners for unit disk graphs is just as easy as constructing $(1+\eps)$-spanners for the entire Euclidean space. Moreover, the greedy $(1+\eps)$-spanner for the Euclidean space, after removing from it all edges of weight larger than 1, provides a $(1+\eps)$-spanner for the underlying unit disk graph. The greedy $(1+\eps)$-spanner in $\mathbb{R}^d$ has constant sparsity and lightness for constant $\eps$ and $d$, specifically, sparsity $\Theta(\eps^{-d+1})$ and lightness $O(\eps^{-d}\log(1/\eps))$, which is tight up to the $\log(1/\eps)$ factor (cf.\ \cite{LS19}).

The drawback of the greedy spanner is its running time: The state-of-the-art implementation in Euclidean low-dimensional spaces runs in $O(n^2 \log n)$ \cite{BCFMS10}. There is a much faster variant of the greedy algorithm, sometimes referred to as ``approximate-greedy'', with running time $O(n \log n)$ \cite{GLN02}. Alas, removing the edges of weight larger than 1 from the approximate-greedy $(1+\eps)$-spanner of the Euclidean space does not provide a $(1+\eps)$-spanner for the underlying unit disk graph; in fact, the stretch of the resulting spanner may be arbitrarily poor.
Instead of simply removing the edges of weight larger than 1 from the approximate-greedy spanner, one can replace them with appropriate replacement edges, as proposed in \cite{PelegR10}, but the running time of this process will be at least linear in the size of the unit disk graph, which is $\Omega(n^2)$ in the worst case. 

F{\"u}rer and Kasiviswanathan~\cite{FK06} showed that {\em sparse} $(1+\epsilon)$-spanners for UDGs can be built in near-linear time when $d = 2$, and in subquadratic time when $d$ is a constant of value at least $3$.

\begin{lemma}[Corollary 1 in~\cite{FurerK12}]\label{lm:UDG-sparse-sp} Given a set of $n$  points $P$ in $\mathbb{R}^d$, there is an algorithm that constructs  a $(1+\epsilon)$-spanner of the unit ball graph for $P$ with $O(n\epsilon^{1-d})$ edges. For $d = 2$, the running time  is  $O(n(\epsilon^{-2}\log n))$; for $d = 3$, the running time is $\tilde{O}(n^{4/3}\eps^{-3})$; and for $d\geq 4$, the running time is $O(n^{2-\frac{2}{(\lceil d/2 \rceil+1)} + \delta}\epsilon^{-d+1} + n\epsilon^{-d})$ for any constant $\delta > 0$.
\end{lemma}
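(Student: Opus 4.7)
The plan is to combine a grid partition of $\mathbb{R}^d$ with a cone-based ($\Theta$-graph) construction, and to realize the dimension-dependent runtimes through efficient range-searching machinery.

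First I would overlay $\mathbb{R}^d$ with a grid of side length $\Theta(1/\sqrt{d})$, so that every edge of the unit ball graph joins points lying in cells separated by $O(1)$ grid steps. For each point $p$, I would partition space into $O(\epsilon^{1-d})$ cones of aperture $\Theta(\epsilon)$ and, within each cone, keep an edge from $p$ to the closest point that lies in one of $p$'s $O(1)$ neighboring grid cells, discarding the edge if its length exceeds $1$. This immediately yields the size bound $O(n\epsilon^{1-d})$. For the stretch, I would adapt the standard $\Theta$-graph inductive argument: for any pair $(u,v)$ in the unit ball graph, the edge retained in $u$'s cone toward $v$ is either $(u,v)$ itself or an edge $(u,v')$ satisfying $\|u-v'\| \leq \|u-v\| \leq 1$ and $\|v'-v\| \leq (1 - \Omega(\epsilon))\|u-v\|$; recursion on the shorter pair $(v',v)$ stays inside the unit ball graph because all intermediate lengths remain at most $1$.

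The running time is dominated by the following task: for each point $p$ and each of its $O(\epsilon^{1-d})$ cones, find the nearest point of $P$ falling in the cone intersected with $p$'s grid neighborhood. For $d=2$, a 2D bucketing within each $O(1)$-cell cluster yields $O(\log n)$ per query and $O(n\epsilon^{-2}\log n)$ total (the extra $\epsilon^{-1}$ factor accounts for cone-boundary sorting). For $d \geq 3$, each query reduces to a simplex-range nearest-neighbor, or equivalently bichromatic closest-pair, instance, which the Agarwal--Matou\v{s}ek partition-tree framework solves in time $O(n^{2 - 2/(\lceil d/2 \rceil +1) + \delta})$ per grid-pair; summing over cones and adding the $O(n\epsilon^{-d})$ term needed to enumerate the kept edges matches the stated bound. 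The $d=3$ case additionally exploits the $\tilde{O}(n^{4/3})$-time bichromatic closest-pair algorithm.

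The hard part, I expect, is keeping the stretch analysis honest at the cap: when the replacement edge $(u,v')$ has length only slightly below $1$, one must verify that the recursion on $(v',v)$ does not produce an intermediate edge longer than $1$. This follows from a short geometric calculation using that $\|u-v\|\leq 1$ and that the cones have aperture $\Theta(\epsilon)$, but it is the one place where the argument diverges from the classical Euclidean $\Theta$-graph analysis. A secondary delicate point is in the higher-dimensional runtime: the $O(\epsilon^{1-d})$ queries originating at a single vertex must be batched into one preprocessing of the partition tree for the appropriate grid-pair, so as to avoid a multiplicative linear overhead across queries.
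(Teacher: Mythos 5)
This lemma is not proved in the paper; it is quoted as Corollary~1 of F{\"u}rer and Kasiviswanathan~\cite{FurerK12} and used as a black box, so there is no internal proof to compare your argument against. That said, your reconstruction---a grid of side $\Theta(1/\sqrt d)$ localizing each point's candidate neighbors to $O(1)$ cells, a $\Theta$-/Yao-graph with $O(\eps^{1-d})$ cones per point keeping only edges of length at most~$1$, and range-searching machinery for the nearest-in-cone queries---captures the spirit of the actual F{\"u}rer--Kasiviswanathan algorithm, and the $O(n\eps^{1-d})$ size bound together with the stretch induction are sound.

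Two remarks. First, the point you flag as the ``hard part'' is already handled by the standard cone estimate: if $v$ and $v'$ lie in a cone of aperture $\Theta(\eps)$ with apex $u$, and $\|u-v'\|\le\|u-v\|\le 1$ (which holds because $v'$ is chosen nearest in the cone and $v$ is a candidate), then the law of cosines already gives $\|v'-v\|<\|u-v\|\le 1$; hence every recursive pair is automatically a unit-ball-graph edge and no separate ``cap'' calculation is needed beyond what the classical $\Theta$-graph argument provides. Second, the runtime claims are the genuinely loose places. For $d=2$ you attribute an extra $\eps^{-1}$ factor to ``cone-boundary sorting'' but do not pin down the data structure that realizes it. For $d\ge 3$, ``simplex-range nearest-neighbor'' is not literally bichromatic closest pair, and extracting the exponent $2-\frac{2}{\lceil d/2\rceil+1}+\delta$ with the stated $\eps$-dependence requires batching the $O(\eps^{1-d})$ cone queries per point against a single partition-tree preprocessing and handling the $d=3$ case via the specialized $\tilde O(n^{4/3})$ bichromatic-closest-pair routine; this is precisely where F{\"u}rer and Kasiviswanathan do the real work, so the honest path is to read their proof rather than re-derive those data-structure bounds from scratch.
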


Thus, there is a significant gap between the fastest constructions of sparse versus light spanners in UDGs.
In particular, no $o(n^2)$-time $(1+\eps)$-spanner construction for UDGs with a nontrivial lightness bound is known, even for $d = 2$. 
The question of closing this gap naturally arises.

 \begin{question} \label{ques:udg}
 	Can one construct within $o(n^2)$ time a $(1+\epsilon)$-spanner for UDGs with constant lightness for a fixed $\eps \in (0,1)$?  
	Is it possible to achieve a near-linear running time for $d = 2$?
 \end{question}

\paragraph{Minor-free graphs}
Alth\"{o}fer et al.~\cite{ADDJS93}   showed that the greedy $(1+\eps)$-spanner in planar graphs has lightness $O(1/\eps)$. Klein~\cite{Klein05} gave a fast construction of $(1+\eps)$-spanners with constant lightness (albeit with a worse dependence on $\eps$). It is known that the technique of~\cite{Klein05} can be extended to bounded genus graphs, provided that an embedding into a surface of the required genus is given as input; the time for computing such an embedding is linear in the graph size and exponential in the genus.

A natural goal would be to extend the results to minor-free graphs.\footnote{A graph $H$  is called a \emph{minor} of graph $G$ if $H$ can be obtained from $G$ by deleting edges and vertices and by contracting edges. A graph $G$ is said to be {\em $K_r$-minor-free}, if it excludes  $K_r$ as a minor for some fixed $r$, where $K_r$ is the complete graph on $r$ vertices. We shall omit the prefix $K_r$ in the term ``$K_r$-minor-free'',  when the value of $r$ is not important.}  
Borradaile, Le, and Wulff-Nilsen~\cite{BLW17} showed that the greedy $(1+\eps)$-spanners of $K_r$-minor-free graphs have lightness $\tilde{O}_{r,\epsilon}(\frac{r}{\epsilon^3})$, where the notation $\tilde{O}_{r,\epsilon}(.)$ hides polylog factors of $r$ and $\frac{1}{\epsilon}$. However,  the fastest implementation of the greedy spanner requires quadratic time~\cite{ADDJS93}, even in graphs with $O(n)$ edges; more generally, the running time of the greedy algorithm from~\cite{ADDJS93} on a graph with $m = \tilde{O}_r(nr)$ edges is $\tilde{O}_r(n^2 r^2)$.  Moreover, the same situation occurs even in sub-classes of minor-free graphs, particularly bounded treewidth graphs.

 \begin{question} \label{ques:minor}
 	Can one construct in linear or near-linear time a $(1+\epsilon)$-spanner for minor-free graphs with constant lightness?  
 \end{question}

\subsection{Research Agenda: From Sparse to Light Spanners}

Thus far, we exemplified the statement that the stretch-lightness tradeoff is not as well-understood as the stretch-sparsity tradeoff when considering the construction time.  Even when ignoring the running time, there are significant gaps between these tradeoffs
when considering {\em fine-grained dependencies}, i.e., when considering these tradeoffs in a wider range of involved parameters, most notably $\eps$, but also other parameters, such as the dimension (in Euclidean spaces) or the minor size (in minor-free graphs). This statement is not to underestimate in any way the exciting line of work on light spanners but rather to call for attention to the important research agenda of narrowing this gap and ideally closing it.

\paragraph{Fast constructions}
All questions above, from \Cref{ques:algebraic} to 
\Cref{ques:minor}, ask the same thing: Can one achieve {\em fast constructions} of light spanners that {\em match} the corresponding results for sparse spanners? 

\begin{goal} \label{g1}
Achieve {\em fast constructions} of light spanners that {\em match} the corresponding constructions of sparse spanners. 
In particular, achieve nearly linear-time constructions of spanners with optimal lightness for basic graph families, such as the ones covered in the aforementioned questions. 
\end{goal}

\paragraph{Fine-grained optimality}  A fine-grained optimization of the stretch-lightness tradeoff, which takes into account the exact dependencies on $\eps$ and the other involved parameters, is a highly challenging goal. For planar graphs, the aforementioned result~\cite{ADDJS93} on the greedy $(1+\eps)$-spanner with lightness $O(1/\eps)$ provides an optimal dependence on $\eps$ in the lightness bound, due to a matching lower bound.
For constant-dimensional Euclidean spaces, an optimal tradeoff of stretch $1+\eps$ versus lightness $\Theta(\eps^{-d})$ was achieved recently by the authors \cite{LS19}.   
Can one achieve such fine-grained optimality for other well-studied graph families, such as general graphs and minor-free graphs?
\begin{goal} \label{g2}
Achieve fine-grained optimality for light spanners in basic graph families. 
\end{goal}

\paragraph{Unification}
Some of the papers on light spanners employ inherently different techniques than others, e.g., the technique of \cite{CW16} requires large stretch while others are naturally suitable to stretch $1+\eps$.
Since the techniques in this area are ad hoc per graph class, they can't be applied broadly.
A unified framework for light spanners would be of both theoretical and practical merit.
\begin{goal} \label{g3}
Achieve a unified framework of light spanners.
\end{goal}

Establishing a thorough understanding of light spanners by meeting (some of) the above goals is not only of theoretical interest but is also of practical importance due to the wide applicability of spanners.  Perhaps the most prominent applications of light spanners are to efficient broadcast protocols in the message-passing model of distributed computing \cite{ABP90,ABP91}, network synchronization and computing global functions \cite{Awerbuch85,PU89,ABP90,ABP91,Peleg00}, and the TSP \cite{Klein05,Klein06,RS98,GLN02,BLW17,Gottlieb15}. There are many more applications, such as data gathering and dissemination tasks in overlay networks \cite{BKRCV02,VWFME03,KV01},  VLSI circuit design \cite{CKRSW91,CKRSW292,CKRSW92,SCRS01}, wireless and sensor networks \cite{RW04,BDS04,SS10},  routing \cite{WCT02,PU89,PU89b,TZ01},  and computing almost shortest paths \cite{Cohen98,RZ11,Elkin05,EZ06,FKMSZ05}, and distance oracles and labels \cite{Peleg00Prox,TZ01b,RTZ05}. 

\subsection{Our Contribution}

Our work aims at meeting the above goals (\Cref{g1}---\Cref{g3}) by presenting a unified framework for optimal constructions of light spanners in a variety of graph classes. Basically, we strive to translate results --- in a unified manner --- from sparse spanners to light spanners without significant loss in any of the parameters. One of our results is particularly surprising --- Theorem~\ref{thm:general-fast}, for general graphs --- {\bf since the new bounds for light spanners outperform the best-known bounds for sparse spanners}.

As mentioned, the current paper lays the {\em basis of the framework} and applies it to design {\em fast} constructions with {\em optimal lightness} for several graph classes, thereby resolving all aforementioned questions.  Our companion paper builds on the basis laid in this paper,  aiming to achieve fine-grained optimality.  

Next, we elaborate on the applications and implications of our framework and put it into context with previous work. For simplicity, we shall assume here that $\eps \in (0,1)$ is fixed;  the exact dependencies of $\eps$ will be explicated in subsequent sections of this paper.

\paragraph{Euclidean spanners in the ACT model} 
We present a spanner construction that achieves constant lightness and degree within  optimal time of $O(n \log n)$ in the ACT model; this
 proves the following theorem, which affirmatively resolves~\Cref{ques:algebraic2}, and thus~\Cref{ques:algebraic}, which was open for   three decades.

\begin{theorem}\label{thm:ACT} For any set $P$ of $n$ points in $\mathbb{R}^d$, any $d = O(1)$ and any  fixed $\eps \in (0,1)$, one can construct in the ACT model   a $(1+\epsilon)$-spanner for $P$ with constant degree and lightness  within optimal time $O(n \log n)$.
\end{theorem}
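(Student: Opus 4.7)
}
The plan is to instantiate the paper's unified sparse-to-light framework in the Euclidean setting, feeding it with a fast sparse spanner construction available in the ACT model. First, I would compute a Euclidean minimum spanning tree $T$ of $P$ in $O(n\log n)$ time in the ACT model via a Delaunay-based or WSPD-based MST algorithm; $T$ serves both as the reference for lightness and as the scaffold for the replacement step. Second, I would construct a $(1+\eps/c)$-spanner $H_0$ of $P$ with constant degree and $O(n)$ edges in optimal $O(n\log n)$ ACT time using a known bounded-degree construction (e.g., the $\Theta$-graph pruned to bounded degree, or the Arya--Das--Mount--Salowe--Smid style construction from \cite{ADMSS95,NS07}). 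This already resolves sparsity and degree; only the lightness guarantee is missing.

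The crux is the third step: apply the paper's sparse-to-light transformation to $H_0$ (using $T$ as the MST reference) to obtain a spanner $H$ whose lightness is $O(1)$, whose stretch is $1+\eps$, and whose degree stays $O(1)$. Conceptually, the transformation identifies edges of $H_0$ that are ``too heavy'' relative to the portion of $T$ they shortcut, and replaces such edges by short MST-based detours; cheap edges of $H_0$ are kept. Because $H_0$ already has $O(n)$ edges and constant degree, and because $T$ is a tree, each replacement operation can be charged locally. I would realize this by (i) preprocessing $T$ for fast ancestor/LCA and weighted level-ancestor queries on its Euclidean lengths (each costing $O(1)$ after $O(n)$ preprocessing, in the ACT model without indirect addressing by using the pointer-machine compatible variants), (ii) bucketing edges of $H_0$ by geometric weight class relative to $T$, and (iii) processing the weight classes in a single sweep, maintaining a rebuilt bounded-degree structure so that the final graph never violates the $O(1)$ degree bound. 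A degree-reduction post-processing (a geometric $\Theta$-style filter applied around each vertex) takes care of any vertex whose degree is temporarily inflated by the replacement step, and can be shown to only improve lightness up to a $(1+O(\eps))$ factor.

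For correctness, the stretch bound follows because every replaced edge is simulated by an MST detour of length at most $(1+\eps/c)$ times its original length, and $H_0$ itself has stretch $1+\eps/c$; composing the two yields stretch at most $1+\eps$ after adjusting $c$. For lightness, the standard charging argument from the framework --- which partitions the weight of the final spanner into MST charges and ``short-replacement'' charges, each bounded by $O(w(T))$ --- gives $w(H) = O(w(T))$, i.e., constant lightness. The runtime accounting is the dominant technical point: MST computation is $O(n\log n)$, the sparse spanner is $O(n\log n)$, and the transformation processes $O(n)$ edges with $O(1)$ work each after $O(n)$ preprocessing of $T$ and $H_0$, so the total is $O(n\log n)$ and matches the $\Omega(n\log n)$ lower bound of \cite{CDS01,FP16}.

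The main obstacle I expect is the ACT-model constraint: avoiding indirect addressing while still performing the bucketing and ancestor-style lookups in $O(1)$ amortized time per operation. Prior work (notably GLN) paid an extra $\log n / \log\log n$ factor precisely because their bucketing required integer-RAM tricks. The key technical step will therefore be showing that the replacement loop can be driven by a pre-sorted order of $H_0$-edges together with a pointer-machine compatible weighted-ancestor structure on $T$, so that all the work after the initial sort is linear in $|H_0|$ and avoids any $\log$-factor overhead beyond the single $O(n\log n)$ sort.
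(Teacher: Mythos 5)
Your two-stage strategy is exactly the paper's: first build a constant-degree, $O(n)$-edge $(1+\eps)$-spanner $H_0$ in $O(n\log n)$ ACT time (the paper cites Theorems 10.1.3 and 10.1.10 in~\cite{NS07}), then run the sparse-to-light transformation (\Cref{lm:framework}, via \Cref{thm:ACTUDGSpanner}) on $H_0$. The degree bound then follows for free, because the transformation only \emph{selects a subset} of $H_0$'s edges (plus $\MST(H_0)$ edges, which are themselves a subset of $E(H_0)$), so the output is a subgraph of the constant-degree $H_0$ --- the degree-reduction post-processing you propose is unnecessary, and worrying about ``temporarily inflated'' degree suggests a misunderstanding of what the transformation outputs.

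The larger issue is that the mechanism you describe for the transformation does not match the framework and would not, on its own, yield the claimed bounds. You characterize it as a local edge-replacement scheme: identify ``too heavy'' edges of $H_0$ relative to the $T$-path they shortcut, replace them by MST detours, charge locally, and sweep once over weight classes using LCA/weighted-ancestor queries on $T$. That is much closer in spirit to the approximate-greedy/leapfrog style of GLN~\cite{GLN02} than to what this paper does. The actual framework: (i) partitions $E(H_0)$ into $O(\log m)$ absolute weight scales (not relative to $T$); (ii) maintains a \emph{hierarchy of clusters} grown along the (subdivided) MST, with a geometric decay in cluster count per level; (iii) at each level forms a \emph{cluster graph} and runs a cone/Yao-style sparse-spanner algorithm ($\ssa_{\geom}$) on it; and (iv) bounds the total weight via a \emph{potential function} argument, not per-edge charging. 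The decisive technical idea that lets this run in $O(n\log n)$ in the ACT model --- and which your sketch omits entirely --- is \emph{augmented diameter}: maintaining a computable overestimate of each cluster's diameter (the \hyperlink{PD}{PD Invariant}) so that no Dijkstra or shortest-path computation is needed per level. Ancestor/level-ancestor structures on $T$ play no role; the bookkeeping is bottom-up with Union-Find. Without this mechanism your ``single sweep'' step and its $O(n\log n)$ accounting are not justified: the local detour-replacement idea alone is precisely what the leapfrog analysis handled with substantial geometric effort, and it does not obviously yield constant lightness without the hierarchical potential argument.
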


\paragraph{General graphs}

For general graphs, we provide a nearly linear-time spanner construction with a nearly optimal lightness in the worst-case sense, assuming Erd\H{o}s' girth conjecture, thus answering~\Cref{ques:gen}. 

\begin{theorem}\label{thm:general-fast}
	For any edge-weighted graph $G(V,E)$, a stretch parameter $k \geq 2$ and an arbitrary small fixed $\epsilon \in (0,1)$, there is a deterministic algorithm that constructs a $(2k-1)(1+\epsilon)$-spanner of $G$ with lightness $O(n^{1/k})$ in $O(m\alpha(m,n))$ time, where $\alpha(\cdot,\cdot)$ is the inverse-Ackermann function. 
\end{theorem}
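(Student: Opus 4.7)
The plan is to invoke the unified sparse-to-light transformation developed in this paper as a black box, instantiating each of its ingredients with the fastest known subroutines so that the total cost is dominated by an MST computation and a union-find pass.

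First, I would compute an MST $T$ of $G$ in $O(m\alpha(m,n))$ time via Chazelle's algorithm and place all of $T$ into the output spanner $H$. Second, I would bucket the non-MST edges into geometric weight scales $E_i = \{e \in E : w(e) \in [(1+\eps)^i,(1+\eps)^{i+1})\}$, and for each scale $i$ define an auxiliary graph $G_i$ whose vertices are the maximal subtrees (``clusters'') of $T$ of diameter $O(w_i/\eps)$ with $w_i = (1+\eps)^i$, and whose edges are the images of $E_i$ under contracting those clusters. The diameter budget $O(w_i/\eps)$ is what allows the triangle-inequality argument to convert stretch $2k-1$ inside $G_i$ into stretch $(2k-1)(1+\eps)$ for the original edge in $E_i$, by re-routing through the two incident clusters via $T$.

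Third, on each $G_i$ I would run a deterministic $O(m_i)$-time unweighted $(2k-1)$-spanner algorithm (Halperin--Zwick / Roditty--Thorup--Zwick) with $O(n_i^{1+1/k})$ edges; this is legal because all edges of $G_i$ lie within a $(1+\eps)$ factor in weight, so treating them as unit-weight costs at most an extra $(1+\eps)$ stretch factor that is already absorbed into our budget. To avoid rebuilding clusters from scratch at every scale, I would sort $T$'s edges by weight once and then incrementally merge components in a single union-find structure as the cluster-threshold grows with $i$; each edge of $G$ is mapped to its cluster-endpoints in $O(\alpha)$ amortized time, and $\sum_i m_i = O(m)$, so the overall runtime is $O(m\alpha(m,n))$. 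The lightness bound $O(n^{1/k})$ then follows from the framework's potential-function analysis (in the spirit of Chechik--Wulff-Nilsen), which charges the spanner weight in each scale to the MST weight of the clusters at that scale and sums geometrically.

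The main obstacle is calibrating the cluster-diameter thresholds and the stretch slack so that a \emph{single} unified analysis simultaneously yields: (i) the sparse-spanner guarantee on each $G_i$ translates to stretch $(2k-1)(1+\eps)$ in $G$, (ii) the per-scale contribution telescopes to total weight $O(n^{1/k})\cdot w(T)$, and (iii) all the cluster contractions across all scales can be realized by one monotone union-find pass rather than $\Theta(\log(\mathrm{aspect}))$ independent ones. This decoupling of cluster maintenance from the per-scale sparse-spanner invocation is precisely where the unified framework does the heavy lifting, and it is what prevents the runtime from inheriting an extra $\log$-aspect-ratio factor that naive scale-by-scale processing would incur.
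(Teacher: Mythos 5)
Your high-level architecture --- MST via Chazelle, bucket by weight scale, contract MST into low-diameter clusters, run Halperin--Zwick on the unweighted cluster graph, then charge lightness to the MST --- matches the paper's proof up to the one thing that makes it actually work, and that is exactly where the proposal has a hole.

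The gap is the lightness argument, and it comes from the bucketing. You propose a single geometric scale $E_i = \{e : w(e) \in [(1+\eps)^i, (1+\eps)^{i+1})\}$ with cluster diameter $\Theta(w_i/\eps)$ at scale $i$. Since any set of vertex-disjoint clusters of MST-diameter $\Theta(w_i/\eps)$ can number at most $O(\eps \cdot w(\mst)/w_i)$, each scale contributes spanner weight $O(\chi) \cdot n_i \cdot w_i = O(\chi \eps) \cdot w(\mst)$, and there are $\Theta(\log(\mathrm{aspect})/\eps)$ non-empty scales; summing naively gives $O(\chi \log(\mathrm{aspect})) \cdot w(\mst)$, not $O(\chi) \cdot w(\mst)$. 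The phrase ``sums geometrically'' is not justified here: consecutive scales differ only by $(1+\eps)$, so the cluster count does \emph{not} decay geometrically and nothing forces a cluster to ``pay'' only once. The paper avoids this by a two-level bucketing: it first splits heavy edges into $\mu_\psi = O(\log(1/\eps)/\psi)$ sets $E^\sigma$, and within each $E^\sigma$ the levels $E^\sigma_i$ are separated by a factor of $1/\eps$, so that level-$(i+1)$ clusters contain $\Omega(1/\eps)$ level-$i$ clusters (property (P2)); this geometric decay is what drives the potential-function telescoping within a single $\sigma$, and the outer $\mu_\psi$ overhead is then only $\mathrm{poly}(1/\eps)\log(1/\eps)$, independent of aspect ratio.

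A second, related issue: you treat cluster formation as a Kruskal-style monotone union-find pass over MST edges. The paper's clusters at level $i+1$ are not determined by the MST and a diameter threshold alone --- they are built (Section 6, Steps 1--5) from the level-$i$ cluster graph $\mathcal{G}_i$, taking into account which nodes are high-degree in $\mathcal{E}_i$, which are branching in $\widetilde{\mathrm{MST}}_i$, and which edges are ``removable,'' precisely to guarantee simultaneously property (P2) and a lower bound on the corrected potential change $\Delta^+_{i+1}(\mathcal{X}) = \Omega(|\mathcal{V}(\mathcal{X})|\eps^2 L_i)$ (Lemma~\ref{lm:Clustering}, Item (3)). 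A plain Kruskal contraction does not give you (P2), and without (P2) the geometric decay in cluster count (and hence the $O(m)$ total cluster bound used in the running-time analysis of Lemma~\ref{lm:framework-technical}) also fails. The paper does introduce the augmented-diameter notion exactly so these cluster diameters can be maintained without running Dijkstra per level, but that bookkeeping is separate from, and in addition to, the structured hierarchical construction. So the stretch and runtime pieces of your plan are essentially sound; the lightness piece needs the two-level bucketing plus the explicit potential function, neither of which is present in your sketch.
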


We remark that $\alpha(m,n) = O(1)$ when $m = \Omega(n\log^{*}n)$; in fact, $\alpha(m,n) = O(1)$ even when $m = \Omega(n\log^{*(c)}n)$ for any constant $c$, where $\log^{*(\ell)}(.)$ denotes the iterated log-star function with $\ell$ stars.  Thus the running time in Theorem~\ref{thm:general-fast} is linear in $m$ in almost the entire regime of graph densities, i.e., except for very sparse graphs.
The previous state-of-the-art running time for the same lightness bound is super-quadratic in $n$, namely $O(n^{2 + 1/k + \epsilon'})$, for any constant $\epsilon' < 1$~\cite{ADFSW19}.

Surprisingly, the result of~\Cref{thm:general-fast} outperforms the analog result for sparse spanners in weighted graphs: for stretch $2k-1$, the only spanner construction with sparsity $O(n^{1/k})$ is the greedy spanner, whose running time is $O(mn^{1+\frac{1}{k}})$. Other results~\cite{ADFSW19,EN18} with stretch $(2k-1)(1+\eps)$ have (nearly) linear running time, but the sparsity is $O(n^{1/k}\log(k))$, which is worse than our lightness bound by a factor of $\log(k)$.  

Informally, the reason we can achieve light spanners that outperform the state-of-the-art sparse spanners stems from the fact that our framework 
essentially reduces the problem of constructing light spanners in weighted graphs to that of constructing sparse spanners in unweighted graphs.
(And in unweighted graphs, one can construct a $(2k-1)$-spanner with $O(n^{1/k})$ sparsity in $O(m)$ time~\cite{HZ96}.)

\paragraph{Unit disk graphs}
We prove the following theorem, which resolves~\Cref{ques:udg}.
\begin{theorem}\label{thm:UDG-fast} For any set $P$ of $n$  points in $\mathbb{R}^d$,  any $d = O(1)$ and any  fixed $\eps \in (0,1)$, one can construct a $(1+\epsilon)$-spanner of the UDG for $P$ with constant sparsity and lightness. 
For $d = 2$, the construction running time  is  $O(n \log n)$; for $d = 3$, the running time is $\tilde{O}(n^{4/3})$; and for $d\geq 4$, the running time is $O(n^{2-\frac{2}{(\lceil d/2 \rceil+1)} + \delta})$ for any constant $\delta > 0$. 
\end{theorem}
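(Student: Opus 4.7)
The plan is to combine the fast sparse spanner construction of F{\"u}rer and Kasiviswanathan (Lemma~\ref{lm:UDG-sparse-sp}) with the paper's unified ``sparse-to-light'' transformation, thereby upgrading sparsity to lightness without incurring additional runtime beyond what the sparse construction already pays. In other words, the proof consists of composing two black boxes: a fast geometric sparsifier that produces a $(1+\eps/3)$-spanner with linearly many edges, and the generic framework that converts any weighted graph into a light $(1+\eps/3)$-spanner of itself in time nearly linear in the input edge count.

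First, I would build a sparse $(1+\eps/3)$-spanner $H$ of the UDG via Lemma~\ref{lm:UDG-sparse-sp}, within the runtime stated in the theorem; this yields $|E(H)| = O(n \eps^{1-d}) = O(n)$ for fixed $\eps$ and $d$. Next, I would feed $H$, viewed as a weighted graph on its own vertex set, into the unified framework developed earlier in the paper, obtaining a subgraph $H' \subseteq H$ that is a $(1+\eps/3)$-spanner of $H$ with constant lightness relative to $\MST(H)$. Since the framework's transformation is designed to run in time nearly linear in the number of input edges, this step costs only $\tilde O(n)$ time and is absorbed into the runtime of the first step in all three regimes $d = 2$, $d = 3$, and $d \geq 4$.

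It remains to verify that $H'$ has the advertised properties as a spanner of the UDG. The stretch follows by composition, since $(1+\eps/3)^2 \le 1+\eps$ for small $\eps$. For lightness, I would establish that $w(\MST(H)) \le (1+\eps/3) \cdot w(\MST(\mathrm{UDG}))$ by a standard path-replacement argument: replace each edge of $\MST(\mathrm{UDG})$ by its stretch-$(1+\eps/3)$ detour in $H$ and extract a spanning tree from the resulting connected subgraph, whose total weight is at most $(1+\eps/3)\, w(\MST(\mathrm{UDG}))$. This transfers constant lightness from $\MST(H)$ to the UDG's MST. Constant sparsity is immediate from $|E(H')| \le |E(H)| = O(n)$. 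The main obstacle is guaranteeing that the sparse-to-light transformation truly runs in near-linear time in $|E(H)|$ on an arbitrary (not necessarily geometric) input graph; this hinges on the careful implementation of the MST contraction steps and the cluster hierarchy that form the core technical content of the framework, and it is precisely this general algorithmic efficiency that is transferred here to the UDG setting.
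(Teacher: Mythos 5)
Your high-level skeleton --- construct a fast sparse $(1+\eps)$-spanner of the UDG via \Cref{lm:UDG-sparse-sp}, then apply the paper's framework to the resulting sparse graph --- is exactly the paper's route (through \Cref{thm:ACTUDGSpanner}), and your explicit lightness-transfer step from $\MST$ of the sparse spanner to $\MST$ of the UDG via path replacement is a correct observation that the paper leaves implicit. However, you mischaracterize the second step as a generic, geometry-free transformation that works ``on an arbitrary (not necessarily geometric) input graph'' and yields constant lightness. The framework of \Cref{lm:framework} is parameterized by a class-specific sparse-spanner subroutine $\ssa$, and the lightness it produces is $O((\chi\eps^{-3}+\eps^{-4})\log(1/\eps))$ where $\chi$ is the sparsity parameter of that subroutine. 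If you truly ran it on an arbitrary weighted graph, the only available subroutine is $\ssa_{\gen}$, whose sparsity is $\chi = O(n^{1/k})$ at stretch $(2k-1)(1+\eps)$; that yields neither constant lightness nor stretch $1+\eps$.

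The constant lightness at stretch $1+\eps$ is obtained by running the \emph{geometric} $\ssa_{\geom}$ (the cone/Yao-graph construction, with $\chi = O(\eps^{1-d})$) on the cluster graphs derived from the sparse spanner, and this step is irreducibly geometric: the cones partition $\mathbb{R}^d$ around the cluster representatives, the stretch bound of \Cref{lm:ACT-UDG} (via \Cref{clm:ACT-rvrj}) is a projection/sine argument, and, specifically for UDGs, the proof uses that two representatives at Euclidean distance $< L \le 1$ are joined by an actual UDG edge whose stretch is already controlled at lower levels of the hierarchy. So the second black box must retain the embedding of $P$ in $\mathbb{R}^d$; it is not an abstract weighted-graph procedure. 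Once you replace your abstract second step with $\ssa_{\geom}$ applied to the sparse spanner --- which is precisely the content of \Cref{thm:ACTUDGSpanner} --- the rest of your argument, including the stretch composition and your lightness-transfer claim $w(\MST(H)) \le (1+O(\eps))\, w(\MST(U))$, goes through.
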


\paragraph{Minor-free graphs} 
We prove the following theorem, which resolves~\Cref{ques:minor}. 

\begin{theorem}\label{thm:minor-free-fast}
	For any $K_r$-minor-free graph $G$  and any fixed $\eps \in (0,1)$, one can construct a $(1+\epsilon)$-spanner of $G$ with lightness  $O(r\sqrt{\log r})$ in $O(nr\sqrt{\log r})$ time.  
\end{theorem}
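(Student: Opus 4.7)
The plan is to invoke the unified framework established in this paper, which transforms the construction of a light $(1+\eps)$-spanner in a weighted graph $G$ into the construction of sparse spanners in (minors of) an unweighted reduction of $G$. The key structural input for $K_r$-minor-free graphs is the classical Kostochka--Thomason bound asserting that every $N$-vertex $K_r$-minor-free graph has at most $c \cdot N r \sqrt{\log r}$ edges, together with the fact that minor-freeness is closed under taking minors. Since $G$ is $K_r$-minor-free, in particular $|E(G)| = O(nr\sqrt{\log r})$, which matches both the claimed runtime and the target lightness.

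Concretely, I would first compute a minimum spanning tree $T$ of $G$ in $O(m\alpha(m,n)) = O(nr\sqrt{\log r})$ time and then partition the non-tree edges into geometric weight classes $W_i = \{e : w(e) \in [(1+\eps)^i, (1+\eps)^{i+1})\}$; this partition can be computed in $O(m)$ time via bucket sort after rescaling. Processing classes in increasing order, the framework specifies an inclusion rule for each candidate edge which depends only on the connectivity structure of the already-chosen cheaper edges, and which can be implemented by Union--Find in $O(m\alpha(m,n))$ total time. The framework guarantees that the edges selected from $W_i$, viewed after contracting components formed by previously chosen edges of sufficiently smaller weight, are the edge set of a minor of $G$; being $K_r$-minor-free, it contains at most $c \cdot N_i \cdot r\sqrt{\log r}$ edges on its $N_i$ vertices. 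Charging each such edge, of weight at most $(1+\eps)^{i+1}$, against the $(1+\eps)^i$-scale portion of $w(T)$ and summing geometrically over $i$ yields lightness $O(r\sqrt{\log r})$, with the $\eps$-dependence absorbed into a constant for fixed $\eps$.

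The main obstacle is to realize the framework's inclusion rule as a local, Union--Find-based test rather than a global shortest-path test: the greedy spanner of \cite{BLW17} achieves comparable lightness but spends $\Omega(n)$ per candidate edge on distance computations, forcing quadratic total time. The framework must therefore give an inclusion criterion strong enough to preserve the $(1+\eps)$ stretch, while weak enough to be decided through cheap connectivity queries on the contracted auxiliary graph. A secondary but essential point is verifying that the graphs charged by Kostochka--Thomason are genuine minors of $G$ (not merely subgraphs of the spanner), so that minor-freeness is inherited and the sparsity bound applies; this relies on the fact that the reduction proceeds by contractions of connected chunks of cheaper edges. Once these two points are in place, combining the framework with near-linear-time MST algorithms and Union--Find yields \Cref{thm:minor-free-fast}.
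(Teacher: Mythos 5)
Your proposal captures the right structural ingredients — the Kostochka--Thomason bound on edges of $K_r$-minor-free graphs, the observation that the contracted auxiliary graph at each scale is a genuine minor of $G$, and the geometric charging of per-level spanner weight against the MST — and these are indeed what drive the lightness bound. However, the runtime you obtain is \emph{not} $O(nr\sqrt{\log r})$. You compute an MST via a general near-linear algorithm and use \textsc{Union-Find} to maintain contractions; both of these introduce an $\alpha(m,n)$ factor, so your construction runs in $O(nr\sqrt{\log r}\,\alpha(nr\sqrt{\log r},n))$. That is precisely the ``weaker version'' the paper proves first (end of \Cref{subsec:minor-free}), not \Cref{thm:minor-free-fast} itself. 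Your claim that ``$O(m\alpha(m,n)) = O(nr\sqrt{\log r})$'' is false unless $\alpha$ is constant.

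The idea you are missing is the one in \Cref{sec:minor-linear}: to get a genuinely linear runtime, the paper \emph{does not} use \textsc{Union-Find} at all, and instead adapts Mare{\v{s}}'s technique from linear-time MST on minor-free graphs. Concretely, one maintains inductively at each level $i$ a pruned edge set $\mathcal{E}_{\ge i}$ (corresponding to surviving heavier edges) together with the size invariant $|\mathcal{E}_{\ge i}| = O(r\sqrt{\log r}\,|\mathcal{V}_i|)$. After forming the level-$(i+1)$ clusters, parallel edges between the same pair of new clusters are discarded (keeping the lightest one); since the resulting graph on $\mathcal{V}_{i+1}$ is a minor of $G$, Kostochka--Thomason re-establishes the size invariant. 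Because the edge set has been physically pruned, the cluster-graph construction and the next round of clustering can be done in time linear in $|\mathcal{V}_i|$ plus the pruned edge count, without any union-find lookups. The MST itself is also computed via Mare{\v{s}}'s $O(nr\sqrt{\log r})$ algorithm rather than Chazelle's. Finally, your framing of the $\ssa$ as ``an inclusion rule'' deciding whether to keep each candidate edge is misleading for this graph class: in the paper, $\ssa_{\minor}$ simply returns \emph{all} edges of the cluster graph, because the cluster graph is already sparse by minor-freeness; the selection of which edges survive happens earlier, when self-loops, parallel edges, and removable edges are discarded in building $\mg_i$.
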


\subsection{Subsequent Work}

In a subsequent and consequent follow-up to this work,  the same authors~\cite{LS22} used our framework here to present a fast construction of spanners with {near-optimal} {\em sparsity} and \emph{lightness} for general graphs \cite{LS22}. We also adapted and simplified our construction here to construct a \emph{sparse} spanner (with unbounded lightness) in $O(m\alpha(m,n) + \srt(m))$ time in the pointer-machine model, where $\srt(m)$ is the time to sort $m$ integers. Even in a stronger \rdrm model, the best-known algorithm for sorting $m$ integers takes $O(m\sqrt{\log \log m})$~\cite{HT02} expected time. Thus, the running time of the sparse spanner algorithm is still inferior to our running time in \Cref{thm:general-fast}. In the  \rdrm model, a linear time algorithm for constructing a sparse spanner was presented; we do not consider this model in our work here.

\subsection{A Unified Framework: Technical and Conceptual Highlights}\label{subsec:framework-intro}

In this section, we give a high-level overview of our framework for constructing light spanners with stretch $t(1+\eps)$, for some parameter $t$ that depends on the examined graph class; e.g., for Euclidean spaces $t  = 1+\eps$, while for general graphs $t = 2k-1$. We have ignored thus far the dependencies of $\eps$ by assuming it is fixed, but in what follows, we shall explicate the exact dependencies of $\eps$ on the running time and lightness bounds. Although the $\eps$-dependencies are not a central part of this paper, they are central to our companion paper on light spanners that achieve fine-grained optimality, and thus they are central to our framework at large.  We shall construct spanners with stretch $t(1+O(\eps))$
and   assume w.l.o.g.\ that $\eps$ is sufficiently smaller than $1$; a stretch of $t(1+\eps)$, for any $\eps \in (0,1)$,   can be achieved by scaling.

Let $L$ be a positive parameter, and let $H_{< L}$ be a $t(1+\gamma\eps)$-spanner for all edges in $G = (V,E,w)$ 
of weight $< L$, for some constant $\gamma \geq 1$.  That is, $V(H_{< L}) = V$ and for any edge $(u,v)\in E$ with $w(u,v)<  L$:
\begin{equation}\label{eq:Stretch-HL}
	d_{H_{< L}}(u,v) \leq t(1+\gamma \eps)w(u,v) .
\end{equation}

Note that by the triangle inequality, $H_{< L}$ is also a $t(1+\gamma\eps)$-spanner for every pair of vertices of distance $< L$. Our framework relies on the notion of a {\em cluster graph}, defined as follows.

\begin{definition}[$(L,\eps,\beta)$-Cluster Graph]\label{def:ClusterGraph-Param} An edge-weighted graph $\mg= (\mv,\me,\omega)$ is called an \emph{$(L,\eps,\beta)$-cluster graph} with respect to spanner $H_{< L}$, for positive parameters $L, \eps, \beta$, if it satisfies the following conditions:
	\begin{enumerate}
		\item Each node $\varphi_C \in \mv$ corresponds to a subset of vertices $C \in V$, called a \emph{cluster},
		 in the original graph $G$. For any pair $\varphi_{C_1}, \varphi_{C_2}$ of distinct nodes 
		in $\mv$, we have $C_1\cap C_2 = \emptyset$.
		\item Each edge $(\varphi_{C_1},\varphi_{C_2})\in \me$ corresponds to an edge $(u,v)\in E$, such that $u \in C_1$ and $v\in C_2$. Furthermore, $\omega(\varphi_{C_1},\varphi_{C_2}) = w(u,v)$.
		\item $L \leq \omega(\varphi_{C_1},\varphi_{C_2}) < (1+\eps)L$, for every edge $(\varphi_{C_1},\varphi_{C_2})\in \me$.
		\item $\dm(H_{< L}[C]) \leq \beta \eps L$, for any cluster $C$ corresponding to a node $\varphi_C \in \mv$.  
	\end{enumerate} 
Here $\dm(X)$ denotes the {\em diameter} of a graph $X$, i.e., the maximum pairwise distance in $X$. 
\end{definition} 

Condition (1) asserts that clusters corresponding to nodes of $\mg$ are vertex-disjoint. Furthermore, Condition (4) asserts that they induce subgraphs of low diameter in $H_{< L}$. In particular, if $\beta$ is constant, then the diameter of clusters is roughly $\eps$ times the weight of edges in the cluster graph.  

In our framework, we use the cluster graph to compute a subset of edges in $G$ of weights in $[L,(1+\eps) L)$ to add to the spanner $H_{< L}$, so as to obtain a spanner, denoted by $H_{<(1+\eps)L}$, for all edges in $G$ of weight less than  $(1+\eps)L$. As a result, we extend the set of edges whose endpoints' distances are preserved (to within the required stretch bound) by the spanner.  By repeating the same construction for edges of higher and higher weights, we eventually obtain a spanner that preserves all pairwise distances in $G$.

To facilitate the transformation of edges of $\mg$ to edges of $G$,  we assume access to a function $\source(\cdot)$ that supports the following operations in $O(1)$ time: (a) given a node $\varphi_C$, $\source(\varphi_C)$ returns a vertex $r(C)$ in cluster $C$, called the {\em representative} of $C$, (b) given an edge $(\varphi_{C_1}, \varphi_{C_2})$ in $\me$,  $\source(\varphi_{C_1}, \varphi_{C_2})$ returns the corresponding edge $(u,v)$ of  $(\varphi_{C_1}, \varphi_{C_2})$, which we refer to as the {\em source edge} of $(\varphi_{C_1}, \varphi_{C_2})$, where $u \in C_1$ and $v\in C_2$. We note that $u$ (resp., $v$) need not be $r(C_1)$ (resp., $r(C_2)$) and that for an edge  $(\varphi_{C_1}, \varphi_{C_2})$ in $\me$ there could be multiple edges $(u,v)\in G$ such that $u\in C_1$ and $v\in C_2$; our algorithm will choose one (often the smallest weight) as the source of $(\varphi_{C_1}, \varphi_{C_2})$. 
Constructing the function $\source(\cdot)$ efficiently is straightforward; the details are in \Cref{sec:framework}. 
Our framework assumes the existence of the following algorithm, hereafter the \emph{sparse spanner algorithm ($\ssa$)}, which computes a subset of edges in $\mg$, whose source edges are added to $H_{< L}$.

\begin{tcolorbox}
	\hypertarget{SPHigh}{}
	$\ssa$: Given an $(L,\eps,\beta)$-cluster graph $\mg(\mv,\me,\omega)$ and function $\source(\cdot)$ as defined above, 
		the $\ssa$ outputs a subset of edges  $\me^{\prune}\subseteq \me$ such that: 
	\begin{enumerate}[noitemsep]
	 	\item \textbf{(Sparsity)~} \hypertarget{Sparsity}{}  $|\me^{\prune}| \leq \chi |\mv|$ for some parameter $\chi > 0$  (which we would like to minimize). 
		\item \textbf{(Stretch)~} \hypertarget{Stretch}{} For each edge $(\varphi_{C_u},\varphi_{C_v})\in \me$, $d_{H_{<(1+\eps)L}}(u,v)\leq t(1+s_{\ssa}(\beta)\eps)w(u,v)$  where $(u,v) = 
		\source(\varphi_{C_u}, \varphi_{C_v})$ and $s_{\ssa}(\beta)$ is some constant that depends on $\beta$  only, and $H_{< (1+\eps)L}$  is the graph obtained by adding the source edges of $\me^{\prune}$ to $H_{< L}$. 
	\end{enumerate}
	Let  $\tm_{\ssa} = O((m'+n')\tau(m',n'))$ be the running time of the $\ssa$, where $\tau$ is a monotone non-decreasing function,   $n' = |\mv|$ and $m' = |\me|$.
\end{tcolorbox}

The final lightness of the spanner we construct will depend on parameter $\chi$ in the $\ssa$, and therefore, $\chi$ should be as small as possible. 

Intuitively, the $\ssa$ can be viewed as an algorithm that constructs a \emph{sparse spanner for an unweighted graph}, as edges of $\mg$ have the same weights up to a factor of $(1+\eps)$ and the only requirement from the edge set $\me^{\prune}$ returned by the $\ssa$, besides achieving small stretch, is that it would be of small size.   Importantly, while the interface to the $\ssa$ remains the same across all graphs,  its exact implementation may change from one graph class to another;  
informally, for each graph class, the $\ssa$ is akin to the state-of-the-art unweighted spanner construction for that class, and this part of the framework is pretty simple.  The highly nontrivial part of the framework is given by the following theorem, which provides a {\em black-box transformation} from the $\ssa$ to an efficient {\em meta-algorithm} for constructing light spanners. We note that this transformation remains the same across all graphs.

\begin{restatable}{theorem}{Framework}
	\label{lm:framework} Let $L,\eps, t, \gamma, \beta$ be non-negative parameters where $\gamma, \beta \geq 1$ only take on constant values, and $0 < \eps \ll 1$. 
	Let $\mathcal{F}$ be an arbitrary graph class.
	If, for any graph $G$ in $\mathcal{F}$, the $\ssa$ can take any $(L,\eps,\beta)$-cluster graph $\mg(\mv,\me,\omega)$ corresponding to $G$  as input and return as output a subset of edges $\me^{\prune}\subseteq \me$ satisfying the aforementioned two properties of (\hyperlink{Sparsity}{Sparsity}) and (\hyperlink{Stretch}{Stretch}),
	then for any graph in $\mathcal{F}$ we can construct a spanner with stretch $t(1+(s_{\ssa}(O(1))+O(1))\eps)$, lightness $O((\chi \eps^{-3} + \eps^{-4})\log(1/\eps))$, and in time $O(m\eps^{-1}(\alpha(m,n) + \tau(m,n) + \eps^{-1})\log(1/\eps))$.  
\end{restatable}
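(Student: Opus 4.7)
The plan is to grow the spanner iteratively one weight-class at a time, invoking $\ssa$ as a black box at each scale. I start from the MST and maintain the invariant that after processing all edges of weight below a threshold $L$, the current graph $H_{<L}$ is a $t(1+\gamma\eps)$-spanner for all such edges. Thresholds are geometric, $L_i=(1+\eps)^i L_0$, and at each level I build an $(L_i,\eps,\beta)$-cluster graph as follows. Using a Kruskal-style union-find over $H_{<L_i}$, I greedily merge components while maintaining an approximate radius estimate per component, refusing any merge that would push the cluster diameter past $\beta\eps L_i$. The resulting components become the nodes of $\mg$, its edges are the weight-$\approx L_i$ original-graph edges that cross distinct clusters, and $\source(\cdot)$ stores one representative per node and one source edge per $\mg$-edge. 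I then invoke $\ssa(\mg,\source)$, take the returned $\me^{\prune}$ with $|\me^{\prune}|\leq \chi|\mv|$, and add the corresponding source edges to the spanner.

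Correctness of the stretch invariant is an induction over $i$. For an edge $(u,v)$ in the current weight class, either $u,v$ lie in the same cluster $C$---in which case the diameter bound gives $d_{H_{<L_i}}(u,v)\leq \beta\eps L_i\leq \beta\eps \cdot w(u,v)$, comfortably within target stretch---or they lie in distinct clusters and the stretch property of $\ssa$ directly yields $d_{H_{<(1+\eps)L_i}}(u,v)\leq t(1+s_{\ssa}(\beta)\eps)\,w(u,v)$. All lighter edges are handled by the inductive hypothesis.

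The lightness analysis is the technical heart and the place I expect the hardest work. I plan to use an amortized charging argument built on the following key bound: because every non-singleton cluster at level $L$ consumes at least $\Omega(\eps L)$ weight of internal spanner tree to realize its diameter, the number of non-trivial clusters satisfies $|\mv|=O(w(H_{<L})/(\eps L))$. Hence the weight of edges added at level $L$ is $O(\chi L\cdot |\mv|)=O(\chi/\eps)\cdot w(H_{<L})$. A naive sum across all $\Theta(\eps^{-1}\log(W/w_{\min}))$ levels would explode, so I would group levels into epochs of $O(\log(1/\eps))$ consecutive scales and set up a potential that charges added weight at level $L$ either to the MST mass inside each cluster (contributing the $\chi\eps^{-3}\log(1/\eps)$ term) or to spanner edges inherited from the previous epoch (contributing the $\eps^{-4}\log(1/\eps)$ term). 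Telescoping the potential across epochs, so that the lightness added in one epoch is paid for by a controlled fraction of the lightness from the preceding epoch, is what prevents a geometric blowup and ultimately yields the claimed bound.

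Runtime splits into three pieces: (i) $O(m\alpha(m,n))$ amortized union-find cost for all cluster merges, since merges are batched by weight in Kruskal fashion and each edge enters the structure a bounded number of times; (ii) $O(m\eps^{-1}\log(1/\eps))$ bookkeeping for the $\source$-maps and level transitions, since each edge participates in $O(\eps^{-1}\log(1/\eps))$ cluster graphs before being either added to $H$ or dominated by a lighter one; and (iii) $\sum_i O((m'_i+n'_i)\tau(m'_i,n'_i))$ for the $\ssa$ calls, bounded by $O(m\eps^{-1}\log(1/\eps)\cdot\tau(m,n))$ using monotonicity of $\tau$ and the same per-edge participation bound. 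Summing yields the claimed runtime. The main obstacle, beyond the lightness accounting, is maintaining diameter-bounded clusters efficiently under union-find merges: computing exact diameters is prohibitive, but a $(1+O(\eps))$-approximate radius certificate per cluster---updated in $O(\alpha(m,n))$ per merge---suffices given the slack built into $\beta$.
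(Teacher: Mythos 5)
Your high-level plan---iterate over weight scales, build a cluster graph, call $\ssa$, track clusters via union-find---matches the skeleton of the paper's argument. But there is a real gap in the lightness accounting, which is where the difficulty lives, and a couple of points where the sketch glosses over steps the paper has to work hard for.

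The central problem is that you charge the weight added at scale $L$ to $w(H_{<L})$: ``the weight of edges added at level $L$ is $O(\chi/\eps)\cdot w(H_{<L})$.'' This is a recurrence of the form $w(H_{<(1+\eps)L}) \le (1+O(\chi/\eps))\, w(H_{<L})$, which is exponential in the number of levels. You acknowledge this and propose fixing it with epochs and a telescoping potential, but no actual invariant is stated, and I don't see how to make it telescope: each epoch would still multiply the previous epoch's weight by a constant greater than $1$. The paper's resolution is qualitatively different. It defines a potential $\Phi$ \emph{on the clusters}, normalized so that $\Phi_1 \le w(\mst)$ at the base level (the level-1 potentials are the diameters of vertex-disjoint subtrees of the subdivided MST, so they sum to $\le w(\mst)$), and then shows $w(H_i) \le \lambda\,\Delta_{i+1} + a_i$ where $\Delta_{i+1} = \Phi_i - \Phi_{i+1}$ and $\sum_i a_i = O(w(\mst))$. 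Summing, $\sum_i \Delta_i$ telescopes to at most $\Phi_1 \le w(\mst)$, and there is no compounding. The thing you are missing that makes this possible is the $\Omega(1/\eps)$-\emph{refinement}: each level-$(i+1)$ cluster is the union of $\Omega(1/\eps)$ level-$i$ clusters, which the paper achieves by putting $L_{i+1} = L_i/\eps$ within a fixed residue class $\sigma$ (plus an outer loop over $O(\eps^{-1}\log(1/\eps))$ residues). Your scales $L_i = (1+\eps)^i L_0$ do not merge $\Omega(1/\eps)$ clusters per step, which breaks both the potential decrease (there is nothing to telescope) and the $\sum_i |\mathcal{C}_i| = O(m)$ bound the paper uses to keep the per-$\sigma$ running time near-linear without a $\log$ factor.

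Two further issues. First, you assert that a $(1+O(\eps))$-approximate radius certificate can be maintained in $O(\alpha(m,n))$ per merge, ``given the slack built into $\beta$.'' The paper found this hard enough that it introduced a new notion---\emph{augmented diameter}---precisely because exact or approximate diameters under arbitrary merges are expensive; the augmented diameter works because each cluster is built from a bounded number of tree pieces plus a single non-tree edge, so it is computable by bounded-depth tree traversal (see Lemma 6.1 and Section 6). Your Kruskal-style greedy merging gives no such structural control, and the $O(\alpha)$ claim is unjustified. Second, the potential-decrease argument fails on path-like clusters (a cluster that is exactly a subpath of the MST has $\Delta^+ = 0$); the paper handles this by first \emph{pruning removable edges} from the cluster graph (Definition 4.12(3))---edges whose MST-path detour is within $t(1+6g\eps)$ stretch---so that any remaining non-tree edge forces a detectable potential drop. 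Your proposal has no analogue, so the ``every non-singleton cluster consumes $\Omega(\eps L)$ internal tree weight'' bound would not deliver a net potential decrease for path clusters with surviving low-stretch chords.

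In short: the scaffolding is right, but the lightness argument you sketch does not close, and closing it requires the $1/\eps$-separated scales with $\Omega(1/\eps)$-refinement, the potential-on-clusters telescoping to $w(\mst)$, the removable-edge pruning, and the augmented-diameter bookkeeping---the four devices that constitute the paper's actual technical contribution.
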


We note that $\gamma$ in \Cref{lm:framework} is the stretch parameter in \Cref{eq:Stretch-HL}, which is encoded via the definition of $\ssa$. We remark the following regarding \Cref{lm:framework}.

\begin{remark}\label{remark:ACTIntro} (1) If the \hyperlink{SPHigh}{$\ssa$} can be implemented in the ACT model, then the construction of light spanners provided by \Cref{lm:framework} can also be implemented in the ACT model in the stated running time. (2)  Parameters $\gamma, \beta$ only take on constant values, and $\eps$ is bounded inversely by $\gamma$ and $\beta$. In all constructions in \Cref{sec:applications}, $\eps \leq \min\{1/\gamma,1/(6\beta+6)\}$.
\end{remark}

In the implementations of \hyperlink{SPHigh}{$\ssa$} for Euclidean spaces and UDGs, we need to gurantee that $H_{< L}$ preserves distances smaller than $L$ within a factor of $t(1+\gamma \eps)$. However, we do not need this gurantee for general graphs and minor-free graphs; all we need is Item (4) in \Cref{def:ClusterGraph-Param}.

The transformation provided by \Cref{lm:framework}, from sparsity in almost unweighted graphs (as captured by the $\ssa$)  to lightness, has a constant loss on lightness (for constant $\eps$) and a small running time overhead.  In \Cref{sec:applications}, we provide simple implementations of the $\ssa$ for several classes of graphs in time $O(m+n)$, for a constant $\eps$; \Cref{lm:framework} thus directly yields a running time of $O((m+n)\alpha(m,n))$. For minor-free graphs, with an additional effort, we remove the factor $\alpha(m,n)$ from the running time. For Euclidean spaces and UDGs, we apply the transformation not on the input space but rather on a sparse spanner, with $O(n)$ edges, hence the running time $O((m+n)\alpha(m,n))$ of the transformation is not the bottleneck, as it is dominated by the time $\Theta(n \log n)$ needed for building Euclidean spanners.

Despite the clean conceptual message behind \Cref{lm:framework} --- in providing a transformation from sparse to light spanners --- its proof is technical and highly intricate. This should not be surprising, as our goal is to have a single framework that can be applied to basically any graph class. The applicability of our framework goes far beyond the specific graph classes considered in the current paper, which merely aim at capturing several very different conceptual and technical hurdles, e.g.,  complete vs.\ non-complete graphs, geometric vs.\ non-geometric graphs, stretch $1+\eps$ vs.\ large stretch, etc. The heart of our framework is captured by \Cref{lm:framework}; we give a brief overview of the proof in \Cref{subsec:proof-overview} below. In our companion paper, we build on this framework to achieve fine-grained optimality for light spanners.

We next argue that our approach is inherently different than previous ones. To this end, we highlight one concrete result --- on Euclidean spanners in the ACT model --- which breaks a longstanding barrier in the area of geometric spanners by using an inherently {\em non-geometric} approach. All the previous algorithms for light Euclidean spanners were achieved via the greedy and approximate-greedy spanner constructions. The greedy algorithm is non-geometric but slow, whereas the approximate-greedy algorithm is geometric and can be implemented much more efficiently. The analysis of the lightness in both algorithms is done via the so-called {\em leapfrog property} \cite{DHN93,das1994fast,DNS95,das1996constructing,GLN02,NS07}, which is a geometric property. The fast spanner construction of GLN \cite{GLN02} implements the approximate-greedy algorithm by constructing a hierarchy of clusters with $O(\frac{\log n }{\log \log n})$ levels and, for each level, Dijkstra's algorithm is used for the construction of clusters for the next level. The GLN construction incurs an additional $O(n\log n)$ factor for each level to run  Dijkstra's algorithm in the ACT model, which ultimately leads to a running time of $O( n \frac{\log^2 n}{\log \log n})$. By employing indirect addressing and exploiting geometric properties, GLN designed an implementation of Dijkstra's algorithm with a running time of $O(n)$ per level after a preprocessing time of $O(n\log n)$. The resulting algorithm with {\em indirect addressing} takes time $O(n\log n)$. Our approach is inherently different, and in particular, we do not need to run	Dijkstra's algorithm or any other single-source shortest or (approximately shortest) path algorithm. The key to our efficiency is careful usage of the new notion of augmented diameter and its interplay with the potential function argument and the hierarchical partition that we use. We stress again that our approach is non-geometric, and the only potential usage of geometry is in the sparse spanner construction that we apply. (Indeed, the sparse spanner construction that we chose to apply is geometric, but this is not a must.)

\subsection{Overview of the Proof of \texorpdfstring{\Cref{lm:framework}}{Framework}}\label{subsec:proof-overview}
  Our starting point of the proof of \Cref{lm:framework}  is a basic hierarchical partition, which dates back to the early 90s \cite{ALGP89,CDNS92}, and was used by most if not all of the works on light spanners (see, e.g.,~\cite{ES16,ENS14,CW16,BLW17,BLW19,LS19}).  Each level $i\geq 0$ of the partition is associated with (i) a set of clusters of diameter $\eps L_i$ where $L_i = (1/\eps)^i$ and (ii) a set of edges of $G$ of weight in the range $[L_i/(1+\eps), L_i)$, called \emph{level-$i$ edges}. Observe that the length of level-$i$ edges is $\Omega(1/\eps)$ times longer than the diameter of the clusters at level $i$. The spanner construction is carried out level by level: first constructing clusters for level 0,  ``taking care'' of edges associated with level 0 (by adding edges to the spanner that preserve the distances between the endpoints of these edges), then moving on to level 1, and later to the next level, and so on. 
  
  One subtle issue is that using a single hierarchy of partitions could not cover all the edges of $G$, since in one hierarchy, level-$(i+1)$ edges are $\Omega(1/\eps)$ longer than level-$i$ edges. This issue can be resolved by using $O(\log (1/\eps))$ hierarchies~\cite{BLW17,BLW19,Le20} and running the same algorithm $O(\log 1/\eps)$ times, each time with a different hierarchy; doing so adds only an $O(\log 1/\eps)$ factor overhead to the final running time and lightness. Our construction here has to be slightly more delicate: we run the (same) algorithm on level 0 of {\em all} hierarchies (from lower values of $L_0$ to higher values), and only then on level 1 of {\em all} hierarchies, and so on. This is important because when we consider edges at level $i$ in a given hierarchy, we rely on the assumption that all edges of length less than $L_i/(1+\eps)$ are already preserved in the spanner constructed so far, including edges not associated with any level of the current hierarchy. We note that in other spanner constructions such a coordination between different hierarchies is not needed.      
 
  Let us focus on the edges associated with level $i$ of some hierarchy. To preserve (the distances between the endpoints of) the level-$i$ edges, a simple idea is to construct a $(L_i,\eps, O(1))$-cluster graph  $\mg_i$ as in \Cref{def:ClusterGraph-Param}: the edge set contains all level-$i$ edges and the vertex set corresponds to the level-$i$ clusters containing the endpoints of the edges. Then, one can simply apply $\ssa$ to $\mg_i$ to get a subset of level-$i$ edges to add to the current spanner. The problem with this naive suggestion is that the total weight of the final spanner would be\footnote{To get this lightness bound, one has to apply standard techniques in a nontrivial way.} $O(\chi \eps^{-1} \log(1/\eps)\log n )$ instead of $O((\chi \eps^{-3} + \eps^{-4})\log(1/\eps))$. That is, one has to pay a factor of $\log n$ in the lightness since the total lightness added at every level $i$ could be $\Omega(\chi \eps^{-1})$, and there are $\Omega(\log n)$ levels and $\log(1/\eps))$ different hierarchies.

  To remove the $\log(n)$ factor in the lightness, one has to take into account the {\em dependency between edges added to the spanner at different levels}. In the geometric setting, as mentioned in \Cref{subsec:framework-intro},  the leap-frog property~\cite{DHN93,das1994fast,DNS95,das1996constructing} captures and handles this dependency in a nontrivial way. For general graphs, the seminal work of Chechik and Wullf-Nilsen~\cite{CW16} introduced a different technique for handling the dependency between different levels, which uses a {\em potential function argument}.  Roughly speaking, the potential $\Phi_i$ of level $i$ is the total diameter of all clusters at level $i$;  the potential of the $0$-th level is $\Phi_0 \leq w(\MST)$. Next, they constructed a spanner in such a way that the total weight of edges added at level $i$ is, loosely speaking, about $O(n^{1/k}/\eps^{2+1/k})(\Phi_i - \Phi_{i+1})$, where $k$ is the stretch parameter. Then, by taking the sum over all levels, the total weight is bounded by  $O(n^{1/k}/\eps^{2+1/k}) \sum_{i}(\Phi_i - \Phi_{i+1}) \leq O(n^{1/k}/\eps^{2+1/k})\Phi_0 = O(n^{1/k}/\eps^{2+1/k})w(\MST)$, leading to the lightness bound of $O(n^{1/k}/\eps^{2+1/k})$. Here, we over-simplified the ideas of~\cite{CW16} in three places: (i) it is not always possible to bound the total weight of edges added at level $i$ by $\Phi_i - \Phi_{i+1}$ (there are cases that have to be handled differently); (ii) they only handle edges of weight from $1$ up to $g^k$ for some large constant $g$ and a post-processing step is needed to handle edges of weight larger than $g^k$, leading to another factor of $1/\eps$ in the final lightness bound; and (iii) their hierarchical partition is different from the hierarchical partition that we have described so far. In particular, it is not clear how one could implement the construction of
  ~\cite{CW16}
  in subquadratic time,  as it requires a certain type of dynamic approximate distance oracle. Such an oracle was provided in a recent work~\cite{ADFSW19}, but it is not strong enough to break the quadratic time barrier (for near-optimal lightness). In this work, however, we are aiming at near-linear time.

Borradaile, Le, and Wulff-Nilsen ~\cite{BLW17} introduced a credit argument, which was an adaptation of the potential function argument of  Chechik and Wullf-Nilsen~\cite{CW16}, to show that the lightness of the greedy $(1+\eps)$-spanner of minor-free graphs is $O(1/\eps^3)$, removing the $\log(n)$ factor from the lightness bound in an earlier paper~\cite{GS02}.
While the potential function argument of Chechik and Wullf-Nilsen~\cite{CW16} is suitable for a stretch of at least 3,
the credit argument of \cite{BLW17} is more natural for the regime of stretch $1+\eps$, and was used by followup works~\cite{BLW19,Le20,LS19} to construct light spanners in the same stretch regime. It is unclear how to implement any of these algorithms in subqudratic time. For example, even in the basic setting of point sets in the Euclidean space $\mathbb{R}^d$, the result of~\cite{LS19} shows that the greedy $(1+\eps)$-spanner has lightness $O(\eps^{-d}\log(1/\eps)$, but the fastest implementation of greedy spanners takes $O(n^2\log(n))$ time~\cite{BCFMS10}.

 In this work we adapt the potential function argument of Chechik and Wullf-Nilsen~\cite{CW16} to the hierarchy of partitions that we set up as described above. We introduce the notion of \emph{augmented diameter} of a cluster and define the potential of a level of the hierarchy to be the sum of the augmented diameters of all the clusters at that level. The formal definition of augmented diameter appears in Section~\ref{sec:prelim}, but at a high level, the idea is to consider weights on both nodes and edges in a cluster, where the node weights are determined by the {\em potential values} of clusters computed (via simple recursion) in previous levels of the hierarchy. The main advantage of augmented diameter over the standard notion of diameter is that it can be computed efficiently, while the computation of diameter is much more costly. Informally, the augmented diameter can be computed efficiently since (i) we can upper bound the hop-diameter of clusters, and (ii) the clusters at each level are computed on top of some underlying tree;
roughly speaking, that means that all the distance computations are carried out on top of subtrees of bounded hop-diameter (or depth), hence the source of efficiency. 

One conceptual idea that guides our cluster and spanner construction is the {\em local view of the potential}.  In our context, it means that each cluster at level $i+1$ is constructed from clusters at level $i$ so as to maximize the \emph{local potential change}, which is basically the difference between the total potential of children clusters and the potential of the parent cluster. This local view is implicit in the cluster construction of Borradaile, Le, and Wulff-Nilsen~\cite{BLW17}; here, we made it explicit via the notion of (corrected) local potential change (\Cref{eq:LocalPotential} and \Cref{def:ClusterGraphNew}). We note that in the work~\cite{BLW17},  clusters are not used in the construction of the spanner, and hence efficiency is irrelevant. Instead, they use the cluster hierarchy to \emph{analyze the greedy algorithm}. On the other hand, our main focus here is on achieving a  \emph{(near-)linear time} construction, and we provide an efficient construction of the clustering algorithm of  Borradaile, Le, and Wulff-Nilsen~\cite{BLW17}.   Basically,  using the augmented diameter, we could bound the size of subgraphs arising during the course of our algorithm and compute the augmented diameters of clusters efficiently.  

The clusters for level $i+1$ that we construct can be partitioned into two sets: one set contains clusters that have a large (corrected) local potential change, called \emph{abundant clusters}, and the other set contains clusters that have 0 local potential change, called \emph{depleted clusters}. This induces a partition of level-$i$ clusters into two sets: the \emph{abundant set}, which consists of the level-$i$ clusters that are in the abundant level-$(i+1)$ clusters, and the \emph{depleted set}, which consists of those in depleted level-$(i+1)$ clusters\footnote{We only use the terminology of abundant and depleted clusters in the introduction; the actual construction is more delicate and requires a more nuanced terminology.}. We then apply $\ssa$ on the cluster graph (at level $i$) induced by the abundant set. Since the clusters are abundant, we can bound the set of edges added by the $\ssa$  by their (corrected) local potential change. For level-$i$ edges between level-$i$ clusters in the depleted set, we simply add them to the spanner. In this case, we cannot bound the spanner edges by the local potential change (because it could be 0 for depleted clusters).  The observation is that the total weight of these edges over all levels is small and, therefore, we can take care of this case by a simple tweak (the sequence  $\{a_i\}_{i \in \mathbb{N}^+}$ in \Cref{lm:framework-technical}).

One interesting aspect of our construction is that even if the running time at each level could be $\Omega(m)$ (modulo the running time of the $\ssa$), our overall running time overhead is $O(m\alpha(m,n))$ instead of $O(m \log(n))$, where $O(\log n)$ is the height of the hierarchy. That is, our framework can exploit the dependency between levels to optimize the running time. In our follow-up work~\cite{LS22}, we adapted the framework for lightness in this paper, specifically exploiting the dependency between different levels, to construct a spanner for general graphs with near-optimal \emph{sparsity} in linear time (in the RAM model).

  In summary, we propose a unified framework that reduces the problem of \emph{efficiently} constructing a \emph{light} spanner to the conjunction of two problems: (1) efficiently constructing a hierarchy of clusters with several \emph{carefully chosen properties}, and (2) efficiently constructing a {\em sparse spanner}; these two problems are intimately related in the sense 
that the ``carefully chosen properties'' of the clusters are set so that we are able to apply the sparse spanner construction efficiently.

%\subsection{Glossary} 

\begin{table}[!ht]
\caption{Notation introduced in \Cref{sec:intro}.}
	\label{table:glossray}
\renewcommand{\arraystretch}{1.3}
\resizebox{\textwidth}{!}{%
\begin{tabular}{|l|l|}
  	\hline
	\textbf{Notation} & \textbf{Meaning} \\ \hline
	$t,\eps$ & Stretch parameters, $t\geq 1, \eps \ll 1$.\\ \hline 
	$\nor2{p,q}$ & Euclidean distance between two points $p,q\in \mathbb{R}^d$.\\ \hline 
	$\alpha(m,n)$ &  The inverse Ackermann function.\\ \hline 
	$H_{<L}$ & $t(1+\gamma \eps)$-spanner for edges of weights less than $L$.\\ \hline 
	$\gamma$ & Stretch parameter in  $H_{< L}$;  $\gamma\geq 1$.\\ \hline 
	$L,\beta$ & Parameters in $(L,\eps,\beta)$-cluster graph (\Cref{def:ClusterGraph-Param}). \\ \hline 
$\mg = (\mv,\me,\omega)$ &  The $(L,\eps,\beta)$-cluster graph;  $L \leq \omega(\varphi_{C_1},\varphi_{C_2}) < (1+\eps)L$  $\forall (\varphi_{C_1},\varphi_{C_2})\in \me$. \\ \hline 
	$\varphi_C$ &  The node in $\mg$ corresponding to a cluster $C$.\\ \hline 
	$\ssa$ & The sparse spanner \hyperlink{SPHigh}{algorithm}.\\ \hline 
	$\chi$ & The  \hyperlink{Sparsity}{sparsity} parameter of $\ssa$.\\ \hline
	$s_{\ssa}(\cdot)$ & The \hyperlink{Stretch}{stretch} function of $\ssa$.\\ \hline  
	$\tau(\cdot, \cdot)$ & The function in the running time of $\ssa$.\\ \hline 
	$\source(\varphi_{C})$ & This returns the representative $r(C)$ in cluster $C$.\\ \hline 
	$\source(\varphi_{C_1}, \varphi_{C_2})$ & This returns the corresponding edge $(u,v)$ of  $(\varphi_{C_1}, \varphi_{C_2}) \in \me$. \\ \hline 
	%\caption{Notation introduced in \Cref{sec:intro}.}
	%\label{table:glossray}
\end{tabular}
}
\renewcommand{\arraystretch}{1}
\end{table}

\section{Preliminaries}\label{sec:prelim}

Let $G$ be an arbitrary edge-weighted graph. We denote by $V(G)$ and $E(G)$ the vertex set and edge set of $G$, respectively. We denote by $w: E(G)\rightarrow \mathbb{R}^+$ the weight function on the edge set.  Sometimes we write $G = (V,E)$ to clearly explicate the vertex set and edge set of $G$, and  $G =(V,E,w)$ to further indicate the weight function $w$ associated with $G$. 
We use $\mst(G)$ to denote a minimum spanning tree of $G$; when the graph is clear from context, we simply use $\mst$  as a shorthand for $\mst(G)$.

For a subgraph $H$ of $G$, we use $w(H) \defi \sum_{e\in E(H)}w(e)$ to denote the total edge weight of $H$.   The {\em distance} between two vertices $p,q$ in $G$, denoted by $d_G(p,q)$, is the minimum weight of a path between them in $G$. The diameter of $G$,
denoted by $\dm(G)$, is the maximum pairwise distance in $G$. A \emph{diameter path} of $G$ is a shortest (i.e., of minimum weight) path in $G$ realizing the diameter of $G$, that is,  it is a shortest path between some pair $u,v$ of vertices in $G$ such that $\dm(G) = d_G(u,v)$.

Sometimes we shall consider graphs with weights on both \emph{edges and vertices}. We define the \emph{augmented weight} of a path to be the total weight of all edges and vertices along the path. The \emph{augmented distance} between two vertices in $G$ is defined as the minimum augmented weight of a path between them in $G$. 
Likewise, the \emph{augmented diameter} of $G$, denoted by $\adm(G)$,
 is the maximum pairwise augmented distance in $G$; 
since we will focus on non-negative weights, the augmented distance and augmented diameter
are no smaller than the (ordinary notions of) distance and diameter. 
An \emph{augmented diameter path} of $G$ is a path of minimum augmented weight realizing the augmented diameter of $G$.

Given a subset of vertices $X\subseteq V(G)$, we denote by $G[X]$ the subgraph of $G$ \emph{induced by $X$}: $G[X]$ has $V(G[X]) = X$ and $E(G[X]) = \{(u,v)\in E(G) ~\vert~   u,v \in X\}$. Let $F\subseteq E(G)$ be a subset of edges of $G$. We denote by $G[F]$ the subgraph of $G$ with $V(G[F]) = V(G)$ and $E(G[F]) = F$.

Let $S$ be a \emph{spanning} subgraph of $G$; weights of edges in $S$ are inherited from $G$. The \emph{stretch} of $S$  is  given by $\max_{x,y \in V(G)} \frac{d_S(x,y)}{d_G(x,y)}$, and the maximum is attained by some edge $(x,y)$ of $G$. Throughout we will use the following known observation, e.g., Lemma 1 in~\cite{ADDJS93},
which implies that the stretch of $S$ is equal to $\frac{d_S(u,v)}{w(u,v)}$ for some edge $(u,v) \in E(G)$. 
\begin{observation} \label{stretch:ob}
 $\max_{x,y \in V(G)} \frac{d_S(x,y)}{d_G(x,y)} =  \max_{(x,y) \in E(G)} \frac{d_S(x,y)}{d_G(x,y)}$.
\end{observation}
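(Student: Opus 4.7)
The inequality $\geq$ in \Cref{stretch:ob} is immediate since every edge of $G$ is in particular a pair of vertices of $G$, so the right-hand maximum is taken over a subset of the pairs on the left. The plan is therefore to establish the reverse inequality $\leq$ by a standard ``worst pair is realized on an edge'' argument based on shortest paths.

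Fix an arbitrary pair $x,y \in V(G)$ and let $P = (x = u_0, u_1, \dots, u_k = y)$ be a shortest path between $x$ and $y$ in $G$. First I would use the fact that every sub-path of a shortest path is itself a shortest path to conclude that $d_G(u_{i-1},u_i) = w(u_{i-1},u_i)$ for each $i$, and hence
\[
d_G(x,y) \;=\; \sum_{i=1}^k w(u_{i-1},u_i) \;=\; \sum_{i=1}^k d_G(u_{i-1},u_i).
\]
Next, since $S$ is a spanning subgraph, the triangle inequality in $S$ gives $d_S(x,y) \leq \sum_{i=1}^k d_S(u_{i-1},u_i)$.

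Let $t^* \defi \max_{(a,b) \in E(G)} \frac{d_S(a,b)}{d_G(a,b)}$. For each consecutive pair $(u_{i-1},u_i)$, which is an edge of $G$, we have $d_S(u_{i-1},u_i) \leq t^* \cdot d_G(u_{i-1},u_i)$. Summing and combining with the two displayed identities above yields
\[
d_S(x,y) \;\leq\; \sum_{i=1}^k t^*\, d_G(u_{i-1},u_i) \;=\; t^* \cdot d_G(x,y),
\]
so $\frac{d_S(x,y)}{d_G(x,y)} \leq t^*$. Taking the maximum over $x,y \in V(G)$ completes the proof.

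There is really no main obstacle here; the only subtlety is the implicit use of the sub-path optimality property to equate $w(u_{i-1},u_i)$ with $d_G(u_{i-1},u_i)$ along a shortest path, which is what allows the per-edge stretch bound to telescope into a bound for the pair $(x,y)$. I would make sure to flag this step so the reader does not conflate the edge weight with the graph distance for an arbitrary edge of $G$.
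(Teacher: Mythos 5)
Your argument is correct and is the standard telescoping proof, including the key point that sub-path optimality gives $d_G(u_{i-1},u_i)=w(u_{i-1},u_i)$ along a shortest path so the per-edge stretch bound can be summed. The paper does not prove this observation at all --- it is stated as ``known'' in the preliminaries --- so there is no in-paper argument to compare against, and your writeup simply fills that gap with the canonical proof.
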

 We say that $S$ is a \emph{$t$-spanner} of $G$ if the stretch of $S$ is at most $t$. There is a simple greedy algorithm,  called $\pathg$ (or $\greedy$ for short), to find a $t$-spanner of a graph $G$: Examine the edges $e = (x,y)$ in  $G$ in nondecreasing order of weights, and add to the spanner edge $(x,y)$ iff the distance between $x$ and $y$ in the {\em current} spanner is larger than $t\cdot w(x,y)$.
 
  We say that a subgraph $H$ of $G$ is a $t$-spanner for a \emph{subset of edges $X\subseteq E$} if $\max_{(u,v) \in X} \frac{d_H(u,v)}{d_G(u,v)} \leq t$.

In the context of minor-free graphs, we denote by $G/e$ the graph obtained from $G$ by contracting $e$, where $e$ is an edge in $G$. If $G$ has weights on edges, then every edge in $G/e$ inherits its weight from $G$.

In addition to general and minor-free graphs, this paper studies {\em geometric graphs}.  Let $P$ be a set of $n$ points in $\mathbb{R}^d$. We denote by $\nor2{p,q}$ the Euclidean distance between two points $p,q\in \mathbb{R}^d$.   A  \emph{geometric graph} $G$ for $P$ is a graph where the vertex set corresponds to the point set, i.e., $V(G) = P$, and the edge weights are the Euclidean distances, i.e.,  $w(u,v) = \nor2{u,v}$ for every edge $(u,v)$ in $G$. Note that $G$ need not be a complete graph. If $G$ is a complete graph, i.e., $G = (P, \binom{P}{2},\nor2{\cdot})$, then $G$ is equivalent to the {\em Euclidean space} induced by the point set $P$.
For geometric graphs, we use the term \emph{vertex} and \emph{point} interchangeably.

We use $[n]$ and $[0,n]$ to denote the sets $\{1,2,\ldots,n\}$ and $\{0,1,\ldots,n\}$, respectively.

\section{Applications of the Unified Framework}\label{sec:applications}
In this section, we implement the \hyperlink{SPHigh}{$\ssa$} for each of the graph classes.   By plugging the $\ssa$ on top of the general transformation, as provided by \Cref{lm:framework}, we shall prove all theorems stated in \Cref{sec:intro}. We assume that $\eps \ll 1$, and this is without loss of generality since we can remove this assumption by scaling $\eps \leftarrow \eps'/c $ for \emph{any} $\eps' \in (0,1)$ and sufficiently large constant $c$. The scaling will incur a constant loss on lightness and running time, as the dependency on $1/\eps$ is polynomial in all constructions below. We refer readers to \Cref{table:glossray} for a summary of the notation introduced in \Cref{sec:intro}.

\subsection{Euclidean Spanners and UDG Spanners}\label{subsec:EuclideanUDG}

In this section, we prove the following theorem.

\begin{restatable}{theorem}{ACTUDGClustering}
	\label{thm:ACTUDGSpanner}
	Let $G = (P,E,w)$ be a $(1+\eps)$-spanner  either for a set of $n$ points $P$  or for  the unit ball graph $U$ of $P$ in $\mathbb{R}^d$ with $m$ edges. There is an algorithm that can  compute a $(1 + O(\eps))$-spanner $H$ of $G$ in the ACT model with lightness  $O((\eps^{-(d+2)} + \eps^{-4})\log(1/\eps))$  in time  $O(m\eps^{-1}(\alpha(m,n) + \eps^{1-d})\log(1/\eps))$.
\end{restatable}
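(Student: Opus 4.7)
The plan is to instantiate the unified framework of \Cref{lm:framework} with $t = 1$ (since $H$ is a spanner of the input $(1+\eps)$-spanner $G$) and constant $\gamma,\beta$; the only ingredient that must be supplied is an $\ssa$ tailored to Euclidean/UDG cluster graphs. I would use a cone-based selector: for each node $\varphi_C \in \mv$, precompute a partition of $\mathbb{R}^d$ into $O(\eps^{1-d})$ cones of angular diameter $\Theta(\eps)$ with apex at the representative $r(C) = \source(\varphi_C)$; then scan $\me$ once, and for each edge $(\varphi_{C_1}, \varphi_{C_2}) \in \me$ classify the direction from $r(C_1)$ to $r(C_2)$ into one of $r(C_1)$'s cones, retaining the edge in $\me^{\prune}$ iff no edge has yet been selected from that cone. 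This immediately yields sparsity $\chi = O(\eps^{1-d})$.

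For the stretch property, fix any $(\varphi_{C_1}, \varphi_{C_2}) \in \me$ with source $(u,v)$ of weight in $[L, (1+\eps)L)$, and let $(\varphi_{C_1}, \varphi_{C_3}) \in \me^{\prune}$ be the retained edge in the cone from $r(C_1)$ towards $r(C_2)$, with source $(u',v')$ where $u' \in C_1$ and $v' \in C_3$. The key observation is that $\dm(H_{<L}[C]) \le \beta\eps L$ upper-bounds the Euclidean cluster diameter (because $G$ is a Euclidean/UDG spanner, its graph distance dominates the Euclidean one), so $\nor2{r(C_1), r(C_j)} \in [(1-O(\eps))L, (1+O(\eps))L]$ for $j \in \{2,3\}$, and the $\Theta(\eps)$ angular spread forces $\nor2{r(C_2), r(C_3)} = O(\eps L)$. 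Since $G$ preserves Euclidean (resp.\ UDG) distances within factor $1+\eps$, and $O(\eps L) < L$ for $\eps$ sufficiently small (consistent with the restriction in \Cref{remark:ACTIntro}), we have $d_G(r(C_2),r(C_3)) < L$; the triangle-inequality consequence noted after \Cref{def:ClusterGraph-Param} then gives $d_{H_{<L}}(r(C_2), r(C_3)) = O(\eps L)$. Composing the path $u \rightsquigarrow u' \rightarrow v' \rightsquigarrow r(C_3) \rightsquigarrow r(C_2) \rightsquigarrow v$, with each within-cluster leg of length at most $\beta\eps L$ in $H_{<L}$ and the source edge of length at most $(1+\eps)L$, yields total length $\le (1 + s_{\ssa}(\beta)\eps)\, w(u,v)$ with $s_{\ssa}(\beta) = O(1)$ for constant $\beta$. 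The UDG case is identical after noting that $\nor2{r(C_2), r(C_3)} = O(\eps L) \le 1$ since $L \le 1$ in any UDG cluster graph, so the underlying UDG carries the residual distance.

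In the ACT model, classifying a direction into one of the precomputed cones is done via binary search over cone-boundary hyperplanes in $O(\log \eps^{-1})$ time per edge; per-representative initialization costs $O(\eps^{1-d})$. Hence $\tm_{\ssa} = O(|\me|\log \eps^{-1} + |\mv|\,\eps^{1-d})$, i.e., $\tau(m,n) = O(\eps^{1-d})$ for $d \ge 2$. Plugging $\chi = \tau = O(\eps^{1-d})$, $t = 1$, and constant $\gamma,\beta$ into \Cref{lm:framework} gives stretch $1 + O(\eps)$, lightness $O((\eps^{-(d+2)} + \eps^{-4})\log(1/\eps))$, and runtime $O(m\eps^{-1}(\alpha(m,n) + \eps^{1-d})\log(1/\eps))$, matching the claim. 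The main obstacle is the stretch analysis --- bridging the residual $O(\eps L)$ Euclidean gap between $r(C_2)$ and $r(C_3)$ via $H_{<L}$ --- which hinges on the interplay of $H_{<L}$ being a Euclidean-distance spanner for edges of weight $< L$ with the $\beta\eps L$ cluster-diameter bound; both the cone's angular parameter and the $(1+\eps)$-stretch of the input $G$ are essential here. A secondary care-point is implementing the cone classification entirely without indirect addressing, which the binary-search strategy handles in the ACT model.
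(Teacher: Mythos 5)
Your proposal is correct and follows essentially the same route as the paper: a cone-based (Yao-graph-style) $\ssa_{\geom}$---$O(\eps^{1-d})$ cones of angle $\eps$ per representative, one retained edge per nonempty cone---plugged into \Cref{lm:framework} with $\chi = \tau = O(\eps^{1-d})$. The one deviation is that you retain an \emph{arbitrary} edge per cone rather than the nearest-representative one as the paper does, but this is immaterial here because all cluster-graph edge weights lie in $[L,(1+\eps)L)$ and clusters have diameter $O(\eps L)$, so any two representatives in the same $\eps$-cone around $r(C_1)$ are already within $O(\eps L)$ of each other---exactly the residual-gap bound that the paper's Claim on $\nor2{r_v,r^*_j}$ extracts (somewhat more laboriously) from the nearest-in-cone choice.
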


We now show that \Cref{thm:ACTUDGSpanner} implies \Cref{thm:ACT} and \Cref{thm:UDG-fast}.

\begin{proof}[Proofs of \Cref{thm:ACT} and \Cref{thm:UDG-fast}]
	
It is known that a Euclidean $(1+\eps)$-spanner for a set of $n$ points $P$ in $\mathbb{R}^d$ with degree $O(\epsilon^{1-d})$ can be constructed in $O(n\log n)$ time in the ACT model (cf.\ Theorems 10.1.3 and 10.1.10 in \cite{NS07}). 
Furthermore, when $m = O(n\eps^{1-d})$, we have that: $$\alpha(m,n) ~=~ \alpha(nO(\eps^{-d}), n) ~=~ O(\alpha(n) + \log(\eps^{-d})) ~=~ O(\alpha(n) + d\log(1/\eps)).$$
Thus,  Theorem~\ref{thm:ACT} follows from Theorem~\ref{thm:ACTUDGSpanner}.

By \Cref{lm:UDG-sparse-sp}, we can construct sparse $(1+\epsilon)$-spanners for unit ball graphs with $m = O(n\eps^{1-d})$ edges in $O(n(\epsilon^{-2}\log n)$ time when $d = 2$, $\tilde{O}(n^{4/3}\eps^{-3})$ time when $d = 3$, and  $O(n^{2-\frac{2}{(\lceil d/2 \rceil+1)} + \delta}\epsilon^{-d+1} + n\epsilon^{-d})$ time for any constant $\delta > 0$ when $d\geq 4$.  Thus, Theorem~\ref{thm:UDG-fast} follows from Theorem~\ref{thm:ACTUDGSpanner}.
\end{proof}

\begin{figure}[!ht]
	\begin{center}
		\includegraphics[width=0.6\textwidth]{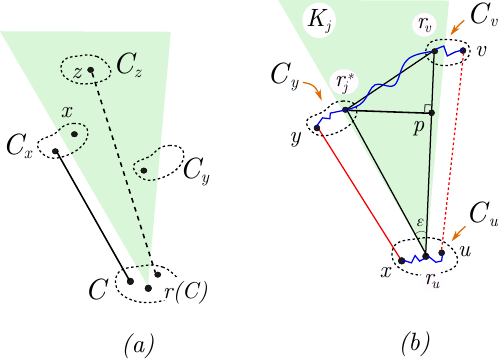}
	\end{center}
	\caption{ (a) Three clusters having representatives in the same cone with apex $r(C)$. Two clusters $C_x$ and $C_z$ are neighbors of $C$ in $\mathcal{G}$. Our algorithm  will add the edge between $C$ and $C_x$ to $\me^{\prune}$ since $x$ is closer to $r(C)$ than $z$. Cluster $C_y$ is a non-neighbor whose representative is closest to $r(C)$, but we do not add any edge between $C$ and $C_y$.  (b) Illustration for the stretch bound proof of \Cref{lm:ACT-UDG}. Black dashed curves represent three clusters $C_u,C_v, C_y$. 
	The solid red edge $(x,y)$ corresponds to an edge added to $\me^{\prune}$, while the dashed red edge  $(u,v)$ is not added. The green shaded region represents cone $Q_j$ of angle $\epsilon$ with the apex at $r_u$.}
	\label{fig:ACT-stretch}
\end{figure}

By \Cref{lm:framework}, in order to prove \Cref{thm:ACTUDGSpanner},  it suffices to implement the \hyperlink{SPHigh}{$\ssa$} for Euclidean and UDG spanners.
Next, we give a detailed geometric implementation of the \hyperlink{SPHigh}{$\ssa$}, hereafter $\ssa_{\geom}$; note that the stretch parameter $t$ in the geometric setting is $1+\eps$.  The idea is to use a Yao-graph like construction: For each node $\varphi_C \in \mv$, we construct a collection of cones of angle $\eps$ around the representative $r(C) = \source(\varphi_{C})$ of the cluster $C$ corresponding to $\varphi_C$. Recall that we have access to a $\source$ function that returns the representative of each cluster in $O(1)$ time.  Then for each cone, we look at all the representatives of the neighbors (in $\mg$) of $C$ that fall into that cone and pick to $\me^{\prune}$ the edge that connects $r(C)$ to the representative that is closest to it.  It could be that a \emph{non-neighbor} cluster of $C$ has a representative closer to $C$, but we do not add any edge between the two clusters. This is a difference between our algorithm and the Yao-graph algorithm. See \Cref{fig:ACT-stretch}(a).

\begin{tcolorbox}
	\hypertarget{SPHEuclidean}{}
	\textbf{$\ssa_{\geom}$ (Euclidean and UDG):} The input is a $(L,\eps,\beta)$-cluster graph $\mg(\mv,\me,\omega)$ that corresponds to a Euclidean or UDG spanner. The output is $\me^{\prune}$; initially, $\me^{\prune} = \emptyset$.
	\begin{quote}
		 For each node $\varphi_{C_u} \in \mv$, do the following:
		\begin{itemize}
		\item Let $\mathcal{N}(\varphi_{C_u})$ be the set of neighbors of $\varphi_{C_u}$ in $\mg$.  We construct a collection of $\tau = O(\epsilon^{1-d})$  cones $\cone(C_u) = \{Q_1, Q_2,\ldots, Q_{\tau}\}$ that partition $\mathbb{R}^d$, each of angle $\eps$ and with apex at $r(C_u)$, the representative of  $C_u$. It is   known (see, e.g. Lemma 5.2.8 in~\cite{NS07}) that we can construct $\cone(C_u)$ in time $O(\epsilon^{1-d})$ in the ACT model. 
		
		\item For each $j\in [\tau]$:
		\begin{itemize}
		\item Let $R_j = \{r(C'): \varphi_{C'} \in \mathcal{N}(\varphi_{C_u}) \wedge (r(C') \in Q_j)\}$ 
		be the set of representatives that belong to the cone $Q_j \in \cone(C_u)$.  
		 Let $r_j^{*} = \arg\min_{r\in R_j} \nor2{r(C_u),r}$ be the representative in $R_j$ that is closest to $r(C_u)$. 
		\item Let $\varphi_{C_v}$ be the node of $\mg$ whose cluster $C_v$  has $r^*_j$ as the representative. By the definition of $R_j$, $(\varphi_{C_u},\varphi_{C_v})$ is an edge in $\me$. Add $(\varphi_{C_u},\varphi_{C_v})$ to $\me^{\prune}$.  
					\end{itemize}
			/* We add at most one edge to $\me^{\prune}$ incident on $\varphi_{C_u}$ for each of the $\tau$ cones.	*/	
		\end{itemize}
	\end{quote}
\end{tcolorbox}

We next analyze the running time of $\ssa_{\geom}$, and also show that it satisfies the two properties of (\hyperlink{Sparsity}{Sparsity}) and (\hyperlink{Stretch}{Stretch}) 
required by the abstract \hyperlink{SPHigh}{$\ssa$}; these properties are described in \Cref{subsec:framework-intro}. Recall that $H_{<(1+\eps)L}$ is the graph obtained by adding the source edges of $\me^{\prune}$ to $H_{< L}$, which is the spanner for all edges in $G$ of weight $< L$. Note that the stretch of $H_{< L}$ is $t(1+\gamma \eps)$ for $t = 1+\eps$, where $\gamma$ is a constant. Furthermore, as mentioned, we assume w.l.o.g.\ that $\eps$ is sufficiently smaller than $1$.

\begin{lemma}\label{lm:ACT-UDG} \hyperlink{SPHEuclidean}{$\ssa_{\geom}$} can be implemented in $O((|\mv| + |\me|)\eps^{1-d})$ time in the ACT model. Furthermore, 1. (Sparsity) $|\me^{\prune}| = O(\eps^{1-d})|\mv|$, and 2. (Stretch) 
For each edge $(\varphi_{C_u},\varphi_{C_v})\in \me$, $d_{H_{<(1+\eps)L}}(u,v)\leq t(1+s_{\ssa_{\geom}}(\beta)\eps)w(u,v)$, where $(u,v) = 
		 \source(\varphi_{C_u}, \varphi_{C_v})$, $s_{\ssa_{\geom}}(\beta) = 19\beta + 14$ and $\eps \leq \min\{\frac{1}{\gamma},\frac{1}{8\beta + 6}\}$.
\end{lemma}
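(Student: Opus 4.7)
The plan is to establish the three claims (runtime, sparsity, stretch) in that order, with the stretch being where the real work lies. For runtime, processing each $\varphi_{C_u}\in\mv$ takes $O(\tau) = O(\epsilon^{1-d})$ to construct $\cone(C_u)$ (by the result cited in the algorithm description), and then routing each of the $\deg_{\mg}(\varphi_{C_u})$ neighbors into its cone and updating the running minimum costs another $O(\tau)$ per neighbor; summing over nodes gives $O((|\mv|+|\me|)\epsilon^{1-d})$, and since indirect addressing is never invoked, the implementation is valid in the ACT model. Sparsity is immediate: for each node at most one edge per cone is added, hence $|\me^{\prune}| \leq \tau|\mv|$.

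The substantive part is the stretch. Fix an edge $(\varphi_{C_u},\varphi_{C_v})\in\me$ with source $(u,v)$; let $r_u,r_v$ denote the representatives and let $Q_j$ be the cone of $\cone(C_u)$ containing $r_v$. If $r_j^* = r_v$, then $(u,v)\in H_{<(1+\epsilon)L}$ and the stretch is $1$. Otherwise the algorithm chose some $r_y = r(C_y) \neq r_v$ with $\|r_u-r_y\|\leq\|r_u-r_v\|$ and $\angle r_y r_u r_v \leq \epsilon$, and added the edge $(\varphi_{C_u},\varphi_{C_y})$, with source $(a,b)$ where $a\in C_u, b\in C_y$, to $H_{<(1+\epsilon)L}$. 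I would exhibit the detour $u \rightsquigarrow a \to b \rightsquigarrow r_y \rightsquigarrow r_v \rightsquigarrow v$ and bound its length: the in-cluster wiggles $u\rightsquigarrow a$, $b\rightsquigarrow r_y$, $r_v\rightsquigarrow v$ each cost at most $\beta\epsilon L$ by Condition~(4) of \Cref{def:ClusterGraph-Param}, and the source edge $a\to b$ contributes $w(a,b) < (1+\epsilon)L$. What remains is a path from $r_y$ to $r_v$, and controlling it is the heart of the argument.

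The key geometric ingredient is the law of cosines applied to triangle $r_u r_y r_v$: writing $a_0 = \|r_u-r_y\|$ and $b_0 = \|r_u-r_v\|$ and using $\angle r_y r_u r_v \leq \epsilon$, one obtains $\|r_y-r_v\|^2 \leq (b_0-a_0)^2 + a_0 b_0 \epsilon^2$. Combined with $a_0, b_0 \in [(1-2\beta\epsilon)L, (1+\epsilon+2\beta\epsilon)L]$ (from Condition~(3) and the cluster-diameter bound applied to both source edges), this forces $\|r_y-r_v\| = O(\beta\epsilon L)$ --- after a single ``Yao-like'' hop, the residual distance is already much smaller than $L$. Under the assumption $\epsilon \leq 1/(8\beta+6)$ we get $\|r_y-r_v\| < L/(1+\epsilon)$ (and also $\leq 1$ in the UDG case, since $L \leq 1$ there), so because $G$ is a $(1+\epsilon)$-spanner of the underlying Euclidean space (resp.\ of the UDG), the shortest $G$-path from $r_y$ to $r_v$ has total length $<L$; hence every edge on that path has weight $<L$ and is approximated by $H_{<L}$ with stretch $(1+\epsilon)(1+\gamma\epsilon)$. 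Therefore $d_{H_{<L}}(r_y,r_v) \leq (1+\epsilon)^2(1+\gamma\epsilon)\|r_y-r_v\|$, a negligible contribution compared to $w(u,v)\geq L$.

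Assembling the detour and using the triangle-inequality consequence $a_0 + \|r_y-r_v\| \leq (1+\epsilon) b_0$ together with $b_0 \leq (1+2\beta\epsilon) w(u,v)$ and $L \leq w(u,v)$, a careful accounting converts all additive $O(\beta\epsilon L)$ errors into a multiplicative factor $(1+\epsilon)(1+s_{\ssa_{\geom}}(\beta)\epsilon)$ with $s_{\ssa_{\geom}}(\beta) = 2(19\beta+14)$. The main obstacle is the meticulous bookkeeping to pin down this exact constant; more conceptually, the key realization is that the usual Yao-graph stretch induction on Euclidean distance between representatives \emph{cannot} be applied here, because neighborships in $\mg$ need not chain from $\varphi_{C_y}$ to $\varphi_{C_v}$ --- so the analysis must terminate the recursion after a single hop and finish via the lower-level spanner $H_{<L}$. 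The parameter regime $\epsilon \leq \min\{1/\gamma, 1/(8\beta+6)\}$ is precisely what makes both the small-angle approximation and this single-hop reduction go through.
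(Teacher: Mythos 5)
Your proposal is correct and follows the same high-level strategy as the paper's proof: Yao-cone construction, single-hop detour $u\rightsquigarrow a\to b\rightsquigarrow r_j^*\rightsquigarrow r_v\rightsquigarrow v$, a geometric argument forcing $\|r_j^*-r_v\|=O(\beta\epsilon L)$, and closing the residual gap via the lower-level spanner $H_{<L}$. The one substantive difference is the geometric lemma: you prove the bound on $\|r_j^*-r_v\|$ via the law of cosines, $\|r_j^*-r_v\|^2 \le (b_0-a_0)^2 + a_0b_0\epsilon^2$, whereas the paper (Claim~\ref{clm:ACT-rvrj}) projects $r_j^*$ onto the segment $r_u r_v$ and chains triangle inequalities through the foot $p$. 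Both are elementary Euclidean facts yielding the same $O(\beta\epsilon L)$ conclusion, and your route is perhaps cleaner; the paper's projection argument has the slight advantage of delivering explicit constants without a square-root at the end, which makes the precise $(8\beta+6)\epsilon L$ bound and the downstream constant $2(19\beta+14)$ easier to trace by hand. One place where your bookkeeping is arguably \emph{more} careful than the paper's: you note that the $G$-path from $r_j^*$ to $r_v$ costs an extra $(1+\epsilon)$ factor (since $G$ is only a $(1+\epsilon)$-spanner of the underlying metric) and that $H_{<L}$ then multiplies by $t(1+\gamma\epsilon)=(1+\epsilon)(1+\gamma\epsilon)$, giving $(1+\epsilon)^2(1+\gamma\epsilon)\|r_j^*-r_v\|$; the paper writes only $(1+\gamma\epsilon)\|r_v-r_j^*\|$ at that step, which is off by $(1+\epsilon)^2$, though this slack is harmlessly absorbed because it multiplies a term that is already $O(\beta\epsilon L)$ and the paper then bounds the whole prefactor by $2$ using $\epsilon\le 1/\gamma$. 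You also correctly identify the conceptual point that no inductive Yao-chaining is available at this level, forcing the single-hop reduction to $H_{<L}$ — this is precisely why the paper must invoke Equation~\eqref{eq:Stretch-HL}.
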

\begin{proof} We first analyze the running time. We observe that since we can construct $\cone(C_u)$ for a single node $\varphi_{C_u}$ in $O(\epsilon^{1-d})$ time in the ACT model,  the running time to construct all sets of cones $\{\cone(C_u)\}_{\varphi_{C_u} \in \mv}$ is $O(|\mv|\epsilon^{1-d})$. Now consider a specific node $\varphi_{C_u}$. For each neighbor $\varphi_{C'} \in \mathcal{N}(\varphi_{C_u})$ of $\varphi_{C_u}$, finding the cone $Q_j \in \cone(C_u)$ such that $r(C') \in Q_j$ takes $O(\tau) = O(\epsilon^{1-d})$ time. 
	Thus, $\{R_j\}_{j=1}^{\tau}$ can be constructed in $O(|\mathcal{N}(\varphi_{C_u})|\eps^{1-d})$ time. Finding the set of representatives $\{r^*_j\}_{j=1}^{\tau}$ takes $O(|\mathcal{N}(\varphi_{C_u})|)$ time by calling function $\source(\cdot)$. Thus, the total running time to implement \hyperlink{SPHEuclidean}{Algorithm $\ssa_{\geom}$} is:
	\begin{equation*}
		O(|\mv|\epsilon^{1-d}) +  \sum_{\varphi_{C_u} \in \mv} O(|\mathcal{N}(\varphi_{C_u})|\eps^{1-d}) = O((|\mv| + |\me|)\eps^{1-d})~,
	\end{equation*}
	as claimed.	 
	
	 By the construction of the algorithm, for each node $\varphi_{C}\in \mv$, we add at most $\tau = O(\eps^{1-d})$ incident edges in $\me$ to $\me^{\prune}$; this implies Item 1. 
	
It remains to prove Item 2: For each edge $(\varphi_{C_u},\varphi_{C_v})\in \me$, the stretch  in $H_{<(1+\eps)L}$ of the corresponding edge $(u,v) = \source(\varphi_{C_u},\varphi_{C_v})$  is at most $(1+s_{\ssa_{\geom}}(\beta)\eps)$ with $s_{\ssa_{\geom}}(\beta)= 2(19\beta + 14)$. Let $r_u \defi r(C_u)$ and $r_v \defi r(C_v)$ be the representatives of $C_u$ and $C_v$, respectively. Let $Q_j$ be the cone in $\cone(C_u)$ such that $r_v \in Q_j$ for some $j \in [\tau]$ (we are using the notation in \hyperlink{SPHEuclidean}{$\ssa_{\geom}$}).  If $r_v = r^*_j$, then $(u,v) \in H_{<(1+\eps)L}$ by the construction in \hyperlink{SPHEuclidean}{$\ssa_{\geom}$}, and so the stretch is $1$. Otherwise, let $C_y$ be the cluster that contains the representative $r^*_j$. By the construction in \hyperlink{SPHEuclidean}{$\ssa_{\geom}$}, there is an edge $(x,y)\in H_{<(1+\eps)L}$ where $x\in C_u$ and $y\in C_y$.  (See Figure~\ref{fig:ACT-stretch}.) By property 4 of $\mg$ in \Cref{def:ClusterGraph-Param},  $\max\{\dm(H_{<(1+\eps)L}[C_u]), \dm(H_{<(1+\eps)L}[C_v]), \dm(H_{<(1+\eps)L}[C_y])\} \leq \beta \eps L$. Note that edges in $\me$ have weights in $[L,(1+\eps)L)$ by property 3 in \Cref{def:ClusterGraph-Param}. By the triangle inequality:
	\begin{equation}\label{eq:ACT-triangle}
		\begin{split}
			\nor2{r_u,r_v} &\leq \nor2{u,v} + 2\beta\eps L \leq (1+(1+2\beta)\epsilon)L\\
			\nor2{r_u,r^*_j}  &\leq \nor2{x,y} + 2\beta \eps L \leq (1+(1+2\beta)\epsilon)L\\
			\nor2{u,v} &\leq \nor2{r_u,r_v} + 2\beta\epsilon L\\
			\nor2{x,y}  &\leq \nor2{r_u,r^*_j} + 2\beta\epsilon L
		\end{split}
	\end{equation}
	
	Furthermore, since $L \leq  \nor2{u,v},  \nor2{x,y}\leq  (1+\eps)L$, it follows that:
	\begin{equation}\label{eq:ACT-uvxy}
		\begin{split}
			\nor2{u,v} &\leq (1+\eps) \nor2{x,y} \\
			\nor2{x,y} &\leq (1+\eps) \nor2{u,v} 
		\end{split}
	\end{equation}

	\begin{claim}\label{clm:ACT-rvrj} $\nor2{r_v,r^*_j} \leq (8\beta+6)\eps L $. 
	\end{claim}
	\begin{proof} Recall that $\nor2{r_u,r^*_j} \leq \nor2{r_u,r_v}$. Let $p$ be the projection of $r_j^*$ onto the segment $r_ur_v$ (see \Cref{fig:ACT-stretch}). Since $\angle r_vr_ur^*_j \leq \epsilon$, $\nor2{r_j^*,p} \leq \sin(\eps) \nor2{r_u,r^*_j} \leq \sin(\eps) \nor2{r_u,r_v} ~\leq~ \eps(1+(1+2\beta)\eps)L$. We have:
		\begin{equation}\label{eq:ACT-rvrj}
			\begin{split}
				\nor2{r_v,r^*_j} &\leq  \nor2{p,r^*_j} + \nor2{r_v,p}    = \nor2{p,r^*_j} +  \nor2{r_u,r_v} - \nor2{p,r_u} \\
				&\leq \nor2{p,r^*_j} +   \nor2{r_u,r_v} - (\nor2{r_u,r_j^*} - \nor2{r_j^*,p})\\
				&\leq (\nor2{r_u,r_v} - \nor2{r_u,r_j^*}) + 2\eps(1+(1+2\beta)\eps)L
			\end{split}
		\end{equation}

		We now bound $(\nor2{r_u,r_v} - \nor2{r_u,r_j^*})$. By \Cref{eq:ACT-triangle} and \Cref{eq:ACT-uvxy}, it holds that:
		\begin{equation} \label{addedeq}
			\begin{split}
				\nor2{r_u,r_v} - \nor2{r_u,r_j^*}  &\leq \nor2{u,v} + 2\beta\epsilon L  - (\nor2{x,y}- 2\beta\epsilon L)\\ 
				&= \nor2{u,v}- \nor2{x,y}  + 4\beta\epsilon L \leq \eps \nor2{x,y} + 4\beta\epsilon L  \leq (4\beta+1+\eps)\eps L
			\end{split}
		\end{equation}
		
		Plugging \Cref{addedeq} into \Cref{eq:ACT-rvrj}, we get:
		\begin{equation*}
			\begin{split}
			\nor2{r_v,r_j^*} &\leq (4\beta+1+\eps)\eps L +  2\eps(1+(1+2\beta)\eps)L \\
			&\leq  (4\beta+2)\eps L + 2\eps L + 2(1+2\beta)\eps L \qquad \mbox{(since }\eps \leq 1)\\
			&\leq (8\beta +6)\eps L~,
			\end{split}
		\end{equation*}
	as claimed.	This completes the proof of \Cref{clm:ACT-rvrj}. \qed
	\end{proof}
	
Next, we continue with the proof of \Cref{lm:ACT-UDG}.	By \Cref{clm:ACT-rvrj}, $\nor2{r_v,r^*_j}  < L$ when $\eps < 1/(8\beta+ 6)$. If the input graph is a UDG, then $\me  \not=\emptyset$ only if $L \leq 1$. Thus,  $\nor2{r_v,r^*_j}\leq 1$ and hence, there is an edge $(r_v,r^*_j)$ of length $\nor2{r_v,r^*_j}$ in the input UDG. (This is the only place, other than starting our construction with a $(1+\eps)$-spanner for the input UDG, where we exploit the fact that the input graph is a UDG.)
	
	Since $\nor2{r_v,r^*_j} < L$, the distance between $r_v$ and $r^*_j$ is preserved up to a factor of $(1+\gamma \eps)$ in $H_{< L}$.	That is, $d_{H_{<(1+\eps)L}}(r_v,r^*_j) \leq (1+\gamma \eps)\nor2{r_v,r^*_j}$.

	Note that $r_u,r_v,r_j^*$ are in the input point set $P$ by the definition of representatives.  By the triangle inequality, it follows that:
	\begin{equation}\label{eq:ACT-e1}
		\begin{split}
			d_{H_{<(1+\eps)L}}(u,v) &\leq d_{H_{<(1+\eps)L}}(u, x) +  \nor2{x,y} +  d_{H_{<(1+\eps)L}}(y,r_j^*)+ d_{H_{<(1+\eps)L}}(r_j^*, r_v)\\
            & \hspace{2cm} +  d_{H_{<(1+\eps)L}}(r_v, v)\\
			&\leq \beta\epsilon L +  \nor2{x,y} + \beta\eps L + (1+\gamma \eps)\nor2{r_v,r^*_j}  + \beta\eps L  \\
			&\leq  \nor2{x,y} +3\beta\eps L + \underbrace{(1+\gamma \eps)}_{\leq~ 2 \text{ since } \eps ~\leq~ 1/\gamma}(8\beta+6)\eps L \qquad \mbox{(by \Cref{clm:ACT-rvrj})}\\
			&\leq   \nor2{x,y} + (19\beta + 12)\eps L 
		\end{split}
	\end{equation}
	By \Cref{eq:ACT-uvxy},   $\nor2{x,y}~\leq~ (1+\eps)\nor2{u,v}\leq \nor2{u,v} + (1+\eps)\eps L \leq \nor2{u,v} + 2\eps L $. Thus, by Equation~\eqref{eq:ACT-e1}:
	$$	d_{H_{<(1+\eps)L}}(u,v) \leq \nor2{u,v} + (19\beta + 14)\eps L \stackrel{\nor2{u,v}\geq L}{\leq} (1+(19\beta + 14)\eps)\nor2{u,v}.$$
	That is, the stretch of  $(u,v)$  in $H_{<(1+\eps)L}$ is at most $1+s_{\ssa_{\geom}}(\beta)\eps$ with $s_{\ssa_{\geom}}(\beta)= 19\beta + 14$, as required.
\end{proof}

\begin{remark}\label{remark:ACT-Faster} \hyperlink{SPHEuclidean}{$\ssa_{\geom}$} can be implemented slightly faster, within time $O(|\mv|\eps^{1-d} + |\me|\log(1/\eps))$, by using a data structure that allows us to search for the cone that a representative belongs to in $O(\log(1/\eps))$ time. Such a data structure is described in Theorem 5.3.2 in the book by Narasimhan and Smid~\cite{NS07}.
\end{remark}

We are now ready to prove \Cref{thm:ACTUDGSpanner}.

\begin{proof}[Proof of \Cref{thm:ACTUDGSpanner}] We use $\ssa_{\geom}$ in place of \hyperlink{SPHigh}{$\ssa$} in \Cref{lm:framework} to construct the light spanner. By \Cref{lm:ACT-UDG}, we have $s_{\ssa}(\beta) = 2(19\beta + 14)$, $\chi = O(\eps^{1-d})$ and $\tau(m',n') = O(\eps^{1-d})$. Thus, by plugging in the values of $\chi$ and $\tau$, we obtain the lightness and the running time as required by \Cref{thm:ACTUDGSpanner}. The stretch of the spanner is:
	\begin{equation*}
		(1+\eps)(1 + (s_{\ssa}(O(1)) + O(1))\eps)  = (1 + O(\eps))~,
	\end{equation*}
when $\eps \leq 1$.
\end{proof}

\subsection{General Graphs}\label{subsec:general}

In this section, we prove \Cref{thm:general-fast} by giving a detailed implementation of \hyperlink{SPHigh}{$\ssa$} for general graphs, hereafter $\ssa_{\gen}$. Here we have $t = 2k-1$ for an integer parameter $k \ge 2$. We will use as a black box the linear-time construction of sparse spanners in general unweighted graphs by Halperin and Zwick~\cite{HZ96}. 

\begin{theorem}[Halperin-Zwick~\cite{HZ96}]\label{thm:unweighted-2k-1} Given an unweighted $n$-vertex graph $G$ with $m$ edges, a $(2k-1)$-spanner of $G$ with $O(n^{1+\frac{1}{k}})$ edges can be constructed deterministically in $O(m + n)$ time, for any $k \ge 2$.
\end{theorem}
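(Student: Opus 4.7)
The plan is to build a hierarchy of $k$ clusterings $\mathcal{C}_0 \supset \mathcal{C}_1 \supset \cdots \supset \mathcal{C}_{k-1}$ of $V$, where $\mathcal{C}_0$ consists of singletons and each level-$i$ cluster is a union of level-$(i-1)$ clusters equipped with a BFS tree of radius at most $i$ rooted at a designated center. All BFS-tree edges are committed to the spanner. The clusterings are built iteratively: at phase $i$, for each level-$(i-1)$ cluster $C$ I would examine the edges of $G$ with exactly one endpoint in $C$ and bucket them by the level-$(i-1)$ cluster containing the far endpoint. If the number of distinct neighboring clusters is less than $n^{1/k}$, \emph{finalize} $C$ by adding to the spanner one representative edge per neighbor and retire $C$; otherwise, pick one such neighbor $C'$ and merge $C$ into $C'$, forming a level-$i$ cluster and extending its BFS tree by one layer. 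After phase $k-1$ every remaining cluster is forcibly finalized.

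For sparsity, each finalized cluster contributes at most $n^{1/k}$ surrogate edges; since at most $n$ clusters are ever finalized in total across all phases, the surrogate edges sum to $O(n^{1+1/k})$, and the BFS-tree edges sum to $O(n)$ because each vertex lies inside at most one finalized cluster. For stretch, fix any edge $(u,v)\in E(G)$ and let $i^*$ be the smallest index at which the clusters of $u$ and $v$ either coincide or lie at distance one in the cluster graph, with at least one of them being finalized at phase $i^*$. In the first case the spanner path routes $u$ up and down through BFS-tree edges of total length $\le 2 i^*$. In the second case, the spanner path concatenates an intra-cluster BFS subpath from $u$ to a representative, a single surrogate edge across, and an intra-cluster BFS subpath to $v$; a standard centering argument bounds the total length by $2(k-1)+1 = 2k-1$.

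The main obstacle will be achieving running time $O(m+n)$, independent of $k$. A naive implementation relabels vertices after every merge and rescans a cluster's adjacency at every phase, which already costs $\Omega(km)$ in the worst case. To overcome this, I would ensure that each edge is touched only a constant number of times in total, using three ingredients: (i) maintain cluster identifiers through a union-find-style pointer structure with lazy path compression so each lookup is amortized $O(1)$; (ii) when a cluster $C$ is processed at phase $i$, bucket its outgoing edges into a hash table keyed by the far-endpoint cluster ID, built on the fly in time proportional to the number of such edges; and (iii) once a surrogate edge is emitted or an edge is found to lie between two already-finalized clusters, permanently discard it. The delicate point is showing that this amortizes to $O(1)$ per edge despite a cluster participating in up to $k$ merges before being finalized; this follows because the work attributable to an edge at level $i$ is charged to exactly one of the two clusters processed at that level, and the sum of cluster sizes processed over the entire execution is $O(n)$.
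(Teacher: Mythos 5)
The paper does not prove this theorem; it is stated as an external black-box result of Halperin and Zwick \cite{HZ96}, so there is no in-paper proof to compare against. Judged on its own terms, your sketch has genuine gaps in all three parts: the sparsity, the stretch, and the running time.

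First, the local finalize-or-merge rule is not well-posed. You say ``pick one such neighbor $C'$ and merge $C$ into $C'$,'' but $C'$ itself is also being processed in the same phase and may merge into yet another cluster (or two clusters may each nominate the other), so the level-$i$ partition is not determined by the rule as written. More importantly, nothing in the rule forces the number of surviving clusters to decay geometrically, which is the mechanism (in Awerbuch-style or Baswana--Sen-style clustering) that guarantees that after $k-1$ phases the radii are controlled, the remaining clusters are few, and the BFS-tree edge count stays $O(n)$; your sketch needs a coordinated global selection of which clusters ``survive'' each phase, not a per-cluster greedy choice. Second, the stretch argument is off by essentially a factor of $2$. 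If surrogate edges are emitted once per finalized cluster $C$ (one representative edge per neighboring cluster), then for an input edge $(u,v)$ with $u \in C$ and $v \in C'$ the detour is $u \to \mathrm{center}(C) \to x \to y \to \mathrm{center}(C') \to v$, which costs up to $2\,\mathrm{rad}(C) + 1 + 2\,\mathrm{rad}(C')$, i.e.\ up to $4(k-1)+1$. To get $2k-1$ one must emit surrogate edges \emph{per vertex}: when a vertex $u$ becomes unclustered it adds one edge of its own to each neighboring cluster, so the $u$-side of the detour has length zero. Your per-cluster formulation does not yield the claimed bound.

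Third, the running-time plan does not close the gap it identifies. Union--find with path compression alone is not $O(1)$ amortized per operation (it is $O(\log n)$ amortized; even with union-by-rank it is $O(\alpha)$, not $O(1)$), and the key claim that ``the sum of cluster sizes processed over the entire execution is $O(n)$'' is false in general: a cluster that survives all $k-1$ phases has its boundary rescanned at every phase, so the naive total is $\Theta(km)$, which is exactly the bottleneck you set out to avoid. Achieving $O(m+n)$ deterministically requires a data-structural idea that ensures each edge is charged $O(1)$ work over its entire lifetime across phases (e.g.\ by deferring and batching the rescans, or by a more careful selection rule that bounds the total boundary work); your sketch asserts this amortization without an argument, and that is precisely the hard content of the Halperin--Zwick result.
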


\begin{tcolorbox}
	\hypertarget{SPHGeneral}{}
	\textbf{$\ssa_{\gen}$ (General Graphs):} The input is a $(L,\eps,\beta)$-cluster graph $\mg(\mv,\me,\omega)$. The output is $\me^{\prune}$; initially, $\me^{\prune} = \emptyset$.
	\begin{quote}
		We construct a new \emph{unweighted} graph $J = (V_J,E_J)$ as follows. For each node $\varphi \in \mv$, we add a vertex $v_{\varphi}$ to $V_J$. For each edge $(\varphi_1,\varphi_2) \in \me$, we add an edge $(v_{\varphi_1}, v_{\varphi_2})$ to $E_J$. 
		
		Next, we run  Halperin-Zwick's algorithm (\Cref{thm:unweighted-2k-1}) on $J$ to construct a $(2k-1)$-spanner $S_J$ for $J$. Then for each edge $(v_{\varphi_1},v_{\varphi_2})$ in $E(S_J)$, we add the corresponding edge $(\varphi_1,\varphi_2)$ to $\me^{\prune}$. 
	\end{quote}
\end{tcolorbox}

We next analyze the running time of $\ssa_{\gen}$, and also show that it satisfies the two properties of (\hyperlink{Sparsity}{Sparsity}) and (\hyperlink{Stretch}{Stretch}) 
required by the abstract \hyperlink{SPHigh}{$\ssa$}; these properties are described in \Cref{subsec:framework-intro}.

\begin{lemma}\label{lm:App-Gen} \hyperlink{SPHGeneral}{$\ssa_{\gen}$} can be implemented in $O(|\mv| + |\me|)$ time. Furthermore, 1. (Sparsity) $\me^{\prune} = O(n^{1/k})|\mv|$, and 2. (Stretch) For each edge $(\varphi_{C_u},\varphi_{C_v})\in \me$, $d_{H_{<(1+\eps)L}}(u,v)\leq (2k-1)(1+s_{\ssa_{\gen}}(\beta)\eps)w(u,v)$, where $(u,v) = 
		 \source(\varphi_{C_u}, \varphi_{C_v})$, $s_{\ssa_{\gen}}(\beta) = (2\beta+1)$ and $\eps \leq 1$. 
\end{lemma}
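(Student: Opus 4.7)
The plan is to address the three claims of \Cref{lm:App-Gen} in their stated order: running time, sparsity, and stretch. The first two are essentially bookkeeping on top of \Cref{thm:unweighted-2k-1}, while the stretch bound is where the actual argument lives.

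For the running time I will simply observe that converting $\mg$ into the unweighted graph $J$ touches each node and edge a constant number of times, that the Halperin--Zwick construction on $J$ runs in $O(|V_J|+|E_J|) = O(|\mv|+|\me|)$, and that mapping the edges of $S_J$ back to their $\me$-counterparts is another linear pass. For the sparsity I will use that the clusters corresponding to nodes of $\mv$ are pairwise disjoint subsets of $V$ (Condition~1 of \Cref{def:ClusterGraph-Param}), so $|\mv|\leq n$; combined with the Halperin--Zwick bound $|E(S_J)| = O(|V_J|^{1+1/k}) = O(|\mv|^{1+1/k})$ this gives $|\me^{\prune}| = O(n^{1/k})|\mv|$ as required.

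The stretch claim is the core of the argument. Fix an edge $(\varphi_{C_u},\varphi_{C_v})\in\me$ with $(u,v)=\source(\varphi_{C_u},\varphi_{C_v})$, $u\in C_u$, $v\in C_v$. Since $S_J$ is a $(2k-1)$-spanner of the unweighted graph $J$, there is a path of hop-length $\ell\leq 2k-1$ from $v_{\varphi_{C_u}}$ to $v_{\varphi_{C_v}}$ in $S_J$, which I will lift to a sequence of clusters $C_u = C_0, C_1, \ldots, C_\ell = C_v$ whose consecutive pairs are joined by source edges placed into $\me^{\prune}$. I will then exhibit a $u$-to-$v$ walk in $H_{<(1+\eps)L}$ that alternates between (i) $\ell+1$ cluster-internal traversals, each of length at most $\beta\eps L$ by Condition~4 of \Cref{def:ClusterGraph-Param} together with $H_{<L}\subseteq H_{<(1+\eps)L}$, and (ii) $\ell$ source-edge crossings, each of weight less than $(1+\eps)L$ by Condition~3. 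The total length is at most $(2k-1)(1+\eps)L + 2k\beta\eps L$, and since $w(u,v)\geq L$ it suffices to verify
\[
(2k-1)(1+\eps) + 2k\beta\eps \;\leq\; (2k-1)\bigl(1 + (2\beta+1)\eps\bigr),
\]
which rearranges to $2\beta(k-1)\geq 0$ and hence holds for all $k\geq 2$, $\beta\geq 0$. This yields $s_{\ssa_{\gen}}(\beta)=2\beta+1$.

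I do not anticipate a serious obstacle beyond careful bookkeeping in the final step: lining up the $\ell+1$ cluster-internal hops against the $\ell$ inter-cluster hops and extracting a clean constant $2\beta+1$ from the algebra. Notably the whole stretch argument uses only the abstract properties of the cluster graph from \Cref{def:ClusterGraph-Param} together with the unweighted spanner guarantee of \Cref{thm:unweighted-2k-1}, so no graph-class-specific reasoning is required.
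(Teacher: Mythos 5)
Your proposal is correct and follows essentially the same route as the paper: linear-time bookkeeping plus the Halperin--Zwick bound for runtime and sparsity, and for stretch the same decomposition of a $u$-$v$ walk into at most $2k$ intra-cluster detours (each $\le \beta\eps L$ by Condition~4) interleaved with at most $2k-1$ inter-cluster source edges (each $< (1+\eps)L$ by Condition~3), yielding $(2k-1)(1+\eps)L + 2k\beta\eps L \le (2k-1)(1+(2\beta+1)\eps)w(u,v)$. The only cosmetic difference is that you verify the final inequality by rearranging to $2\beta(k-1)\ge 0$, whereas the paper chains the bound directly through $(2k-1)(1+\eps+2\beta\eps)L$; both are the same algebra.
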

\begin{proof} The running time of $\ssa_{\gen}$ follows directly from \Cref{thm:unweighted-2k-1}. Also, by \Cref{thm:unweighted-2k-1}, $|\me^{\prune}|= O(|\mv|^{1+1/k}) = O(n^{1/k}|\mv|)$; this implies Item 1. 
		
	It remains to prove Item 2: 
	For each edge $(\varphi_{C_u},\varphi_{C_v})\in \me$, the stretch in $H_{<(1+\eps)L}$ (constructed as described in \hyperlink{SPHigh}{$\ssa$})
	of the corresponding edge $(u,v) = 		 \source(\varphi_{C_u}, \varphi_{C_v})$ is
at most  $(2k-1)(1+ (2\beta+1)\eps) w(u,v)$.	Recall that $H_{<(1+\eps)L}$ is the graph obtained by adding the source edges of $\me^{\prune}$ to $H_{< L}$.
	
	Let $(u_{1},v_1)$ be the edge in $E_J$ that corresponds to the edge $(\varphi_{C_u},\varphi_{C_v})$. By \Cref{thm:unweighted-2k-1}, there is a path $P$ between $u_1$ and $v_1$ in $S_J$ such that $P$ contains at most $2k-1$ edges. We write $P = (u_1=x_0, (x_0,x_1), x_1, (x_1,x_2), \ldots, x_{p} = v_1)$ as an alternating sequence of vertices and edges. Let $\mp = ( \varphi_0, (\varphi_0,\varphi_1), \varphi_1, (\varphi_1,\varphi_2), \ldots, \varphi_{p})$ be a path of $\mg$, written as an alternating sequence of vertices and edges, that is obtained from $P$ where $\varphi_j$ corresponds to $x_j$, $1\leq j\leq p$. Note that $\varphi_1 = \varphi_{C_u}$ and $\varphi_{p} = \varphi_{C_v}$.
	
	\begin{figure}[!ht]
		\begin{center}
			\includegraphics[width=0.9\textwidth]{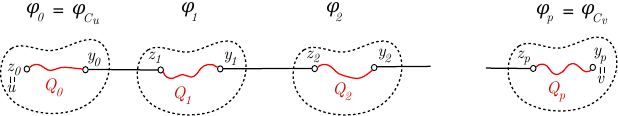}
		\end{center}
		\caption{A path from $u$ to $v$.}
		\label{fig:path}
	\end{figure}
	Let $\{y_i\}_{i=0}^p$ and $\{z_i\}_{i=0}^p$ be two sequences of vertices of $G$ such that (a) $z_0 = u$ and $y_p = v$, and (b) $(y_{i-1}, z_i)$ is the edge in $G$ corresponding to edge $(\varphi_{i-1},\varphi_i)$ in $\mathcal{P}$, for $1\leq i\leq p$. Let $Q_i$, $0\leq i \leq p$, be a shortest path in $H_{< L}[C_i]$ between $z_i$ and $y_i$, where $C_i$ is the cluster corresponding to $\varphi_i$. See \Cref{fig:path} for an illustration. Observe that $w(Q_i) \leq \beta \eps L$ by property 4 in \Cref{def:ClusterGraph-Param}. Let $P' = Q_0\circ (y_0,z_1)\circ \ldots\circ Q_p$ be a (possibly non-simple) path from $u$ to $v$ in $H_{<(1+\eps)L}$; here $\circ$ is the path concatenation operator. Hence: 
	
		\begin{equation}
			\begin{split}
				w(P') &\leq (2k-1)(1+\eps)L + (2k)\beta\eps L \leq (2k-1)(1+\eps +  2\beta\eps)L\\
				 &\leq (2k-1)(1+(2\beta + 1)\eps)w(u,v) \qquad \mbox{(since $w(u,v)\geq L$)}\\
			\end{split}
		\end{equation}
	 
Thus, the stretch of edge $(u,v)$ is at most $(2k-1)(1+(2\beta + 1)\eps)$, as required. 
\end{proof}

 We are now ready to prove \Cref{thm:general-fast}.  
 \begin{proof}[Proof of \Cref{thm:general-fast}]   We use algorithm $\ssa_{\gen}$ in place of the abstract \hyperlink{SPHigh}{$\ssa$} in \Cref{lm:framework} to construct the light spanner. By \Cref{lm:App-Gen}, we have $s_{\ssa}(\beta) = (2\beta + 1)$, $\chi = O(n^{1/k})$ and $\tau(m',n') = O(1)$. Thus, by plugging in the values of $\chi$ and $\tau$, we obtain the lightness and the running time as required by \Cref{thm:general-fast}. The stretch of the spanner is:
 	\begin{equation*}
 		(2k-1)(1 + (s_{\ssa}(O(1)) + O(1))\eps)  = (2k-1)(1 + O(\eps))~.
 	\end{equation*}
  By scaling, we get the required stretch of $(2k-1)(1+\eps)$.
 \end{proof}

 \subsection{Minor-free Graphs}\label{subsec:minor-free}

Here we prove a weaker version of \Cref{thm:minor-free-fast}, where the running time is $O(nr\sqrt{r}\alpha(nr\sqrt{r},n))$. In \Cref{sec:minor-linear} we show how to achieve a linear running time, via an adaptation of our framework (described in detail in \Cref{sec:framework}) to minor-free graphs.

The implementation of the abstract algorithm \hyperlink{SPHigh}{$\ssa$} for minor-free graphs, hereafter $\ssa_{\minor}$, simply outputs the edge set $\me$. Note that the stretch in this case is $t = 1+\eps$.

 \begin{tcolorbox}
 	\hypertarget{SPHMinor}{}
 	\textbf{$\ssa_{\minor}$ (Minor-free Graphs):} The input is a  $(L,\eps,\beta)$-cluster graph $\mg(\mv,\me,\omega)$. The output is $\me^{\prune}$.
 	\begin{quote}
 		The algorithm returns $\me^{\prune} = \me$.
 	\end{quote}
 \end{tcolorbox}
 
We next analyze the running time of $\ssa_{\minor}$, and also show that it satisfies the two properties of (\hyperlink{Sparsity}{Sparsity}) and (\hyperlink{Stretch}{Stretch})
required by the abstract \hyperlink{SPHigh}{$\ssa$}. To this end, we use the following result:
 
 \begin{lemma}[Kostochka~\cite{Kostochka82} and Thomason~\cite{Thomason84}]\label{lm:minor-sparsity} Any $K_r$-minor-free graph with $n$ vertices has $O(r\sqrt{\log r}n)$ edges. 
 \end{lemma}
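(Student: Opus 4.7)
The plan is to prove the contrapositive: every graph $G$ with average degree at least $C r \sqrt{\log r}$, for some absolute constant $C$, contains $K_r$ as a minor. Dividing through by $n$ then yields the stated upper bound on the number of edges in any $K_r$-minor-free graph.

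First, I would reduce to working with a dense, highly connected subgraph. A standard iterated-deletion argument shows that any graph with average degree at least $2d$ contains a subgraph of minimum degree at least $d$. Combining this with Mader-type results on the existence of highly linked subgraphs in dense graphs, I would pass to a subgraph $G'$ whose minimum degree and vertex-connectivity are both $\Omega(r \sqrt{\log r})$; this only costs constants in the degree parameter.

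The main step, and the principal obstacle, is exhibiting a $K_r$-minor inside such a dense, highly connected graph. I would use a probabilistic branch-set construction in the spirit of Thomason. Choose a parameter $s = \Theta(\log r)$ and build $r$ pairwise disjoint, connected branch sets $B_1,\dots, B_r$, each of size $s$, by repeatedly growing a BFS ball of $s$ vertices from a uniformly random root in the leftover graph. The connectivity guarantees that each $B_i$ is connected and well-spread; the density guarantees that for any two branch sets, the expected number of edges between them is $\Omega(s^2 d / n)$. With an appropriate choice of $s$, a union bound over the $\binom{r}{2}$ pairs, together with a concentration inequality, shows that with positive probability every pair $B_i, B_j$ is joined by at least one edge. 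Contracting each $B_i$ then yields a $K_r$-minor.

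The hard part is calibrating the parameters to recover the optimal $\sqrt{\log r}$ factor rather than the weaker $\log r$ that a naive first-moment calculation produces. The bottleneck is the trade-off between needing branch sets large enough that $\binom{r}{2}$ pair-wise edge events all succeed after a union bound, and needing them small enough that $rs$ does not exceed the number of available vertices in $G'$. Thomason's refined analysis uses a second-moment argument on the number of successful edges per pair (or, alternatively, Kostochka's density-increment/extremal-counterexample approach) to tighten this exactly, and this is the subtle calculation that delivers the $r\sqrt{\log r}$ threshold, which is known to be asymptotically tight.
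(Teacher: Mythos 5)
First, the paper itself does not prove this lemma; it is imported as a black box from Kostochka and Thomason, so there is no internal argument to compare against. What you have written is an outline of the external proof, and it has to be judged on its own.

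Your high-level framework is the right one — contrapositive, clean-up to a dense and highly connected subgraph, then random connected branch sets and a union bound over pairs — and it is essentially Thomason's route (Kostochka's argument is structurally quite different, proceeding by an extremal density-increment recursion rather than a probabilistic branch-set construction). But there is a genuine gap, and it sits precisely where the content of the theorem lies. For the construction you actually describe (BFS balls of $s$ vertices grown from uniformly random roots in the leftover graph), the estimate ``the expected number of edges between two branch sets is $\Omega(s^2 d / n)$'' does not hold: BFS balls are clustered, so two balls that land in different regions of the graph can be joined by no edges at all, and the events ``$B_i$ and $B_j$ are joined'' are nothing like independent Bernoulli trials. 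Making these branch sets simultaneously connected \emph{and} well-spread is exactly what the high connectivity is for, and it is exactly the technical core of Thomason's proof. You flag this yourself (``the hard part is calibrating the parameters ... Thomason's refined analysis uses a second-moment argument ... this is the subtle calculation that delivers the $r\sqrt{\log r}$ threshold'') and then defer it to a citation of the very theorem you are trying to establish. As it stands the proposal is an annotated reference to the result rather than a proof of it; without supplying that deferred step, a careful accounting of what the sketch actually yields would at best recover the weaker $O(r\log r\cdot n)$ edge bound of Mader's original argument, and your specific parameter choice $s = \Theta(\log r)$ has not been checked against the constraint $rs \le n'$ for the size $n'$ of the post-clean-up subgraph.
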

 
 \begin{lemma}\label{lm:App-Minor}\hypertarget{SPHGeneral}{$\ssa_{\minor}$} can be implemented in $O((|\mv| + |\me|))$ time. 
Furthermore, 1. (Sparsity) $\me^{\prune} = O(r\sqrt{\log r})|\mv|$, and 2. (Stretch) For each edge $(\varphi_{C_u},\varphi_{C_v})\in \me$, $d_{H_{<(1+\eps)L}}(u,v)\leq (1+\eps)(1+s_{\ssa_{\minor}}(\beta)\eps)w(u,v)$, where $(u,v) = 
\source(\varphi_{C_u}, \varphi_{C_v})$, $s_{\ssa_{\minor}}(\beta) = 0$ and $\eps \leq 1$. 
 \end{lemma}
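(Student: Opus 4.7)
The three claims in the lemma require very different amounts of work, and the bulk of the argument lies in the sparsity bound.

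The running time and the stretch bound are essentially free given how trivial the algorithm is. Since $\ssa_{\minor}$ returns $\me^{\prune} = \me$ without any computation, the running time is $O(|\mv|+|\me|)$ just for outputting the edge set. For the stretch bound, recall that $H_{<(1+\eps)L}$ is obtained by appending the source edges of $\me^{\prune}$ to $H_{<L}$. Thus for every $(\varphi_{C_u},\varphi_{C_v})\in\me$ with source $(u,v)=\source(\varphi_{C_u},\varphi_{C_v})$, the edge $(u,v)$ itself is present in $H_{<(1+\eps)L}$, giving $d_{H_{<(1+\eps)L}}(u,v)\leq w(u,v)\leq (1+\eps)w(u,v)$. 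This matches $t(1+s_{\ssa_{\minor}}(\beta)\eps)w(u,v)$ with $t=1+\eps$ and $s_{\ssa_{\minor}}(\beta)=0$, verifying Item 2.

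The heart of the proof is Item 1, and the main obstacle is showing that $\mg$ inherits the $K_r$-minor-free property from $G$, so that \Cref{lm:minor-sparsity} (Kostochka--Thomason) can be applied. My plan is to build $\mg$ as a minor of $G$ as follows. Property (4) of \Cref{def:ClusterGraph-Param} gives $\dm(H_{<L}[C])\leq \beta\eps L<\infty$, so $H_{<L}[C]$ is a connected subgraph of $G$ for every cluster $C$; by property (1) the clusters $C$ are pairwise vertex-disjoint. Hence I can, in $G$, contract each cluster $C$ (using the edges of $H_{<L}[C]\subseteq G$) into a single super-vertex $\varphi_C$, and then delete every remaining edge of $G$ that is not a source edge of some $(\varphi_{C_1},\varphi_{C_2})\in\me$. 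By property (2) the surviving edges correspond exactly to the edges of $\mg$, and by properties (1)--(2) the super-vertices correspond to $\mv$. This exhibits $\mg$ as a minor of $G$, so $\mg$ is $K_r$-minor-free. Applying \Cref{lm:minor-sparsity} to $\mg$ yields $|\me|=O(r\sqrt{\log r})|\mv|$, i.e. Item 1.

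One subtlety worth flagging is that \Cref{lm:minor-sparsity} is usually stated for simple graphs, whereas in principle $\mg$ could carry parallel edges (several edges of $G$ of weight in $[L,(1+\eps)L)$ across the same pair of clusters). In our framework, however, the cluster graph is constructed as a simple graph: between any two clusters we keep at most one representative edge of $\me$, so \Cref{lm:minor-sparsity} applies verbatim. (If one did insist on a multigraph version, it would suffice to observe that collapsing parallel edges produces a simple $K_r$-minor-free graph to which \Cref{lm:minor-sparsity} applies, and then to bound the multiplicity of parallel edges separately; but the design of the cluster graph in \Cref{sec:framework} avoids this altogether.)
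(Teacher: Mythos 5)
Your proof is correct and follows essentially the same route as the paper's: the running time and stretch bounds are immediate from the fact that $\me^{\prune}=\me$, and the sparsity bound comes from $\mg$ being a $K_r$-minor-free graph, to which Kostochka--Thomason applies. The one thing you do beyond the paper is to actually justify (via the connectivity of $H_{<L}[C]$ and disjointness of the clusters) that $\mg$ is a minor of $G$ — the paper simply asserts this — and to flag the simple-graph subtlety; both are worthwhile clarifications but do not change the structure of the argument.
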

 \begin{proof} The running time of $\ssa_{\minor}$ follows trivially from the construction. Noting that $\mg$ is a minor of the input graph $G$, $\mg$ is $K_r$-minor-free. Thus, $|\me| = O(r\sqrt{\log r})|\mv|$ by \Cref{lm:minor-sparsity}; this implies Item 1.  Since we take every edge of $\me$ to $\me^{\prune}$, the stretch is $1$ and hence $s_{\ssa_{\minor}}(\beta) = 0$, yielding Item 2. 
 \end{proof}
 
  We are now ready to prove a weaker version of \Cref{thm:minor-free-fast} for minor-free graphs, where the running time is $O(nr\sqrt{r}\alpha(nr\sqrt{r},n))$.
 \begin{proof}  We use algorithm $\ssa_{\minor}$ in place of the abstract \hyperlink{SPHigh}{$\ssa$} in \Cref{lm:framework} to construct the light spanner. By \Cref{lm:App-Minor}, we have $s_{\ssa}(\beta) = 0$, $\chi =  O(r\sqrt{\log r})$ and $\tau(m',n') = O(1)$. Thus, by plugging in the values of $\chi$ and $\tau$, we obtain the lightness claimed in \Cref{thm:minor-free-fast} and a running time of $O(nr\sqrt{r}\alpha(nr\sqrt{r},n))$, for a constant $\eps$. The stretch of the spanner is:
 	\begin{equation*}
 		(1+\eps)(1 + (s_{\ssa}(O(1))+ O(1))\eps)  = (1 + O(\eps))
 	\end{equation*}
 By scaling, we get a stretch of $(1+\eps)$.
 \end{proof}

\section{Unified Framework: Proof of \texorpdfstring{\Cref{lm:framework}}{Framework}}\label{sec:framework}
 
In \Cref{subsec:framework-details}, we outline a technical framework that we use to design a fast algorithm for constructing light spanners.  In the companion paper, we build on this framework to construct light spanners with fine-grained optimality. In \Crefrange{subsec:DesignPotential}{subsec:ConstructHi}, we provide a detailed implementation of the framework outlined in \Cref{subsec:framework-details} with a specific goal of obtaining a fast construction. In particular, \Cref{lm:framework} will be proved in \Cref{subsec:ConstructHi}. We refer readers to \Cref{subsec:proof-overview} for the overview of the proof.  We will introduce more notation in this section, summarized in \Cref{table:notation}. 

\subsection{The Framework}\label{subsec:framework-details}

Let $\MST$ be a minimum spanning tree of the input $n$-vertex $m$-edge graph $G = (V,E,w)$. Let $T_{\MST}$ be the running time needed to construct $\MST$. By scaling, we shall assume w.l.o.g.\ that the minimum edge weight is $1$. Let $\bar{w} = \frac{w(\mst)}{m}$. We remove from $G$ all edges of weight larger than $w(\MST)$; such edges do not belong to any shortest path, hence 
removing them does not affect the distances between vertices in $G$.  We define two sets of edges, $E_{light}$ and $E_{heavy}$, as follows:
\begin{equation}\label{Eprimedef}
	E_{light} = \{e\in E: w(e) \leq \frac{\bar{w}}{\eps}\} \qquad \& \qquad E_{heavy} = E\setminus E_{light}
\end{equation}

It is possible that $\frac{\bar{w}}{\eps} < 1$; in this case, $E_{light} = \emptyset$. 
The next observation is implied by the definition of $\bar{w}$.

\begin{observation}\label{obs:Eprime-weight} $w(E_{light}) \leq  \frac{w(\mst)}{\epsilon}$.
\end{observation}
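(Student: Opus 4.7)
The plan is to prove the observation by a direct per-edge bound on the weight of $E_{light}$, then sum over all edges.

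First I would observe that, by the definition of $E_{light}$ given in~\Cref{Eprimedef}, every edge $e \in E_{light}$ satisfies $w(e) \leq \bar{w}/\eps$. Since $E_{light} \subseteq E$, we have $|E_{light}| \leq m$. Summing the uniform per-edge bound yields
\[
w(E_{light}) \;=\; \sum_{e \in E_{light}} w(e) \;\leq\; |E_{light}| \cdot \frac{\bar{w}}{\eps} \;\leq\; m \cdot \frac{\bar{w}}{\eps}.
\]
Substituting the definition $\bar{w} = w(\mst)/m$ then gives the claimed bound $w(E_{light}) \leq w(\mst)/\eps$.

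There is essentially no obstacle: the threshold $\bar{w}/\eps$ that defines $E_{light}$ was chosen precisely so that this average-weight argument goes through. The only subtlety worth noting is that $m$ here refers to the edge count \emph{after} discarding edges heavier than $w(\mst)$, but since $\bar{w}$ is defined with respect to that same $m$, the product $m \cdot \bar{w}$ collapses to $w(\mst)$ without any adjustment, so the argument is self-consistent.
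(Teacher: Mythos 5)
Your argument is correct and is exactly the one the paper has in mind; the paper simply states that the observation "is implied by the definition of $\bar{w}$" without spelling out the two-line calculation that you give.
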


Recall that the parameter $\eps$ is in the stretch $t(1+\eps)$ in \Cref{lm:framework}. It controls the stretch blow-up in \Cref{lm:framework}, and ultimately, the stretch of the final spanner. There is an inherent trade-off between the stretch blow-up (a factor of $1+\eps$) and the blow-up of the other parameters, including running time and lightness, by at least a factor of $1/\eps$. 

By \Cref{obs:Eprime-weight}, we can safely add $E_{light}$ to our final spanner while paying only an additive $+\frac{1}{\epsilon}$ term to the lightness bound. Hence, by~\Cref{stretch:ob},
in the spanner construction that follows, it suffices to focus on the stretch for edges in $E_{heavy}$.
 Next, we partition the edge set $E_{heavy}$ into subsets of edges, such that for any two edges $e, e'$ in the same subset, their weights are either \emph{almost the same} (up to a factor of $1+\psi$)
or they are \emph{far apart} (by at least a factor of $\frac{1}{\eps(1+\psi)}$), where $\psi$ is a parameter to be optimized later. In the applications of our framework in this paper, we choose $\psi = \eps$; in our companion paper, we choose $\psi = 1/250$. 

\begin{definition}[Partitioning $E_{heavy}$]\label{def:refineEdprime} Let $\psi$ be any parameter
in the range  $(0,1]$. Let $\mu_{\psi} = \lceil \log_{1+\psi}\frac{1}{\eps}\rceil$. We partition  $E_{heavy}$ into subsets $\{E^{\sigma}\}_{\sigma \in [\mu_\psi] }$ such that $ E^{\sigma} = \cup_{i\in \mathbb{N}^+} E^{\sigma}_i$ where:
	\begin{equation}\label{eq:Esigmaixdef}
		E^{\sigma}_i = \left\{e : \frac{L_i}{1+\psi} \leq w(e) < L_i \right\} \mbox{ with } L_i = L_{0}/\eps^i, L_0 = (1+\psi)^{\sigma}\bar{w}~. 
	\end{equation}
\end{definition}

By definition, we have $L_i = L_{i-1}/\eps$ for each $i \ge 1$.

Readers may notice that if $\log_{1+\psi}\frac{1}{\eps}$ is not an integer, by the definition of $E^{\sigma}$, it could be that $E^{\mu_{\psi}}\cap E^{1} \not= \emptyset$, in which case $\{E^{\sigma}\}_{\sigma \in [\mu_\psi] }$ is not really a partition of $E_{heavy}$. This can be fixed by defining $E^{\mu_\psi}$ as the set of edges that are not in $\cup_{1\leq \sigma \leq \mu_{\psi}-1} E^{\sigma}$. We henceforth assume that  $\{E^{\sigma}\}_{\sigma \in [\mu_\psi] }$  is a partition of $E_{heavy}$.

The following lemma shows that it suffices to focus on the stretch of edges in $E^{\sigma}$, for an arbitrary $\sigma \in [\mu_{\psi}]$.

\begin{lemma}\label{lm:ReductionToEsigma} If for every $\sigma \in [\mu_{\psi}]$ and $k\geq 1$, we can construct a $k$-spanner $H^{\sigma}\subseteq G$ for $E^{\sigma}$ with lightness at most $\lt_{H^{\sigma}}$ (w.r.t. $\mst$)  in time $\tm_{H^{\sigma}}(m,n)$ (where $\lt_{H^{\sigma}}$ and $\tm_{H^{\sigma}}(m,n)$ do not depend on $\sigma$), then we can construct a $k$-spanner for $G$ with lightness $O\left(\frac{\lt_{H^{\sigma}}\log(1/\eps)}{\psi} + \frac{1}{\eps}\right)$ in time $O\left(\frac{\tm_{H^{\sigma}}(m,n) \log(1/\eps)}{\psi} + T_{\mst}\right)$.
\end{lemma}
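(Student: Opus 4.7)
The plan is to take the final spanner to be
$$H \;=\; \mst \;\cup\; E_{light} \;\cup\; \bigcup_{\sigma \in [\mu_{\psi}]} H^{\sigma},$$
and then verify the stretch, lightness, and runtime separately. For the stretch, by \Cref{stretch:ob} it suffices to bound $d_H(u,v)/w(u,v)$ for every \emph{edge} $(u,v) \in E$. Edges of weight larger than $w(\mst)$ were discarded at the outset without affecting distances in $G$, so every remaining edge lies in $E_{light} \cup E_{heavy}$. An edge in $E_{light}$ sits in $H$ itself (stretch $1$), while an edge in $E_{heavy}$ belongs to some $E^{\sigma}$ (using the author's fix of reassigning overlaps to $E^{\mu_{\psi}}$), and so is stretched by at most $k$ via $H^{\sigma} \subseteq H$.

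For the lightness I will use the crude bound $w(H) \leq w(\mst) + w(E_{light}) + \sum_{\sigma \in [\mu_{\psi}]} w(H^{\sigma})$. The first term contributes lightness $1$; by \Cref{obs:Eprime-weight} the second contributes lightness at most $1/\eps$; and the third is at most $\mu_{\psi} \cdot \lt_{H^{\sigma}} \cdot w(\mst)$ by the hypothesis (together with the fact that $\lt_{H^\sigma}$ does not depend on $\sigma$). Since
$$\mu_{\psi} \;=\; \lceil \log_{1+\psi}(1/\eps) \rceil \;=\; O\!\left(\log(1/\eps)/\psi\right),$$
using $\ln(1+\psi) = \Theta(\psi)$ for $\psi \in (0,1]$, the three contributions sum to the claimed lightness $O\!\left(\lt_{H^{\sigma}}\log(1/\eps)/\psi + 1/\eps\right)$.

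For the runtime, I will first build $\mst$ in time $T_{\mst}$ (also yielding $\bar{w} = w(\mst)/m$), then perform a single linear pass over $E$ to discard edges of weight $>w(\mst)$, separate $E_{light}$ from $E_{heavy}$, and for each edge of $E_{heavy}$ compute its index pair $(\sigma,i)$ in $O(1)$ time from the closed forms $L_0 = (1+\psi)^{\sigma}\bar{w}$ and $L_i = L_0/\eps^i$ (two logarithms and a rounding). This bookkeeping costs $O(m)$. Running the assumed subroutine once for each of the $\mu_{\psi}$ classes costs $O(\tm_{H^{\sigma}}(m,n) \cdot \log(1/\eps)/\psi)$, and summing gives the stated runtime.

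The whole argument is essentially bookkeeping, so I do not anticipate a real obstacle; the only genuine subtleties are the cosmetic one that $\{E^{\sigma}\}_{\sigma}$ is a true partition (already handled by the paragraph preceding the lemma statement) and the harmless preprocessing step of discarding edges heavier than $w(\mst)$, which is justified because no such edge lies on any shortest path and so removing it preserves all distances (and in particular the required stretch bound for any remaining edge).
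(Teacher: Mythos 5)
Your proposal is correct and is essentially the same argument the paper gives: union the $H^\sigma$ with $E_{light}$ (and the MST, which the paper implicitly includes in each $H^\sigma$), invoke \Cref{stretch:ob} for the stretch, bound $\mu_\psi = O(\log(1/\eps)/\psi)$ using $\log(1+\psi) = \Theta(\psi)$ on $(0,1]$, and pay $T_{\mst} + O(m)$ for the one-pass bucketing plus $\mu_\psi \cdot \tm_{H^\sigma}$ for the per-class constructions. The only difference from the paper's exposition is your extra (harmless) detail about computing the index pair $(\sigma,i)$ in $O(1)$ per edge via closed forms, which the paper leaves as "trivially constructed in $O(m)$ time."
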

\begin{proof}
Let $H$ be a graph with $V(H) = V(G)$ and $E(H) = E_{light} \cup \left( \cup_{\sigma \in [\mu_{\psi}]}H^{\sigma}\right)$. The fact that $H$ is a $k$-spanner of $G$ follows directly from~\Cref{stretch:ob}.

To bound the lightness of $H$, note that $\mu_{\psi} = O(\frac{\log(1/\eps)}{\log(1+\psi)})$.  
Since $\log(1+x)\geq x$  when $x\in (0,1]$:
\begin{equation*}
	\begin{split}
	    w\left( \cup_{\sigma \in [\mu_{\psi}]}H^{\sigma}\right) ~&\leq~ (\lt_{H^{\sigma}}\cdot \mu_{\psi})w(\mst)  ~\\&=~  O\left(\frac{\lt_{H^{\sigma}}\log(1/\eps)}{\log(1+\psi)}\right)w(\mst) ~\\ &=~ O\left(\frac{\lt_{H^{\sigma}}\log(1/\eps)}{\psi}\right)w(\mst)~.
	\end{split}
\end{equation*}
The lightness bound of $H$ now follows from \Cref{obs:Eprime-weight}.

To bound the running time, note that the time needed to construct $E_{light}$ is $T_{\mst} + O(m) = O(T_{\mst})$. Since we remove edges of weight at least $w(\mst)$ from $G$ and every edge in $E_{heavy}$ has a weight at least $\frac{\bar{w}}{\eps} = \frac{w(\mst)}{\eps m}$, the number of sets that each $E^{\sigma}$ is partitioned to is $O(\log_{1/((1+\psi)\eps)}(\eps m)) = O(\log(m))$ for any $\eps \leq 1/2$. Thus, the partition of $E_{heavy}$ can be trivially constructed in $O(m)$ time. It follows  that the running time to construct $H$ is:
\begin{eqnarray*}
	\mu_{\psi} \tm_{H^{\sigma}}(n,m) + O(T_{\mst})  + O(m) ~=~ O\left(\frac{\tm_{H^{\sigma}}(n,m) \log(1/\eps)}{\log(1+\psi)} + 
	T_{\mst} \right) \\ ~=~ O\left(\frac{\tm_{H^{\sigma}}(n,m) \log(1/\eps)}{\psi} + T_{\mst}\right), 
\end{eqnarray*}
as desired.
\end{proof}

 We shall henceforth focus on constructing a spanner for $E^{\sigma}$, for an arbitrarily fixed $\sigma \in [\mu_{\psi}]$. In what follows, we present a clustering framework for constructing a spanner $H^{\sigma}$ for $E^{\sigma}$  with \emph{stretch $t(1+\epsilon)$}. We will assume that $\epsilon$ is sufficiently smaller than $1$.

 We call edges in $E^{\sigma}_i$ in \Cref{eq:Esigmaixdef} \emph{level $i$ edges}. Our construction considers edges in  $E^{\sigma}$ by level, starting from level $1$. The order between edges within the same level considered by our algorithm is not important.   An important subtlety of our construction is that when we consider level $i$ edges, we will assume that all the edges of length strictly smaller than $L_i$, \emph{including those in $ E\setminus E^{\sigma}$}, are already preserved with stretch $t(1+\rho\eps)$ in the current spanner $H_{< L_i}$. We can inductively assume this because we will alternate between different values of $\sigma$ in our construction. More precisely, we consider edges at level $1$ of $E^{\sigma}$ \emph{for all $\sigma\in [1, \mu_{\psi}]$} by the order from smaller $\sigma$ to larger $\sigma$, then edges at level $i$ for all $\sigma$, and then edges at level $3$, and so on.  A different way to look at this is that the edges at the same level of all $\sigma$ will be considered by increasing  $\sigma$, while edges at different levels will be considered by increasing levels. This subtlety is needed since in \hyperlink{SPHigh}{the $\ssa$}, we assume a good stretch spanner for all edges of length strictly smaller than $L$, which will be $L_i$ when we consider edges at level $i$ of $E^{\sigma}$.

\paragraph{Subdividing $\mst$} We subdivide each edge $e \in E(\mst)$ of weight more than $\bar{w}$ into $\lceil \frac{w(e)}{\bar{w}} \rceil$ edges of weight (of at most $\bar{w}$ and at least $\bar{w}/2$ each) that sums to $w(e)$. (New edges do not have to have equal weights.)  Let $\widetilde{\mst}$ be the resulting subdivided $\mst$.
We refer to vertices that are subdividing the $\mst$  edges as \emph{virtual vertices}. Let $\tilde V$ be the set of vertices in $V$ and virtual vertices; we call $\tilde{V}$  the {\em extended set} of vertices. Let $\tilde G = (\tilde V,\tilde E)$ be the graph that consists of the edges in $\widetilde{\mst}$ and $E^{\sigma}$. 
\begin{observation}\label{obs:Edge-Gtilde}
	$|\tilde{E}|  = O(m)$.
\end{observation}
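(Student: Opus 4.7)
The plan is to bound the two contributions to $\tilde E$ separately: the edges of $\widetilde{\mst}$ and the edges of $E^{\sigma}$. The latter is trivial since $E^{\sigma} \subseteq E$, so $|E^{\sigma}| \le m$. The main content is to show $|E(\widetilde{\mst})| = O(m)$, and here the key is that the subdivision parameter $\bar w$ was chosen precisely as $w(\mst)/m$, so that the total number of subdivided pieces is controlled by $m$.

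First I would split $\mst$ into its ``light'' edges (weight $\le \bar w$, not subdivided) and its ``heavy'' edges (weight $> \bar w$, subdivided). The light edges contribute at most $n-1$ pieces. For each heavy edge $e$, subdivision produces $\lceil w(e)/\bar w \rceil$ pieces, which is at most $w(e)/\bar w + 1$. Summing over all heavy edges in $\mst$ gives
\[
\sum_{e \in \mst,\, w(e) > \bar w} \lceil w(e)/\bar w \rceil \;\le\; \frac{1}{\bar w}\sum_{e \in \mst} w(e) + (n-1) \;=\; \frac{w(\mst)}{\bar w} + (n-1) \;=\; m + (n-1),
\]
where the last equality uses $\bar w = w(\mst)/m$. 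Adding the contribution of the light edges yields $|E(\widetilde{\mst})| \le m + 2(n-1)$.

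Combining, $|\tilde E| \le |E(\widetilde{\mst})| + |E^{\sigma}| \le m + 2(n-1) + m = O(m+n)$. Finally I would invoke the standing assumption that $G$ is connected (it has an $\mst$), hence $m \ge n-1$, to conclude $|\tilde E| = O(m)$. I do not anticipate any real obstacle here; the only subtlety worth flagging is that the ceiling in the definition of the number of subdivision pieces forces the additive $n-1$ term, but that term is absorbed into $O(m)$ under the connectivity assumption.
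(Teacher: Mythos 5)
Your proof is correct and rests on the same key fact as the paper's, namely $\bar w = w(\mst)/m$, so that the total number of subdivision pieces is controlled by $w(\mst)/\bar w = m$. The paper's version is slightly more compact: rather than summing $\lceil w(e)/\bar w \rceil$ edge by edge, it notes that $w(\widetilde{\mst}) = w(\mst)$ while each subdivided edge has weight at least $\bar w/2$, giving $|E(\widetilde{\mst})| \le 2m$ in one line. Your ceiling-based count and the explicit $m \ge n-1$ absorption step are a fine alternative and handle the unsubdivided light edges a bit more carefully, but the underlying idea is the same.
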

\begin{proof}
It suffices to show that
$|E(\widetilde{\mst})| = O(m)$. Indeed, since $w(\widetilde{\mst}) = w(\mst)$ and each edge of $\widetilde{\mst}$ has weight at least $\bar{w}/2$, we have $|E(\widetilde{\mst})| \le 2m$.	%
\end{proof}

The $t(1+\eps)$-spanner that we construct for $E^{\sigma}$ is a subgraph of $\tilde{G}$ containing all edges of $\widetilde{\mst}$; we can enforce this assumption by adding the edges of $\widetilde{\mst}$ to the spanner. By replacing the edges of $\widetilde{\mst}$ by those of $\mst$, we can transform any subgraph of $\tilde{G}$ that contains the entire tree $\widetilde{\mst}$ to a subgraph of $G$ that contains the entire tree $\mst$. We denote  by $\tilde H^{\sigma}$ the $t(1+\eps)$-spanner of $E^{\sigma}$ in $\tilde{G}$;
by abusing the notation, we will write $H^{\sigma}$ rather than $\tilde H^{\sigma}$ in the sequel,
under the understanding that in the end we transform $H^{\sigma}$ to a subgraph of $G$.

Recall that $E^{\sigma} = \cup_{i\in \mathbb{N}^+} E^{\sigma}_i$ where $E^{\sigma}_i$ is the set of edges defined in \Cref{eq:Esigmaixdef}. We refer to   edges in $E^{\sigma}_i$ as   \emph{level-$i$ edges}. We say that a level $i$ is empty if the set $E^{\sigma}_i$  of level-$i$ edges is empty;
in the sequel, we shall only consider the nonempty levels. 

\begin{claim}\label{clm:numLevels} The number of (nonempty) levels is $O(\log m)$.
\end{claim}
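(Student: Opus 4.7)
The plan is to bound the range of weights that edges in $E^{\sigma}$ can take, and then exploit the fact that consecutive weight buckets in \Cref{eq:Esigmaixdef} are spaced by a factor of $1/\eps$. Since a level $i$ is nonempty only when the bucket $[L_i/(1+\psi), L_i)$ intersects the actual range of edge weights occurring in $E^{\sigma}$, controlling this range immediately bounds the number of nonempty levels.

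First I would pin down the achievable weight range. Every edge of $G$ of weight exceeding $w(\mst)$ was discarded at the start of \Cref{subsec:framework-details}, so any remaining edge has weight at most $w(\mst)$. Moreover, $E^{\sigma}\subseteq E_{heavy}$ and by \Cref{Eprimedef} every edge of $E_{heavy}$ has weight strictly greater than $\bar w/\eps = w(\mst)/(\eps m)$. Hence every edge contributing to a nonempty $E^{\sigma}_i$ satisfies
\[
\frac{w(\mst)}{\eps m} \;<\; w(e) \;\leq\; w(\mst).
\]

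Next I would translate this into a constraint on $i$. If $E^{\sigma}_i \neq \emptyset$ then $L_i/(1+\psi) \leq w(\mst)$ and $L_i > \bar w/\eps$. Using $L_i = L_0/\eps^i$ this gives
\[
\frac{\bar w}{\eps} \;<\; \frac{L_0}{\eps^{i}} \;\leq\; (1+\psi)\,w(\mst),
\]
so the range of admissible indices $i$ has length at most
\[
\log_{1/\eps}\!\bigl((1+\psi)\,\eps\, w(\mst)/\bar w\bigr) + O(1) \;=\; \log_{1/\eps}(m) + O(1).
\]

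Finally, since we may assume $\eps \leq 1/2$ (say), we have $\log(1/\eps)\geq 1$ and therefore $\log_{1/\eps}(m) = \log m/\log(1/\eps) = O(\log m)$, yielding the claim. The only thing to double-check is the boundary case when $L_0/\eps^i$ lands exactly on $\bar w/\eps$ or $w(\mst)$; both contribute at most one extra nonempty level and are absorbed in the $O(1)$ additive term. There is no real obstacle here — the argument is essentially that geometrically spaced weight buckets fit $O(\log m)$ times into a weight range of ratio $O(m)$.
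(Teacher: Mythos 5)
Your proof is correct and takes essentially the same approach as the paper: both bound the weight range of edges in $E^{\sigma}$ by $[\bar{w}/\eps,\, w(\mst)] = [\bar{w}/\eps,\, m\bar{w}]$ and observe that consecutive levels are geometrically spaced by roughly a $1/\eps$ factor, yielding $O(\log_{1/\eps} m) = O(\log m)$ nonempty levels for $\eps \leq 1/2$.
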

\begin{proof}
Note that every edge of $E^{\sigma}$ has weight at least $\frac{\bar{w}}{\epsilon}$ and at most $w(\mst) = m\bar{w}$. Furthermore, the weight of any edge in $E^\sigma_{i+1}$ is at least $\frac{1}{(1+\psi)\epsilon} $ times the weight of any edge in 	$E^\sigma_{i}$ for any $i \in \mathbb{N}^+$. Note that $\frac{1}{(1+\psi)\epsilon} \geq \frac{1}{2\eps}$ since $\psi \leq 1$. Therefore, the number of levels is $O\left(\frac{\log(m \eps)}{\log(1/(2\eps))} \right) = O(\log(m))$ for any $\eps \leq 1/2$.
\end{proof}

Our construction crucially relies on a \emph{hierarchy of clusters}. A {\em cluster} in a graph is simply a subset of vertices in the graph.  Nonetheless, as will become clear soon, we care also about {\em edges} connecting vertices in the cluster and of the properties that these edges possess. Our hierarchy of clusters, denoted by $\mathcal{H} = \{\mathcal{C}_1,\mathcal{C}_2, \ldots \}$ satisfies the following properties:  
	\begin{itemize}  [noitemsep] 

		\item \textbf{(P1)~} 	\hypertarget{P1}{} For any $i\geq 1$, each $\mathcal{C}_i$ is a partition of $\tilde{V}$. When $i$ is large enough, $\mathcal{C}_i$ contains a single set $\tilde{V}$ and $\mathcal{C}_{i+1} = \emptyset$.
		\item \textbf{(P2)~} \hypertarget{P2}{} $\mathcal{C}_i$ is an \emph{$\Omega(\frac{1}{\eps})$-refinement} of $\mathcal{C}_{i+1}$, i.e., every cluster $C\in \mathcal{C}_{i+1}$ is obtained as the union of $\Omega(\frac{1}{\epsilon})$ clusters in $\mathcal{C}_i$ for $i\geq 1$.
		\item \textbf{(P3)~} \hypertarget{P3}{} For each cluster $C\in \mathcal{C}_i$, we have $\dm(H^{\sigma}[C]) \leq g L_{i-1}$, for a sufficiently large constant $g$ to be determined later. (Recall that $L_i$ is defined in \Cref{eq:Esigmaixdef}.) 
	\end{itemize}

\begin{remark}\label{remark:diameter} (1) We construct $H^{\sigma}$ along with the cluster hierarchy. Suppose that at some step $s$ of the algorithm, we construct a level-$i$ cluster $C$. Let $H^{\sigma}_s$ be $H^{\sigma}$ at step $s$. We shall maintain (\hyperlink{P3}{P3}) by maintaining the invariant that $\dm(H^{\sigma}_s[C]) \leq g L_{i-1}$; indeed,  adding more edges in later steps of the algorithm does not increase the diameter of the subgraph induced by $C$. 

(2) It is time-consuming to compute the diameter of a cluster {\em exactly}.  Thus, we explicitly associate with each cluster $C\in \mathcal{C}_i$ a proxy parameter of the diameter during the course of the construction. This proxy parameter has two properties: (a) it is at least the diameter of the cluster, and (b) it is lower-bounded by $\Omega(L_{i-1})$. Property (a) is crucial in arguing for the stretch of the spanner. Property (b) is crucial to have an upper bound on the number of level-$i$ clusters contained in a level-$(i+1)$ cluster, which speeds up its (the level-$(i+1)$ cluster's)  construction.
\end{remark}

 When $\eps$ is sufficiently small, specifically smaller than the constant hiding in the $\Omega$-notation in property (\hyperlink{P2}{P2}) by at least a factor of 2, it holds that $|\mc_{i+1}| \leq |\mc_i|/2$, yielding a geometric decay in the number of clusters at each level of the hierarchy. This geometric decay is crucial to our fast constructions. 

Our construction of the cluster hierarchy $\mathcal{H}$ will be carried out level by level, starting from level $1$. After we construct the set of level-$(i+1)$ clusters, we compute a subgraph $H^{\sigma}_{i}\subseteq G$ as stated in \Cref{lm:framework}. The final spanner $H^{\sigma}$ is obtained as the union of all subgraphs $\{H^{\sigma}_i\}_{i \in \mathbb{N}^+}$.  To bound the weight of $H^{\sigma}$, we rely on a potential function $\Phi$ that is formally defined as follows:

\begin{definition}[Potential Function $\Phi$]\label{def:Potential}  We use a potential function $\Phi: 2^{\tilde{V}}\rightarrow \mathbb{R}^+$ that maps each cluster $C$ in the hierarchy $\mathcal{H}$ to a potential value $\Phi(C)$, such that the total potential of clusters at level $1$ satisfies:
	\begin{equation}\label{eq:Phi1}
		 \sum_{C\in\mathcal{C}_1}\Phi(C) ~\leq~ w(\MST)~.
	\end{equation}
 Level-$i$ potential is defined as $\Phi_i = \sum_{C\in \mathcal{C}_i} \Phi(C)$ for any $i\geq 1$. The \emph{potential change} at level $i$, denoted by $\Delta_i$ for every $i \geq 2$, is defined as:
\begin{equation}\label{eq:PotentialReduction}
	 \Delta_i ~=~ \Phi_{i-1} - \Phi_{i}~. 
\end{equation}
\end{definition} 

The key to our framework is \Cref{lm:framework-technical} below. There are many parameters in \Cref{lm:framework-technical}, but the most important one is $\lambda$, which basically bounds the total weight of the edges that we add at every level and will play a major role in the final lightness bound.  Ideally, we want the inequality $w(H_i) \leq \lambda \Delta_{i+1}$ to hold for every $i$, but unfortunately, this is not the case. What we are able to show is that, $w(H_i) \leq \lambda \Delta_{i+1} + a_i$ for a small $a_i > 0$ in the sense that  $\sum_{i \in \mathbb{N}^+} a_i \leq A\cdot w(\mst)$ for some small parameter $A$. Therefore, we need the sequence $\{a_i\}_{i \in \mathbb{N}^+}$ in the statement of the lemma.  We note that parameter $A$ is the same for all $\sigma$.

\begin{restatable}{lemma}{FrameworkTechnical} \label{lm:framework-technical} Let $\rho\geq 1$ and $\eps_0 \in (0,1)$ be constants.  Let $\psi \in (0,1], t \geq 1, \eps \in (0,\eps_0)$ be  parameters, and $E^{\sigma}= \cup_{i\in \mathbb{N}^+} E^{\sigma}_i$ be the set of edges defined in Equation~\eqref{eq:Esigmaixdef}. Let $\{a_i\}_{i \in \mathbb{N}^+}$ be a sequence of positive real numbers such that $\sum_{i \in \mathbb{N}^+} a_i \leq A\cdot w(\mst)$ for some $A\in \mathbb{R}^+$. Let $H_0 = \mst$. For any level $i\geq 1$, assume that we can compute all subgraphs  $H_1,\ldots,H_i\subseteq G$  as well as the cluster sets $\{\mathcal{C}_{1},\ldots,\mathcal{C}_{i},\mathcal{C}_{i+1}\}$ in total   $O(\sum_{j=1}^i(|\mathcal{C}_j| + |E^{\sigma}_j|)f(n,m) + m)$ time for some function $f(\cdot,\cdot)$ such that:
	\begin{enumerate}[noitemsep]
		\item[(1)] $w(H_i) \leq  \lambda \Delta_{i+1} + a_i$ for some $\lambda \geq 0$,
		\item[(2)] for every $(u,v)\in E^{\sigma}_i$, $d_{H_{< L_i}}(u,v)\leq t(1+ \rho\cdot \epsilon)w(u,v)$ where $H_{< L_i}$ is the spanner constructed for edges of $G$ of weight less than $L_i$. 
	\end{enumerate}
	Then  we  can construct a $t(1+ \rho \eps)$-spanner for $G(V,E)$ with lightness  $O(\frac{\lambda + A + 1}{\psi}\log \frac{1}{\epsilon} + \frac{1}{\eps})$ in time $ O(\frac{mf(n,m)}{\psi}\log \frac{1}{\epsilon} + T_{\mst})$ when $\eps \in (1,\eps_0)$.
\end{restatable}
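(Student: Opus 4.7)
The plan is to first bound the total weight and running time of the spanner $H^{\sigma}$ built for a single $\sigma \in [\mu_\psi]$, then invoke \Cref{lm:ReductionToEsigma} to obtain the final bounds. Set $H^{\sigma} = \widetilde{\mst} \cup \bigcup_{i\geq 1} H_i$. Since $H_0 = \mst$ and since every $H_i$ is assumed to extend $H_{<L_i}$ so as to achieve stretch $t(1+\rho\eps)$ on $E^{\sigma}_i$, by \Cref{stretch:ob} it follows that $H^{\sigma}$ is a $t(1+\rho\eps)$-spanner for $E^{\sigma}$.

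For the weight bound, I would telescope the potential. Using assumption~(1), the definition of $\Delta_{i+1} = \Phi_i - \Phi_{i+1}$, non-negativity of the potential, and the initial inequality~\eqref{eq:Phi1},
\begin{equation*}
\sum_{i\geq 1} w(H_i) ~\leq~ \lambda \sum_{i\geq 1} \Delta_{i+1} + \sum_{i\geq 1} a_i ~\leq~ \lambda \Phi_1 + A\cdot w(\mst) ~\leq~ (\lambda + A)\, w(\mst).
\end{equation*}
Together with $w(\widetilde{\mst}) = w(\mst)$, this gives $w(H^{\sigma}) \leq (\lambda + A + 1)\, w(\mst)$, i.e.\ $\lt_{H^{\sigma}} = O(\lambda + A + 1)$.

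For the running time, recall that $|\tilde{V}| = O(m)$ by~\Cref{obs:Edge-Gtilde}, and hence $|\mathcal{C}_1| = O(m)$. Property~(\hyperlink{P2}{P2}) ensures $|\mathcal{C}_{i+1}| \leq |\mathcal{C}_i|/2$ once $\eps$ is small enough, yielding a geometric decay so that $\sum_{j\geq 1} |\mathcal{C}_j| = O(|\mathcal{C}_1|) = O(m)$. Clearly $\sum_{j\geq 1} |E^{\sigma}_j| \leq |E^{\sigma}| \leq m$. Substituting into the assumed runtime $O\bigl(\sum_{j=1}^{i}(|\mathcal{C}_j| + |E^{\sigma}_j|) f(n,m) + m\bigr)$ and summing over all (at most $O(\log m)$) non-empty levels (cf.\ \Cref{clm:numLevels}), the total time to build $H^{\sigma}$ is $\tm_{H^{\sigma}}(n,m) = O(m f(n,m))$.

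Finally, apply \Cref{lm:ReductionToEsigma} with $k = t(1+\rho\eps)$, $\lt_{H^{\sigma}} = O(\lambda + A + 1)$, and $\tm_{H^{\sigma}}(n,m) = O(m f(n,m))$: the resulting spanner $H$ for $G$ has stretch $t(1+\rho\eps)$, lightness $O\bigl(\tfrac{(\lambda+A+1)\log(1/\eps)}{\psi} + \tfrac{1}{\eps}\bigr)$, and construction time $O\bigl(\tfrac{m f(n,m)\log(1/\eps)}{\psi} + T_{\mst}\bigr)$, as required. The main conceptual step is the telescoping weight bound via the potential function; the main place where one must be careful is ensuring that the geometric decay in $|\mathcal{C}_j|$ holds uniformly across levels (so that $\sum_j |\mathcal{C}_j|$ does not blow up), which is where property~(\hyperlink{P2}{P2}) and the assumption $\eps \ll 1$ are essential.
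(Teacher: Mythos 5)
Your proof is correct and follows essentially the same route as the paper: telescope the potential to get $w(H^{\sigma}) \leq (\lambda+A+1)w(\mst)$, use property (P2) to bound $\sum_j |\mathcal{C}_j| = O(m)$ via geometric decay, bound $\sum_j |E^{\sigma}_j| \leq m$, and then invoke \Cref{lm:ReductionToEsigma}. (One small nit: the $O(\log m)$-levels remark is unnecessary here, since the assumed runtime is already cumulative over levels and the geometric decay of $|\mathcal{C}_j|$ alone gives the $O(mf(n,m))$ bound.)
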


\begin{proof} Let $H^{\sigma} = \cup_{i\in \mathbb{N}} H_i$.  Note that $w(\msttilde) = w(\mst)$, since $\msttilde$ is simply a subdivision of $\mst$.  By condition (1) of \Cref{lm:framework}, 
	\begin{equation}\label{eq:LightnessHsigma}
		\begin{split}
			w(H^{\sigma}) &\leq \lambda \sum_{i \in \mathbb{N}^+}\Delta_i + \sum_{i \in \mathbb{N}^+} a_i + w(\mst) ~\leq~ 
			\lambda\cdot \Phi_1 + A\cdot w(\mst) + w(\mst)\\
			& \leq (\lambda + A+1)w(\mst) \quad \mbox{(by \Cref{eq:Phi1})}
		\end{split}
	\end{equation}
	
	\Cref{eq:LightnessHsigma} and \Cref{lm:ReductionToEsigma} imply the lightness upper bound; here $\lt_{H^{\sigma}} = (\lambda + A+1)$. The stretch bound $t(1+\rho\eps)$ follows directly from the fact that $E^{\sigma}= \cup_{i\in \mathbb{N}^+}E^{\sigma}_{i}$, Item (2), and \Cref{lm:ReductionToEsigma}. 
	
	To bound the running time, we note that $\sum_{i \in \mathbb{N}^+}|E^{\sigma}_i| \leq m$ and by property (\hyperlink{P2}{P2}), we have $\sum_{i \in \mathbb{N}^+} |\mathcal{C}_i| = |\mathcal{C}_1|\sum_{i \in \mathbb{N}^+}\frac{O(1)}{\epsilon^{i+1}} = O(|\mathcal{C}_1|) = O(m)$. Thus, by the assumption of  \Cref{lm:framework}, the total running time to construct $H^{\sigma}$ is:
	\begin{equation*}
		\tm_{H^{\sigma}}(m,n) = O\left((\sum_{i \in \mathbb{N}^+}(|\mathcal{C}_i|) + |E_i|)f(m,n) + m\right) = O\left( mf(m,n)\right).
	\end{equation*}
	Plugging this running time bound on top of  \Cref{lm:ReductionToEsigma} for all $\sigma \in [\mu_\psi]$ yields the required running time bound in~\Cref{lm:framework}. %
\end{proof}

\begin{remark}\label{remark:stretchLevel-i}In \Cref{lm:framework-technical}, we construct spanners for edges of $G$ level by level, starting from level $1$. By Item (2), when constructing spanners for edges in $E^{\sigma}_i$, we could assume by induction that all edges of weight less than $L_i/(1+\psi)$ already have stretch $t(1+\rho\eps)$ in the spanner constructed so far, denoted by $H_{< L_i/(1+\psi)}$. By defining $H_{<L_i} = H_{< L_i/(1+\psi)}\cup H_i$, we get a spanner for edges of length less than $L_i$. 
\end{remark}

In summary, two important components in our spanner construction are a hierarchy of clusters and a potential function as defined in \Cref{def:Potential}.  In \Cref{subsec:DesignPotential}, we present a construction of level-$1$ clusters and a general principle for assigning potential values to clusters.  In \Cref{subsec:LeveIplus1Construction}, we outline an efficient construction of clusters at any level $i+1$ for $i\geq 1$. The details of the construction are deferred to \Cref{sec:ClusteringDetails}. In \Cref{subsec:ConstructHi}, we present a general approach for constructing $H_i$. Our construction of $H_i$ assumes the existence of  \hyperlink{SPHigh}{$\ssa$} stated in \Cref{subsec:framework-intro}.

 \subsection{Designing A Potential Function}\label{subsec:DesignPotential}
 
In this section, we present in detail the underlying principle used to design the potential function $\Phi$ in \Cref{def:Potential}. We start by constructing and assigning potential values for level-$1$ clusters.

 \begin{lemma}\label{lm:level1Const} In time $O(m)$, we can construct a set of level-$1$ clusters $\mathcal{C}_1$ such that, for each cluster $C\in \mathcal{C}_1$, the subtree $\msttilde[C]$
 	of $\msttilde$ induced by $C$ is connected and satisfies $L_0 \leq \dm(\msttilde[C]) \leq 7L_0$. 
 \end{lemma}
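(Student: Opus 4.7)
The plan is to build $\mathcal{C}_1$ by a single bottom-up traversal of $\msttilde$, exploiting the fact that every edge of $\msttilde$ has weight at most $\bar w < L_0$ (since $L_0 = (1+\psi)^\sigma \bar w$ with $\sigma \ge 1$). Root $\msttilde$ at an arbitrary vertex $r$ and process its vertices in post-order. At each $v$ I maintain a connected \emph{residual subtree} $R(v)$ containing $v$ and some of its descendants, together with its height from $v$, $h(v) = \max_{u\in R(v)} d_{R(v)}(v,u)$. The invariant I enforce after $v$ finishes is that either $R(v)$ has already been declared to be a level-$1$ cluster, or $h(v) < L_0$ (and inductively $\dm(R(v)) < 2L_0$). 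When processing $v$, let $S$ be the set of children $c$ whose residuals are still open; provisionally form $R(v) = \{v\} \cup \bigcup_{c\in S} R(c)$ with the edges $(v,c)$ for $c\in S$, so $h(v) = \max_{c\in S}(h(c)+w(v,c))$. Close $R(v)$ as a cluster iff $h(v) \ge L_0$; otherwise keep it open.

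Both diameter bounds drop out of this closing rule. If $R(v)$ is closed then some $c\in S$ satisfies $h(c)+w(v,c)\ge L_0$, giving $\dm(\msttilde[R(v)])\ge L_0$. For the upper bound, each open $h(c)<L_0$ and $w(v,c)\le \bar w \le L_0$, so every arm through $v$ has length $<2L_0$; summing the two longest arms through $v$ and using the inductive $\dm(R(c))<2L_0$ on each merged child gives $\dm(\msttilde[R(v)])<4L_0 \le 14L_0$ at closure. The open-case invariant $\dm(R(v))<2L_0$ survives because then every arm through $v$ is $<L_0$.

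The remaining subtlety is the residual $R(r)$ left open at the root, for which we have $\dm(R(r))<2L_0$ but possibly $\dm(R(r))<L_0$, violating the lower bound. I would attach $R(r)$, through any $\msttilde$-edge $(r,c)$ with $c$ in a neighbouring already-closed cluster $C^*$, to $C^*$; the merged cluster has diameter at most $\dm(C^*) + w(r,c) + \dm(R(r)) \le 4L_0 + \bar w + 2L_0 \le 7L_0 \le 14L_0$, while its lower bound is inherited from $C^*$. The only case in which no such $C^*$ exists is $\dm(\msttilde) < L_0$, a degenerate situation in which level $1$ is essentially vacuous and can be handled directly (or side-stepped by the outer framework, which is not forced to use level-$1$ clusters when no edges of $E^\sigma$ exist at all).

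For the runtime, by \Cref{obs:Edge-Gtilde} we have $|V(\msttilde)|,|E(\msttilde)|=O(m)$. The post-order traversal spends $O(1+|S|)$ time at $v$ once heights $h(\cdot)$ are represented as plain numbers propagated by $h(v)=\max_{c\in S}(h(c)+w(v,c))$ and the residuals are represented by a pointer-based rooted forest, so merging a child's residual into the parent's is a single relink. Note that the diameter of closed clusters need not be computed, only bounded by the analysis, so no expensive diameter subroutine is invoked. The total work is $O(m)$. The step I expect to be the main obstacle is the clean treatment of the root's leftover residual: verifying that the attachment simultaneously preserves the lower bound (inherited from the already-closed neighbour) and the $14L_0$ upper bound, and gracefully handling the degenerate case $\dm(\msttilde)<L_0$; everything else is routine tree partition, made easy by the generous slack between the $\le 4L_0$ bound our rule actually achieves and the $14L_0$ target of the lemma.
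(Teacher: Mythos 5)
Your proof is correct and achieves bounds well inside the $14L_0$ budget, but the mechanism differs from the paper's. The paper repeatedly picks an arbitrary vertex $v$ in a component of diameter $\ge 4L_0$, grows a minimal ball of radius $\ge L_0$ around $v$, and iterates; after this phase it may be left with \emph{several} small components, each of which is glued (in a star-like way) onto an adjacent cluster. You instead do a single post-order pass, triggering closure when the height of the residual first reaches $L_0$; this keeps the open vertices a single connected residual throughout, so only \emph{one} leftover (at the root) needs to be glued. The benefit of your route is a cleaner, more concrete $O(m)$ algorithm and a tighter constant ($< 7L_0$ vs.\ the paper's $\le 14L_0$); what the paper's version buys is not having to fix a traversal order or a root.

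Two small slips, neither fatal. First, the attachment edge need not be incident to $r$: the boundary edges of $R(r)$ are edges $(p,v)$ with $p\in R(r)$ arbitrary and $v$ a vertex at which a cluster was closed, so you should attach through any such boundary edge (the $\dm(C^*)+\bar w+\dm(R(r)) < 7L_0$ bound survives unchanged). Second, your degenerate-case threshold is off: no closure ever happens iff $h(r)<L_0$, i.e.\ $\dm(\msttilde)<2L_0$, not $<L_0$. In the sub-range $L_0\le\dm(\msttilde)<2L_0$ you should output $\mathcal{C}_1=\{V(\msttilde)\}$ as a valid single cluster; only the range $\dm(\msttilde)<L_0$ is the truly vacuous case that must be handled by the outer framework (and, for what it is worth, the paper's own proof does not explicitly address this case either).
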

 \begin{proof} We first break $\msttilde$ into a set $\mathcal{S}$ of subtrees of diameter at least $L_0$ and at most $4L_0$ as follows. We root $\msttilde$ at an arbitrary vertex $r$ and visit $\msttilde$ in post-order. At each vertex $v$, we keep track of the weight of the maximum-weight path ending at $v$ in the subtree rooted at $v$, denoted by $\mathsf{w}_v$. Whenever we finish visiting a child $u$ of $v$, we update $\mathsf{w}_v\leftarrow \max\{\mathsf{w}_v, \mathsf{w}_u + w(u,v)\}$. Once all children of $v$ are visited, if $\mathsf{w}_v \geq L_0$, we cut the subtree rooted at $v$ out of $\msttilde$ and add it to  $\mathcal{S}$. In such a case, when returning to the parent $x$ of $v$, since the subtree rooted at $v$ is removed from the tree, we do not update $\mathsf{w}_x\leftarrow \max\{\mathsf{w}_x, \mathsf{w}_v + w(x,v)\}$; the post-order traversal will continue to visit the next child of $x$, if any.

 Observe that (i) each subtree in $\mathcal{S}$ has diameter at least $L_0$ and at most $2(L_0+\bar{w}) \leq 4L_0$ and (ii) $\mathcal{S}$ can be constructed in $O(m)$ time, as $\msttilde$ has $O(m)$ vertices and edges. 

After removing all vertices in $\mathcal{S}$, there is at most one remaining subtree, say $T'$, of $\msttilde$ left, which has diameter at most $2L_0$. There must be an $\msttilde$ edge $e$ connecting $T'$ and  a subtree $T\in \mathcal{S}$. Then we add $T'$ and $e$ to $T$.   Since $T$ is augmented by subtrees of diameter at most $2L_0$ via an $\widetilde{\mst}$ edge, the diameter of $T$ after the augmentation is at most $4L_0 + 2L_0 + \bar{w} \leq 7L_0$.  Finally, we form $\mathcal{C}_1$ by taking the vertex set of each subtree in $\mathcal{S}$ to be a level-$1$ cluster.  The total running time is dominated by the running time to construct $\mathcal{S}$, which is $O(m)$. 
 \end{proof}

We note that a cluster $C\in \mc_1$ in \Cref{lm:level1Const} could contain only virtual vertices.  By choosing $g\geq 7$, clusters in $\mathcal{C}_1$ satisfy properties (\hyperlink{P1}{P1}) and (\hyperlink{P3}{P3}). Note that (\hyperlink{P2}{P2}) is not applicable to level-$1$ clusters by definition. As for (\hyperlink{P3}{P3}), $\dm(H^{\sigma}[C]) \leq 7 L_{0}$, for each $C \in \mathcal{C}_1$, since $H^{\sigma}$ includes all edges of $\msttilde$.

 Next, we assign a potential value for each level-$1$ cluster as follows:
 
 \begin{equation}\label{eq:Level1Poten}
 	\Phi(C) = \dm(\msttilde[C]) \qquad \forall C \in \mathcal{C}_1
 \end{equation}
 We now claim that the total potential of all clusters at level $1$ is at most $w(\mst)$ as stated in \Cref{def:Potential}.
 
 \begin{lemma}\label{lm:Level1Poten} $\Phi_1 \leq w(\mst)$. 
 \end{lemma}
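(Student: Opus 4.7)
The plan is to exploit two simple facts: (i) in any weighted tree, the diameter is bounded by the total edge weight of the tree (since any shortest path between two vertices in a tree is the unique tree path, and it traverses each edge at most once); and (ii) the induced subtrees $\widetilde{\MST}[C]$ for $C \in \mathcal{C}_1$ are pairwise edge-disjoint subforests of $\widetilde{\MST}$.

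Concretely, I would proceed as follows. First, recall the construction in \Cref{lm:level1Const}: level-$1$ clusters are obtained by partitioning $\widetilde{V}$ into the vertex sets of a family $\mathcal{S}$ of subtrees of $\widetilde{\MST}$, so for each $C \in \mathcal{C}_1$ the induced subgraph $\widetilde{\MST}[C]$ is in fact a subtree of $\widetilde{\MST}$. Second, since a diameter-realizing path in a tree is simple and contained in the tree, we have $\dm(\widetilde{\MST}[C]) \leq w(\widetilde{\MST}[C])$. Third, because the vertex sets $C$ partition $\widetilde{V}$, the edge sets $E(\widetilde{\MST}[C])$ are pairwise disjoint subsets of $E(\widetilde{\MST})$, so summing gives
\[
\Phi_1 \;=\; \sum_{C \in \mathcal{C}_1} \dm(\widetilde{\MST}[C]) \;\leq\; \sum_{C \in \mathcal{C}_1} w(\widetilde{\MST}[C]) \;\leq\; w(\widetilde{\MST}) \;=\; w(\MST),
\]
where the last equality uses that $\widetilde{\MST}$ is obtained from $\MST$ only by subdividing edges, which preserves total weight.

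There is no real obstacle here; the only thing to verify carefully is the first step, namely that $\widetilde{\MST}[C]$ really is a connected subtree (not merely a subgraph) for every $C \in \mathcal{C}_1$. This follows immediately from the construction in \Cref{lm:level1Const}: step (1) extracts minimal subtrees from connected components of $\widetilde{\MST}$, and step (2) attaches each remaining component to an existing subtree via a single $\widetilde{\MST}$-edge, so every cluster's induced subgraph in $\widetilde{\MST}$ is a connected subtree. Once this is noted, the displayed chain of inequalities completes the proof.
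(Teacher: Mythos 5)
Your proof is correct and follows essentially the same route as the paper: bound each cluster's diameter by the weight of its induced subtree, use vertex-disjointness (hence edge-disjointness) of the subtrees to sum to $w(\widetilde{\MST})$, and conclude via $w(\widetilde{\MST}) = w(\MST)$. The extra paragraph checking that each $\widetilde{\MST}[C]$ is a connected subtree is a fine sanity check, implicit in the paper's construction of $\mathcal{C}_1$.
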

 \begin{proof}
 	By definition of $\Phi_1$, we have: 
    \begin{align*}
        \Phi_1 ~&=~ \sum_{C\in \mathcal{C}_1} \Phi(C) ~=~ \sum_{C\in \mathcal{C}_1}\dm(\msttilde[C]) \\ &\leq~ \sum_{C\in \mathcal{C}_1} w(\msttilde[C]) ~\leq~ w(\msttilde) ~=~ w(\mst)~.
    \end{align*}
%    $$\Phi_1 ~=~ \sum_{C\in \mathcal{C}_1} \Phi(C) ~=~ \sum_{C\in \mathcal{C}_1}\dm(\msttilde[C]) ~\leq~ \sum_{C\in \mathcal{C}_1} w(\msttilde[C]) ~\leq~ w(\msttilde) ~=~ w(\mst)~.$$
 	The penultimate inequality holds since level-$1$ clusters induce vertex-disjoint subtrees of $\msttilde$. 
 \end{proof}
 
 While the potential of a level-1 cluster is the diameter of the subtree induced by the cluster, the potential assigned to a cluster at level 2 or larger need not be the diameter of the cluster. Instead, it is an {\em overestimate} of the cluster's diameter, as imposed by the following {\em potential-diameter (PD) invariant}.
 
\hypertarget{PD}{}
 \begin{quote}
 	\textbf{PD Invariant:} For every $C \in \mathcal{C}_{i}$ and $i\geq 1$, $\dm(H_{< L_{i-1}}[C]) \leq \Phi(C)$. (Recall that $H_{< L_{i-1}}$ is the spanner constructed for edges of $G$ of weight less than $L_{i-1}$, as defined in \Cref{lm:framework-technical}.)
 \end{quote}
 
 \begin{remark}\label{remark:potential-diameter} As discussed in \Cref{remark:diameter}, it is time-expensive to compute the diameter of each cluster. By the \hyperlink{PD}{PD Invariant}, we can use the potential $\Phi(C)$ of a cluster $C \in \mathcal{C}_i$ as an upper bound on the diameter of $H_{< L_{i-1}}[C]$. As we will demonstrate in the sequel, $\Phi(C)$ can be computed efficiently.
 \end{remark}
 
 To define potential values for clusters at levels $2$ or larger, we introduce a {\em cluster graph}, in which the nodes correspond to clusters.
 We shall derive the potential values of clusters via their \emph{structure} in the cluster graph, as described next.

 \begin{definition}[Cluster Graph]\label{def:ClusterGraphNew} A cluster graph at level $i \geq 1$, denoted by $\mg_i = (\mv_i, \me'_i, \omega)$, is a \emph{simple graph} where each node corresponds to a cluster in $\mc_i$ and each inter-cluster edge $(\varphi_{C_u},\varphi_{C_v})$ is mapped to an edge $(u,v)\in \tilde{G} $ for some $u \in C_u$ and $v\in C_v$.  We assign  weights to both \emph{nodes and edges} as follows:  for each node $\varphi_C \in \mv_i$ corresponding to a cluster $C \in \mathcal{C}_i$, $\omega(\varphi_C) = \Phi(C)$, and for each edge $\mbe = (\varphi_{C_u},\varphi_{C_v}) \in \me'_i$ mapped to an edge $(u,v)$ of $\tilde{G}$, $\omega(\mbe) = w(u,v)$.   
 \end{definition} 

We remark that if there are multiple edges between the vertices of $C_u$ and $C_v$, it is often convenient to pick the edge with the smallest weight and assign this weight to $(\varphi_{C_u},\varphi_{C_v})$. However, doing so incurs additional time to keep track of the smallest weight edge in $\tilde{G}$ between every two clusters. Therefore, in our construction, the edge corresponding to $(\varphi_{C_u},\varphi_{C_v})$ might not have the smallest weight; see more details in \Cref{def:GiProp} below. 
 
 \begin{remark}The  notion of cluster graphs in \Cref{def:ClusterGraphNew} is slightly different from the notion of $(L,\eps,\beta)$-cluster graphs defined in \Cref{def:ClusterGraph-Param}. In particular, cluster graphs in \Cref{def:ClusterGraphNew} have weights on both edges and nodes, while $(L,\eps,\beta)$-cluster graphs in \Cref{def:ClusterGraph-Param} have weights on edges only. 
 \end{remark}
 
 In our framework, we want the cluster graph $\mg_i$ to have the following properties.
 
 \begin{definition}[Properties of $\mg_i$]\label{def:GiProp} \begin{enumerate}
 		\item[(1)] The edge set $\me'_i$ of $\mg_i$ is the union $\msttilde_{i}\cup \me_i$, where each edge $\msttilde_{i}$ corresponds to an edge in $\msttilde$ and $\me_i$ is the set of edges corresponding to \emph{a subset of} edges in $E^{\sigma}_i$.
 		\item[(2)] $\msttilde_{i}$ induces a spanning tree of $\mg_i$. We abuse notation by using $\msttilde_{i}$ to denote the induced spanning tree.
 		\item[(3)] $\mg_i$ has no \emph{removable edge}: an edge $(\varphi_{C_u},\varphi_{C_v}) \in \me_i$ is removable if (3a) the path $\msttilde_i[\varphi_{C_u},\varphi_{C_v}]$ between $\varphi_{C_u}$ and $\varphi_{C_v}$ only contains nodes in $\msttilde_{i}$ of degree at most $2$ and (3b) $\omega(\msttilde_i[\varphi_{C_u},\varphi_{C_v}]) \leq t(1 + 6g\eps)\omega(\varphi_{C_u},\varphi_{C_v})$.
  	\end{enumerate}
 \end{definition}

As we will show in the sequel, if an edge $(\varphi_{C_u},\varphi_{C_v})$ satisfies property (3b), there is a path of stretch at most $t(1+6g\eps)$ in $H_{< L_{i-1}}$ between $u$ and $v$ and hence, we do not need to consider edge $(u,v)$ in the construction of $H_i$.  To meet the required lightness bound, it turns out that it suffices to remove edges satisfying both properties (3a) and (3b), rather than removing all edges satisfying property (3b). More importantly, we can detect removable edges satisfying both  (3a) and (3b) faster than those that only satisfy (3b), since for (3b), we have to compute shortest distances in $\msttilde_i$ between $\varphi_{C_u}$ and $\varphi_{C_v}$, which is more time-consuming and complicated.

 At the outset of the construction of \emph{level-$(i+1)$ clusters},
 	we construct a cluster graph $\mg_i$. We assume that the spanning tree $\msttilde_{i}$ of $\mg_i$ is given, as we construct the tree by the end of the construction of level-$i$ clusters. After we complete the construction of level-$(i+1)$ clusters, we construct $\msttilde_{i+1}$ for the next level.

 \begin{observation}\label{obs:Level1MST} At level $1$, both $\mv_1$ and $\msttilde_1$ can be constructed in $O(m)$ time.
 \end{observation}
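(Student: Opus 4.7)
The plan is to read off $\mv_1$ and $\msttilde_1$ directly from the output of \Cref{lm:level1Const} by contracting each level-$1$ cluster in $\msttilde$ to a single node. First, I would run the greedy subtree-breaking procedure of \Cref{lm:level1Const} to produce the partition $\mathcal{C}_1$ of $\tilde V$ in $O(m)$ time, and during that run I would maintain a pointer from every vertex $v \in \tilde V$ to its containing cluster (call it $\kappa(v)$); since each vertex is assigned to exactly one cluster, keeping these pointers costs $O(|\tilde V|) = O(m)$ extra.

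Next, I would instantiate $\mv_1$ by allocating one node $\varphi_C$ per cluster $C \in \mathcal{C}_1$, and assign the node weight $\omega(\varphi_C) = \Phi(C) = \dm(\msttilde[C])$ as required by \Cref{def:ClusterGraphNew} and \Cref{eq:Level1Poten}. The diameter of each subtree $\msttilde[C]$ is obtained by the standard two-pass BFS/DFS tree-diameter algorithm running on $\msttilde[C]$, which takes $O(|C|)$ time; since the subtrees $\{\msttilde[C]\}_{C\in \mathcal{C}_1}$ are vertex-disjoint and together account for all edges of $\msttilde$, summing gives $O(|E(\msttilde)|) = O(m)$ by \Cref{obs:Edge-Gtilde}.

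To construct $\msttilde_1$, I would iterate once over the edges of $\msttilde$ (there are $O(m)$ of them): for each edge $(u,v) \in \msttilde$ with $\kappa(u) = C_u$ and $\kappa(v) = C_v$, if $C_u \neq C_v$ I add the edge $(\varphi_{C_u},\varphi_{C_v})$ to $\msttilde_1$ with weight $w(u,v)$; otherwise the edge is internal to a cluster and is discarded. Because the level-$1$ clusters induce \emph{connected vertex-disjoint subtrees} of $\msttilde$ (directly from the construction in \Cref{lm:level1Const}), contracting these subtrees in a tree yields a tree on $\mv_1$; in particular no multi-edges are created, so $\msttilde_1$ is automatically a simple spanning tree of $\mg_1$ that satisfies properties (1)--(2) of \Cref{def:GiProp}. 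The total time across all steps is $O(m)$.

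There is no serious obstacle here; the only point that deserves a sentence of care is avoiding a per-cluster recomputation blow-up for the diameters $\Phi(C)$, which is avoided precisely because the subtrees $\{\msttilde[C]\}_{C\in\mathcal{C}_1}$ partition the edge set of $\msttilde$, so a single linear-time tree-diameter sweep inside each subtree fits inside the $O(m)$ budget.
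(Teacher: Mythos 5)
Your proposal is correct and follows essentially the same approach as the paper: $\msttilde_1$ consists precisely of the $\msttilde$ edges not internal to any level-$1$ cluster, and the $O(m)$ bound follows from \Cref{obs:Edge-Gtilde} together with \Cref{lm:level1Const}. Your write-up is more explicit than the paper's one-line justification — in particular spelling out the cluster-pointer bookkeeping, the two-pass tree-diameter sweeps for computing $\Phi(C)$ inside the linear budget, and the observation that contracting connected vertex-disjoint subtrees of $\msttilde$ yields a tree (so no multi-edges arise) — but there is no substantive difference in the argument.
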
 
 \begin{proof} Edges of $\msttilde_1$ correspond to the edges of $\msttilde$ that do not belong to
 	any level-1 cluster, i.e., to any $\msttilde[C]$, where $C \in \mathcal{C}_1$. Thus, the observation follows from \Cref{obs:Edge-Gtilde} and \Cref{lm:level1Const}. 
 \end{proof}

 \paragraph{The structure of level-$(i+1)$ clusters} 
 Next, we describe how to construct the level-$(i+1)$ clusters via the cluster graph $\mg_i$. 
 We shall construct a collection of subgraphs $\mathbb{X}$ of $\mg_i$, and then map each subgraph $\mx \in \mathbb{X}$ to a cluster $C_{\mx} \in \mathcal{C}_{i+1}$ as follows:
 \begin{equation}\label{eq:XtoCluster}
 	C_{\mathcal{X}}  = \cup_{\varphi_C\in \mv(\mx)} C~.
 \end{equation}

That is, $C_{\mathcal{X}}$ is the union of all level-$i$ clusters that correspond to nodes in $\mathcal{X}$. 
 
 For any subgraph $\mx$ in a cluster graph, we denote by $\mv(\mx)$ and $\me(\mx)$ the vertex and edge sets of $\mx$, respectively. To guarantee properties (\hyperlink{P1}{P1})-(\hyperlink{P3}{P3}) defined before \Cref{remark:diameter} for clusters in $\mathcal{C}_{i+1}$, we will make sure that subgraphs in $\mathbb{X}$  satisfy the following properties:
 
 \begin{itemize}[noitemsep]
 	\item \textbf{(P1').~} \hypertarget{P1'}{}  $\{\mv(\mx)\}_{\mx \in \mathbb{X}}$ is a partition of $\mv_i$.
 	\item \textbf{(P2').~} \hypertarget{P2'}{} $|\mv(\mx)| = \Omega(\frac{1}{\eps})$.
 	\item \textbf{(P3').~} \hypertarget{P3'}{} $L_i \leq \adm(\mx) \leq gL_{i}$.
 \end{itemize}
 
 Recall that $\adm(\mx)$ is the augmented diameter of $\mx$, a variant of diameter defined for graphs with weights on both nodes and edges; see \Cref{sec:prelim}. Recall that the augmented diameter of $\mx$ is at least the diameter of the corresponding cluster $C_{\mx}$. 
 
 We then set the potential of cluster $C_{\mx}$ corresponding to subgraph $\mx$ as:
 	\begin{equation}\label{eq:SetPotential-i}
 		\Phi(C_{\mx}) = \adm(\mx).
 	\end{equation}
 
  Thus, the augmented diameter of any such subgraph $\mx$ will be the weight of the corresponding node in the level-$(i+1)$ cluster graph $\mg_{i+1}$.   Our goal is to construct $H_i$ along with $\mathcal{C}_{i+1}$ as guaranteed by \Cref{lm:framework}.  $H_i$ consists of a subset of the edges in $E^{\sigma}_i$; we can assume that the vertex set of $H_i$ is just the entire set $V$. Up to this point, we have not explained yet how $H_i$ is constructed since the exact construction of $H_i$ depends on specific incarnations of our framework, which may change from one graph class to another. 
 
While properties (\hyperlink{P1'}{P1'}) and (\hyperlink{P2'}{P2'}) directly imply properties (\hyperlink{P1}{P1}) and (\hyperlink{P2}{P2}) of $C_{\mx}$, property (\hyperlink{P3'}{P3'}) does not directly imply property (\hyperlink{P3}{P3}); although the diameter of any weighted subgraph (with edge and vertex weights) is upper bounded by its augmented diameter, we need to guarantee that the (corresponding) edges of $\mx$ belong to $H_{< L_i}$. Indeed, without this condition, the diameter of $H_{< L_i}$ could be much larger than the augmented diameter of $\mx$.
 
 \begin{lemma}\label{lm:PropEquiv} Let $\mx \in \mathbb{X}$ be a subgraph of $\mg_i$ satisfying properties (\hyperlink{P1'}{P1'})-(\hyperlink{P3'}{P3'}). Suppose that for every edge $(\varphi_{C_u},\varphi_{C_v})\in \me(\mx)$, $(u,v) \in H_{< L_i}$.  By setting the potential value of $C_{\mx}$ to be $\Phi(C_{\mx}) = \adm(\mx)$ for every $\mx \in \mathbb{X}$, the \hyperlink{PD}{PD Invariant} is satisfied and  $C_{\mx}$ satisfies all properties (\hyperlink{P1}{P1})-(\hyperlink{P3}{P3}).
 \end{lemma}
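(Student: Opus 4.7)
The plan is to verify the four claims in the order: (P1), (P2), PD invariant, and then (P3), deriving the last one as an easy consequence of the PD invariant together with (P3'). The crux lies in the PD invariant; the other three are essentially ``bookkeeping'' translations between the subgraph view of $\mg_i$ and the vertex view of $\tilde V$.

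For (P1), I would argue by induction on the level. By the inductive hypothesis, $\mathcal{C}_i$ is a partition of $\tilde V$. By (P1'), the node sets $\{\mv(\mx)\}_{\mx \in \mathbb{X}}$ partition $\mv_i$, which is in bijection with $\mathcal{C}_i$. Pulling this partition back to $\tilde V$ via the correspondence $\varphi_C \mapsto C$, the sets $C_{\mx} = \bigcup_{\varphi_C \in \mv(\mx)} C$ form a partition of $\tilde V$. Property (P2) is immediate: by (P2'), $|\mv(\mx)| = \Omega(1/\eps)$, and by construction $C_{\mx}$ is the disjoint union of the $|\mv(\mx)|$ level-$i$ clusters indexed by $\mv(\mx)$, so it is an $\Omega(1/\eps)$-refinement in the required sense.

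The main step is the PD invariant: $\dm(H_{< L_i}[C_{\mx}]) \le \adm(\mx) = \Phi(C_{\mx})$. Fix any two vertices $x,y \in C_{\mx}$, and let $C_x, C_y \in \mathcal{C}_i$ be the (unique) level-$i$ clusters containing them, with corresponding nodes $\varphi_{C_x}, \varphi_{C_y} \in \mv(\mx)$. Pick an augmented-diameter path $\mathcal{P}$ in $\mx$ between $\varphi_{C_x}$ and $\varphi_{C_y}$, written as an alternating sequence of nodes and edges $\varphi_{C_0}, \mbe_1, \varphi_{C_1}, \mbe_2, \ldots, \mbe_p, \varphi_{C_p}$ with $C_0 = C_x$ and $C_p = C_y$. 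I will now lift $\mathcal{P}$ into a walk in $H_{<L_i}[C_{\mx}]$ from $x$ to $y$ whose weight is at most the augmented weight of $\mathcal{P}$. For each edge $\mbe_j = (\varphi_{C_{j-1}}, \varphi_{C_j})$, write its source edge as $(u_j,v_j)$ with $u_j \in C_{j-1}$ and $v_j \in C_j$; by hypothesis, $(u_j,v_j) \in H_{<L_i}$. Within each cluster $C_j$, I need to connect the ``entry point'' $v_j$ to the ``exit point'' $u_{j+1}$ (with the obvious modifications at the endpoints $C_x, C_y$). The inductive PD invariant at level $i$ gives $\dm(H_{< L_{i-1}}[C_j]) \le \Phi(C_j) = \omega(\varphi_{C_j})$, and since $H_{<L_{i-1}} \subseteq H_{<L_i}$, the same bound holds inside $H_{<L_i}[C_j]$. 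Concatenating these intra-cluster shortest paths with the inter-cluster source edges yields a walk in $H_{<L_i}[C_{\mx}]$ from $x$ to $y$ of total weight at most
\[
\sum_{j=0}^{p} \omega(\varphi_{C_j}) + \sum_{j=1}^{p} \omega(\mbe_j) \;=\; \text{augmented weight of }\mathcal{P} \;=\; \adm(\mx).
\]
Taking the maximum over $x,y$ gives the PD invariant. Note that the walk stays inside $C_{\mx}$ because every cluster traversed corresponds to a node of $\mv(\mx)$.

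Finally, (P3) follows in one line: by Remark~\ref{remark:diameter}(1), it suffices to bound $\dm$ inside the spanner at the moment $C_{\mx}$ is formed, which is precisely $H_{<L_i}$. The PD invariant just established gives $\dm(H_{<L_i}[C_{\mx}]) \le \adm(\mx)$, and (P3') supplies $\adm(\mx) \le g L_i$, so $\dm(H^{\sigma}[C_{\mx}]) \le g L_i$ as required. The only nontrivial obstacle is the PD invariant argument above, and specifically the need to ensure that intra-cluster detours can be realized inside $H_{<L_i}$; this is exactly why the hypothesis ``$(u,v) \in H_{<L_i}$ for every edge of $\mx$'' is crucial, and it is also where we exploit the inductive PD invariant for level $i$ to turn node weights into genuine spanner distances.
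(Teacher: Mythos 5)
Your proof is correct and follows essentially the same route as the paper's: lift a minimum augmented-weight path in the cluster graph $\mx$ to a walk in the spanner, use the inductive PD invariant at level~$i$ to convert node weights into genuine intra-cluster spanner distances, and derive (P3) from the PD invariant together with (P3'). Two small slips worth flagging (neither affects correctness): you call $\mathcal{P}$ an ``augmented-diameter path between $\varphi_{C_x}$ and $\varphi_{C_y}$'', but the intended object is the minimum augmented-weight path between those two nodes; correspondingly, the final equality ``$=\adm(\mx)$'' in your display should be the inequality ``$\le \adm(\mx)$'', since a shortest path between an arbitrary pair need not attain the maximum.
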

 \begin{proof} It can be seen directly that properties (\hyperlink{P1'}{P1'}) and (\hyperlink{P2'}{P2'}) of $\mx$ directly imply properties (\hyperlink{P1}{P1}) and (\hyperlink{P2}{P2}) of $C_{\mx}$, respectively. We prove, by induction on $i$,
that property (\hyperlink{P3}{P3}) holds and that the  \hyperlink{PD}{PD Invariant} is satisfied. The basis $i=1$ is trivial.  For the induction step, we assume inductively that for each cluster $C\in \mathcal{C}_{i}$, $\dm(H_{< L_{i-1}})[C] \leq gL_{i-1}$ and that the \hyperlink{PD}{PD Invariant} is satisfied: $\Phi(C)\geq \dm(H_{< L_{i-1}})[C]$. Consider any level-$(i+1)$ cluster $C_{\mx}$ corresponding to a subgraph $\mx \in \mathbb{X}$. Let $H_{C_{\mx}}$ be the graph obtained by first taking the union $\cup_{\varphi_{C} \in \mv(\mx)}H_{< L_{i-1}}[C]$  and then adding in the edge set  $ \{(u,v)\}_{(\varphi_{C_u},\varphi_{C_v}) \in \me(\mx)}$.  Observe that $H_{C_{\mx}}$ is a subgraph of $H_{< L_i}$ by the assumption that  $(u,v) \in H_{< L_i}$ for every edge $(\varphi_{C_u},\varphi_{C_v})\in \me(\mx)$. We now show that $\dm(H_{C_{\mx}}) \leq \adm(\mx)$, which is at most $gL_i$ by property (\hyperlink{P3'}{P3'}). This would imply both property (\hyperlink{P3}{P3}) and the \hyperlink{PD}{PD Invariant} for $C_{\mx}$ since $\Phi(C_{\mx}) = \adm(\mx)$, which would complete the proof of the induction step.
 	
 	Let $u,v$ be any two vertices in $H_{C_{\mx}}$ whose shortest distance in $H_{C_{\mx}}$ realizes $\dm(H_{C_{\mx}})$. Let $\varphi_{C_u}, \varphi_{C_v}$ be the two nodes in $\mx$ that correspond to two clusters $C_u,C_v$ containing $u$ and $v$, respectively. Let $\mathcal{P}_{u,v}$ be a path in $\mg_i$ of minimum augmented weight between $\varphi_{C_u}$ and $\varphi_{C_v}$. Observe that $\omega(\mathcal{P}_{u,v}) \leq \adm(\mx)$. We now construct a path $P_{u,v}$ between $u$ and $v$ in $H_{C_{\mx}}$ as follows. We write $\mathcal{P}_{u,v} \stackrel{\mbox{\tiny{def.}}}{=}  (\varphi_{C_u} =\varphi_{C_1}, \mbe_1,\varphi_{C_2},\mbe_2, \ldots, \varphi_{C_\ell} = \varphi_{C_v})$ as an alternating sequence of nodes and edges. For every $1\leq p \leq \ell-1$, let $(u_p,v_p)$ be the edge in $E^{\sigma}_i$ that corresponds to $\mbe_p$. We then define
	$v_0 = u, u_\ell = v$ and
 		\begin{equation*}
 			P_{u,v} =Q_{H_{< L_{i-1}}[C_1]}(v_0,u_1) \circ (u_1,v_1)\circ Q_{H_{< L_{i-1}}[C_2]}(v_1,u_2) \circ \ldots \circ  Q_{H_{< L_{i-1}}[C_{\ell}]}(v_{\ell-1},u_\ell)~,
 		\end{equation*}
 	where $Q_{H_{< L_{i-1}}[C_p]}(v_{p-1},u_p)$ for $1\leq p \leq \ell$ 
	is the shortest path in the corresponding subgraph (between the endpoints of the respective edge, as specified in all the subscripts), and $\circ$ is the path concatenation operator. By the induction hypothesis for the \hyperlink{PD}{PD Invariant} and $i$, 
	$w(Q_{H_{< L_{i-1}}[C_p]}(v_{p-1},u_p)) \leq \omega(\varphi_{C_p})$ for each $1\leq p \leq \ell$. Thus, $w(P_{u,v}) \leq \omega(\mathcal{P}_{u,v}) \leq \adm(\mx)$. It follows that $\dm(H_{C_{\mx}}) ~\leq~ w(P_{u,v}) \leq  \adm(\mx)$ as desired.
 \end{proof}
 
 \paragraph{Local potential change}  For each subgraph $\mx \in \mathbb{X}$, we define the \emph{local potential change} of $\mx$, denoted by $\Delta_{i+1}(\mx)$ as follows:
 \begin{equation}\label{eq:LocalPotential}
 	\Delta_{i+1}(\mx) \stackrel{\mbox{\tiny{def.}}}{=}   \left(\sum_{\varphi_C\in \mv(\mx)} \Phi(C) \right) -  \Phi(C_{\mx}) = \left(\sum_{\varphi_C\in \mv(\mx)} \omega(\varphi_C) \right) - \adm(\mx). 
 \end{equation} 
 
 \begin{claim}\label{clm:localPotenDecomps}$\Delta_{i+1} = \sum_{\mx \in \mathbb{X}}\Delta_{i+1}(\mx)$.
 \end{claim}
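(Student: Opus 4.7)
The plan is to prove this by a direct algebraic manipulation that exploits the fact that $\{\mv(\mx)\}_{\mx \in \mathbb{X}}$ partitions $\mv_i$ (property (\hyperlink{P1'}{P1'})) together with the fact that the subgraphs in $\mathbb{X}$ are in bijection with the level-$(i+1)$ clusters via the map $\mx \mapsto C_{\mx}$ defined in \Cref{eq:XtoCluster}.

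First I would expand the right-hand side using the definition of the local potential change in \Cref{eq:LocalPotential}:
\[
\sum_{\mx \in \mathbb{X}} \Delta_{i+1}(\mx) \;=\; \sum_{\mx \in \mathbb{X}} \Bigl( \sum_{\varphi_C \in \mv(\mx)} \Phi(C) \Bigr) \;-\; \sum_{\mx \in \mathbb{X}} \Phi(C_{\mx}).
\]
Then I would handle the two sums separately. For the first sum, since $\{\mv(\mx)\}_{\mx \in \mathbb{X}}$ is a partition of $\mv_i$ (by (\hyperlink{P1'}{P1'})), and since the nodes of $\mg_i$ are in bijective correspondence with the clusters in $\mathcal{C}_i$ (by \Cref{def:ClusterGraphNew}), the double sum collapses to $\sum_{C \in \mathcal{C}_i} \Phi(C) = \Phi_i$. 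For the second sum, I would invoke the bijection $\mx \leftrightarrow C_{\mx}$ between $\mathbb{X}$ and $\mathcal{C}_{i+1}$ guaranteed by (\hyperlink{P1'}{P1'}) (which implies that distinct $\mx$ yield disjoint, and together exhaustive, level-$(i+1)$ clusters), so that $\sum_{\mx \in \mathbb{X}} \Phi(C_{\mx}) = \sum_{C' \in \mathcal{C}_{i+1}} \Phi(C') = \Phi_{i+1}$.

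Combining the two identifications, the right-hand side becomes $\Phi_i - \Phi_{i+1}$, which by \Cref{eq:PotentialReduction} is exactly $\Delta_{i+1}$, completing the proof.

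There is essentially no obstacle here: the claim is really just the observation that local potential changes telescope into the global one. The only subtlety worth stating explicitly is that the bijection between $\mathbb{X}$ and $\mathcal{C}_{i+1}$ is well-defined, i.e., that the map $\mx \mapsto C_{\mx}$ is injective and its image equals $\mathcal{C}_{i+1}$; both follow immediately from (\hyperlink{P1'}{P1'}) combined with the definition in \Cref{eq:XtoCluster}, so no real work is required.
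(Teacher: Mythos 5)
Your proof is correct and follows essentially the same route as the paper's: split the sum of local potential changes into two pieces, collapse the first to $\Phi_i$ using the partition property and the second to $\Phi_{i+1}$ using the correspondence $\mx \mapsto C_{\mx}$, then apply \Cref{eq:PotentialReduction}. The only cosmetic difference is that you cite (\hyperlink{P1'}{P1'}) where the paper cites (\hyperlink{P1}{P1}); your reference is arguably the more precise one since the relevant fact is disjointness and coverage of the subgraphs in $\mathbb{X}$.
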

 \begin{proof} By property (\hyperlink{P1}{P1}), subgraphs in $\mathbb{X}$ are vertex-disjoint and cover the vertex set $\mv_i$,  hence $\sum_{\mx \in \mathbb{X}}(\sum_{\varphi_C\in \mv(\mx)} \Phi(C)) = \sum_{C\in \mathcal{C}_i} \Phi(C) = \Phi_i$. Additionally, by the construction of level-$(i+1)$ clusters,  $\sum_{\mx \in \mathbb{X}}  \Phi(C_{\mx}) = \sum_{C'\in \mathcal{C}_{i+1}} \Phi(C') = \Phi_{i+1}$. Thus,  we have:
 	\begin{equation*}
 			\sum_{\mx \in \mathbb{X}}\Delta_{i+1}(\mx) = \sum_{\mx \in \mathbb{X}}\left(\left(\sum_{\varphi_C\in \mv(\mx)} \Phi(C) \right) -  \Phi(C_{\mx}) \right) = \Phi_i - \Phi_{i+1} = \Delta_{i+1}, 
 	\end{equation*}
	as claimed.
 \end{proof}

 The decomposition of the (global) potential change into local potential changes makes the task of analyzing the spanner weight (Item (1) in \Cref{lm:framework}) easier as we can do so locally. Specifically, we often construct $H_i$ by considering each node in $\mv_i$ and taking a subset of (the corresponding edges of) the edges incident to the node to $H_i$. We then calculate the number of edges taken to $H_i$ incident to all nodes in $\mx$, and bound their total weight by the local potential change of $\mx$. By summing up over all $\mx$, we obtain a bound on $w(H_i)$ in terms of the (global) potential change $\Delta_{i+1}$.

 \subsection{Constructing \texorpdfstring{Level-$(i+1)$}{Level-(i+1)} Clusters}\label{subsec:LeveIplus1Construction}
 
 To obtain a fast spanner construction, we will maintain for each cluster $C \in \mathcal{C}_i$ a {\em representative} vertex $r(C) \in C$. If $C$ contains at least one original vertex, then $r(C)$ is one original vertex in $C$; otherwise, $r(C)$ is a virtual vertex. (Recall that virtual vertices are those subdividing $\mst$ edges.)  For each vertex $v \in C$, we designate $r(C)$ as the {\em representative} of $v$, i.e., we set $r(v) = r(C)$ for each $v \in C$. We use the \textsc{Union-Find} data structure to maintain these representatives. Specifically, the representative of $v$ will be given as \textsc{Find}($v$). Whenever a level-$(i+1)$ cluster is formed from level-$i$ clusters, we call \textsc{Union} (sequentially on the level-$i$ clusters) to construct a new representative for the new cluster. 
 
 \paragraph{A careful usage of the Union-Find data structure} We will use the \textsc{Union-Find} data structure~\cite{Tarjan75}  for grouping subsets of clusters to larger clusters (via the \textsc{Union} operation) and checking whether two given vertices belong to the same cluster (via the \textsc{Find} operation). The amortized running time of each \textsc{Union} or \textsc{Find} operation is $O(\alpha(a,b))$, where $a$ is the total number of \textsc{Union} and \textsc{Find} operations and $b$ is the number of vertices in the data structure. Note, however, that our graph $\tilde{G}$ has $n$ original vertices but $O(m)$ virtual vertices, which subdivide $\mst$ edges. Thus, if we keep both original and virtual vertices in the \textsc{Union-Find} data structure, the amortized time of an operation will be $O(\alpha(m,m)) = O(\alpha(m))$ rather than $O(\alpha(m,n))$, as the total number of \textsc{Union} and \textsc{Find} operations is $O(m)$,  and will be super-constant for any super-constant value of $m$. 
 
 To reduce the amortized time to $O(\alpha(m,n))$, we only store original vertices in the \textsc{Union-Find} data structure. To this end, for each virtual vertex, say $x$, which subdivides an edge $(u,v) \in \mst$, we store a pointer, denoted by $p(x)$, which points to one of the endpoints, say $u$, in the same cluster with $x$, \emph{if there is at least one endpoint in the same cluster with $x$}. In particular, any virtual vertex has at most two possible clusters that it can belong to at each level of the hierarchy.  Hence, we can apply every \textsc{Union-Find} operation to $p(x)$ instead of $x$. For example,  to check whether two virtual vertices  $x$ and $y$ are in the same cluster,  we compare $r(p(x)) \stackrel{?}{=} r(p(y))$ via two \textsc{Find} operations. The total number of \textsc{Union} and \textsc{Find} operations in our construction remains $O(m)$ while the number of vertices that we store in the data structure is reduced to $n$. Thus, the amortized time of each operation reduces to $O(\alpha(m,n))$, and the total running time due to all these operations is $O(m \alpha(m,n))$.

If no endpoint of $(u,v)$ belongs to the same cluster with $x$, then the level-$i$ cluster containing $x$ is a path of virtual vertices subdivided from $(u,v)$. In this case, we simply let \textsc{Find}($x$) operation return $x$. That is, we will not maintain $x$ in the \textsc{Union-Find} data structure, but instead use a flag to mark if $x$ is in the same cluster with one of the endpoints $\{u,v\}$ or not. Also, we maintain virtual clusters, those that only have virtual vertices, in a regular list data structure, and \textsc{Union} operations can be implemented as the concatenation of two lists in $O(1)$ time. Once a virtual cluster is merged with a non-virtual cluster, all the virtual vertices need to update their flag and change their pointer $p(\cdot)$ accordingly.
 
 Following the approach in \Cref{subsec:DesignPotential},  we construct a graph $\mathcal{G}_i$ satisfying all properties in \Cref{def:GiProp}. Then we construct a set $\mathbb{X}$ of subgraphs of $\mg_i$ satisfying the three properties (\hyperlink{P1'}{P1'})-(\hyperlink{P3'}{P3'}) and a subgraph $H_i$ of $G$ (and of $\tilde{G}$ as well). Each subgraph $\mx\in \mathbb{X}$ is then converted to a level-$(i+1)$ cluster by \Cref{eq:XtoCluster}. 
 
 \paragraph{Constructing $\mathcal{G}_i$}  We shall assume inductively on $i, i \ge 1$ that:
 \begin{itemize}[noitemsep]
 	\item The set of edges $\widetilde{\mst}_i$ is given by the construction of the previous level $i$ in the hierarchy; for the base case $i = 1$ (see \Cref{subsec:DesignPotential}), $\widetilde{\mst}_1$ is simply a set of edges of $\widetilde{\mst}$ that are not in any level-$1$ cluster. 
 	\item The weight $\omega(\varphi_C )$ on each node $\varphi_C \in \mv_i$ is the potential value of cluster $C \in \mathcal{C}_i$; for the base case $i = 1$, the  potential values of level-$1$ clusters were computed in  $O(m)$ time, as discussed in \Cref{subsec:DesignPotential}.
 \end{itemize}
 
 By the end of this section, we will have constructed  
 the edge set $\widetilde{\mst}_{i+1}$ and the weight function on nodes  of $\mathcal{G}_{i+1}$, in  time $O(|\mathcal{V}_i|\alpha(m,n))$. Computing the weight function on nodes of $\mathcal{G}_{i+1}$ is equivalent to computing the augmented diameter of $\mx$, which in turn, is related to the potential function. The fact that we can compute all the weights efficiently in almost linear time is the crux of our framework.
 
 Note that we make no inductive assumption regarding the set of edges ${E^{\sigma}_i}$, which can be computed once in $O(m)$  overall time at the outset for all levels $i \ge 1$, since the edge sets $E^{\sigma}_{1}, E^{\sigma}_{2}, \ldots$ are pairwise disjoint and the number of levels is $O(\log m)$ by \Cref{clm:numLevels}.
 
 \begin{lemma}\label{lm:G_i-construction}We can construct $\mathcal{G}_i = (\mathcal{V}_i,\mathcal{E}_i\cup \widetilde{\mst}_i,\omega)$ 
 	in $O\left(\alpha(m,n)(|\mv_i|  + |E^{\sigma}_i|)\right)$ time, where $\alpha(\cdot,\cdot)$ is the inverse-Ackermann function.
 \end{lemma}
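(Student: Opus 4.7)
The plan is to build $\mathcal{G}_i$ in three stages: (i) lift each edge of $E^{\sigma}_i$ to the cluster graph via \textsc{Find} operations to obtain a multiset of candidate inter-cluster edges; (ii) simplify by discarding self-loops and parallel duplicates so that $\mathcal{G}_i$ becomes a simple graph; and (iii) enforce property (3) of \Cref{def:GiProp} by identifying and deleting all removable edges. The node set $\mv_i$, the tree edges $\widetilde{\mst}_i$ and the node weights $\omega(\varphi_C)=\Phi(C)$ are all furnished by the inductive assumption, so only $\mathcal{E}_i$ has to be produced.

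For stages (i) and (ii), for each edge $(u,v)\in E^{\sigma}_i$ I would compute $\varphi_{C_u}=\textsc{Find}(u)$ and $\varphi_{C_v}=\textsc{Find}(v)$ using the \textsc{Union-Find} structure described earlier in the section; if the two representatives coincide the edge is a self-loop and is dropped, otherwise the pair $(\varphi_{C_u},\varphi_{C_v})$, together with a pointer to the source edge $(u,v)$ for later use by $\source(\cdot)$, is recorded. Each \textsc{Find} costs $O(\alpha(m,n))$ amortized, for a total of $O(\alpha(m,n)\,|E^{\sigma}_i|)$ in this stage. To deduplicate, I would radix-sort the resulting list by the ordered pair of endpoint indices, both in $[1,|\mv_i|]$, and scan once to keep one representative per unordered pair; this adds $O(|\mv_i|+|E^{\sigma}_i|)$. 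The weight $\omega$ on each kept edge is inherited from $w(\cdot)$ of its source edge.

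For stage (iii), the key observation is that condition (3a) depends only on the fixed tree $\widetilde{\mst}_i$, so it can be tested by preprocessing $\widetilde{\mst}_i$ alone. Compute the $\widetilde{\mst}_i$-degree of every node and label each as \emph{branch} (degree $\geq 3$) or \emph{chain} (degree $\leq 2$). The connected components of the subforest of $\widetilde{\mst}_i$ induced by chain nodes are exactly the maximal ``chains''; a single DFS over $\widetilde{\mst}_i$ gives, for every chain node, its chain identifier and a prefix distance from a designated endpoint of that chain. Both steps run in $O(|\mv_i|)$ time. Then (3a) holds for a candidate edge $(\varphi_{C_u},\varphi_{C_v})$ exactly when both endpoints are chain nodes sharing a chain identifier; in that case the weight of $\widetilde{\mst}_i[\varphi_{C_u},\varphi_{C_v}]$ is the absolute difference of their prefix distances, so (3b) becomes a single $O(1)$ comparison against $t(1+6g\eps)\,\omega(\varphi_{C_u},\varphi_{C_v})$. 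Scanning the deduplicated edge set once and discarding every removable edge therefore costs $O(|\mv_i|+|\mathcal{E}_i|)=O(|\mv_i|+|E^{\sigma}_i|)$. Summing over the three stages yields the claimed $O(\alpha(m,n)(|\mv_i|+|E^{\sigma}_i|))$ bound.

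The delicate step is (iii): a naive implementation that walks $\widetilde{\mst}_i$ path-by-path can cost $\Omega(\mathrm{depth}(\widetilde{\mst}_i))$ per candidate edge and hence $\Omega(|\mv_i|\cdot|E^{\sigma}_i|)$ overall, so some form of global precomputation is mandatory. The chain decomposition collapses each removability test to two array look-ups, which is precisely what keeps stage (iii) linear; everything else is standard \textsc{Union-Find} and radix-sort whose $O(\alpha(m,n))$ and $O(1)$ amortized per-operation costs compose directly into the target runtime.
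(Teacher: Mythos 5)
Your proposal is correct and follows essentially the same plan as the paper: lift the edges of $E^{\sigma}_i$ to the cluster graph with \textsc{Union-Find}, simplify, and then detect removable edges in $O(1)$ each via a decomposition of $\widetilde{\mst}_i$ into maximal degree-$\le 2$ chains with prefix sums. Two small points to tighten. First, when deduplicating parallel edges you must retain the \emph{minimum-weight} one, not an arbitrary representative; this is what the stretch argument later (in \Cref{clm:Hi-Stretch}, case of parallel edges) relies on, and it costs nothing extra during the sorted scan. Second, your claim that $\omega(\widetilde{\mst}_i[\varphi_{C_u},\varphi_{C_v}])$ equals the \emph{absolute difference} of prefix distances is off by one node weight: the path weight in $\mg_i$ is an augmented weight (nodes and edges both contribute), so after taking the difference of prefix sums you must add back $\omega(\varphi_{C_v})$ (or $\omega(\varphi_{C_u})$, whichever is the endpoint closer to the chain's designated origin), exactly as in the paper's case formula; this remains $O(1)$ per edge and does not affect the stated runtime.
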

 \begin{proof}
 	Recall that any edge in $\widetilde{\mst}_i$ (of weight at most $\bar{w}$) is of strictly smaller weight than that of any edge in $E^{\sigma}_i$ (of weight at least $\frac{\bar{w}}{(1+\psi)\epsilon}$) for any $i\geq 1$ and $\epsilon \leq 1$. Note that  $\widetilde{\mst}_i$  and $E^{\sigma}_i$ are given at the  outset of the construction of $\mg_i$.  To construct the edge set $\mathcal{E}_i$, we do the following. 	For each edge $e = (u,v) \in E^{\sigma}_i$, we compute the representatives $r(u), r(v)$; this can be done in $O(\alpha(m,n))$ amortized time over all the levels up to $i$ using the \textsc{Union-Find} data structure. This is because the total number of \textsc{Union/Find} operations is bounded by $O(\sum_{1\leq j\leq i}|\mv_j|  + |E^{\sigma}_j|) = O(m)$. 
Equipped with the representatives, it takes $O(1)$ time to check whether $e$'s endpoints lie in the same level-$i$ cluster (equivalently, whether edge $e$ forms a self-loop in the cluster graph)---by checking whether $r(u) = r(v)$. In the same way, we can check in $O(1)$ time whether edges $e = (u,v)$ and $e' = (u',v')$ are parallel in the cluster graph---by comparing the representatives of their endpoints. Among parallel edges, we only keep the edge of minimum weight in $\mg_i$.  
 	
	Next, we remove all removable edges from $\mg_i$ as specified by properties (3a) and (3b) in \Cref{def:GiProp}. First we find in $O(|\mv_i|)$ time a collection $\mathbb{P}$ of \emph{maximal paths} in $\msttilde_{i}$ that only contain degree-$2$ vertices. By the maximality, paths in  $\mathbb{P}$  are node-disjoint. We then find for each path $\mathcal{P}\in \mathbb{P}$ a subset of edges $\me_{\mp} \subseteq \me_i$ whose both endpoints belong to $\mathcal{P}$; this can be done in $O(|\mv_i| + |E^{\sigma}_i|)$ total time for all paths in $\mathbb{P}$.  Finally, for each path $\mathcal{P} \in \mathbb{P}$ and each edge $(\varphi_{C_u},\varphi_{C_v}) \in \me_{\mp}$, we can compute $\omega(\mathcal{P}[\varphi_{C_u},\varphi_{C_v}])$ in $O(1)$ time, after an $O(|\mv(\mathcal{P})|)$ preprocessing time, as follows.  Fix an endpoint $\varphi_C \in \mp$ and for every node $\varphi_{C'} \in \mp$, we compute $\omega(\mathcal{P}[\varphi_{C},\varphi_{C'}])$ in total $O(|\mv(\mp)|)$ time. Then, we can compute in $O(1)$ time:
 	\begin{center}
 	    \scalebox{0.85}{$
				\omega(\mathcal{P}[\varphi_{C_u},\varphi_{C_v}]) = \begin{cases}
					\omega(\mathcal{P}[\varphi_{C},\varphi_{C_u}]) - \omega(\mathcal{P}[\varphi_{C},\varphi_{C_v}]) + \omega(\varphi_{C_v}) &\text{if $\omega(\mathcal{P}[\varphi_{C},\varphi_{C_u}]) \geq \omega(\mathcal{P}[\varphi_{C},\varphi_{C_v}])$}\\
					\omega(\mathcal{P}[\varphi_{C},\varphi_{C_v}]) - \omega(\mathcal{P}[\varphi_{C},\varphi_{C_u}]) + \omega(\varphi_{C_u}) &\text{otherwise}
				\end{cases}$
            } 
 	\end{center}

        Given $\omega(\mathcal{P}[\varphi_{C_u},\varphi_{C_v}])$, we can check in $O(1)$ time whether $(\varphi_{C_u},\varphi_{C_v})$ is removable and if so, we remove it from $\me_i$.	The total running time to remove all removable edges is $O(|\mv_{i}| + |E^{\sigma}_i|)$. 
 \end{proof}

One important concept in our algorithm for constructing clusters at level $i$ is the corrected potential change defined below.

 \begin{definition}[Corrected Potential Change]\label{def:corrected-potential} Let $\mx$ be a subgraph of $\mg_i$. The corrected potential change of $\mx$, denoted by  $\Delta_{i+1}^+(\mx)$, is defined as: 
 \begin{eqnarray*}
\Delta_{i+1}^+(\mx) =  \Delta_{i+1}(\mx) + \sum_{\mbe \in \msttilde_i\cap \me(\mx)}w(\mbe) 
 \end{eqnarray*}
 \end{definition}
We note that  $\Delta_{i+1}(\mx)$ could be negative. One instructive example, which will appear in our construction, is when $\mx$ is a subpath of $\msttilde_i$. In this case,  $\Delta_{i+1}(\mx) = - \sum_{\mbe \in \msttilde_i\cap \me(\mx)}w(\mbe) < 0$, while $\Delta_{i+1}^+(\mx) = 0$. Indeed, we can show that $\Delta_{i+1}^+(\mx)$ is always non-negative (see Item (2) in \Cref{lm:Clustering} below). Thefore, one could view $\sum_{\mbe \in \msttilde_i\cap \me(\mx)}w(\mbe)$ as a \emph{corrective term} to $\Delta_{i+1}(\mx)$ (to make it non-negative).

 The following key lemma states all the properties of clusters constructed in our framework; the details of the construction are deferred to \Cref{sec:ClusteringDetails}. Recall that  $\mv(\mx)$ and $\me(\mx)$ are the vertex set and edge set of $\mx$, respectively.

 \begin{restatable}{lemma}{Clustering}
 \label{lm:Clustering}  Given $\mg_i$, we can construct in time $O((|\mv_i| + |\me_i|)\eps^{-1})$ (i) a partition of $\mv_i$ into three sets $\{\mv_i^{\high}, \mv_i^{\lowp},\mv_i^{\lowm}\}$ and (ii) a collection $\mathbb{X}$ of subgraphs of $\mg_i$ and their augmented diameters, such that:
 \begin{enumerate}
 	\item[(1)]   For every node $\varphi_C \in \mv_i$: If $\varphi_C \in \mv_i^{\high}$, then $\varphi_C$ is incident to $\Omega(1/\eps)$ edges in $\me_i$; otherwise ($\varphi_C \in \mv_i^{\lowp} \cup \mv_i^{\lowm}$), the number of edges in $\me_i$ incident to $\varphi_C$ is $O(1/\eps)$.
 	
 	\item[(2)] If a subgraph $\mx$ contains at least one node in $\mv^{\lowm}_{i}$, then every node of $\mx$ is in $\mv^{\lowm}_i$. Let $\mathbb{X}^{\lowm} \subseteq \mathbb{X}$ be a set of sugraphs whose nodes are in $\mv_i^{\lowm}$ only.
 	\item[(3)] $\Delta_{i+1}^+(\mx) \geq 0$ for every $\mx \in \mathbb{X}$, and
 	\begin{equation}\label{eq:averagePotential}
 		\sum_{\mx \in \mathbb{X}\setminus \mathbb{X}^{\lowm}} \Delta_{i+1}^+(\mx) = \sum_{\mx \in \mathbb{X}\setminus \mathbb{X}^{\lowm}} \Omega(|\mv(\mx)|\eps^2 L_i). 
 	\end{equation}
 	\item[(4)] There is no edge in $\me_i$ between  $\mv^{\high}_i$ and $\mv^{\lowm}_i$. Furthermore, if there exists an edge   $(\varphi_{C_u},\varphi_{C_v}) \in \me_i$ such that both $\varphi_{C_u}$ and $\varphi_{C_v}$ are in 
 	$\mv_i^{\lowm}$, then  $\mv_i^{\lowm} = \mv_i$ and $|\me_i| = O(\frac{1}{\epsilon^2})$;  that is, the partition $\{\mv_i^{\high}, \mv_i^{\lowp},\mv_i^{\lowm}\}$ of $\mv_i$ degenerates.
 	\item[(5)] For every subgraph $\mx \in \mathbb{X}$, $\mx$ satisfies the three properties (\hyperlink{P1'}{P1'})-(\hyperlink{P3'}{P3'}) with constant $g=31$ and $\eps \leq \frac{1}{8(g+1)}$,  and $|\me(\mx)\cap \me_i| = O(|\mv(\mx)|)$.
 \end{enumerate}	
 Furthermore, $\mathbb{X}$ can be constructed in the \emph{pointer-machine model} with the same running time.
 \end{restatable}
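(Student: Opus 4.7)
The plan is to build $\mathbb{X}$ by a tree-based aggregation on the spanning tree $\msttilde_i$ of $\mg_i$. First I classify each $\varphi_C \in \mv_i$ by its $\me_i$-degree: declare it high if that degree is at least $c/\eps$ for an appropriate constant $c$, otherwise low. This gives $\mv_i^{\high}$ and establishes item~(1). The split of the low nodes into $\mv_i^{\lowp}$ and $\mv_i^{\lowm}$ is deferred: a low subgraph is demoted to $\lowm$ exactly when its local potential change fails the lower bound demanded in item~(3). Item~(4) will then fall out of the removable-edge invariant in \Cref{def:GiProp}(3): any $\me_i$-edge from a high node into a demoted subgraph would be a shortcut against its $\msttilde_i$-detour, which would supply the missing potential change, a contradiction; only the degenerate case $\mv_i^{\lowm} = \mv_i$ with $|\me_i| = O(\eps^{-2})$ survives.

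The core is a single post-order DFS of $\msttilde_i$, rooted arbitrarily. At each node $v$ I maintain the augmented weight of the not-yet-closed piece hanging below $v$, a running augmented-diameter estimate for that piece, and a bounded-size cache of $\me_i$-edges internal to the piece (truncated so that item~(5)'s bound $|\me(\mx) \cap \me_i| = O(|\mv(\mx)|)$ is automatic). When the accumulator first crosses $L_i$, I close a piece as an $\mx \in \mathbb{X}$, record $\adm(\mx)$, and reset at $v$'s parent. Because each node weight is $\Phi(C) \le g\eps L_i$ (by the inductive PD invariant) and each $\msttilde_i$-edge has weight at most $\bar w \le \eps L_i$, the overshoot is $O(\eps L_i)$; attaching any stray root fragment to its adjacent piece yields $L_i \le \adm(\mx) \le gL_i$ with $g = 31$, establishing (\hyperlink{P3'}{P3'}). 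Property (\hyperlink{P2'}{P2'}) then drops out because each node adds at most $O(\eps L_i)$ to the accumulator, so reaching $L_i$ forces $\Omega(1/\eps)$ nodes. Computing each $\adm(\mx)$ online reduces to maintaining cumulative augmented weights along the DFS stack plus $O(|\mv(\mx)|/\eps)$ relaxations through the cached $\me_i$ edges, yielding the $O((|\mv_i|+|\me_i|)\eps^{-1})$ runtime. All primitives used---DFS, running maxima, list splicing, bounded local scans---are available in the pointer-machine model, giving the final clause of the lemma.

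The main obstacle is item~(3), the bound $\sum_{\mx \in \mathbb{X}\setminus\mathbb{X}^{\lowm}} \Delta_{i+1}^+(\mx) = \sum_{\mx} \Omega(|\mv(\mx)|\eps^2 L_i)$. It is subtle because the local drop, as written in \Cref{eq:LocalPotential} and augmented with the absorbed $\msttilde_i$-edge weights, must be amortized simultaneously over two orthogonal sources of slack: (a) the $\me_i$-edges internal to $\mx$, each of which---being non-removable---shortcuts its $\msttilde_i$-detour by $\Omega(\eps L_i)$; and (b) the node-weight surplus enforced by the PD invariant, $\Phi(C) \ge L_{i-1} = \eps L_i$, which over the $\Omega(1/\eps)$ nodes of $\mx$ overshoots $\adm(\mx) \le gL_i$ by a controlled margin. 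Source (a) dominates when $\mx$ touches a high node, while source (b) dominates for purely-low $\mx$ whose absorbed $\msttilde_i$-weight and node-weight surplus together beat $\adm(\mx)$ by $\Omega(|\mv(\mx)|\eps^2 L_i)$. Subgraphs in which neither source delivers the required slack are exactly those demoted into $\mathbb{X}^{\lowm}$, yielding item~(2); and because such a demoted $\mx$ cannot be adjacent to a high node (the shortcut would have restored the slack), an $\me_i$-edge between two $\lowm$-pieces is possible only in the degenerate setting isolated in item~(4). Pinning down the constants in this amortization---in particular ruling out pathological configurations where both sources fail simultaneously without triggering a degeneracy---is where I expect the bulk of the technical work to lie.
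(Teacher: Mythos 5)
Your decomposition into two ``sources of slack'' breaks down at source (b), and this is the crux of the lemma. For a subgraph $\mx$ that is a pure subpath of $\msttilde_i$ --- which is exactly what your single post-order DFS produces whenever it slices a long branch-free stretch of the tree --- the corrected potential change is \emph{identically zero}, not ``overshooting $\adm(\mx)$ by a controlled margin.'' Indeed, for a path $\adm(\mx) = \sum_{\varphi\in\mx}\omega(\varphi) + \sum_{\mbe\in\me(\mx)\cap\msttilde_i}\omega(\mbe)$ by definition of augmented diameter, so $\Delta^+_{i+1}(\mx) = 0$ exactly; there is no node-weight surplus to amortize against, no matter how many nodes $\mx$ has. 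The paper therefore cannot rely on any ``per-node surplus'' argument: it must \emph{engineer} a node or edge off the diameter path. This is why the actual construction runs in five explicit steps --- Step 1 grabs a high-degree node plus all its $\me_i$-neighbors (many lie off the diameter path), Step 2 grabs a branching node (at least one subtree hangs off the diameter path), Step 4 grabs a non-removable $\me_i$-edge deep inside a long path (the tree detour it shortcuts is $\ge t(1+6g\eps)$ times longer, so one of the detour endpoints falls off the diameter path) --- and then Step 5's zero-surplus prefix paths are charged against the positive surplus of the adjacent subgraph they were paired with. Your demote-on-failure strategy produces the partition only \emph{after} the fact, without creating the branching/shortcut structure that generates the surplus, so in the generic case everything you build is a zero-surplus path and everything ends up in $\mathbb{X}^{\lowm}$.

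This leakage into $\mathbb{X}^{\lowm}$ then sinks item (4). You invoke the removable-edge invariant only for edges from \emph{high} nodes into demoted pieces; it gives you nothing about an $\me_i$-edge between two low, demoted pieces that sit far apart on the tree, which your DFS will happily create in a graph with plenty of branching (so you are not in the degenerate case). The paper rules this out structurally: its Step 4 consumes every $\me_i$-edge whose endpoints are both ``blue'' (deep inside a long branch-free path) by wrapping that edge in its own subgraph, so afterwards no such edge survives between the interior pieces that become $\mathbb{X}^{\internal}_5 = \mathbb{X}^{\lowm}$. Without an analogue of Step 4 there is no reason the claimed dichotomy (degenerate or no $\lowm$--$\lowm$ edge) holds, and the downstream construction of $H_i$ --- which charges the weight of all $\lowp$-incident edges to $\lowp$'s potential surplus --- would also break. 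You would need to import the branching/blue-edge/prefix machinery essentially wholesale; the single-pass DFS with a posteriori demotion does not replace it.
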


 We note the following points regarding subgraphs in $\mathbb{X}$ constructed by \Cref{lm:Clustering}.
 
\begin{remark}\label{remark:Clustering} 
\begin{enumerate}
\item It is possible for a subgraph   $\mx \in \mathbb{X}$ to contain nodes in both $\mv_i^{\high}$ and $\mv_i^{\lowp}$. 
\item  \Cref{eq:averagePotential} implies that the \emph{average} amount of corrected potential change \emph{per subgraph} $\mx \in \mathbb{X}\setminus \mathbb{X}^{\lowm}$ is $\Omega(|\mv(\mx)|\eps^2 L_i)$.  On the other hand, there is no guarantee, other than non-negativity, on the corrected potential change of $\mx$ if $\mx \in \mathbb{X}^{\lowm}$.
\end{enumerate}
\end{remark}

We make the following observations on subgraphs of $\mathbb{X}$ that satisfy all the properties stated in \Cref{lm:Clustering}.

\begin{observation}\label{obs:XhighXlowp} If a subgraph $\mx \in \mathbb{X}$ has $\mv(\mx)\cap (\mv^{\high}_{i}\cup \mv^{\lowp}_i) \not= \emptyset$, then $\mv(\mx)\subseteq (\mv^{\high}_{i}\cup \mv^{\lowp}_i)$.
\end{observation}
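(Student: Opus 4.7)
The plan is to derive this as an immediate consequence of Item (2) of \Cref{lm:Clustering}, using the fact (Item (i) of the same lemma) that $\{\mv_i^{\high}, \mv_i^{\lowp}, \mv_i^{\lowm}\}$ is a partition of $\mv_i$. Concretely, I will argue by contraposition: suppose some subgraph $\mx \in \mathbb{X}$ contains a node in $\mv_i^{\high} \cup \mv_i^{\lowp}$; I want to rule out that $\mx$ also contains a node in $\mv_i^{\lowm}$.

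First I would invoke Item (2): if $\mx$ were to contain even a single node in $\mv_i^{\lowm}$, then \emph{every} node of $\mx$ would lie in $\mv_i^{\lowm}$. Since the three sets are pairwise disjoint (being a partition of $\mv_i$), this would contradict the hypothesis that $\mx$ contains a node in $\mv_i^{\high} \cup \mv_i^{\lowp}$. Hence $\mv(\mx) \cap \mv_i^{\lowm} = \emptyset$, and because $\mv(\mx) \subseteq \mv_i = \mv_i^{\high} \cup \mv_i^{\lowp} \cup \mv_i^{\lowm}$, we conclude $\mv(\mx) \subseteq \mv_i^{\high} \cup \mv_i^{\lowp}$, as required.

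There is really no obstacle here; the observation is essentially a contrapositive repackaging of Item (2) of \Cref{lm:Clustering}. The only reason to state it separately is that subsequent arguments (presumably bounding $w(H_i)$ via local potential changes) will treat subgraphs in $\mathbb{X} \setminus \mathbb{X}^{\lowm}$ differently from those in $\mathbb{X}^{\lowm}$, and this observation cleanly certifies that subgraphs outside $\mathbb{X}^{\lowm}$ are \emph{purely} composed of high- and positive-low-degree nodes, so that per-node arguments using the $\Omega(|\mv(\mx)|\eps^2 L_i)$ corrected potential change from \Cref{eq:averagePotential} can be applied uniformly to all of $\mv(\mx)$.
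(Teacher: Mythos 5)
Your proof is correct and matches the paper's own argument exactly: the paper likewise derives the observation as an immediate consequence of Item (2) of \Cref{lm:Clustering} together with the fact that $\{\mv_i^{\high}, \mv_i^{\lowp}, \mv_i^{\lowm}\}$ partitions $\mv_i$. The only (cosmetic) slip is that the partition fact is stated in the preamble of \Cref{lm:Clustering}, not as an ``Item (i)'' of it.
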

\begin{proof}
	Follows from Item (2)  in \Cref{lm:Clustering} and the fact that $\{\mv_i^{\high}, \mv_i^{\lowp},\mv_i^{\lowm}\}$  is a partition of $\mv_i$.
\end{proof}

\begin{observation}\label{obs:LowmStructure} Unless the partition $\{\mv_i^{\high}, \mv_i^{\lowp},\mv_i^{\lowm}\}$ degenerates, for every edge $(\varphi_{C_u},\varphi_{C_v})$ with one endpoint in $\mv_i^{\lowm}$, w.l.o.g.\ $\varphi_{C_v}$, the other endpoint $\varphi_{C_u}$ must be in $\mv_i^{\lowp}$. 
As a result, $\me(\mx)\cap \me_i = \emptyset$ if $\mx \in \mathbb{X}^{\lowm}$.  
\end{observation}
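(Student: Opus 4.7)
The plan is to derive both parts of the observation as direct consequences of Item (4) of \Cref{lm:Clustering}, together with the partition property asserted in Item (1). First I would fix an arbitrary edge $(\varphi_{C_u},\varphi_{C_v}) \in \me_i$ with $\varphi_{C_v} \in \mv_i^{\lowm}$. Since $\{\mv_i^{\high},\mv_i^{\lowp},\mv_i^{\lowm}\}$ is a partition of $\mv_i$, the node $\varphi_{C_u}$ belongs to exactly one of these three sets. The first sentence of Item (4) rules out $\varphi_{C_u} \in \mv_i^{\high}$, since there is no $\me_i$-edge between $\mv_i^{\high}$ and $\mv_i^{\lowm}$. If instead $\varphi_{C_u} \in \mv_i^{\lowm}$, then the edge $(\varphi_{C_u},\varphi_{C_v})$ would have both endpoints in $\mv_i^{\lowm}$, and the second sentence of Item (4) would force the degenerate case, contradicting our standing hypothesis. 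Hence $\varphi_{C_u} \in \mv_i^{\lowp}$, which is the first claim.

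Next I would use this first claim to deduce the second. Suppose $\mx \in \mathbb{X}^{\lowm}$; by the definition in Item (2) of \Cref{lm:Clustering}, every node of $\mx$ lies in $\mv_i^{\lowm}$, i.e., $\mv(\mx) \subseteq \mv_i^{\lowm}$. Assume for contradiction that some edge $\mbe = (\varphi_{C_u},\varphi_{C_v}) \in \me(\mx) \cap \me_i$ exists. Both endpoints of $\mbe$ belong to $\mv(\mx) \subseteq \mv_i^{\lowm}$, so the first claim, applied to $\mbe$, would force one of them to lie in $\mv_i^{\lowp}$, contradicting disjointness of $\mv_i^{\lowp}$ and $\mv_i^{\lowm}$. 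Thus $\me(\mx) \cap \me_i = \emptyset$, as asserted.

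There is no real obstacle: the argument is essentially a bookkeeping exercise built on top of \Cref{lm:Clustering}. The only subtle point worth flagging is that the ``as a result'' clause is also implicitly conditioned on the non-degenerate regime, because when the degenerate case holds one has $\mv_i^{\lowm} = \mv_i$ (and therefore $\mathbb{X}^{\lowm}$ may equal $\mathbb{X}$) while $\me_i$ need not be empty; this matches the ``unless the degenerate case happens'' qualifier that prefaces the whole observation.
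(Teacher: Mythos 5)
Your proposal is correct and follows essentially the same route as the paper: the paper also derives the first claim from Item~(4) of \Cref{lm:Clustering} (any $\me_i$-edge touching $\mv_i^{\lowm}$ must touch $\mv_i^{\lowp}$ in the non-degenerate case) and then invokes Item~(2) to conclude $\me(\mx)\cap\me_i=\emptyset$ for $\mx\in\mathbb{X}^{\lowm}$. Your write-up just makes the case split and the contradiction step more explicit.
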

\begin{proof} If the partition $\{\mv_i^{\high}, \mv_i^{\lowp},\mv_i^{\lowm}\}$ does not degenerate, by Item (4) in \Cref{lm:Clustering}, any edge incident to a node in $\mv_i^{\lowm}$ must be incident to a node in $\mv_i^{\lowp}$. By Item (2), if $\mx \in \mathbb{X}^{\lowm}$, then $\mv(\mx) \subseteq \mv_i^{\lowm}$ and hence, there is no edge between two nodes in $\mx$. Thus, $\me(\mx)\cap \me_i = \emptyset$. 
\end{proof}

Next, we show how to construct $\msttilde_{i+1}$ for the next level.

\begin{lemma}\label{lm:MSTiPlus1} Given the collection of subgraphs $\mathbb{X}$ of $\mg_i$ and their augmented diameters, we can construct the set of nodes $\mv_{i+1}$, and their weights, and the cluster tree $\msttilde_{i+1}$ of $\mg_{i+1}$ in $O(|\mv_{i}|\alpha(m,n))$ time.
\end{lemma}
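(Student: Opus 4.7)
The plan is to build $\mv_{i+1}$, its weight function, and $\msttilde_{i+1}$ in three stages, each running in $O(|\mv_i|\alpha(m,n))$ time, by leveraging the \textsc{Union-Find} data structure already maintained by the framework (see \Cref{subsec:LeveIplus1Construction}).

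First, for each subgraph $\mx \in \mathbb{X}$ I would create the node $\varphi_{C_\mx} \in \mv_{i+1}$, assign its weight $\omega(\varphi_{C_\mx}) = \Phi(C_\mx) = \adm(\mx)$ (which is supplied as part of the given input), and call \textsc{Union} on the representatives $\{r(C) : \varphi_C \in \mv(\mx)\}$ to merge the underlying level-$i$ clusters into the single level-$(i+1)$ cluster $C_\mx$ with representative $r(C_\mx)$. Since $\{\mv(\mx)\}_{\mx \in \mathbb{X}}$ partitions $\mv_i$ by property (\hyperlink{P1'}{P1'}), this stage performs $O(|\mv_i|)$ \textsc{Union} operations on the global data structure and takes $O(|\mv_i|\alpha(m,n))$ amortized time.

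Second, to collect the candidate edges of $\msttilde_{i+1}$, I would scan every edge $\mbe = (\varphi_{C_u},\varphi_{C_v}) \in \msttilde_i$ and use two \textsc{Find} calls to identify which level-$(i+1)$ clusters its endpoints now lie in. Self-loops (both endpoints in the same $\mv(\mx)$) are discarded; the remaining edges become candidates between distinct nodes of $\mv_{i+1}$ that still correspond to the same underlying $\msttilde$ edges they came from. Because $\msttilde_i$ is a spanning tree of $\mg_i$ on $|\mv_i|$ nodes and we are contracting a vertex-partition into $|\mv_{i+1}|$ super-nodes, the surviving edges form a connected multigraph on $\mv_{i+1}$ with at most $|\mv_i|-1$ edges; parallel edges can appear whenever two subgraphs happen to be joined by several $\msttilde_i$ edges, so the contracted object is not yet a simple tree.

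Third, since $\mg_{i+1}$ must be simple by \Cref{def:ClusterGraphNew} and $\msttilde_{i+1}$ must be one of its spanning trees, I would run a Kruskal-style sweep through the candidate edges using an auxiliary \textsc{Union-Find} keyed on $\mv_{i+1}$: keep an edge iff its two endpoints currently lie in different components, and discard it otherwise. Connectivity of the contracted multigraph guarantees that exactly $|\mv_{i+1}|-1$ edges are retained, so $\msttilde_{i+1}$ is a valid spanning tree, and the sweep costs $O(|\mv_i|\alpha(m,n))$ time. The only delicate point, rather than a genuine obstacle, is preserving the amortized $\alpha(m,n)$ bound (rather than $\alpha(m,m)$) across levels; this is handled by routing the first-stage \textsc{Union}s through the global structure that stores only the $n$ original vertices, with virtual vertices accessed via the pointers $p(\cdot)$ described in \Cref{subsec:LeveIplus1Construction}, while the third-stage sweep uses a fresh auxiliary structure of size $|\mv_{i+1}| \le |\mv_i|$ whose cost is separately absorbed into the $O(|\mv_i|\alpha(m,n))$ budget.
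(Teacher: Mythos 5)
Your proposal is correct and takes essentially the same approach as the paper: create $\mv_{i+1}$ one node per subgraph $\mx$, assign the weight $\adm(\mx)$, merge the underlying level-$i$ clusters via the global \textsc{Union-Find}, contract $\msttilde_i$ onto $\mv_{i+1}$ discarding self-loops, and extract a spanning tree of the resulting connected multigraph. The only cosmetic differences are that the paper explicitly forms the contracted multigraph from $\msttilde^{out}_i = \msttilde_i \setminus \bigcup_{\mx}(\me(\mx)\cap\msttilde_i)$ (and asserts this yields no self-loops), whereas you discard self-loops by inspection -- a safe, equivalent choice -- and the paper extracts the spanning tree by a plain linear-time traversal of the $\le|\mv_i|$-edge multigraph rather than your Kruskal-style sweep with an auxiliary \textsc{Union-Find}; both fit the stated $O(|\mv_i|\alpha(m,n))$ budget.
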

\begin{proof} For each subgraph $\mx \in \mathbb{X}$, we call \textsc{Union} operations sequentially on the set of clusters corresponding to the nodes of $\mx$ to create a level-$(i+1)$ cluster $C_{\mx} \in \mathcal{C}_{i+1}$. Then we create a set of nodes $\mv_{i+1}$ for $\mg_{i+1}$: each node $\varphi_{C_{\mx}}$ corresponds to a cluster $C_{\mx} \in \mc_{i+1}$ (and also subgraph $\mx \in \mathbb{X}$). Next, we set the weight $\omega(\varphi_{C_{\mx}}) = \adm(\mx)$. The total running time of this step is $O(|\mv_i|\alpha(m,n))$.
	
	We now construct $\msttilde_{i+1}$. Let $\msttilde^{out}_{i} = \msttilde_{i}\setminus (\cup_{\mx \in \mathbb{X}}(\me(\mx)\cap \msttilde_{i}))$ be the set of $\msttilde_{i}$ edges that are not contained in any subgraph $\mx \in \mathbb{X}$. Let $\msttilde_{i+1}'$ be the graph with vertex set $\mv_{i+1}$ and there is an edge between two nodes $(\mx,\my)$ in $\mv_{i+1}$ of there is at least one edge in $\msttilde^{out}_{i}$ between two nodes in the two corresponding subgraphs $\mx$ and $\my$; $\msttilde_{i+1}'$ can be constructed in time $O(|\mv_{i}|)$. Note that $\msttilde_{i+1}'$ could have parallel edges (but no self-loop).  Since $\msttilde_{i}$ is  a spanning tree of $\mg_i$, $\msttilde_{i+1}'$ must be connected. $\msttilde_{i+1}$ is then a spanning tree of $\msttilde_{i+1}'$, which can be constructed in time $O(|\mv_{i}|)$ since  $\msttilde_{i+1}'$ has at most  $|\mv_i|$ edges. The lemma now follows. 
\end{proof}

\subsection{Constructing \texorpdfstring{$H_i$: Proof of \Cref{lm:framework}}{Hi: Proof of Framework Lemma}} \label{subsec:ConstructHi}

Recall that to obtain a fast algorithm for constructing a light spanner, \Cref{lm:framework-technical} requires a fast construction of clusters at every level and a fast construction of $H_i$, the spanner for level-$i$ edges $E^{\sigma}_i$.  In \Cref{subsec:LeveIplus1Construction}, we have designed an efficient construction of level-$i$ clusters (\Cref{lm:MSTiPlus1}). In this section, we show how to construct $H_i$ efficiently with stretch $t(1+\max\{s_{\ssa}(2g) + 4g, 10g\}\eps)$; that is parameter $\rho$ in \Cref{lm:framework-technical} is $\rho = \max\{s_{\ssa}(2g) + 4g, 10g\}$. By induction, we assume that the stretch of every edge of weight less than $L_i/(1+\psi)$ in $H_{<L_i/(1+\psi)}$ is $t(1+\max\{s_{\ssa}(2g) + 4g, 10g\}\eps)$.  Note that $H_{< L_i} = H_{<L_i/(1+\psi)} \cup H_i$; see \Cref{remark:stretchLevel-i}. 

Our construction of $H_i$ assumes the existence of \hyperlink{SPHigh}{$\ssa$}. Since edges of the input graph to \hyperlink{SPHigh}{$\ssa$} must have weights in $[L,(1+\eps)L)$ for some parameter $L$, we set parameter $\psi$ in \Cref{lm:framework-technical} to be $\eps$. Thus, level-$i$ edges $E^\sigma_{i}$ (and hence edges in $\me_i$ of $\mg_i$) have weights in $[L_i/(1+\eps),L_i)$.

We  now go into the details of the construction of $H_i$. We assume that we are given the collection $\mathbb{X}$ of subgraphs as described in \Cref{lm:Clustering}. Define:
\begin{equation}\label{eq:ACT-XLowHighdef}
	\begin{split}
		\mathbb{X}^{\high} &= \{\mx \in \mathbb{X}: \mv(\mx)\cap \mv^{\high}_{i} \not=\emptyset\}\\
		\mathbb{X}^{\lowp} &= \{\mx \in \mathbb{X}: \mv(\mx)\cap \mv^{\lowp}_{i} \not=\emptyset\}\\
	\end{split}
\end{equation}
It could be that $\mathbb{X}^{\high}\cap \mathbb{X}^{\lowp}\not= \emptyset$. By \Cref{obs:XhighXlowp}, $\{\mathbb{X}^{\high}\cup \mathbb{X}^{\lowp},\mathbb{X}^{\lowm}\}$ is a partition of $\mathbb{X}$.

\paragraph{Construction overview} Given a set of subgraphs $\mathbb{X}$ satisfying the properties stated in \Cref{lm:Clustering}, our general approach to construct $H_i$ is as follows. First, we add to $H_i$ (the corresponding edge of) every edge $\mbe$ contained in some subgraph $\mx$: $\mbe \in \me(\mx)\cap \me_i$. Edges added to $H_i$ in this step are incident to nodes in $\mv_i^{\lowp}\cup \mv_i^{\high}$. By Item (5)  of \Cref{lm:Clustering}, we only add $O(|\mv(\mx)|)$ edges per subgraph $\mx$, and hence, we can bound the total weight of these edges by ($O(\frac{1}{\eps^2})$ times) the corrected potential changes of subgraphs in $\mathbb{X}\setminus \mathbb{X}^{\lowm}$, due to Item (3) of \Cref{lm:Clustering}.  Next, we add to $H_i$ all edges incident to all nodes in $\mv_i^{\lowp} \cup \mv_i^{\lowm}$. Unless we are in the degenerate case, edges added to $H_i$ in the second step are incident to nodes in $\mv_i^{\lowp}$ (see \Cref{obs:LowmStructure}), and hence, their total weight can be bounded by  ($O(\frac{1}{\eps^3})$ times) the corrected potential changes of subgraphs in  $\mathbb{X}\setminus \mathbb{X}^{\lowm}$; to this end we apply both Item (3) of \Cref{lm:Clustering} and the fact that any node in $\mv_i^{\lowp}$ has at most $O(1/\eps)$ incident edges in $\me_i$. Now we are left with edges whose both endpoints are in $\mv_i^{\high}$, denoted by $\me^{\high}_i$. In the third step, we select a subset of (the corresponding edges of) these edges to add to $H_i$ by using \hyperlink{SPHigh}{$\ssa$}. The pseudocode is given in \Cref{fig:construct-Hi}.

Recall that each edge $(\varphi_{C_u},\varphi_{C_v}) \in \me_i$ has a corresponding edge $(u,v)\in E^{\sigma}_i$ where $u$ and $v$ are in two level-$i$ clusters $C_u$ and $C_v$, respectively.  Our goal in this section is to prove the following lemma.

\begin{restatable}{lemma}{HiConstruction}
	\label{lm:ConstructH_i} Given \hyperlink{SPHigh}{$\ssa$}, we can construct  $H_i$ in total time $O((|\mv_i| + |\me_i|)\tau(m,n))$	satisfying \Cref{lm:framework-technical}  with  $\lambda = O(\chi \eps^{-2} + \epsilon^{-3})$, and $A = O(\chi\eps^{-2} + \eps^{-3})$, when $\eps \leq 1/(2g)$. Furthermore, the stretch of every edge in $E_i^{\sigma}$ in $H_{<L_i}$ is $t(1+\max\{s_{\ssa}(2g) + 4g, 10g\}\eps)$. 
\end{restatable}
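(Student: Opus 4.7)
The plan is to construct $H_i$ by a three-step procedure guided by the partition $\{\mathbb{X}^{\high}\cup\mathbb{X}^{\lowp},\,\mathbb{X}^{\lowm}\}$ of $\mathbb{X}$. First, for every subgraph $\mx\in\mathbb{X}$ add the source edges of all $(\varphi_{C_u},\varphi_{C_v})\in\me(\mx)\cap\me_i$; these amount to $O(|\mv(\mx)|)$ per subgraph by Item (5) of \Cref{lm:Clustering}. Second, add the source edge of every $(\varphi_{C_u},\varphi_{C_v})\in\me_i$ with an endpoint in $\mv_i^{\lowp}\cup\mv_i^{\lowm}$; by \Cref{obs:LowmStructure} (non-degenerate case) these are precisely the edges touching $\mv_i^{\lowp}$, of which each such node has only $O(1/\eps)$ incident ones. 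Third, view the residual subgraph of $\mg_i$ on $\mv_i^{\high}$ with edge set $\me_i^{\high}$ as an $(L_i/(1+\eps),\eps,2g)$-cluster graph in the sense of \Cref{def:ClusterGraph-Param}---the factor $2g$ comes from $\dm(H_{<L_{i-1}}[C])\le gL_{i-1}=g\eps L_i\le 2g\eps\cdot L_i/(1+\eps)$ via property (\hyperlink{P3}{P3})---invoke \hyperlink{SPHigh}{$\ssa$} on it, and add the source edges of the returned $\me^{\prune}$ to $H_i$. The degenerate case ($\mv_i^{\lowm}=\mv_i$, $|\me_i|=O(\eps^{-2})$) is handled by simply taking $H_i$ to be the source edges of all of $\me_i$.

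Steps 1 and 2 run in $O(|\me_i|)$ time using the \textsc{Union-Find} representatives already computed, and Step 3 runs in $O((|\mv_i|+|\me_i|)\tau(m,n))$, giving the claimed overall runtime. For the weight bound in the non-degenerate case, the three steps contribute at most $O(|\mv(\mx)|)$, $O(|\mv(\mx)|/\eps)$, and $O(\chi|\mv(\mx)|)$ edges per relevant $\mx$, each of weight $<L_i$. Since only subgraphs in $\mathbb{X}\setminus\mathbb{X}^{\lowm}$ contribute (by \Cref{obs:LowmStructure}), Item (3) of \Cref{lm:Clustering} applies and gives $|\mv(\mx)|L_i\le O(\eps^{-2})\Delta_{i+1}^+(\mx)$. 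Summing and using \Cref{clm:localPotenDecomps} yields
\[
w(H_i)\ \le\ O(\chi\eps^{-2}+\eps^{-3})\Big(\Delta_{i+1}+\sum_{\mbe\in\msttilde_i\cap\bigcup_{\mx}\me(\mx)}\!\!w(\mbe)\Big).
\]
By \Cref{lm:MSTiPlus1}, $w(\msttilde_{i+1})\le w(\msttilde_i)-\sum_{\mbe\in\msttilde_i\cap\bigcup_{\mx}\me(\mx)}w(\mbe)$, so the correction term telescopes across levels to at most $w(\mst)$. In the degenerate case the added weight is at most $|\me_i|L_i=O(L_i/\eps^2)$; since $L_i$ grows geometrically by $1/\eps$ and is bounded by $w(\mst)$, the geometric series gives a total of $O(w(\mst)/\eps^2)$ across all such levels. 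Both contributions fit the template of \Cref{lm:framework-technical} with $\lambda=A=O(\chi\eps^{-2}+\eps^{-3})$.

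For the stretch bound, consider any edge $(x,y)\in E_i^{\sigma}$ and let $(\varphi_{C_u},\varphi_{C_v})$ be its representative cluster-graph edge with source $(u,v)$. Three cases arise for the ``core'' stretch of $(u,v)$: if $(u,v)$ was added in Step 1, 2, or the degenerate case, the stretch is $1$; if $(\varphi_{C_u},\varphi_{C_v})$ was pruned from $\mg_i$ as a removable edge then property (3b) of \Cref{def:GiProp} together with the \hyperlink{PD}{PD invariant} translates the augmented $\msttilde_i$-path into a genuine $H_{<L_{i-1}}$-path of weight $\le t(1+6g\eps)w(u,v)$; and if $(u,v)$ entered via $\ssa$ in Step 3 then the stretch guarantee of $\ssa$ with $\beta=2g$ gives $d_{H_{<L_i}}(u,v)\le t(1+s_{\ssa}(2g)\eps)w(u,v)$. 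For an arbitrary parallel edge $(x,y)$, one more triangle inequality through clusters $C_u$ and $C_v$ adds at most $2gL_{i-1}=2g\eps L_i\le 2g\eps(1+\eps)w(x,y)$ to the distance, and combining this with $w(u,v)\le(1+\eps)w(x,y)$ and simplifying under $\eps\le 1/(2g)$ yields the claimed bound $d_{H_{<L_i}}(x,y)\le t(1+\max\{s_{\ssa}(2g)+4g,\,10g\}\eps)w(x,y)$.

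The main obstacle is the degenerate case, where $\mathbb{X}=\mathbb{X}^{\lowm}$ so Item (3) of \Cref{lm:Clustering} offers no useful lower bound on the potential change and one cannot charge $w(H_i)$ to $\Delta_{i+1}$; the fix is the geometric-series argument over $L_i$ that charges the entire weight to $w(\mst)$ instead. A subtler point is that \hyperlink{SPHigh}{$\ssa$} guarantees stretch only for the chosen source edges, so extending to the many parallel non-source edges in $E_i^{\sigma}$ requires a careful triangle-inequality step that crucially leans on the cluster diameter bound $\dm(H_{<L_{i-1}}[C])\le gL_{i-1}$---this is precisely where the additive $+4g$ term in the stretch originates and where property (\hyperlink{P3}{P3}) of the hierarchy is exploited in full.
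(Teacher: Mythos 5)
Your proposal matches the paper's own proof in both structure and substance: the same three-step construction of $H_i$ (per-subgraph edges, low-degree node edges, then $\ssa$ on the high-degree cluster graph $\mathcal{K}_i=(\mv_i^{\high},\me_i^{\high},\omega)$ with $L=L_i/(1+\eps)$ and $\beta=2g$), the same charging of weight against $\Delta_{i+1}^+$ via Item~(3) of \Cref{lm:Clustering} and \Cref{clm:localPotenDecomps}, the same geometric-series treatment of the degenerate case, and the same three-pronged stretch analysis (direct edges, removable edges, parallel edges). The only nits are cosmetic: Item~(3) of \Cref{lm:Clustering} is a sum-level (amortized) guarantee rather than the per-subgraph inequality you write before summing, and your parallel-edge bound also needs to account for the case where the kept representative edge is itself removable (whence the $\max$ with $10g$); neither affects the final conclusion.
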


We apply $\ssa$ to $\mv^{\high}_i$ that has size at most $n$ as every level-$i$ cluster corresponding to a node in  $\mv^{\high}_i$ contains at least one original vertex in $G$. Furthermore, $|\me^{\high}_i|$ is bounded by $m$ and hence, $\tau(|\me^{\high}_i|, |\mv^{\high}_i|)\leq \tau(m,n)$.

\begin{remark}\label{remark:ACT} If $\ssa$ can be implemented in the ACT model in time $O((|\mv^{\high}_i| + |\me_i^{\high}|)\tau(m,n))$, then the construction of $H_i$ can be implemented in the ACT model  in time $O((|\mv_i| + |\me_i|)\tau(m,n))$.
\end{remark}

\begin{figure}[htb!]
\begin{tcolorbox}	
\paragraph{Constructing $H_i$} We construct $H_i$ in three steps, as briefly described in the construction overview above. Initially, $H_i$ contains no edges.

\begin{itemize}
	\item \textbf{(Step 1).~} For every sugraph $\mx \in\mathbb{X}$ and every edge $\mbe = (\varphi_{C_u},\varphi_{C_v}) \in \me(\mx)$ such that $\mbe \in \me_i$, we add the corresponding edge $(u,v)$ to $H_i$. (Note that if $\mbe \not\in \me_i$, it is in $\msttilde_i$ and hence $(u,v)$ belongs to $H_0$).
	
	\item \textbf{(Step 2).~} For each node $\varphi_{C_u} \in \mv_i^{\lowp} \cup \mv_{i}^{\lowm}$, and for each edge $(\varphi_{C_u},\varphi_{C_v})$ in $\me_i$ incident to $\varphi_{C_u}$, we add the corresponding edge $(u,v)$ to $H_i$,
	
	\item \textbf{(Step 3).~} Let $\me_i^{\high}\subseteq \me_i$ be the set of edges whose both endpoints are in $\mv_i^{\high}$, and $\mathcal{K}_i = (\mv^{\high}_i, \me_i^{\high},\omega)$ be a subgraph of $\mg_i$. We run \hyperlink{SPHigh}{$\ssa$} on $\mathcal{K}$ to obtain $\me^{\prune}_i$. For every edge $(\varphi_{C_u},\varphi_{C_v})\in \me_i^{\prune}$, we add the corresponding edge $(u,v)$ to $H_i$.
\end{itemize}
\end{tcolorbox}
  \caption{The algorithm for constructing $H_i$.}
    \label{fig:construct-Hi}
\end{figure}

\paragraph{Analysis} In \Cref{clm:Hi-Time}, \Cref{clm:Hi-Stretch}, and \Cref{clm:Hi-Weight} below,  we bound the running time to construct $H_i$, the stretch of edges in $E^{\sigma}_i$, and the weight of $H_i$, respectively.

\begin{claim}\label{clm:Hi-Time} $H_i$ can be constructed in time $O((|\mv_i| + |\me_i|)\tau(m,n))$.
\end{claim}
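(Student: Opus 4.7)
The plan is to bound the runtime of each of the three construction steps separately and sum them. Throughout, I will freely use the properties of the collection $\mathbb{X}$ and the partition $\{\mv_i^{\high},\mv_i^{\lowp},\mv_i^{\lowm}\}$ guaranteed by \Cref{lm:Clustering}, and the interface/runtime of \hyperlink{SPHigh}{$\ssa$}.

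For \textbf{Step 1}, I enumerate all subgraphs $\mx \in \mathbb{X}$ and, for each one, all edges in $\me(\mx)\cap \me_i$. By Item~(5) of \Cref{lm:Clustering}, we have $|\me(\mx)\cap \me_i| = O(|\mv(\mx)|)$ per subgraph. Since $\{\mv(\mx)\}_{\mx\in\mathbb{X}}$ partitions $\mv_i$ by property (\hyperlink{P1'}{P1'}), summing gives $\sum_{\mx}|\mv(\mx)|=|\mv_i|$, so the total work in Step 1 is $O(|\mv_i|)$. Converting each $\mbe=(\varphi_{C_u},\varphi_{C_v})$ to its source edge $(u,v)\in E^\sigma_i$ takes $O(1)$ per edge using the $\source(\cdot)$ function stored on $\me_i$.

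For \textbf{Step 2}, I walk over all nodes in $\mv_i^{\lowp}\cup\mv_i^{\lowm}$ and all their incident edges in $\me_i$. Each edge of $\me_i$ contributes to this sum at most twice (once per endpoint), so this step costs at most $O(|\mv_i|+|\me_i|)$ (the $|\mv_i|$ term accounts for scanning nodes that may have no incident $\me_i$-edges). Again, the conversion to source edges is $O(1)$ per edge.

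For \textbf{Step 3}, I invoke \hyperlink{SPHigh}{$\ssa$} on the graph $\mathcal{K}_i=(\mv_i^{\high},\me_i^{\high},\omega)$. By the runtime specification of the $\ssa$, this takes time $O((|\mv_i^{\high}|+|\me_i^{\high}|)\tau(|\mv_i^{\high}|,|\me_i^{\high}|))$. Since each node of $\mv_i^{\high}$ corresponds to a level-$i$ cluster containing at least one original vertex (as remarked just before the claim), we have $|\mv_i^{\high}|\le n$; and trivially $|\me_i^{\high}|\le |\me_i|\le m$. By monotonicity of $\tau$, this is bounded by $O((|\mv_i|+|\me_i|)\tau(m,n))$. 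Finally, producing $H_i$ from $\me_i^{\prune}$ uses one $\source(\cdot)$ call per output edge, which is absorbed in this bound.

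Summing the three contributions yields the claimed overall runtime $O((|\mv_i|+|\me_i|)\tau(m,n))$. I do not expect a genuine obstacle here since the bound boils down to counting edges handled per step, with the $\ssa$ runtime being the only nontrivial contribution; the only subtlety worth flagging is the argument $|\mv_i^{\high}|\le n$ (needed so that the $\tau$ factor is $\tau(m,n)$, not $\tau(m,m)$), which follows from the fact that high-degree clusters contain an original vertex to serve as representative.
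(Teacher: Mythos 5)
Your proof is correct and follows essentially the same approach as the paper's: bound Steps 1 and 2 by $O(|\mv_i|+|\me_i|)$ via direct edge counting (the paper calls these "straightforward" and you fill in the counting using Item~(5) of \Cref{lm:Clustering} and property (\hyperlink{P1'}{P1'})), bound Step 3 by the $\ssa$ runtime, and use $|\mv_i^{\high}|\le n$ (via the original-vertex argument noted just before the claim) together with monotonicity of $\tau$ to land on $\tau(m,n)$. The only cosmetic slip is writing $\tau(|\mv_i^{\high}|,|\me_i^{\high}|)$ with the arguments swapped relative to the paper's convention $\tau(m',n')$ where the first argument is the edge count, but this has no effect on the conclusion.
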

\begin{proof}
	We observe that Steps 1 and 2 can be straightforwardly implemented in $O(|\mv_i| + |\me_i|)$ time. Here we assume that we have a constant time translation from the cluster graph edges to the original edges. This can be done by  storing for each edge $(\varphi_{C_u},\varphi_{C_v})\in \me_i$ a pointer to the original edge $(u,v)$ when $(\varphi_{C_u},\varphi_{C_v})$ was created.  The running time of Step 3 is dominated by the running time of \hyperlink{SPHigh}{$\ssa$}. Note that we assume that \hyperlink{SPHigh}{$\ssa$} has access to a function $\source(\cdot)$ that maps each node $\varphi_C \in\mv_i^{\high}$  to a representative of $C$ and each edge $(\varphi_{C_u},\varphi_{C_v})\in \me^{\high}$ to the corresponding edge $(u,v) \in E^{\sigma}_i$.  We can construct function $\source(.)$ by simply storing the pointer to the corresponding vertex in $C$ or the pointer to the corresponding edge. Thus, the running time of Step 3 is $O((|\mv_i| + |\me_i|)\tau(m,n))$. 	This implies the claimed running time.	 
	 
\end{proof}

Next, we bound the stretch of edges in $E^\sigma_{i}$. We first observe that the input to \hyperlink{SPHigh}{$\ssa$} satisfies its requirement.

\begin{claim}\label{clm:Ki-clustergraph}$\mathcal{K}_i = (\mv^{\high}_i, \me_i^{\high},\omega)$ is a $(L, \eps, \beta)$-cluster graph with $L = L_i/(1+\eps)$, $\beta = 2g$,  and  $H_{< L} = H_{< L_i/(1+\eps)}$, where $H_{< L_i/(1+\eps)}$ is the spanner constructed for edges of weight less than $L_i/(1+\eps)$ (see \Cref{remark:stretchLevel-i} with $\psi = \eps$). Furthermore, the stretch of $H_{< L}$ for edges of weight less than $L$ is $t(1+ \max\{s_{\ssa}(2g)+4g,10g\}\eps)$.  
\end{claim}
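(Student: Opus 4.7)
The plan is to verify each of the four conditions in \Cref{def:ClusterGraph-Param} for $\mathcal{K}_i$ with the stated parameters $L=L_i/(1+\eps)$ and $\beta=2g$, and then deduce the stretch statement from the inductive hypothesis set up at the beginning of \Cref{subsec:ConstructHi}.

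Conditions (1), (2), and (3) should be essentially immediate from the construction. For (1) and (2), $\mathcal{K}_i$ is obtained by restricting $\mg_i$ (\Cref{def:ClusterGraphNew}) to the nodes $\mv_i^{\high}$ and the edges $\me_i^{\high}\subseteq \me_i$; the node-to-cluster correspondence and pairwise disjointness come from property (\hyperlink{P1}{P1}), and each edge of $\me_i^{\high}$ already carries an associated edge of $\tilde G$ of equal weight. For (3), recall that with $\psi=\eps$ the partition in \Cref{def:refineEdprime} places every edge of $\me_i$ in the weight band $[L_i/(1+\eps),L_i)$, so setting $L=L_i/(1+\eps)$ yields $L\le \omega(\mbe)<(1+\eps)L$ for every $\mbe\in\me_i^{\high}$.

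The substantive step is Condition (4), the diameter bound $\dm(H_{<L}[C])\le 2g\eps L$ for each $C\in \mathcal{C}_i$ corresponding to a node of $\mathcal{K}_i$. I would chain the (\hyperlink{PD}{PD Invariant}) with property (\hyperlink{P3'}{P3'}) of the level $i-1$ application of \Cref{lm:Clustering}: the PD invariant gives $\dm(H_{<L_{i-1}}[C])\le \Phi(C)$, and since $\Phi(C)=\adm(\mx)$ for the subgraph $\mx$ at level $i-1$ producing $C$ (\Cref{eq:SetPotential-i}), property (\hyperlink{P3'}{P3'}) yields $\Phi(C)\le gL_{i-1}=g\eps L_i$. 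The base case $i=1$ is handled separately using \Cref{lm:level1Const} and \Cref{eq:Level1Poten}, which give $\Phi(C)=\dm(\msttilde[C])\le 14L_0\le gL_0$ for $g\ge 14$. Since $L_{i-1}\le L$ when $\eps\le 1$, the spanner $H_{<L_{i-1}}$ is a subgraph of $H_{<L}$, so adding further edges only shrinks distances inside $C$; this gives $\dm(H_{<L}[C])\le g\eps L_i=g(1+\eps)\eps L\le 2g\eps L$, justifying the choice $\beta=2g$.

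For the stretch claim, the key observation is that every edge of weight strictly less than $L=L_i/(1+\eps)$ lies either in $E_{light}$ or in some $E^{\sigma}_j$ with $j\le i-1$, because the next weight band $[L_i/(1+\eps),L_i)$ begins exactly at $L$ and there are no edges in between. Hence the inductive hypothesis stated just before this claim applies verbatim to $H_{<L_i/(1+\eps)}=H_{<L}$, yielding the stretch $t(1+\max\{s_{\ssa}(2g)+4g,10g\}\eps)$. The only delicate point in the whole argument is the index shift in Condition (4) that forces $\beta=2g$ rather than $g$; once that bookkeeping is carried out, the remaining conditions are purely structural, so I do not anticipate a significant obstacle.
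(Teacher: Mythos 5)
Your proposal is correct and mirrors the paper's own proof: conditions (1)--(3) by inspection, condition (4) from the level-$i$ diameter bound with the arithmetic $gL_{i-1}=g(1+\eps)\eps L\le 2g\eps L$, and the stretch claim directly from the inductive hypothesis stated at the start of \Cref{subsec:ConstructHi}. The only cosmetic difference is that you re-derive property (\hyperlink{P3}{P3}) by chaining the \hyperlink{PD}{PD Invariant} with (\hyperlink{P3'}{P3'}), whereas the paper simply cites (\hyperlink{P3}{P3}) as already established.
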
 
\begin{proof}
We verify all properties in \Cref{def:ClusterGraph-Param}. Properties (1) and (2) follow directly from the definition of $\mathcal{K}_i$. Since we set $\psi = \eps$, every edge $(u,v) \in E^{\sigma}_i$  has $L_i/(1+\eps) \leq w(u,v)\leq L_i$. Since $L = L_i/(1+\eps)$, we have that $L \leq w(u,v) \leq (1+\eps)L$; this implies property (3). By property \hyperlink{P3}{(P3)}, we have $\dm(H_{<L_i/(1+\eps)}[C]) \leq gL_{i-1} = g(1+\eps) \eps L \leq 2g \eps L = \beta \eps L$ when $\eps < 1$. Thus, $\mk_i$ is a   $(L, \eps, \beta)$-cluster graph. By induction, the stretch of $H_{< L}$ is $t(1+ \max\{s_{\ssa}(2g)+4g,10g\}\eps)$.   
 
\end{proof}

\begin{claim}\label{clm:Hi-Stretch} For every edge $(u,v) \in E^\sigma_{i}$, $d_{H_{< L_i}}(u,v) \leq t(1+\max\{s_{\ssa}(2g) + 4g, 10g\}\eps)w(u,v)$ when $\eps \leq 1/(2g)$.
\end{claim}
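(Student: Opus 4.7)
The plan is to fix an arbitrary level-$i$ edge $(u,v) \in E^{\sigma}_i$ and do a case analysis based on whether $u$ and $v$ lie in the same level-$i$ cluster. In the within-cluster case ($u,v \in C$ for some $C\in\mc_i$), property (\hyperlink{P3}{P3}) gives $\dm(H_{<L_{i-1}}[C]) \leq gL_{i-1} = g\eps L_i$. Combined with $w(u,v) \geq L_i/(1+\eps)$, and using $H_{<L_{i-1}} \subseteq H_{<L_i}$, this yields stretch at most $g\eps(1+\eps) \leq 2g\eps \leq 1 \leq t(1+\rho\eps)$ whenever $\eps \leq 1/(2g)$, which is the regime of the claim.

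For the cross-cluster case with $u\in C_u$, $v\in C_v$, $C_u\neq C_v$, I would first single out the minimum-weight edge $(u^*,v^*) \in E^{\sigma}_i$ having one endpoint in $C_u$ and one in $C_v$; this is the unique edge that can correspond to a $\mg_i$-edge (all other parallel $E^\sigma_i$-edges between $C_u,C_v$ get pruned in the construction of $\mg_i$) and it satisfies $w(u^*,v^*)\leq w(u,v)$. I would then split into two sub-cases. If the associated cluster-graph edge was deleted in the removable-edge pruning, then property (3) of \Cref{def:GiProp} supplies $\omega(\msttilde_i[\varphi_{C_u},\varphi_{C_v}]) \leq t(1+6g\eps)\,w(u^*,v^*)$; walking along this $\msttilde_i$-path and expanding each node weight $\Phi(C)$ into a within-cluster traversal in $H_{<L_{i-1}}[C]$ via the \hyperlink{PD}{PD Invariant}, and each $\msttilde_i$-edge via its image in $H_0 \subseteq H_{<L_i}$, gives $d_{H_{<L_i}}(u^*,v^*) \leq t(1+6g\eps)\,w(u^*,v^*)$.

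If instead the cluster-graph edge survives into $\me_i$, I would argue by tracking which construction step handles it. Steps 1 and 2 directly insert the source edge $(u^*,v^*)$ into $H_i$ (contributing stretch $1$) whenever the $\mg_i$-edge either lies in some $\mx \in \mathbb{X}$ or is incident to a node in $\mv_i^{\lowp}\cup \mv_i^{\lowm}$. Otherwise both endpoints must lie in $\mv_i^{\high}$, so the edge sits in $\mk_i$; \Cref{clm:Ki-clustergraph} verifies that $\mk_i$ is an $(L_i/(1+\eps),\eps,2g)$-cluster graph over the base spanner $H_{<L_i/(1+\eps)}$ whose inductive stretch bound $t(1+\max\{s_{\ssa}(2g)+4g,10g\}\eps)$ is what we need for $\ssa$. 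The \hyperlink{Stretch}{Stretch} guarantee of $\ssa$ then delivers $d_{H_{<L_i}}(u^*,v^*) \leq t(1+s_{\ssa}(2g)\eps)\,w(u^*,v^*)$. In every sub-case we therefore obtain $d_{H_{<L_i}}(u^*,v^*) \leq t(1+\rho'\eps)\,w(u^*,v^*)$ with $\rho' \in \{0,\,6g,\,s_{\ssa}(2g)\}$.

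To finish, I would route $u \leadsto u^* \to v^* \leadsto v$, paying at most $gL_{i-1} = g\eps L_i$ for each intra-cluster detour inside $C_u$ and $C_v$ by property (\hyperlink{P3}{P3}). Using $L_i \leq (1+\eps)w(u,v) \leq 2w(u,v)$ (since $\eps\leq 1$) and $t \geq 1$, the three pieces combine to $d_{H_{<L_i}}(u,v) \leq 4g\eps\,w(u,v) + t(1+\rho'\eps)\,w(u,v) \leq t(1+(\rho'+4g)\eps)\,w(u,v)$. Taking the maximum over $\rho' \in \{0,6g,s_{\ssa}(2g)\}$ yields the claimed bound $t(1+\max\{s_{\ssa}(2g)+4g,\,10g\}\eps)$. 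The main obstacle I anticipate is purely bookkeeping: verifying that the inductive stretch hypothesis on $H_{<L_i/(1+\eps)}$ and the PD invariant propagate consistently through both the within-cluster detours and the $\msttilde_i$-path routing, and that the slack between $w(u,v)$ and $L_i$ (a factor of $1+\eps$) does not silently inflate the constants in front of $\eps$ — but this reduces to a short triangle-inequality calculation in each sub-case once the invariants are in hand.
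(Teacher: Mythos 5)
Your proof is correct and follows essentially the same argument as the paper: the paper first handles the set $F^{\sigma}_i$ of edges that survive into $\me_i$ (using the case split $\mv_i^{\lowp}\cup\mv_i^{\lowm}$ vs.\ both endpoints in $\mv_i^{\high}$, where the latter invokes the $\ssa$ stretch guarantee through \Cref{clm:Ki-clustergraph}), then handles $E^{\sigma}_i\setminus F^{\sigma}_i$ by distinguishing (a) same cluster, (b) parallel edges, (c) removable edges, with (b) routed through the surviving parallel edge and (c) routed along the non-removable $\msttilde_i$-path via the \hyperlink{PD}{PD Invariant}. Your reorganization, which always identifies the minimum-weight representative $(u^*,v^*)$ and routes through it with a uniform $4g\eps$ intra-cluster detour, absorbs the paper's case (b) into a single routing step; this is slightly less tight for the subcases where $(u,v)=(u^*,v^*)$ (you pay $4g\eps$ even when it is unnecessary), but it lands on the identical final bound $t(1+\max\{s_{\ssa}(2g)+4g,\,10g\}\eps)$ and all the invariants you invoke (property \hyperlink{P3}{(P3)}, the PD Invariant, $w(u,v)\geq L_i/(1+\eps)$, Definition \ref{def:GiProp}(3)) are used exactly as the paper uses them.
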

\begin{proof}
	Let $F^{\sigma}_i = \{(u,v) \in E^{\sigma}_i: \exists (\varphi_{C_u},\varphi_{C_v}) \in \me_i\}$ be the set of edges in $E^{\sigma}_i$ that correspond to the edges in $\me_i$. We first show that:
	\begin{equation}\label{eq:stertchF}
		d_{H_{< L_i}}(u,v) \leq t(1+ s_{\ssa}(2g)\eps)w(u,v) \qquad \forall (u,v) \in F^{\sigma}_i.
	\end{equation}
	
	To that end, let $(\varphi_{C_u},\varphi_{C_v}) \in \me_i$ be the edge corresponding to $(u,v)$ where $(u,v) \in F^{\sigma}_i$. If at least one of the endpoints of $(\varphi_{C_u},\varphi_{C_v})$ is in $\mv^{\lowp}_i \cup \mv^{\lowm}_i$, then $(u,v) \in H_i$ by the construction in Step 2, hence \Cref{eq:stertchF} holds. Otherwise, $\{\varphi_{C_u},\varphi_{C_v}\}\subseteq \mv_i^{\high}$, which implies that $(\varphi_{C_u},\varphi_{C_v}) \in \me^{\high}_i$. Since we add all edges of $\me_i^{\prune}$ to $H_i$, by property (2) of \hyperlink{SPHigh}{$\ssa$} and \Cref{clm:Ki-clustergraph}, the stretch of $(u,v)$ is $t(1+s_{\ssa}(2g)\eps)$. 
	
	It remains to bound the stretch of any edge $(u',v') \in E^{\sigma}_i\setminus F^{\sigma}_i$. Recall that  $(u',v')$ is not added to $\me_i$ because (a) both $u'$ and $v'$ are in the same level-$i$ cluster in the construction of the cluster graph in \Cref{lm:G_i-construction} , or (b) $(u',v')$ is parallel with another edge $(u,v)$ also in \Cref{lm:G_i-construction}, or (c) the edge $(\varphi_{C_{u'}},\varphi_{C_{v'}})$ corresponding to $(u',v')$ is a removable edge (see \Cref{def:GiProp}).

	In case (a), since the level-$i$ cluster containing both $u'$  and $v'$ has diameter at most $gL_{i-1}$ by property (\hyperlink{P3}{P3}), we have a path from $u'$ to $v'$ in $H_{< L_{i-1}}$ of diameter at most $gL_{i-1} ~=~ g\eps L_i ~\leq \frac{L_i}{1+\psi}~\leq~w(u',v')$ when $\eps \leq  1/(2g)$. Thus, the stretch of edge $(u',v')$ is $1$. For case (c), the stretch of $(u',v')$ in $H_{< L_{i-1}}$ is $t(1+6g\eps)$ since $\eps \leq 1$.  Thus, in both cases, we have:
		\begin{equation}\label{eq:stertch-uvprime}
		d_{H_{< L_i}}(u',v') \leq t(1+ 6g\eps)w(u',v') 
	\end{equation}

	We now consider case (b). 	Let $C_u$ and $C_v$ be two level-$i$ clusters containing $u$ and $v$, respectively. W.l.o.g, we assume that $u' \in C_u$  and $v' \in C_v$. Since we only keep an edge of minimum weight among all parallel edges, $w(u,v) \leq w(u',v')$. Since the level-$i$ clusters that contain $u$  and $v$ have diameters at most $gL_{i-1} = g\eps L_i$ by property (\hyperlink{P3}{P3}), it follows that $\dm(H_{< L_i}[C_u]),\dm(H_{< L_i}[C_v]) \le g\epsi L_i$. 	We have:
	\begin{equation*}
		\begin{split}
			d_{H_{< L_i}}(u',v') &\leq 	d_{H_{< L_i}}(u,v) + \dm(H_{< L_i}[C_u]) + \dm(H_{< L_i}[C_v])\\ &\leq t(1+ \max\{s_{\ssa}(2g),6g\}\eps)w(u,v) +   2g\epsi L_i \\
			&\leq  t(1+ \max\{s_{\ssa}(2g),6g\}\eps)w(u',v')  +  2g\epsi L_i\\
			&\leq t(1+ \max\{s_{\ssa}(2g),6g\}\eps)w(u',v') + 4g\eps w(u',v') \\
			&=  t(1+ \max\{s_{\ssa}(2g)+4g,10g\}\eps)w(u',v') \qquad \mbox{(since $t\geq 1$)}.
		\end{split}
	\end{equation*}
The second inequality is due to \Cref{eq:stertchF} and \Cref{eq:stertch-uvprime}, and the forth inequality is due to  $w(u',v') \geq L_i/(1+\eps) \geq L_i/2$). 
\end{proof}

\begin{claim}\label{clm:Hi-Weight} Let $\msttilde^{in}_i = \cup_{\mx \in \mathbb{X}}(\me(\mx)\cap \msttilde_i)$ be the set of $\msttilde_i$ edges that are contained in subgraphs in $\mathbb{X}$. Then, $w(H_i) \leq \lambda \Delta_{i+1} + a_i$ for $\lambda = O(\chi \eps^{-2} + \eps^{-3})$  and $a_i = (\chi\eps^{-2} + \eps^{-3}) \cdot w(\msttilde^{in}_i) + O(L_i/\eps^2)$. 
\end{claim}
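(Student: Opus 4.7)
The plan is to decompose $w(H_i)$ along the three construction steps, bound each step's contribution in terms of $L_i\sum_{\mx \in \mathbb{X}\setminus\mathbb{X}^{\lowm}}|\mv(\mx)|$, and then convert this aggregate sum into the desired bound via Item~(3) of \Cref{lm:Clustering}. Throughout, I use that every edge added to $H_i$ lies in $\me_i$ and therefore has weight at most $L_i$.

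For Step~1, Item~(5) of \Cref{lm:Clustering} yields $|\me(\mx)\cap\me_i| = O(|\mv(\mx)|)$ per subgraph, while \Cref{obs:LowmStructure} guarantees $\me(\mx)\cap\me_i = \emptyset$ for every $\mx \in \mathbb{X}^{\lowm}$ in the non-degenerate case; hence Step~1 contributes at most $O(L_i)\sum_{\mx\in \mathbb{X}\setminus \mathbb{X}^{\lowm}}|\mv(\mx)|$. For Step~2, \Cref{obs:LowmStructure} again lets me charge each added edge to a node in $\mv_i^{\lowp}$, and Item~(1) caps the $\me_i$-degree of such a node at $O(1/\eps)$, so Step~2 contributes at most $O(L_i/\eps)\cdot|\mv_i^{\lowp}| \le O(L_i/\eps)\sum_{\mx\in \mathbb{X}\setminus \mathbb{X}^{\lowm}}|\mv(\mx)|$. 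For Step~3, the sparsity property of $\ssa$ gives $|\me_i^{\prune}| \leq \chi |\mv_i^{\high}|$; since $\mv_i^{\high}$ is touched only by subgraphs in $\mathbb{X}\setminus \mathbb{X}^{\lowm}$, Step~3 contributes at most $\chi L_i\sum_{\mx\in \mathbb{X}\setminus \mathbb{X}^{\lowm}}|\mv(\mx)|$.

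Summing the three contributions yields $w(H_i) \le O(\chi + 1/\eps)\cdot L_i\sum_{\mx\in\mathbb{X}\setminus \mathbb{X}^{\lowm}}|\mv(\mx)|$. Item~(3) of \Cref{lm:Clustering} provides $\sum_{\mx\in\mathbb{X}\setminus \mathbb{X}^{\lowm}}|\mv(\mx)|\eps^2 L_i = O\bigl(\sum_{\mx\in\mathbb{X}\setminus \mathbb{X}^{\lowm}}\Delta_{i+1}^+(\mx)\bigr)$. Since $\Delta_{i+1}^+(\mx)\ge 0$ for every $\mx$, the right-hand side is bounded by $\sum_{\mx\in\mathbb{X}}\Delta_{i+1}^+(\mx)$, which equals $\Delta_{i+1} + w(\msttilde^{in}_i)$ by \Cref{clm:localPotenDecomps} together with the definition $\Delta_{i+1}^+(\mx) = \Delta_{i+1}(\mx) + \sum_{\mbe\in\msttilde_i\cap\me(\mx)}w(\mbe)$. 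Plugging back in gives $w(H_i) = O(\chi\eps^{-2} + \eps^{-3})(\Delta_{i+1} + w(\msttilde^{in}_i))$, matching the claimed form $\lambda\Delta_{i+1} + a_i$ with $\lambda = O(\chi\eps^{-2}+\eps^{-3})$ and the coefficient of $w(\msttilde^{in}_i)$ in $a_i$ being $O(\chi\eps^{-2})$ (Step~3's contribution is the binding one; the lower-order $O(\eps^{-3})$ slack from Steps~1--2 can be absorbed into $\lambda\Delta_{i+1}$).

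The only remaining case is the degenerate one, where \Cref{obs:LowmStructure} fails. Here Item~(4) of \Cref{lm:Clustering} guarantees $|\me_i| = O(1/\eps^2)$, so the entire Step~1 $\cup$ Step~2 $\cup$ Step~3 output has total weight at most $O(L_i/\eps^2)$, absorbed by the additive $O(L_i/\eps^2)$ term in $a_i$ (Step~3 is in fact vacuous because $\mv_i^{\high} = \emptyset$). The main subtlety will be the bookkeeping between $\mathbb{X}^{\lowm}$ and $\mathbb{X}\setminus\mathbb{X}^{\lowm}$: Item~(3)'s amortization is available \emph{only} over $\mathbb{X}\setminus\mathbb{X}^{\lowm}$, and it is precisely \Cref{obs:LowmStructure} that rules out any Step~1 or Step~2 edge touching a purely-$\lowm$ subgraph in the non-degenerate regime, allowing the charging to be localized exactly where the amortization holds.
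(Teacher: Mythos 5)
Your proposal follows essentially the same route as the paper's proof: decompose $w(H_i)$ by the three construction steps, bound Step~1 via Item~(5) of \Cref{lm:Clustering} together with \Cref{obs:LowmStructure}, Step~2 via the $O(1/\eps)$ degree bound on $\mv_i^{\lowp}$ nodes, Step~3 via the $\ssa$ sparsity, then convert via Item~(3) of \Cref{lm:Clustering} and \Cref{clm:localPotenDecomps}, and handle the degenerate case via $|\me_i| = O(1/\eps^2)$. The only cosmetic difference is that you aggregate all three step contributions into a single $L_i\sum_{\mx\in\mathbb{X}\setminus\mathbb{X}^{\lowm}}|\mv(\mx)|$ term before converting once to $\Delta^+$, whereas the paper converts each step's contribution to $\Delta^+$ separately; this is harmless.

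The one piece of reasoning that does not hold up is the final parenthetical attempting to match the literal form of $a_i$. You (correctly) derive $w(H_i) \le O(\chi\eps^{-2}+\eps^{-3})\bigl(\Delta_{i+1}+w(\msttilde^{in}_i)\bigr) + O(L_i/\eps^2)$, which puts the coefficient $\lambda = O(\chi\eps^{-2}+\eps^{-3})$ on $w(\msttilde^{in}_i)$, but the claim as written has only $\chi\eps^{-2}$ there. Your attempt to absorb the $O(\eps^{-3})\,w(\msttilde^{in}_i)$ slack into $\lambda\Delta_{i+1}$ is not a valid step: there is no a priori relation $w(\msttilde^{in}_i) = O(\Delta_{i+1})$ (indeed $\Delta_{i+1}$ can be negative while $w(\msttilde^{in}_i) > 0$), so nothing absorbs. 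However, this is really an imprecision in the claim's statement rather than a gap in your argument: the paper's own proof ends with exactly the same $\lambda\cdot w(\msttilde^{in}_i)$ coefficient, and the downstream use in the proof of \Cref{lm:ConstructH_i} explicitly writes $\sum_i a_i = \sum_i(\lambda\,\msttilde^{in}_i + O(L_i/\eps^2))$, i.e.\ it also treats the coefficient as $\lambda$. The correct resolution is to simply state $a_i = \lambda\cdot w(\msttilde^{in}_i) + O(L_i/\eps^2)$ rather than try to justify the stated $\chi\eps^{-2}$ coefficient.
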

\begin{proof} Let $\msttilde^{in}_i(\mx) = \me(\mx)\cap \msttilde_i$ for each subgraph $\mx \in \mathbb{X}$.  By the definition of $ \mathbb{X}^{\lowp}$ and $ \mathbb{X}^{\high}$ (see~\Cref{eq:ACT-XLowHighdef}), it holds that:
	\begin{equation} \label{eq:mvilowhigh}
		\begin{split}
			|\mv_{i}^{\high}| \leq \sum_{\mx \in \mathbb{X}^{\high}}|\mv(\mx)|  \quad &\mbox{and} \quad	|\mv_{i}^{\lowp}| \leq \sum_{\mx \in \mathbb{X}^{\lowp}}|\mv(\mx)| 
		\end{split}
	\end{equation}
	
	First, we consider the non-degenerate case where $\mv_i^{\lowm} \not = \mv_i$. By \Cref{obs:LowmStructure}, any edge in $\me_i$ incident to a node in $\mv_i^{\lowm}$ is also incident to a node in $\mv_i^{\lowp}$. We bound the total weight of the edges added to $H_i$ by considering each step in the construction of $H_i$ separately. Let $F^{(a)}_i\subseteq E^{\sigma}_i$ be the set of edges added to $H_i$ in the construction in Step $a$, $a\in \{1,2,3\}$.
	
	By \Cref{obs:LowmStructure}, $\me(\mx)\cap \me_i = \emptyset$ if $\mx \in \mathbb{X}^{\lowm}$. Recall that $\mathbb{X}^{\high}\cup \mathbb{X}^{\lowp} = \mathbb{X}\setminus \mathbb{X}^{\lowm}$. By Item (5) in \Cref{lm:Clustering}, the total weight of the edges added to $H_i$ in Step 1 is:
	\begin{equation}\label{eq:Fi1}
		\begin{split}
			w(F^{(1)}_i)  &=  \sum_{\mx \in \mathbb{X}^{\high} \cup \mathbb{X}^{\lowp}} O(|\mv(\mx)|) L_i \stackrel{{\scriptstyle {\mbox{Eq.~}}(\ref{eq:averagePotential})}}{=}  O(\frac{1}{\eps^2})\sum_{\mx \in \mathbb{X}^{\high} \cup \mathbb{X}^{\lowp}} \Delta^+_{i+1}(\mx)\\
			&= O(\frac{1}{\eps^2})\sum_{\mx \in\mathbb{X}} \Delta^+_{i+1}(\mx)  \qquad \mbox{(since $\Delta^+_{i+1}(\mx)\geq 0$  by \Cref{lm:Clustering})} \\
			&=  O(\frac{1}{\eps^2})\sum_{\mx \in \mathbb{X}} \left(\Delta_{i+1}(\mx) + w(\msttilde^{in}_i(\mx))\right)\\
			&= O(\frac{1}{\eps^2})(\Delta_{i+1} + w(\msttilde^{in}_i)) \qquad \mbox{(by \Cref{clm:localPotenDecomps})}~.
		\end{split}
	\end{equation}   
	
	Next, we bound $w(F^{(2)}_i)$. Let $(u,v)$ be an edge added to $H_i$ in Step 2  and let $(\varphi_{C_u},\varphi_{C_v})$ be the corresponding edge of $(u,v)$. Since $\mv^{\lowm}_i \not= \mv_i$, at least one of the endpoints of $(\varphi_{C_u},\varphi_{C_v})$, w.l.o.g.\ $\varphi_{C_u}$,  	is in $\mv^{\lowp}_{i}$ by \Cref{obs:LowmStructure}. Recall by Item (1) of \Cref{lm:Clustering} that all nodes in $\mv_i^{\lowp}$ have low degree, i.e., incident to  $O(1/\eps)$ edges in $\me_i$.	Thus, $|F^{(2)}_i| = O(\frac{1}{\eps}) |\mv_i^{\lowp}|$. We have: \begin{equation}\label{eq:Fi2}
		\begin{split}
			w(F^{(2)}_i)  &=  O(\frac{1}{\eps}) |\mv_i^{\lowp}|  L_i \stackrel{{\scriptstyle \mbox{Eq.~}(\ref{eq:mvilowhigh})}}{=} O(\frac{1}{\eps})\sum_{\mx \in \mathbb{X}^{\lowp}}|\mv(\mx)| L_i \\&=~ O(\frac{1}{\eps})\sum_{\mx \in \mathbb{X}^{\high}\cup \mathbb{X}^{\lowp}}|\mv(\mx)| L_i\\
			&\stackrel{{\scriptstyle {\mbox{Eq.~}}(\ref{eq:averagePotential})}}{=}  O(\frac{1}{\eps^3})\sum_{\mx \in \mathbb{X}^{\high}\cup \mathbb{X}^{\lowp}} \Delta^+_{i+1}(\mx) = O(\frac{1}{\eps^3})\sum_{\mx \in\mathbb{X}} \Delta^+_{i+1}(\mx)  \\
			& =    O(\frac{1}{\eps^3})\sum_{\mx \in \mathbb{X}} \left(\Delta_{i+1}(\mx) + w(\msttilde^{in}_i(\mx))\right)\\
			&= O(\frac{1}{\eps^3})(\Delta_{i+1} + w(\msttilde^{in}_i)) \qquad \mbox{(by \Cref{clm:localPotenDecomps})}~.
		\end{split}
	\end{equation}  
	
	By property (1) of \hyperlink{SPHigh}{$\ssa$}, the number of edges added to $H_i$ in Step 3 is at most $\chi|\mv^{\high}_i|$. Thus: 
		\begin{equation}\label{eq:Fi3}
		\begin{split}
			w(F^{(3)}_i)  &~\leq~  \chi |\mv_i^{\high}|  L_i \stackrel{{\scriptstyle \mbox{Eq.~}(\ref{eq:mvilowhigh})}}{\leq}  \chi\sum_{\mx \in \mathbb{X}^{\high}}|\mv(\mx)| L_i ~\leq~ \chi\sum_{\mx \in \mathbb{X}^{\high}\cup \mathbb{X}^{\lowp}}|\mv(\mx)| L_i\\
			&\stackrel{{\scriptstyle {\mbox{Eq.~}}(\ref{eq:averagePotential})}}{=}  O(\chi \eps^{-2})\sum_{\mx \in \mathbb{X}^{\high}\cup \mathbb{X}^{\lowp}} \Delta^+_{i+1}(\mx)  = O(\chi \eps^{-2})\sum_{\mx \in\mathbb{X}} \Delta^+_{i+1}(\mx)\\   
			& =     O(\chi \eps^{-2})\sum_{\mx \in \mathbb{X}} \left(\Delta_{i+1}(\mx) + w(\msttilde^{in}_i(\mx))\right)\\
			&= O(\chi \eps^{-2})(\Delta_{i+1} + w(\msttilde^{in}_i)) \qquad \mbox{(by \Cref{clm:localPotenDecomps})}~.
		\end{split}
	\end{equation} 
	
	By \Cref{eq:Fi1,eq:Fi2,eq:Fi3}, we conclude that:
	\begin{equation}\label{eq:Hi-nondegen}
		\begin{split}
			w(H_i) &=  O( \chi \eps^{-2}  + \eps^{-3}) (\Delta_{i+1} + w(\msttilde^{in}_i)) \leq \lambda(\Delta_{i+1} + w(\msttilde^{in}_i))
		\end{split}
	\end{equation} 
	for some $\lambda =  O(\chi \eps^{-2} + \eps^{-3})$.

	It remains to consider the degenerate case where $\mv_i^{\lowm} = \mv_i$. Even if we add every edge that corresponds to an edge in $\me_i$ to $H_i$, Item (3) in \Cref{lm:Clustering} implies that the number of such edges is at most $O(\frac{1}{\eps^2})$. Thus, we have:
	\begin{equation}\label{eq:Hi-degen}
		w(H_i) = O(\frac{L_i}{\eps^2}) \leq  \lambda\cdot (\Delta_{i+1} + w(\msttilde^{in}_i)) + O(\frac{L_i}{\eps^2}) 
	\end{equation} 
	where in the last equation, we use the fact that:
	\begin{equation*}
		\Delta_{i+1} + w(\msttilde^{in}_i) \stackrel{\text{\footnotesize{\Cref{clm:localPotenDecomps}}}}{=} \sum_{\mx \in \mathbb{X}} (\Delta_{i+1}(\mx) + \msttilde^{in}_i(\mx))  = \sum_{\mx \in \mathbb{X}} \Delta^+_{i+1}(\mx)\geq 0~
	\end{equation*}
	by Item (3) of \Cref{lm:Clustering}. Thus, the claim follows from \Cref{eq:Hi-degen,eq:Hi-nondegen}. %
\end{proof}

\begin{proof}[Proof of \Cref{lm:ConstructH_i}] The running time follows from \Cref{clm:Hi-Time}.  By \Cref{clm:Hi-Stretch}, the stretch is $t(1+ \max\{s_{\ssa}(2g) + 4g, 10g\}\eps)$. By \Cref{clm:Hi-Weight}, we have
	$\sum_{i \in \mathbb{N}^+}a_i = \sum_{i\in \mathbb{N}^+}(\lambda  \msttilde^{in}_i + O(L_i/\eps^2))$. Observe by the definition that the sets of corresponding edges of $\msttilde^{in}_i$ and $\msttilde^{in}_j$ are disjoint for any $i\not=j \geq 1$. Thus, $\sum_{i\in \mathbb{N}^+} \msttilde^{in}_i\leq w(\mst)$. Observe that:
	
	\begin{equation*}
		\sum_{i\in \mathbb{N}^+}O(\frac{L_i}{\eps^2}) ~=~  O(\frac{1}{\epsilon^2}) \sum_{i=1}^{i_{\max}} \frac{L_{i_{\max}}}{\epsilon^{i_{\max}-i}} ~=~ O(\frac{L_{i_{\max}}}{\epsilon^2(1-\epsilon)}) ~=~ O(\frac{1}{\epsilon^2}) w(\mst)~;
	\end{equation*}
	here $i_{\max}$ is the maximum level. The last equation is due to that $\eps \leq 1/2$  and every edge has weight at most $w(\mst)$ (by the removal step in the construction of $\tilde{G}$). Thus, $A = \lambda + O(\eps^{-2}) = O(\chi \eps^{-2} + \eps^{-3}) +  O(\eps^{-2}) =  O(\chi \eps^{-2} + \eps^{-3})$ as claimed. % 
\end{proof}

We are now ready to prove \Cref{lm:framework}.

\begin{proof}[Proof of \Cref{lm:framework}]
By \Cref{lm:Clustering} and \Cref{lm:MSTiPlus1}, level-$(i+1)$ clusters can be constructed in time $O((|\mv_i| + |\me_i|)\eps^{-1} + |\mv_i|\alpha(m,n)) = O((|\mathcal{C}_i| + |E^{\sigma}_i|)(\alpha(m,n) + \eps^{-1})$ when $\eps \ll 1$. By \Cref{lm:ConstructH_i}, $H_i$ can be constructed in time $O(|\mv_i| + |\me_i|)\tau(m,n)= O((|\mathcal{C}_i| + |E^{\sigma}_i|)\tau(m,n))$.

We can construct a minimum spanning tree in time  $T_{\mst} = O((n+m)\alpha(m,n))$ by using Chazelle's algorithm~\cite{Chazelle00}. Thus, by \Cref{lm:framework-technical}, the construction time of the light spanner is 
\begin{align*}
    &O(m\eps^{-1}(\tau(m,n) + \alpha(m,n) + \eps^{-1})\log(1/\eps) +  T_{\mst}) \\&= O(m\eps^{-1}(\tau(m,n) + \alpha(m,n) + \eps^{-1})\log(1/\eps)~.
\end{align*}

By \Cref{lm:ConstructH_i} and \Cref{lm:framework-technical}, the lightness of the spanner is $$O(\frac{\lambda + A + 1}{\eps}\log \frac{1}{\epsilon} + \frac{1}{\eps}) = O((\chi \eps^{-3} + \eps^{-4})\log(1/\eps)).$$
Note that we set $\psi = \eps$ in this case. Since $g = 31$, by \Cref{lm:ConstructH_i} and \Cref{lm:framework-technical},  the stretch of the spanner is $$t(1+\max\{s_{\ssa}(2g) + 4g, 10g\}\eps) \leq t(1+(s_{\ssa}O(1)) + O(1))\eps)~.$$ This completes the proof of the theorem. 
\end{proof}

\begin{table}[!ht]
\caption{Notation introduced in \Cref{sec:framework}.}
	\label{table:notation}
\renewcommand{\arraystretch}{1.3}
\begin{tabular}{|l|l|}
  	\hline
	\textbf{Notation} & \textbf{Meaning} \\ \hline
	$E^{light}$ &$ \{e \in E(G) : w(e)\le w/\varepsilon\}$\\ \hline 
	$E^{heavy}$ & $E \setminus E^{light}$ \\\hline
	$E^{\sigma} $ & $\bigcup_{i \in \mathbb{N}^{+}}E_{i}^{\sigma}$\\\hline
	$E_{i}^{\sigma} $ & $\{e \in E(G) : \frac{L_i}{1+\psi} \leq w(e) < L_i\}$\\\hline
	$g$ & constant in \hyperlink{P3}{property (P3)}, $g = 31$ \\\hline
	$\mathcal{G}_i = (V_i, \msttilde_{i} \cup \mathcal{E}_i, \omega)$ & cluster graph; see \Cref{def:ClusterGraphNew}. \\\hline
	$\me_i$ & corresponds to a subset of edges of $E^{\sigma}_i$\\\hline
	$\mathbb{X}$ & a collection of subgraphs of $\mathcal{G}_i$\\\hline
	$\mx, \mv(\mx), \me(\mx)$ & a subgraph in $\mathbb{X}$, its vertex set, and its edge set\\\hline
	$\Phi_i$ & $\sum_{c \in C_i}\Phi(c)$ \\\hline
	$\Delta_{i+1} $&$ \Phi_i - \Phi_{i+1}$\\\hline
	$\Delta_{i+1}(\mx)$ & $(\sum_{\phi_C\in \mx }\Phi(C) ) - \Phi(C_{\mx})$\\\hline
	$\Delta_{i+1}^+(\mx)$ & $\Delta_{i+1}(\mx) + \bigcup_{e \in \me(\mx)\cap \msttilde_i}w(e)$\\\hline
	$C_\mx$ & $\bigcup_{\phi_C \in \mx}C$ \\\hline
	$\{\mv^{\high}_i,\mv^{\lowp},\mv^{\lowm}_{i}\}$ & a partition of $\mv_i$ in \Cref{lm:Clustering}\\\hline
	$\mathbb{X}^{\lowm}$ & $\mx \in \mathbb{X}^{\lowm}$ has $\mv(\mx)\subseteq \mv^{\lowm}_i$ \\\hline
	$s_{\ssa}$ & the stretch constant of \hyperlink{SPHigh}{$\ssa$}\\\hline
\end{tabular}
\renewcommand{\arraystretch}{1}
\end{table}

\section{Light Spanners for Minor-free Graphs in Linear Time}\label{sec:minor-linear}

In \Cref{sec:applications}, we showed a construction of a light spanner for $K_r$-minor-free graphs with running time $O(nr\sqrt{r}\alpha(nr\sqrt{r},n))$. The extra factor $\alpha(nr\sqrt{r},n)$ is due to  \textsc{Union-Find} data structure in the proof of \Cref{lm:framework}. To remove this factor, we do not use  \textsc{Union-Find}. Instead, we follow the idea of Mare{\v{s}}~\cite{Mare04} that was applied to construct a minimum spanning tree for $K_r$-minor-free graphs. Specifically, after the construction of level-$(i+1)$ clusters, we prune the set of edges that are involved in the construction of levels at least $i+1$, which is $\cup_{j\geq i+1}E^{\sigma}_{j}$, as follows.

\paragraph{The algorithm} Let $E^{\sigma}_{\geq i} = \cup_{j\geq i}E^{\sigma}_{j}$. We inductively maintain a set of edges $\me_{\geq i}$, where each edge in $\me_{\geq i}$ is associated with an edge in $E^{\sigma}_{\geq i}$. (Note that only those in $\me_{i}$ are involved in the construction of spanner at level $i$.) Furthermore, we inductively guarantee that:
\begin{quote}
    \textbf{Size invariant:~} $|\me_{\geq i}| = O(r\sqrt{\log r})|\mv_i|$.
\end{quote}
Upon completing the construction of level-$(i+1)$ clusters, we construct the set of nodes $\mv_{i+1}$.  We now consider the set of edges $\me'_{\geq i+1} = \me_{\geq i}\setminus \me$. Let $\tilde{\me}_{\geq i+1}$ be  obtained from  $\me'_{\geq i+1}$ by removing parallel edges: two edges $(\varphi_{1},\varphi_{2})$ and  $(\varphi'_{1},\varphi'_{2})$ are \emph{parallel} if there exist two subgraphs $\mx,\my \in \mathbb{X}$ such that,~w.l.o.g, $\varphi_1, \varphi_1' \in \mv(\mx)$ and  $\varphi_2, \varphi_2' \in \mv(\my)$. (Among all parallel edges, we keep an  edge with minimum weight in $\tilde{\me}_{i+1}$.)  We construct the edge set  $\me_{\geq i+1}$ (between vertices in $\mv_{i+1}$) at level $(i+1)$ from $\tilde{\me}_{\geq i+1}$ by creating one edge $(\mx,\my)\in \me_{\geq i+1}$  for each associated edge $(\varphi_x,\varphi_y) \in \tilde{\me}_{\geq i+1}$ where $\varphi_x \in \mv(\mx)$ and $\varphi_y \in \mv(\my)$; $\omega(\mx,\my) = \omega(\varphi_x,\varphi_y)$.

\paragraph{Analysis} Observe that $\me_{i+1}$ corresponds to a subset of edges of  $E^{\sigma}_{\geq i+1}$ since $\me'_{\geq i+1}$, by definition, corresponds to a subset of edges of  $E^{\sigma}_{\geq i+1}$.  The stretch is in check (at most $(1+O(\eps))$), since we only remove parallel edges and since level-$(i+1)$ clusters have diameter  $O(\eps)$ times the weight of level-$(i+1)$ edges by \hyperlink{P3}{property (P3)}.  Furthermore,  since $\me_{\geq i} = O(r\sqrt{\log r} |\mv_i|)$  by the size invariant, $\me_{i+1}$ can be constructed in $O(|\mv_i|)$ time. Since the graph $(\mv_{i+1}, \me_{\geq i+1})$ is a minor of $G$ and hence, is $K_r$-minor-free, we conclude that $| \me_{\geq i+1}| = O(r\sqrt{\log r})|\mv_{i+1}|$ by \Cref{lm:minor-sparsity}, which implies the size invariant for level $i+1$. 

By the size invariant, we do not need \textsc{Union-Find} data structure, as $\me_{\geq i}$ now has  $O(r\sqrt{\log r}|\mv_i|) = O(r\sqrt{\log r}|\mc_i|)$ edges. Thus, the running time to construct $\mg_i$ in \Cref{lm:G_i-construction} becomes $O_{\eps}(|\mc_i| + |\me_{i}|) = O_{\eps}(r\sqrt{\log r}|\mc_i|)$, and the running time to  construct $\msttilde_{i+1}$  in \Cref{lm:MSTiPlus1} also becomes $O(r\sqrt{\log r}|\mc_i|)$.

 We are now ready to prove \Cref{thm:minor-free-fast} for minor-free graphs; we rely on \Cref{lm:framework-technical}. 

\begin{proof}[Proof of \Cref{thm:minor-free-fast}] Note that $t = 1+\eps$ in this case. By \Cref{lm:ConstructH_i} and \Cref{lm:App-Minor}, the stretch of $H_{<L_i}$ is $(1+\eps)(1+\max\{4g,10g\}\eps) = 1 + O(\eps) $. We can get back stretch $1+\eps$ by scaling $\eps$. 
	
	By \Cref{lm:App-Minor}, $\chi = O(r\sqrt{\log r})$ where $\chi$ is the parameter defined in \hyperlink{SPHigh}{Algorithm $\ssa$}. Thus, by \Cref{lm:ConstructH_i}, $\lambda = O(r\sqrt{\log r}\eps^{-2} + \eps^{-3})$, and $A = O(r\sqrt{\log r}\eps^{-2} + \eps^{-3})$. Thus, the lightness of the spanner is $O((r\sqrt{\log r}\eps^{-3} + \eps^{-4})\log(1/\eps)) = O(r\sqrt{\log r})$ for a constant $\eps$.
	
	It remains to bound the running time of the algorithm. Observe that $|\mv_i| = |\mathcal{C}_i|$ and $|\me_i| = O(r\sqrt{\log r})|\mv_i|$. Thus, the running time to (1) construct $\mg_i$ and  $\msttilde_{i+1}$ is  $O_{\eps}(r\sqrt{\log r}|\mc_i|)$ as discussed above, (2) construct  $\mathbb{X}$ is $O_{\eps}(|\mv_i| + |\me_i|) = O_{\eps}(r\sqrt{\log r}|\mathcal{C}_i|)$ by \Cref{lm:Clustering}, and (3) construct  $H_i$ is  $O(|\mv_i| + |\me_i|) = O(r\sqrt{\log r} |\mathcal{C}_i|)$ by \Cref{lm:ConstructH_i} and \Cref{lm:App-Gen}, here $\tau(m,n) = O(1)$. Thus, the total running time to construct level-$(i+1)$ clusters and $H_i$ is  $O_{\eps}(r\sqrt{\log r}|\mc_i|)$. We can construct a minimum spanning tree in time  $T_{\mst} = O(n r\sqrt{\log r})$ by using the algorithm of Mare{\v{s}}~\cite{Mare04}. Thus, by \Cref{lm:framework-technical}, the running time of the light spanner is $O(n r\sqrt{\log r})$ for a constant $\eps$.  
\end{proof}

\section{Clustering: Proof of \texorpdfstring{\Cref{lm:Clustering}}{Clustering Lemma}}\label{sec:ClusteringDetails}

In this section, we construct the set of subgraphs $\mathbb{X}$ of the cluster graph $\mg_i= (\mv_i, \msttilde_{i}\cup \me_i,\omega)$ as claimed in \Cref{lm:Clustering}, by giving a fast implementation of the construction of Borradaile, Le and Wulff-Nilsen (BLW)~\cite{BLW17} using augmented diameters. The pseudocode is given in \Cref{fig:clustering-alg}. Basically, the algorithm has five major steps, each constructing a certain type of cluster, except for Step 3, whose goal is to clean up long paths of $\msttilde_i$. In \Cref{subsec:detailed-implement}, we expand every step in the pseudocode. 

\begin{figure}[ph!]
\begin{tcolorbox}	
$\textsc{ConstructCluster}(\mg_i = (\mv_i,\msttilde_i \cup \me_i,\omega))$: \begin{itemize}
    \item \emph{Step 1: Group nodes in $\mv^{\high}_{i}$ and its neighbors connected via $\me_i$ into subgraphs.} 
    
    This step constructs a set of subgraphs $\mathbb{X}_1$ such that every node in $\mv^{\high}_{i}$ and its neighbors connected via edges in $\me_i$ are grouped to some subgraph in  $\mathbb{X}_1$. One key property is that every subgraph $\mx \in \mathbb{X}_1$ has many nodes (at least $2g/\eps$). See \Cref{lm:Clustering-Step1} for details.  
  
    \item \emph{Step 2: Group branching nodes of $\msttilde_i$.}
    
    A node is \emph{branching}\footnote{We actually work with a more refined notion of branching in the detailed implementation.} in $\msttilde_i$ if its degree in  $\msttilde_i$ is at least 3.  In this step, we form $\mathbb{X}_2$ such that every subgraph in $\mathbb{X}_2$ (i) is a subtree of $\msttilde_i$ and (ii) contains a branching node. After this step, every remaining subtree of  $\msttilde_i$ either has a small augmented diameter (at most $6L_i$) or is a path. See \Cref{lm:Clustering-Step2} for details.

    \item \emph{Step 3: Augment $\mathbb{X}_1 \cup \mathbb{X}_2$.}
    
    A path, say $\tilde{P}$, of $\msttilde_i$ after Step 2 could still contain branching nodes of $\msttilde_i$, which we want to avoid in subsequent steps. Therefore, if $\tilde{P}$ contains any branching node, say $\varphi$, we will add $\varphi$ to a subgraph in $\mathbb{X}_1 \cup \mathbb{X}_2$ that has an $\msttilde_i$ edge to $\varphi$. This augmentation does not change the structure of the subgraph in $\mathbb{X}_1 \cup \mathbb{X}_2$ by much, and more importantly, every node in the remaining long paths has a degree at most $2$ in $\msttilde_i$; these are called \emph{suspended paths} of $\msttilde_i$. See \Cref{lm:Clustering-Step3} for details. 

    \item  \emph{Step 4: Group suspended subpaths connected by an edge in $\me_i$.}

    This step constructs $\mathbb{X}_4$ such that every subgraph $\mx \in \mathbb{X}_4$ contains a single edge $\mbe \in \me_i$ whose endpoints are in (long) suspended paths of  $\msttilde_i$ after Step 3.  The goal is to ensure that, after this step, only short prefix subpaths of $\varphi$ can contain nodes that are incident to an edge in $\me$. The details of this step are given in \Cref{lm:Clustering-Step4}.     

   \item  \emph{Step 5:  Break long suspended paths and final augmentation.}

   This step has two mini steps. (Step 5A) we merge subtrees of $\msttilde_i$ of augmented diameter at most $6L_i$  to subgraphs in $\mathbb{X}_1\cup \mathbb{X}_2\cup \mathbb{X}_4$ via $\msttilde_i$ edges.  (Step 5B) we break every remaining (long) suspended path, say $\Tilde{P}$, into subpaths of diameter $\Theta(L_i)$. There are two types of subpaths broken from  $\Tilde{P}$: prefix subpaths and internal subpaths. Internal subpaths are added to a new set $\mathbb{X}_5^{\internal}$. For a prefix subpath, if it has an $\msttilde_i$ edge to a subgraph $\mx \in \mathbb{X}_1\cup \mathbb{X}_2\cup \mathbb{X}_4$, it will be merged to $\mx$; otherwise, it will be added to a new set $\mathbb{X}_5^{\prefix}$. 

We show (in \Cref{lm:Step1Potential,lm:Step2Potential,lm:Step4Potential}) that the corrected potential changes of subgraphs in $\mathbb{X}_1\cup \mathbb{X}_2\cup \mathbb{X}_4$ remain the same after the augmentation.

    \item Return $\mathbb{X} = \mathbb{X}_1 \cup \mathbb{X}_2\cup \mathbb{X}_4 \cup \mathbb{X}_5^{\internal} \cup \mathbb{X}_5^{\prefix}$ as the set of clusters.  
\end{itemize}

\end{tcolorbox}
  \caption{The algorithm for constructing $\mathbb{X}$.}
    \label{fig:clustering-alg}
\end{figure}

\subsection{The detailed implementation}\label{subsec:detailed-implement}

Recall that $g$ is a constant defined in \hyperlink{P3}{property (P3)}, and that $\msttilde_{i}$ is a spanning tree of $\mg_i$ by Item (2) in \Cref{def:GiProp}. We refer readers to \Cref{table:notation} for a summary of the notation introduced in \Cref{sec:framework}.

\paragraph{Step 1} In this step, we group every node of high degree to a subgraph in the following lemma. 

\begin{lemma}[Step 1]\label{lm:Clustering-Step1} Let $\mv^{\high}_i = \{\varphi_{C} \in \mv: \varphi_{C} \mbox{ is incident to $\geq\frac{2g}{\eps}$ edges in } \me_i\}$. Let $\mv_i^{\high+}$ be obtained from $\mv^{\high}_i$  by adding all neighbors that are connected to nodes in $\mv^{\high}_i$ via edges in $\me_i$. We can construct in $O(|\mv_i| + |\me_i|)$ time a collection of node-disjoint subgraphs $\mathbb{X}_1$ of $\mg_i$ such that:
		\begin{enumerate}[noitemsep]
			\item[(1)] Each subgraph $\mx \in \mathbb{X}_1$ is a tree.
			\item[(2)] $\cup_{\mx \in \mathbb{X}_1}\mv(\mx) = \mv^{\high+}_i$.
			\item[(3)] $L_i \leq \adm(\mx) \leq 13L_i$, assuming that $\eps \leq 1/g$ for every $\mx \in \mathbb{X}_1$.
			\item[(4)] $|\mv(\mx)|\geq \frac{2g}{\eps}$ for every $\mx \in \mathbb{X}_1$.
		\end{enumerate}
\end{lemma}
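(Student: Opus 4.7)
The plan is to construct $\mathbb{X}_1$ by a greedy procedure that iterates over $\mv^{\high}_i$ in arbitrary order. When we process a still-unclustered $\varphi_C \in \mv^{\high}_i$, we examine its $\me_i$-neighborhood $N(\varphi_C)$. If at least $2g/\eps$ of these neighbors are still \emph{free} (not yet placed in any subgraph), we form a fresh star-subgraph consisting of $\varphi_C$ together with all of its currently free neighbors. Otherwise, since $|N(\varphi_C)|\geq 2g/\eps$, at least one neighbor of $\varphi_C$ already belongs to a previously formed subgraph $\mx\in\mathbb{X}_1$, and we attach $\varphi_C$ (together with any remaining free neighbors of $\varphi_C$) to that $\mx$ via one $\me_i$-edge, subject to a quota argument described below. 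Any node of $\mv^{\high+}_i \setminus \mv^{\high}_i$ is, by definition, a neighbor of some node in $\mv^{\high}_i$, and hence is absorbed in one of the two rules above, yielding Item (2). Item (1) (trees) is preserved because each step either creates a star or attaches a new subtree along a single $\me_i$-edge.

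The augmented-diameter bounds in Item (3) rest on two elementary facts about $\mg_i$. First, each node weight equals $\Phi(C) \le g L_{i-1} = g\eps L_i$, by combining property (\hyperlink{P3}{P3}) with the \hyperlink{PD}{PD Invariant} applied at level $i-1$. Second, each $\me_i$-edge has weight in $[L_i/(1+\eps), L_i)$. A depth-one star centered at $\varphi_C$ thus has augmented diameter at most $g\eps L_i + 2L_i + 2g\eps L_i = (2+3g\eps)L_i \le 5L_i$ once $\eps \leq 1/g$, and the lower bound $\adm(\mx)\geq L_i$ follows from the presence of any $\me_i$-edge of weight at least $L_i/(1+\eps) \geq (1-\eps)L_i$, which lies above $L_i$ after an $O(\eps)$-rescaling of $g$ in the final accounting. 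The size bound $|\mv(\mx)|\geq 2g/\eps$ (Item (4)) is immediate for the fresh stars, and is inherited in the attachment case because $\mx$ already satisfied this bound before $\varphi_C$ was absorbed into it.

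The main obstacle is controlling the augmented diameter under cascading attachments: a single subgraph $\mx$ could in principle be enlarged many times, and each attachment lengthens the longest tree-path through $\mx$ by up to one $\me_i$-edge plus two node-weights, i.e., by roughly $(1+2g\eps)L_i$. The fix is to attach a strict constant-size quota (concretely, at most a few attachments per star), and whenever the quota is exceeded we simply \emph{form a new star} around the current $\varphi_C$ instead. This is always feasible: if $\varphi_C$ has fewer than $2g/\eps$ free neighbors, then at least $2g/\eps$ of its neighbors lie in already-formed subgraphs; a counting argument over the quota forces one of those subgraphs to be attach-eligible, or else $\varphi_C$ itself retains enough free neighbors to seed a new star. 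The constant quota, together with the depth-one structure of each original star, caps $\adm(\mx)$ at a small constant times $L_i$, comfortably within the budget $13L_i$ (any tuning of the quota can be absorbed into the constant $g=31$ fixed in \Cref{lm:Clustering}).

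The runtime is $O(|\mv_i|+|\me_i|)$: we compute $\me_i$-degrees in a single pass to identify $\mv^{\high}_i$ and then $\mv^{\high+}_i$; we maintain a boolean flag and a pointer-to-subgraph per node; and for each $\varphi_C \in \mv^{\high}_i$ we scan its adjacency list once to enumerate free neighbors and to locate a candidate subgraph for attachment. The augmented diameter of each subgraph is updated incrementally in $O(1)$ per added node (we know the maximum hop-distance from the center by construction, and all node- and edge-weights are available from $\mg_i$), so no all-pairs recomputation is ever required. In particular, no \textsc{Union-Find} machinery is invoked, since each node is placed exactly once and carries a direct pointer to its subgraph thereafter.
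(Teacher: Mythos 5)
Your high-level idea — form stars around high-degree nodes, attach the leftovers — is the right family, but the paper solves the cascading-depth problem you correctly identified in a structurally different and, more importantly, a correct way, whereas your ``quota'' patch does not go through.

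The paper first computes a maximal subset $\mathcal{I}\subseteq \mv^{\high}_i$ whose closed $\me_i$-neighborhoods are pairwise disjoint, and seeds one star per node of $\mathcal{I}$. Maximality then gives a structural guarantee that does the work your quota was meant to do: every $\varphi\in\mv^{\high}_i\setminus\mathcal{I}$ has at least one $\me_i$-neighbor already inside some star (step~2), and every node of $\mv^{\high+}_i\setminus\mv^{\high}_i$ then has a grouped neighbor (step~3). Each of steps 2 and 3 attaches single nodes by single edges, so the hop-diameter is at most $6$ by construction, with no counting needed. In your version, the depth bound is supposed to come from a constant attachment quota, with a fallback of ``form a fresh star.'' But that fallback is not always available: you are in the attach branch precisely because $\varphi_C$ has fewer than $2g/\eps$ free neighbors, so you cannot seed a new star from $\varphi_C$. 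Your ``counting argument over the quota forces one of those subgraphs to be attach-eligible'' is not justified — $\varphi_C$ could have a single non-free neighbor sitting in a single subgraph whose quota is exhausted, at which point your algorithm has no legal move. (The intermediate claim that $\varphi_C$ then has $\ge 2g/\eps$ neighbors already in subgraphs is also false when $\deg_{\me_i}(\varphi_C)$ is exactly $2g/\eps$.) This is a genuine gap, not a matter of constants: without the maximal-independent-set preprocessing, greedy attachment order can produce chains of unbounded hop length, and the quota does not repair this.

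A smaller but real issue is the lower bound $\adm(\mx)\ge L_i$. You invoke a single $\me_i$-edge of weight at least $L_i/(1+\eps)\ge(1-\eps)L_i$ and wave at ``rescaling $g$,'' but $(1-\eps)L_i<L_i$, so one edge does not suffice. The paper instead observes that every star contains at least two $\me_i$-edges (since the center has $\ge 2g/\eps \ge 2$ neighbors), so an augmented diameter path through the center already has weight $\ge 2\cdot L_i/2=L_i$. Your stars also have $\ge 2g/\eps$ leaves, so the same two-edge argument would have worked — but you need to make it explicitly, rather than appeal to a rescaling that does not exist. Items (1), (2), (4) and the $O(|\mv_i|+|\me_i|)$ running-time accounting in your writeup are fine modulo the above: once the construction is made well-defined, the tree property, coverage, size bound, and single-pass implementation all follow as you describe.
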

\begin{proof} Let $\mathcal{J} = (\mv_i,\me_i)$ be the subgraph of $\mg_i$ with the same vertex set and with edge set $\me_i$. Let $\mathcal{N}_{\mathcal{J}}(\varphi)$ be the set of neighbors of a node $\varphi$ in $\mathcal{J}$, and  $\mathcal{N}_{\mathcal{J}}[\varphi] = \mathcal{N}_{\mathcal{J}}(\varphi)\cup \{\varphi\}$.   We construct $\mathbb{X}_1$ in three steps; initially, $\mathbb{X}_1 = \emptyset$.
	\begin{enumerate}
		\item[(1)] Let $\mathcal{I}$ be  a \emph{maximal} set of nodes in $\mathcal{V}^{\high}$ such that for any two nodes $\varphi_{1},\varphi_{2} \in \mathcal{I}$,  $\mathcal{N}_{\mathcal{J}}[\varphi_{1}] \cap \mathcal{N}_{\mathcal{J}}[\varphi_{2}] = \emptyset$. (We can construct $\mathcal{I}$ greedily by adding one node from $\mathcal{V}^{\high}$ at a time to $\mathcal{I}$, and deleting all the nodes in the second neighborhood of the added node.) For each node $\varphi \in \mathcal{I}$, we form a subgraph $\mx$ that consists of $\varphi$, its neighbors $\mathcal{N}_{\mathcal{J}}[\varphi]$, and all incident edges in $\me_i$ of $\varphi$. We then add $\mx$ to $\mathbb{X}_1$.
		
		\item[(2)]  For every node $\varphi \in \mv^{\high}_i\setminus \mathcal{I}$, we do the following. Observe that $\varphi$ must have a neighbor $\varphi'$ that is already grouped to a subgraph, say $\mx \in \mathbb{X}_1$; if there are multiple such neighbors, we pick one of them arbitrarily. We add $\varphi$ and the edge $(\varphi,\varphi')$ to $\mathcal{X}$. Observe that every node in $\mv^{\high}_i$ is grouped to some subgraph at the end of this step.
		\item[(3)]  For each node $\varphi$ in $\mv_i^{\high+}$ that has not been grouped to a subgraph in steps (1) and (2), there must be at least one neighbor, say $\varphi'$, of $\varphi$ that is grouped in step (1) or step (2) to a subgraph $\mx \in \mathbb{X}_1$; if there are multiple such nodes, we pick one of them arbitrarily. We then add $\varphi$ and the edge $(\varphi,\varphi')$ to $\mx$. 
	\end{enumerate}
This completes the construction of $\mathbb{X}_1$. We now show that subgraphs in $\mathbb{X}_1$ have all desired properties.

Observe that Items (1) and (2) follow directly from the construction. For Item (4), we observe that every subgraph $\mx \in \mathbb{X}_1$ is created in step (1) and hence, contains a node $\varphi \in \mv^{\high}_i$ and all of its neighbors (in $\mathcal{J}$) by the definition of $\mi$. Thus, $|\mv(\mx)|\geq 2g/\eps$ since $\varphi$ has at least $2g/\eps$ neighbors.  

For Item (3), we observe that each subgraph $\mx\in \mathbb{X}_1$ after step (3) has hop-diameter\footnote{The {\em hop-diameter} of a graph is the maximum hop-distance over all pairs of vertices, where the {\em hop-distance} between a pair of vertices is the minimum number of edges over all paths between them.} at least $2$ and at most $6$. Recall that every edge $\mbe\in \me_{i}$ has weight of at most $L_i$, and every node has weight of at most $g\eps L_i$, which is at most $L_i$ since $\eps \leq 1/g$. Thus, $\adm(\mx) \leq 7 g\epsilon L_i + 6L_i ~\leq~ 13L_i$. Furthermore, since every edge $\mbe\in \me_{i}$ has weight of at least $L_i/(1+\psi)\geq L_i/2$ and $\mx$ has at least two edges in $\me_i$,  $\adm(\mx)\geq 2(L_i/2) = L_i$; this implies Item (3). 
  
For the construction time, first note that $\mathcal{I}$ can be constructed via a greedy linear-time algorithm; hence step (1) can be carried out in $O(|\mathcal{V}_i| + |\mathcal{E}_i|)$ time. Steps (2) and (3)  can be implemented within this time in a straightforward way; this implies the claimed running time.
%\qued
\end{proof}

%Given a tree $T$, we say that a node $x\in T$ is \emph{$T$-branching} if it has a degree at least 3 in $T$.  For brevity, we shall omit the prefix $T$ in ``$T$-branching'' whenever this does not lead to confusion.  Given a forest $F$, we say that $x$ is \emph{$F$-branching} if it is $T$-branching for some tree $T\subseteq F$. The construction of Step 2 is described in the following lemma.

Given a forest $F$, we say that $x$ is \emph{$F$-branching} if it has degree at least $3$ in $F$. For brevity, we shall omit the prefix $F$ in ``$F$-branching'' whenever this does not lead to confusion. The construction of Step 2 is described in the following lemma.

\paragraph{Step 2} In this step, we form subtrees of $\msttilde_{i}$, each of which contains at least one branching node. 

\begin{lemma}[Step 2]\label{lm:Clustering-Step2} Let $\Ftilde^{(2)}_i$ be the forest obtained from $\msttilde_{i}$ by removing every node in $\mv^{\high+}_i$ (defined in \Cref{lm:Clustering-Step1}). We can construct in $O(|\mv_i|)$ time a collection $\mathbb{X}_2$ of subtrees of $\Ftilde^{(2)}_i$ such that for every $\mx\in \mathbb{X}_2$:
	\begin{enumerate}[noitemsep]
		\item[(1)] $\mx$ is a tree and has an $\mx$-branching node.
		\item[(2)] $L_i \leq \adm(\mx)\leq 2L_i$.
		\item[(3)] $|\mv(\mx)| = \Omega(\epsilon^{-1})$  when $\epsilon \leq 1/g$. 
		\item[(4)] Let $\Ftilde^{(3)}_i$ be obtained from $\Ftilde^{(2)}_i$ by removing every node contained in subgraphs of $\mathbb{X}_2$. Then, for every tree $\Ttilde \subseteq \Ftilde^{(3)}_i$, (4a) $\adm(\Ttilde)\leq 6L_i$ or (4b) $\Ttilde$ is a path.
	\end{enumerate}
\end{lemma}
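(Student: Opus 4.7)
My plan is to process each connected component (tree) $\tilde T$ of the forest $\tilde F^{(2)}_i$ independently, and within each tree iteratively extract clusters around branching nodes. The two ``trivial'' cases are handled immediately: if $\tilde T$ has no branching node, then $\tilde T$ is a path and we leave it in the residual forest (giving (4b)); if $\tilde T$ has $\adm(\tilde T)\le 6L_i$, we also leave it alone (giving (4a)). The interesting case is when $\tilde T$ contains a branching node \emph{and} $\adm(\tilde T)>6L_i$; here we must extract clusters.

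For the extraction, I would repeatedly pick a \emph{peripheral} branching node $b$ of the current tree --- concretely, a branching node such that at most one of the subtrees hanging from $b$ (after viewing $b$ as a root) contains another branching node or has augmented depth from $b$ exceeding $3L_i$. Such a $b$ exists: pick a diameter-path $u$-$v$ of $\tilde T$ (with $\adm(\tilde T)>6L_i$) and take the $\tilde T$-branching node closest to $u$ along that path; the $u$-side then has no branching node, and by the choice of the diameter path at most one hanging subtree at $b$ can be ``large''. Given $b$, I grow the cluster $\mx$ greedily out from $b$: include $b$, then include all short subtrees hanging from $b$, and finally extend along the ``long'' direction in order of augmented distance from $b$, adding one node at a time, until the augmented diameter first reaches $[L_i,\,2L_i]$. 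The key quantitative fact I will use throughout is that edges in $\msttilde_i$ have weight at most $\bar w \le L_0 \le \eps L_i$ and level-$i$ node potentials are at most $g L_{i-1}=g\eps L_i$, so each node added extends the augmented diameter by at most $(g+1)\eps L_i$; hence the final increment above $L_i$ is at most $(g+1)\eps L_i \le L_i$ when $\eps\le 1/(g+1)$, which simultaneously certifies the upper bound $\adm(\mx)\le 2L_i$ and the lower bound $|\mv(\mx)|\ge 1/((g+1)\eps)=\Omega(1/\eps)$, yielding properties (2) and (3). Property (1) holds because the construction retains all of $b$'s incident directions into the cluster (via the ``all short subtrees'' step) together with an extension into the long direction, so $b$ has degree $\ge 3$ inside $\mx$.

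The most delicate part, and the real obstacle, is maintaining property (4) under repeated extractions. After removing an extracted $\mx$ from $\tilde T$, the residual breaks into several components. By the peripheral choice of $b$, all but at most one of these components (i) contain no branching node or (ii) have augmented diameter already bounded by $3L_i + 2L_i \le 5L_i \le 6L_i$, so they already satisfy (4a) or (4b); the remaining component is a strictly smaller tree (with at least one branching node removed or with strictly smaller augmented diameter), so a termination / potential argument on ``number of branching nodes plus size'' shows the loop halts, and when it halts every residual tree satisfies (4a) or (4b). I would package the above as an inductive invariant on the residual forest: after every extraction, each component is either a path, has $\adm\le 6L_i$, or contains a branching node and has $\adm>6L_i$ (and is strictly smaller than before).

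For the $O(|\mv_i|)$ runtime, I would precompute, for each node of $\tilde F^{(2)}_i$, the augmented height into each of its incident subtrees using a single post-order traversal (linear time since $\sum|\mv(\tilde T)|=O(|\mv_i|)$), maintain the list of branching nodes and a diameter/peripheral branching-node pointer for each component, and do the extraction by walking outward from $b$ along the cluster-growing rule, charging the work to the nodes added (at most $|\mv(\mx)|$ work per extraction). Updating pointers after extraction only requires touching the $O(1)$ components created by removing $\mx$ near their new boundaries, each in $O(|\mv(\mx)|)$ time. Summing over all extractions and all trees gives $O(|\mv_i|)$ total time, as needed.
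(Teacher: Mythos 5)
The paper's proof of this lemma is considerably simpler than yours: pick an \emph{arbitrary} branching node $\varphi$ of a long tree, grow a ball of augmented radius roughly $L_i$ around $\varphi$ (traverse from $\varphi$ and truncate at augmented distance $L_i$), extract the ball as one $\mx$, and repeat. Since every neighbor of $\varphi$ is at augmented distance at most $\bar w + g\eps L_i \ll L_i$, the ball automatically contains at least one node in every direction out of $\varphi$, so $\varphi$ is $\mx$-branching and property (1) needs no special choice of $\varphi$. Property (4) is then immediate: the loop only halts once no long tree has a branching node, and it does halt since each extraction removes $\Omega(1/\eps)$ nodes.

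Your proposal hinges on a peripheral-branching-node existence claim that is false, and this causes a real gap. Take a star-shaped tree: a center $c$, which is the only branching node, with $k \ge 3$ arms each of which is a path of augmented depth greater than $3 L_i$. Every hanging subtree at $c$ is then ``large'' by your definition (depth $> 3L_i$), so $c$ is not peripheral, and there is no other branching node to try; picking the branching node closest to $u$ on a diameter path just gives $c$ again, and the $u$-side of $c$ is itself a long arm, hence large. Consequently your growing rule --- include $b$, include all short subtrees (there are none), extend along ``the'' long direction --- produces a cluster in which $c$ has degree $1$, violating property (1). There is also an independent overshoot problem in your argument for property (2): a ``short'' subtree can have depth up to $3L_i$, so unconditionally including two short subtrees already pushes $\adm(\mx)$ near $6L_i$; your per-node increment bound of $(g+1)\eps L_i$ controls only the one-node-at-a-time extension phase, not the ``include all short subtrees'' step, which can jump the augmented diameter by an additive $\Theta(L_i)$ in a single move. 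Finally, the runtime sketch (maintaining a diameter/peripheral pointer per component and updating in $O(|\mv(\mx)|)$ time after a split) is not justified, since recomputing such a pointer for a newly-created residual component costs time proportional to that component's size, not to $|\mv(\mx)|$; the paper avoids this entirely by only keeping a list of branching nodes, which never gains members because node deletion can only decrease degrees.
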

\begin{proof}
	 We say that a tree $\Ttilde \in \Ftilde^{(2)}_i$ is  \emph{long} if $\adm(\Ttilde)\geq 6L_i$ and \emph{short} otherwise. We construct $\mathbb{X}_2$, initially empty, as follows:
	 \begin{itemize}
	 	\item While  there  exists a a long tree  $\Ttilde$ of $\Ftilde^{(2)}_i$ that has at least one $\Ttilde$-branching node, say $\varphi$, we traverse $\Ttilde$ (by increasing distances) starting from $\varphi$ and {\em truncate} the traversal at nodes whose augmented distance from $\varphi$ is at least $L_i$, which will be the leaves of the subtree. (The exact implementation details are delayed until the end of this proof.) 
	 	As a result, the augmented radius (with respect to the center $\varphi$) of the subtree induced by the visited (non-truncated) nodes is at least $L_i$ and at most $L_i + \bar{w} + g\epsilon L_i$. (Here, $\bar{w}$ is an upper bound on the weights of $\msttilde_{i}$ edges, and $ g\epsilon L_i$ is an upper bound on node weights.)  We then form a subgraph, say $\mx$, from the subtree of $\tilde{F}^{(2)}_i$ induced by the visited nodes, add $\mx$ to $\mathbb{X}_2$, remove every node of $\mx$ from $\Ttilde$, and update  $\Ftilde^{(2)}_i$.
	 \end{itemize}
 	We observe that Item (1) follows directly from the construction. Since the algorithm only stops when every long tree has no branching node, meaning that it is a path, Item (4) is satisfied.  We now show Items (2) and (3).
 	
	By construction, $\mx$ is a tree of augmented radius at least $L_i$ and at most $L_i + g\epsilon L_i +\bar{w}$, hence $L_i ~\le~ \adm(\mx) ~\leq~ 2(L_i + g\epsilon L_i +\bar{w}) ~\leq~ 6L_i$ since $\bar{w} < L_i$ and $\epsilon \leq 1/g$; this implies Item (2). 
	
	 Let $\md$ be an augmented diameter path of $\mx$; $\adm(\md)\geq L_i$ by construction. Note that every edge has weight of at most $\bar{w} \leq L_{i-1}$ and every node has weight in $[L_{i-1},gL_{i-1}]$ by \hyperlink{P3'}{property (P3')}. Thus,  $\md$ has at least $\frac{\adm(\md)}{2gL_{i-1}} ~\geq~ \frac{L_i}{2g\eps L_i} = \Omega(\epsilon^{-1})$ nodes; this implies Item (3).
	
	 It remains to show that the construction of $\mathbb{X}_2$ can be implemented efficiently. First, we construct $\Ftilde^{(2)}_i$ by simply going through every node in $\mv_i$ and remove nodes that are grouped in $\mv^{\high+}_{i}$. 	We maintain a list $\mathcal{B}$ of branching nodes of $\Ftilde^{(2)}_i$; all branching nodes can be found in $O(|\mv(\Ftilde^{(2)}_i)|) = O(|\mv_i|)$ time. Note that $\Ftilde^{(2)}_i$ changes during the course of the construction. Initially, nodes in $\mathcal{B}$ are \emph{unmarked}. 
\begin{tcolorbox}	
    While $|\mathcal{B}| \not= \emptyset$:
        \begin{itemize}[noitemsep]
            \item Let $\varphi$ be a node in $\mathcal{B}$.
            \item  If $\varphi$ is marked or no longer is a branching node (of some tree in current $\Ftilde^{(2)}_i$).
            \begin{itemize}
                \item Remove $\varphi$ from $\mathcal{B}$.
            \end{itemize}
            \item  Else ($\varphi$ is an unmarked, branching node)
            \begin{itemize}
                \item  Let $\Ttilde$ be the tree containing $\varphi$.
                \item  Traverse $\Ttilde$ starting from $\varphi$ until the augmented radius of the subtree induced by visited nodes of $\Ftilde^{(2)}_i$, denoted by $\Ttilde_{\varphi}$, is at least $L_i$. It is possible that all nodes of the tree $\Ttilde$ are visited before the radius gets to be $L_i$, in which case we have $\Ttilde_{\varphi} = \Ttilde$ and $\Ttilde_{\varphi}$ will not be added as a subgraph of $\mathbb{X}_2$.
                \item  Mark every node of $\Ttilde_\varphi$ and remove every node in $\Ttilde_\varphi$ from $\Ftilde^{(2)}_i$.
            \end{itemize}
        \end{itemize}
\end{tcolorbox}

	Clearly, maintaining the list $\mathcal{B}$ throughout this process can be carried out in $O(|\mv(\Ftilde^{(2)}_i)|)$ time.
	Other than that, each iteration of these three steps can be implemented in time linear in the number of nodes visited during that iteration plus the number of edges in $\Ftilde^{(2)}_i$ incident to those nodes; also note that once a node is visited, it will no longer be considered in subsequent iterations. It follows that the total running time is $O(|\mv_i|)$.
	%\qued
\end{proof}

 The goal of constructing a subgraph from a branching node $\varphi$ is to guarantee that the $\tilde{T}$-branching node $\varphi$ is also a $\mathcal{X}$-branching as in item (1) of \Cref{lm:Clustering-Step2}. Thus, there must be at least one neighbor, say $\varphi'$, of $\varphi$ that does not belong to the augmented diameter path of $\mx$. Then we could show that the amount of corrected potential change $\Delta^{+}_{i+1}(\mx)$ is at least $ \omega(\varphi')\geq L_{i-1} = \eps L_i$. This will ultimately help us show that the corrected potential change $\Delta^{+}_{i+1}(\mx)$  is $\Omega(\eps^{2} |\mv(\mx)|L_i)$.

\paragraph{Step 3: Augmenting $\mathbb{X}_1\cup \mathbb{X}_2$}   Let $\Ftilde^{(3)}_i$ be the forest obtained in Item (4b) in \Cref{lm:Clustering-Step2}. Let $\mathcal{A}$ be the set of all nodes $\varphi$ in $\Ftilde^{(3)}_i$ such that $\varphi$ is in a tree $\Ttilde \in \Ftilde^{(3)}_3$ of augmented diameter at least $6L_i$ and $\varphi$ is a branching node in $\msttilde_i$.  For each node $\varphi \in \mathcal{A}$ such that $\varphi$ is connected to a node, say $\varphi'$,  in a subgraph $\mx \in \mathbb{X}_1\cup \mathbb{X}_2$ via an $\msttilde_{i}$ edge $\mathbf{e}$, we add $\varphi$ and $\mathbf{e}$ to $\mx$.  We note that $\varphi'$ exists since $\varphi$ has degree at least $3$ in  $\msttilde_i$. (If there are many such nodes $\varphi'$, we choose an arbitrary one.)

\begin{lemma}\label{lm:Clustering-Step3} The augmentation in Step 3 can be implemented in $O(|\mv_i|)$ time and increases the augmented diameter of each subgraph in  $\mathbb{X}_1\cup \mathbb{X}_2$ by at most $4L_i$ when $\eps \leq 1/g$. \\
Furthermore, let $\Ftilde^{(4)}_i$ be the forest obtained from $\Ftilde^{(3)}_i$ by removing every node in $\mathcal{A}$. Then, for every tree $\Ttilde \subseteq \Ftilde^{(4)}_i$, either:
	\begin{enumerate}[noitemsep]
		\item[(1)]$\adm(\Ttilde)\leq 6L_i$ or
		\item[(2)] $\Ttilde$ is a path such that (2a)  every node in $\Ttilde$ has \emph{degree at most $2$} in $\msttilde_{i}$ and (2b) at least one endpoint $\varphi$ of  $\Ttilde$ is connected via an $\msttilde_{i}$ edge to a node $\varphi'$ in a subgraph of $\mathbb{X}_1\cup \mathbb{X}_2$, unless $\mathbb{X}_1\cup \mathbb{X}_2 = \emptyset$. We say that  $\Ttilde$ is a \emph{suspended path} of $\msttilde_i$.
	\end{enumerate}
\end{lemma}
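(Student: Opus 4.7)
The plan is to address the three claims of the lemma in order. For the runtime I would first compute $\mathcal{A}$ by a linear scan for $\msttilde_i$-branching nodes combined with the per-tree augmented diameters of $\Ftilde^{(3)}_i$ (already available from Step~2, or else obtained by two DFS passes per tree as in the proof of \Cref{lm:Clustering-Step2}); then for each $\varphi\in\mathcal{A}$ I would scan its $\msttilde_i$-neighbors and append $\varphi$ to the first $\mx\in\mathbb{X}_1\cup\mathbb{X}_2$ containing such a neighbor $\varphi'$. Summing $\deg_{\msttilde_i}(\varphi)$ over $\mathcal{A}$ is $O(|\mv_i|)$. Existence of $\varphi'$ is automatic: by Item~(4) of \Cref{lm:Clustering-Step2} the tree of $\Ftilde^{(3)}_i$ containing $\varphi$ must be a path (its augmented diameter is $\geq 6L_i$), so $\varphi$ has $\Ftilde^{(3)}_i$-degree $\leq 2$ while its $\msttilde_i$-degree is $\geq 3$, forcing at least one $\msttilde_i$-neighbor of $\varphi$ to have been deleted during Steps~1--2 and hence to live in some subgraph of $\mathbb{X}_1\cup\mathbb{X}_2$.

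For the diameter bound I would observe that each $\varphi\in\mathcal{A}$ is attached to its host subgraph $\mx$ as a new \emph{leaf} via a single $\msttilde_i$-edge $\mbe$ with $\omega(\mbe)\leq\bar w\leq L_{i-1}=\eps L_i$, and $\omega(\varphi)\leq g\eps L_i$ by property (\hyperlink{P3}{P3}); hence the augmented weight that any augmented-diameter path spends on a single such new leaf is at most $(g+1)\eps L_i$. Since any path in the augmented subgraph can touch at most two new leaves (one near each endpoint), the new augmented diameter is at most $\adm(\mx)+2(g+1)\eps L_i\leq\adm(\mx)+4L_i$ whenever $\eps\leq 1/g$, using $2(g+1)/g\leq 4$ for $g\geq 1$.

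For the structural claim, let $\Ttilde\subseteq\Ftilde^{(4)}_i$ and let $\Ttilde'\supseteq\Ttilde$ be the tree of $\Ftilde^{(3)}_i$ that contains it. By Item~(4) of \Cref{lm:Clustering-Step2} either $\adm(\Ttilde')\leq 6L_i$ --- in which case no node of $\Ttilde'$ lies in $\mathcal{A}$, so $\Ttilde=\Ttilde'$ and condition~(1) holds --- or $\Ttilde'$ is a path. In the latter case, when $\adm(\Ttilde')>6L_i$, every $\msttilde_i$-branching node on $\Ttilde'$ lies in $\mathcal{A}$, and deleting them splits $\Ttilde'$ into subpaths whose nodes all have $\msttilde_i$-degree $\leq 2$, establishing (2a); condition~(2b) follows because any endpoint of a subpath that arose from an internal cut is $\msttilde_i$-adjacent to the removed branching node, which Step~3 has just placed into some $\mx\in\mathbb{X}_1\cup\mathbb{X}_2$.

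The main bookkeeping hurdle I expect is the borderline situation in which $\Ttilde'$ itself is a long path containing no $\msttilde_i$-branching node, so $\Ttilde=\Ttilde'$ and the ``cut at a branching node'' argument above does not supply (2b) directly. Here I would argue that at least one endpoint of $\Ttilde'$ has an $\msttilde_i$-neighbor outside $\Ftilde^{(3)}_i$ --- necessarily inside $\mathbb{X}_1\cup\mathbb{X}_2$ --- unless both endpoints of $\Ttilde'$ are $\msttilde_i$-leaves; in the latter subcase connectedness of the spanning tree $\msttilde_i$ forces $\Ttilde'=\msttilde_i$, so no node was ever removed by Steps~1--2, yielding $\mathbb{X}_1\cup\mathbb{X}_2=\emptyset$, which is precisely the exception stated in the lemma.
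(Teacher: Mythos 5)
Your proof is correct and follows essentially the same route as the paper: the per-leaf $(g+1)\eps L_i$ accounting for the diameter bound, the $\msttilde_i$-degree $\ge 3$ versus $\Ftilde^{(3)}_i$-degree $\le 2$ comparison to guarantee the existence of $\varphi'$, and connectedness of $\msttilde_i$ for condition (2b). The only visible difference is that your (2b) case analysis is more explicit --- treating the cut-at-branching-node and the borderline long-path subcases separately --- whereas the paper disposes of (2b) in one line, noting that connectedness of $\msttilde_i$ (given $\mathbb{X}_1\cup\mathbb{X}_2\neq\emptyset$) yields an $\msttilde_i$-edge leaving $\Ttilde$, and the degree bound from (2a) forces its $\Ttilde$-endpoint to be a path endpoint.
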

\begin{proof}
	Since every $\msttilde_{i}$ edge has weight of at most $\bar{w}\leq L_i$ and every node has weight of at most $g\eps L_i \leq L_i$ when $\eps \leq 1/g$, the augmentation in Step 3 increases the augmented diameter of each subgraph in  $\mathbb{X}_1\cup \mathbb{X}_2$ by at most $2(\bar{w} + 2g\eps L_i) ~\leq~ 4L_i$. 
	
	For the implementation, we first find the set $\mathcal{A}$ in $O(|\mv_i|)$ time in a straightforward way. Then for each node $\varphi \in \mathcal{A}$, we can check its neighbors in $\msttilde_{i}$ to find a node $\varphi'$ as described in Step 3; indeed, we only need to check at most three neighbors of $\varphi$. Thus, the running time of Step 3 is $O(|\mv_i|)$.
	
	Items (1) and (2a) follow directly from the construction. For Item (2b), we note that since $\msttilde_{i}$ is a spanning tree (and hence connected), $\Ttilde$ must be connected via an $\msttilde_{i}$ edge, say $\mathbf{e}$, to another node not in $\Ttilde$, assuming that $\mathbb{X}_1\cup \mathbb{X}_2 \not= \emptyset$.  Since every node in $\Ttilde$ has degree at most $2$, the endpoint of $\mathbf{e}$ in $\Ttilde$ must be one of the two endpoints of $\Ttilde$, as claimed. %\qued
\end{proof}

The main intuition behind Step 3 is to guarantee properties (2a) and (2b) for every long path $\Ttilde \in \Ftilde^{(4)}_i$. Recall that in Item (3) of \Cref{def:GiProp}, we guarantee that $\mg_i$ has no removable edge. Thus, any edge between two nodes in $\Ttilde$ is not removable. Later, we use this property to argue that the corrected potential change $\Delta^+_{i+1}(\mx)$ is non-trivial for every subgraph $\mx$ formed in the construction of Step 4 below.

\paragraph{Step 4: Grouping suspended subpaths} Let $\Ftilde^{(4)}_i$ be the forest obtained from $\Ftilde^{(3)}_i$ as described in \Cref{lm:Clustering-Step3}. By Item (2b) in  \Cref{lm:Clustering-Step3}, every tree of augmented diameter at least $6L_i$ of $\Ftilde^{(4)}_i$ is a simple path, which we call a \emph{long path}.

\begin{quote}
	\textbf{Red/Blue Coloring.~}\hypertarget{RBColoring}{}  Given a suspended path $\Ptilde\subseteq \Ftilde^{(4)}_i$, we color their nodes red or blue. If a node has augmented distance at most $L_i$ from at least one of the path's endpoints, we color it red; otherwise, we color it blue. Observe that each red node belongs to the suffix or prefix of $\mathcal{P}$; the other nodes are colored blue. 
\end{quote}

The construction of Step 4 is described by the following lemma. We include the proof of all claimed properties except Item (4), which will be delayed to \Cref{subsec:ProofStep3}, as its proof is more complicated.

\begin{lemma}[Step 4]\label{lm:Clustering-Step4} Let $\Ftilde^{(4)}_i$ be the forest obtained from $\Ftilde^{(3)}_i$ as described in \Cref{lm:Clustering-Step3}. We can construct in $O((|\mv_i| + |\me_i|)\epsilon^{-1})$ time a collection $\mathbb{X}_4$ of subgraphs of $\mg_i$ such that for every $\mx\in \mathbb{X}_4$:
	\begin{enumerate}[noitemsep]
		\item[(1)] $\mx$ contains a single edge in $\me_i$.
		\item[(2)] $L_i \leq \adm(\mx)\leq 5L_i$.
		\item[(3)]  $|\mv(\mx)| = \Theta(\epsilon^{-1})$ when $\epsilon \leq 1/(8(g+1))$. 
		\item[(4)] $\Delta_{i+1}^{+}(\mx) = \Omega(\eps^2 |\mv(\mx)| L_i)$.
		\item[(5)] Let $\Ftilde^{(5)}_i$ be obtained from $\Ftilde^{(4)}_i$ by removing every node contained in subgraphs of $\mathbb{X}_4$. If we apply \hyperlink{RBColoring}{Red/Blue Coloring} to each suspended path of augmented diameter at least $6L_i$ in $\Ftilde^{(5)}_i$, then there is no edge in $\me_i$ that connects two blue nodes in $\Ftilde^{(5)}_i$.
	\end{enumerate}
\end{lemma}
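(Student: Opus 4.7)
My plan is to process blue-blue cross edges greedily. I maintain a list of candidate edges $\mathbf{e} = (\varphi_1, \varphi_2) \in \mathcal{E}_i$ both of whose endpoints are blue nodes of long paths in $\tilde{F}^{(4)}_i$. For each candidate that is still ``live'' (neither endpoint already consumed into an earlier subgraph), I form $\mathcal{X} \in \mathbb{X}_4$ consisting of $\mathbf{e}$ together with anchor subpaths of augmented weight $\Theta(L_i)$ taken from the host long path on each side of $\varphi_1$ and of $\varphi_2$, obtained by walking outward from each endpoint and truncating the first time the accumulated augmented weight crosses a carefully chosen threshold. I mark all nodes of $\mathcal{X}$ as consumed, remove them from the forest, and iterate.

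Item~(1) and Item~(5) are immediate from the construction: $\mathcal{X}$ has exactly one $\mathcal{E}_i$ edge, and any blue-blue edge that survived in $\tilde{F}^{(5)}_i$ would have been a live candidate, contradicting termination. Item~(3) follows from \hyperlink{P3}{(P3)}: each node has weight at most $g\epsilon L_i$ and each $\tilde{\mst}_i$ edge has weight at most $\bar{w} \leq \epsilon L_i$, so a subpath of augmented weight $\Theta(L_i)$ contains $\Theta(1/\epsilon)$ nodes, provided $\epsilon \leq 1/(8(g+1))$. For Item~(2), the upper bound on $\adm(\mathcal{X})$ comes from a direct summation: the four anchor halves contribute $O(L_i)$ and $\omega(\mathbf{e}) < L_i$; moreover in the same-host case, $\mathbf{e}$ shortcuts the enclosed tree subpath, so $\adm$ is at most half the cycle's perimeter plus an $O(g\epsilon L_i)$ correction, which stays under $5L_i$. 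The lower bound $\adm(\mathcal{X}) \geq L_i$ follows since any augmented-diameter path separating the endpoints of $\mathbf{e}$ picks up $\omega(\mathbf{e}) \geq L_i/(1+\epsilon)$ plus the contributions of the adjacent nodes/edges.

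The main obstacle is Item~(4). Expanding the definition,
\[
\Delta^+_{i+1}(\mathcal{X}) = \Bigl(\sum_{\varphi \in \mathcal{V}(\mathcal{X})} \omega(\varphi) + \sum_{\mathbf{e}' \in \tilde{\mst}_i \cap \mathcal{E}(\mathcal{X})} w(\mathbf{e}')\Bigr) - \adm(\mathcal{X}),
\]
which is the total augmented weight of $\mathcal{X}$ minus the single cross-edge weight $\omega(\mathbf{e})$, minus $\adm(\mathcal{X})$. If $\mathcal{X}$ were just a simple path (two anchors joined by $\mathbf{e}$), its $\adm$ would equal the full augmented weight including $\omega(\mathbf{e})$ and the expression would come out to $-\omega(\mathbf{e}) < 0$; hence the construction has to force extra structure. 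In the same-host case this is automatic: $\mathbf{e}$ and the tree subpath between $\varphi_1$ and $\varphi_2$ form a cycle, and the cycle's augmented diameter is at most half its perimeter (plus an $O(g\epsilon L_i)$ correction), giving a surplus of $\Omega(L_i)$ against the non-removable edge $\mathbf{e}$ whose weight it offsets. In the distinct-host case I expect the construction of $\mathcal{X}$ to be enriched---most likely by appending to it the $\tilde{\mst}_i$ segment connecting the two anchors through the external branching node that the non-removability of $\mathbf{e}$ (Def.~\ref{def:GiProp}, Item~3) forces to exist between the two long paths---so that the resulting shape is sufficiently branching for $\adm(\mathcal{X})$ to lie well below the full augmented weight. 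This case analysis, leveraging non-removability together with the structural guarantees of Steps~1--3, is what I anticipate to be the main technical hurdle, and it is exactly the portion the paper defers to \Cref{subsec:ProofStep3}.

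For the running time, each node lies in at most one output subgraph, each candidate edge is examined $O(1)$ times, and growing each anchor subpath costs $O(1/\epsilon)$ time by Item~(3); the augmented diameters of the resulting subgraphs are computable in linear time per subgraph because they have bounded hop-diameter (being assembled from $O(1/\epsilon)$-node subpaths). Summing gives the claimed $O((|\mathcal{V}_i| + |\mathcal{E}_i|)\epsilon^{-1})$ bound.
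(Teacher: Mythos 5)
Your construction (intervals of augmented radius $\Theta(L_i)$ around each blue endpoint, walked out on both sides) matches the paper's, and your handling of Items~(1), (3), (5), the upper bound in (2), and the running-time count is sound. The gap is in Item~(4), and your own diagnosis of the gap is symptomatic of a conceptual error.

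You set up the wrong case split (same-host vs.\ distinct-host); the paper splits on whether $\mathcal{I}(\nu)\cap\mathcal{I}(\mu)=\emptyset$ (which can happen on a single long path, not just across two). More importantly, in your ``distinct-host'' analysis you worry that $\mathcal{X}$ behaves like a simple path and therefore propose to enrich the subgraph with the $\widetilde{\mst}_i$ segment connecting the two anchors through some external branching node. This enrichment is not in the paper and is not needed. The fix you are looking for is already present in your own construction: since $\nu$ and $\mu$ are \emph{blue}, each anchor extends on \emph{both} sides of its center by $\Theta(L_i)$, so the nodes $\nu$ and $\mu$ have degree~$3$ in $\mathcal{X}$ and $\mathcal{X}$ is a ``barbell,'' not a path. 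Any augmented-diameter path of a barbell either stays in one anchor (leaving the entire other anchor, $\Omega(L_i)$ of augmented weight, off the diameter path) or crosses $\mathbf{e}$ (leaving one half of each anchor, again $\Omega(L_i)$ in total, off the diameter path). In either case the off-diameter mass $\Phi^+(\mathcal{Y})$ exceeds $\frac{5}{4}L_i > \omega(\mathbf{e})$ when $\eps\le\frac{1}{8(g+1)}$, and the $\Omega(|\mv(\mx)|\eps L_i)$ tail follows because each node carries weight at least $L_{i-1}=\eps L_i$. This is exactly \Cref{clm:PotentialY-bound} and the short calculation after it; no external MST detour is required. The non-removability condition (3b) of \Cref{def:GiProp} is only needed in the \emph{overlapping} case $\mathcal{I}(\nu)\cap\mathcal{I}(\mu)\neq\emptyset$ (necessarily same path), where $\mathcal{X}$ contains a cycle and one argues $\omega(\mathcal{P}[\nu,\mu]) - \omega(\mathcal{P}_{\mathbf{e}}) \ge g\eps L_i$ via \Cref{eq:p-vs-pe} to recover the (weaker) $\Omega(\eps^2|\mv(\mx)|L_i)$ bound. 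So your appeal to non-removability is placed in the wrong case, and the barbell case, which you flagged as the ``main technical hurdle,'' is in fact the easier one once you use the two-sided structure of the intervals.
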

\begin{proof} We only apply the construction to paths of augmented diameter at least $6L_i$ in  $\Ftilde^{(4)}_i$, called \emph{long paths}.  
	
	Let $\Ptilde$ be a long path. For each blue node $\varphi \in \Ptilde$, we assign a subpath $\mathcal{I}(\varphi)$ of $\Ptilde$,
	called the \emph{interval of $\varphi$}, which contains every node within an augmented distance (in $\Ptilde$) at most $L_i$ from $\varphi$. By definition, we have:
	
	\begin{claim}\label{clm:Interval-node}
		For any blue node $\nu$, it holds that
		\begin{itemize}[noitemsep]
			\item[(a)] $ (2 - (3g+2)\epsilon)L_i \leq  \adm(\mathcal{I}(\nu))\leq 2L_i $.
			\item[(b)]   	Denote by  $\mi_1$ and $\mi_2$  the two subpaths obtained by removing $\nu$ from the path $\mi(\nu)$. 
			Each of these subpaths has $\Theta(\epsilon^{-1})$ nodes and augmented diameter at least $(1-2(g+1)\epsilon)L_i$.
		\end{itemize}
	\end{claim}
	\begin{proof}  
		(a) The upper bound on the augmented diameter of $\mi(\nu)$ follows directly from the construction.  Thus, it remains to prove the lower bound on $\adm(\mathcal{I}(\nu))$. Let $\Ptilde$ be the path containing $\mathcal{I}(\nu)$. Let $\mu$ be an endpoint of $\mathcal{I}(\nu)$. Let $\mu'$ be the neighbor of $\mu$  in $\Ptilde\setminus \mathcal{I}(\nu)$; $\mu'$ exists since $\nu$ is a blue node (see Figure~\ref{fig:Iv}).  Observe that $\adm(\Ptilde[\nu,\mu'])\geq L_i$. Thus, we have:
		\begin{equation}\label{eq:Step3-Puv}
			\adm( \Ptilde[\nu,\mu])\geq L_i -   \bar{w}- \omega(\mu') \geq  (1- (g+1)\epsilon) L_i
		\end{equation}
		since  $\omega(\mu')\leq g\epsilon L_{i}$ by property (P5)  and $L_i ~ =~  L_{i-1}/\eps ~\geq~ \bar{w}/\eps$ when $i \geq 1$. Thus,
		\begin{equation*}
			\adm(\mathcal{I}(\nu)) \geq 2(1- (g+1)\epsilon) L_i  - \omega(\nu) \geq (2-(3g+2)\epsilon)L_i.
		\end{equation*}
		The first inequality in the above equation is because we count  $\omega(\nu)$ twice in the sum of the augmented diameters of two paths from $\nu$ to each endpoint of $\mi(\nu)$.
		\begin{figure}[!htb]
		    %\vspace{-15pt}
			\begin{center}
				\includegraphics[width=0.4\textwidth]{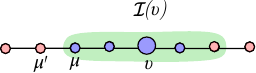}
			\end{center}
			\caption{Nodes in the green shaded region belong to $\mi(\nu)$.}
			%\vspace{-5pt}
			\label{fig:Iv}
		\end{figure}
		
		(b) We focus on bounding $\adm(\mi_1)$; the same bound applies to $\adm(\mi_2)$. We assume w.l.o.g.~that $\mi_1 \subseteq \Ptilde[\nu,\mu]$ and hence $\adm(\mi_1)~\geq~ \adm(\Ptilde[\nu,\mu]) -  \bar{w}- \omega(\nu) ~\geq~ (1- 2(g+1)\epsilon) L_i$.
		
		We now bound $|\mv(\mi_1)|$. The upper bound on the number of nodes of $\mi_{1}$ follows from the fact that $\mi_{1}$ has augmented diameter at most $2L_i$ (see Item (a))  	and each node has weight of at least $L_{i-1} = L_i \epsilon $ by \hyperlink{P3'}{property (P3')}. To show the lower bound on the number of nodes of $\mi_{1}$, we observe that  $\mi_{1}$ has an augmented diameter at least 
		$(1 - (g+1)\epsilon)L_i$, which is at least $L_i/2$ when $\epsilon \leq \frac{1}{2(g+1)}$,
		while each edge in $\mi_1$ has weight of at most $L_{i-1}$ and each node has weight of at most $g L_{i-1}$. It follows that $|\mv(\mi_1)|\geq \frac{\adm(\mi_1)}{(1+g)L_{i-1}} = \Omega(\epsilon^{-1})$. The same bound holds for $|\mv(\mi_2)|$.		
		%\qued
	\end{proof}
	
	We keep track of a list $\mathcal{B}$ of edges in $\me_i$ with both \emph{blue endpoints}. We then construct $\mathbb{X}_4$, initially empty, as follows:
	
	\begin{itemize}
		\item While $\mathcal{B}\not=\emptyset$, we pick an edge $(\nu,\mu)$ with both endpoints blue,  form a subgraph $\mx = \{(\nu,\mu)\cup \mathcal{I}(\nu) \cup \mathcal{I}(\mu)\}$, and add $\mx$ to $\mathbb{X}_4$. We then remove all nodes in  $\mathcal{I}_{\nu} \cup \mathcal{I}_{\mu} $ from the path or two paths containing $\nu$ and $\mu$, update the color of nodes in the new paths to satisfy \hyperlink{RBColoring}{Red/Blue Coloring} and the edge set $\mathcal{B}$.
	\end{itemize}
	
	We observe that Items (1) and (5) follow directly from the construction. For Item (2), we observe by Claim~\ref{clm:Interval-node}  that $\mathcal{I}(v)$ has augmented diameter at most $2L_i$ and at least $L_i$ when $\epsilon \leq \frac{1}{8(g+1)}$, and  the weight of the edge $(\mu,\nu)$ is at most $L_i$. Thus, $L_i \leq \adm(\mx)\leq L_i + 2\cdot 2L_i = 5L_i$, as claimed. Item (3) follows directly from  Claim~\ref{clm:Interval-node} since $|\mathcal{I}(v)| = \Theta(\epsilon^{-1})$ and $|\mathcal{I}(\mu)| = \Theta(\epsilon^{-1})$. The proof of Item (4) is delayed to \Cref{subsec:ProofStep3}. In a nutshell, the proof is divided into two cases: (a) $\mathcal{I}(\nu) \cap  \mathcal{I}(\mu) = \emptyset$ and (b) $\mathcal{I}(\nu) \cap  \mathcal{I}(\mu) \not= \emptyset$. In the former case, we show that $\Delta^{+}_{i+1}(\mx)= \Omega(|\mx_i|\eps L_i)$; the proof is by a straightforward calculation. In the latter case, we show that  $\Delta^{+}_{i+1}(\mx)= \Omega(|\mx_i|\eps^2 L_i)$; the proof crucially uses the fact that $\mg_i$ has no removable edge (see Item (3) in \Cref{def:GiProp}) and that $\eps \leq \frac{1}{8(g+1)}$.
		
	%====================== IMPLEMENTATION ======================%
	
	Finally, we show that the construction of $\mathbb{X}_4$ can be implemented efficiently.  Observe that for each long path $\Ptilde$, coloring all nodes of $\Ptilde$  can be done in $O(|\mv(\Ptilde)|) = (|\mv_i|)$ time.   Since the interval $\mi(\nu)$ assigned to each blue node $\nu$ consists of $O(\epsilon^{-1})$ nodes by Claim~\ref{clm:Interval-node}(b), listing intervals for all blue nodes can be carried out within time $O(|\mv(\Ptilde)|\eps^{-1}) = O(|\mv_i|\eps^{-1})$. For each edge $(\nu,\mu)\in \mathcal{E}_i$, we can check whether both endpoints are blue in  $O(1)$ time. Thus, 
	it takes $O(|\mathcal{E}_i|\epsilon^{-1})$ time to construct  $\mathcal{B}$.

	For each edge $(\nu,\mu) \in \mathcal{B}$ picked in the construction of $\mathbb{X}_4$, forming $\mx = \{(\nu,\mu)\cup \mathcal{I}(\nu) \cup \mathcal{I}(\mu)\}$ takes $O(1)$ time. When removing any such interval $\mathcal{I}(\nu)$ from a path $\Ptilde$, we may create two new sub-paths $\Ptilde_1,\Ptilde_2$, and then need to recolor the nodes following  \hyperlink{RBColoring}{Red/Blue Coloring}. Specifically, some blue nodes in the prefix and/or suffix of $\Ptilde_1,\Ptilde_2$ are colored red; importantly, a node's color may only change from blue to red, but it cannot change in the other direction.

	Since the total number of nodes to be recolored as a result of removing such an interval $\mathcal{I}(\nu)$ is $ O(\epsilon^{-1})$,  the total recoloring running time is $O(|\mv(\Ftilde^{(4)}_i)|\epsilon^{-1}) = O(|\mv_i|\eps^{-1})$.
	To bound the time required for updating the edge set $\mathcal{B}$ throughout this process, we
	note that edges are never added to $\mathcal{B}$ after its initiation. Specifically, when a blue node $\nu$ is 
	recolored as red, we remove all incident edges of $\nu$ from  $\mathcal{B}$, and none of these edges will be considered again; this can be done in $O(\epsilon^{-1})$ time per node $\nu$, since $\nu$ is incident to at most $\frac{2g}{\eps} = O(\frac{1}{\eps})$ edges in $\me_i$ due to the construction of Step 1 (\Cref{lm:Clustering-Step1}). Once a node is added to $\mx$, it will never be considered again. It follows that the total running time required for implementing Step 4 is  $O((|\mv_i| + |\me_i|)\epsilon^{-1})$, as claimed. %\qued		
\end{proof}

\begin{remark}\label{remark:Clustering-Step4} Item (5) of \Cref{lm:Clustering-Step4} implies that for every edge $(\varphi_{C},\varphi_{C'})\in \me_i$ with both endpoints in $\mv(\Ftilde^{(5)}_i)$, at least one of the endpoints must belong to a low-diameter tree of $\Ftilde^{(5)}_i$ or a (red) suffix of a long path in $\Ftilde^{(5)}_i$.
\end{remark}

\begin{observation}\label{obs:Clustering-F5} Every tree $\Ttilde \subseteq \Ftilde^{(5)}_i$ such that $\adm(\Ttilde) \leq 6L_i$ is connected via an $\msttilde_{i}$ edge to a node in some subgraph $\mx \in \mathbb{X}_1 \cup \mathbb{X}_2\cup \mathbb{X}_4$, unless there is no subgraph formed in Steps 1-4, i.e, $ \mathbb{X}_1 \cup \mathbb{X}_2\cup \mathbb{X}_4 = \emptyset$. 
\end{observation}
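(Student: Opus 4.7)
The plan is to exploit the fact that $\msttilde_i$ is a spanning tree of $\mg_i$, and combine it with a clean audit of which nodes of $\msttilde_i$ get removed at each step to form $\Ftilde^{(5)}_i$. The idea is that nodes of $\msttilde_i$ that are not in $\Ftilde^{(5)}_i$ must be in some subgraph formed in Steps 1--4, and therefore, using the connectivity of $\msttilde_i$, any tree $\Ttilde \subseteq \Ftilde^{(5)}_i$ that is not the whole $\msttilde_i$ must be joined via an $\msttilde_i$-edge to such a node.

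First I would enumerate the removed nodes. By construction, $V(\msttilde_i) \setminus V(\Ftilde^{(5)}_i)$ is partitioned into: (a) nodes in $\mv^{\high+}_i$, which by Item~(2) of \Cref{lm:Clustering-Step1} equals $\bigcup_{\mx\in\mathbb{X}_1}\mv(\mx)$; (b) nodes in $\bigcup_{\mx\in\mathbb{X}_2}\mv(\mx)$; (c) nodes in $\mathcal{A}$ removed in Step~3; and (d) nodes in $\bigcup_{\mx\in\mathbb{X}_4}\mv(\mx)$. For (c), I claim every $\varphi\in\mathcal{A}$ is incorporated into a subgraph of $\mathbb{X}_1\cup\mathbb{X}_2$: since $\varphi$ is $\msttilde_i$-branching it has $\ge 3$ neighbors in $\msttilde_i$, yet it belongs to a path of $\Ftilde^{(3)}_i$ and thus has at most $2$ neighbors in $\Ftilde^{(3)}_i$; hence at least one $\msttilde_i$-neighbor of $\varphi$ was already removed, meaning that neighbor lies in $\mv^{\high+}_i\cup\bigcup_{\mx\in\mathbb{X}_2}\mv(\mx) = \bigcup_{\mx\in\mathbb{X}_1\cup\mathbb{X}_2}\mv(\mx)$, which is exactly the $\varphi'$ required by Step~3. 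Consequently, every node removed from $\msttilde_i$ to produce $\Ftilde^{(5)}_i$ belongs to some $\mx \in \mathbb{X}_1\cup\mathbb{X}_2\cup\mathbb{X}_4$.

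Next I would exploit spanning-tree connectivity. Assume $\mathbb{X}_1\cup\mathbb{X}_2\cup\mathbb{X}_4 \neq \emptyset$; then $V(\Ttilde) \subsetneq V(\msttilde_i)$. Because $\msttilde_i$ is a connected spanning tree of $\mg_i$, there exists an edge $\mathbf{e}=(u,v)\in \msttilde_i$ with $u\in V(\Ttilde)$ and $v\notin V(\Ttilde)$. I claim $v\notin V(\Ftilde^{(5)}_i)$: $\Ftilde^{(5)}_i$ is obtained from $\msttilde_i$ by deleting nodes (together with their incident edges), so if both endpoints $u$ and $v$ survived, then $\mathbf{e}$ itself would lie in $\Ftilde^{(5)}_i$, forcing $v$ into the same connected component $\Ttilde$ as $u$ -- a contradiction. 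Therefore $v$ is a removed node, and by the enumeration in the previous paragraph $v$ lies in some subgraph $\mx\in\mathbb{X}_1\cup\mathbb{X}_2\cup\mathbb{X}_4$. The edge $\mathbf{e}\in\msttilde_i$ then witnesses the desired connection between $\Ttilde$ and $\mx$.

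There is no real obstacle here; the only subtle point is verifying (c), that all of $\mathcal{A}$ is actually absorbed into $\mathbb{X}_1\cup\mathbb{X}_2$ during Step~3, since otherwise leftover $\mathcal{A}$-nodes would form a third category of removed nodes unaccounted for by subgraphs. The $\msttilde_i$-degree vs.\ $\Ftilde^{(3)}_i$-degree counting argument above closes that gap and also explains why the exceptional clause ``unless $\mathbb{X}_1\cup\mathbb{X}_2\cup\mathbb{X}_4=\emptyset$'' is genuinely the only escape: if $\mathcal{A}\neq\emptyset$ then necessarily $\mathbb{X}_1\cup\mathbb{X}_2\neq\emptyset$, so the hypothesis $\mathbb{X}_1\cup\mathbb{X}_2\cup\mathbb{X}_4\neq\emptyset$ suffices.
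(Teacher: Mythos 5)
Your proof is correct, and since the paper states this as an unproved observation, you are filling in exactly the details the authors leave implicit. The one genuinely nontrivial step is the one you correctly flag: verifying that every node of $\mathcal{A}$ really does get absorbed into some subgraph of $\mathbb{X}_1 \cup \mathbb{X}_2$ during Step~3, which is what makes the removed-node enumeration clean. The paper dispatches this in one sentence (``We note that $\varphi'$ exists since $\varphi$ has degree at least $3$ in $\msttilde_i$''), and your degree-counting argument --- a $\msttilde_i$-branching node sitting on a path of $\Ftilde^{(3)}_i$ has degree at most $2$ there, so at least one $\msttilde_i$-neighbor was already deleted and hence already lives in $\mathbb{X}_1 \cup \mathbb{X}_2$ --- is the correct justification. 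The remaining spanning-tree connectivity argument is standard and sound. One minor remark: the hypothesis $\adm(\Ttilde) \le 6L_i$ never enters your argument (nor should it, since the conclusion holds for any component of $\Ftilde^{(5)}_i$ once $\mathbb{X}_1 \cup \mathbb{X}_2 \cup \mathbb{X}_4 \neq \emptyset$); the bound appears in the statement only because Step~5A invokes the observation exclusively for short trees.
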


We call the case where $\mathbb{X}_1 \cup \mathbb{X}_2\cup \mathbb{X}_4 = \emptyset$ the \emph{degenerate case}. In the degenerate case,  $\mg_i$ has a very special structure, which will be described later (in \Cref{lm:exception}); for now, we focus on the construction of the last step. 

\paragraph{Step 5} Let $\Ttilde$ be  a path in  $\Ftilde^{(5)}_i$ obtained by Item (5) of \Cref{lm:Clustering-Step4}. We construct two sets of subgraphs, denoted by $\mathbb{X}^{\internal}_5$ and $\mathbb{X}^{\prefix}_5$, of $\mg_i$, and also modify subgraphs in $\mathbb{X}_1,\mathbb{X}_2$ and  $\mathbb{X}_4$. The construction is broken into two steps. Step 5A is only applicable when we are not in the degenerate case; Step 5B is applicable regardless of the degenerate case.  
\begin{itemize}
	\item (Step 5A)\hypertarget{5A}{}  If $\Ttilde$ has augmented diameter at most $6L_i$, let $\mathbf{e}$ be an $\widetilde{\mst}_i$ edge connecting $\Ttilde$  and a node in some subgraph $\mx \in \mathbb{X}_1\cup \mathbb{X}_2 \cup \mathbb{X}_4$; $\mathbf{e}$ exists by \Cref{obs:Clustering-F5}. We add both $\mathbf{e}$ and $\Ttilde$ to $\mx$.
	\item (Step 5B)\hypertarget{5B}{} 	Otherwise,  the augmented diameter of $\Ttilde$ is at least $6L_i$ and hence, it must be a path by Item (4) in \Cref{lm:Clustering-Step2}.  In this case, we greedily break $\Ttilde$ into subpaths of augmented diameter at least $L_i$ and at most $2L_i$. (This is possible because both edge and node weights are much smaller than $L_i$ for a sufficiently small constant $\eps$.) Let $\Ptilde$ be a subpath broken from $\Ttilde$. If $\Ptilde$ is connected to a node in a subgraph $\mx \in \mathbb{X}_1\cup \mathbb{X}_2 \cup \mathbb{X}_4$ via an  edge $\mathbf{e}\in \msttilde_{i}$, we add $\Ptilde$ and $\mathbf{e}$ to $\mx$.	Else, if $\Ptilde$ contains an endpoint of $\Ttilde$, we add $\Ptilde$ to $\mathbb{X}^{\prefix}_5$; otherwise, we add $\Ptilde$ to $\mathbb{X}^{\internal}_5$. 
\end{itemize}

\begin{lemma}\label{lm:Clustering-Step5} We can implement the construction of $ \mathbb{X}_5^{\internal}$ and $\mathbb{X}_5^{\prefix}$ in $O(|\mv_i|)$ time.	Furthermore, every subgraph $\mx \in \mathbb{X}_5^{\internal} \cup \mathbb{X}_5^{\prefix}$ satisfies:
	\begin{enumerate}[noitemsep]
		\item[(1)] $\mx$ is a subpath of $\msttilde_{i}$.
		\item[(2)] $L_i \leq \adm(\mx)\leq 2 L_i$ when $\eps \leq 1/g$.
		\item[(3)] $|\mv(\mx)| = \Theta(\epsilon^{-1})$.
	\end{enumerate}
\end{lemma}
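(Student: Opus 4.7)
My plan is to verify properties (1)--(3) for subgraphs in $\mathbb{X}_5^{\internal}\cup \mathbb{X}_5^{\prefix}$ and then bound the runtime. Only Step 5B contributes to $\mathbb{X}_5^{\internal}\cup \mathbb{X}_5^{\prefix}$: it greedily breaks each tree $\Ttilde\subseteq \Ftilde^{(5)}_i$ with $\adm(\Ttilde)\geq 6L_i$ into subpaths, some of which are absorbed into existing subgraphs via incident $\msttilde_i$ edges and the remainder are placed in $\mathbb{X}_5^{\prefix}$ or $\mathbb{X}_5^{\internal}$ according to whether they contain an endpoint of $\Ttilde$. By Item~(4) of \Cref{lm:Clustering-Step2}, every such $\Ttilde$ is already a simple path; and since $\Ftilde^{(5)}_i$ is obtained from $\msttilde_i$ only by \emph{deleting} nodes, each $\Ttilde$ is a subpath of $\msttilde_i$. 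Thus every $\mx\in \mathbb{X}_5^{\internal}\cup \mathbb{X}_5^{\prefix}$, being a contiguous chunk of such a $\Ttilde$, is a subpath of $\msttilde_i$, which is~(1).

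For (2), I would analyze the greedy break as follows. Walking along $\Ttilde$, we grow the current subpath $\mx$ one node at a time, stopping as soon as $\adm(\mx)\geq L_i$. By \hyperlink{P3'}{(P3')} every node weight is at most $gL_{i-1}=g\eps L_i$, and every $\msttilde_i$ edge has weight at most $\bar w\leq L_{i-1}=\eps L_i$. Consequently, a single extension increases the augmented diameter by at most $(g{+}1)\eps L_i$. Therefore, when the subpath is closed, we have $L_i\leq \adm(\mx)\leq L_i+(g{+}1)\eps L_i\leq 2L_i$ under the stated hypothesis on $\eps$ (tightening this slightly to $\eps\leq 1/(g{+}1)$ if needed). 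This is property~(2). Property~(3) is then a direct counting argument: on the one hand every node contributes at most $(g{+}1)\eps L_i$ to $\adm(\mx)$, so $|\mv(\mx)|\geq L_i/((g{+}1)\eps L_i)=\Omega(1/\eps)$; on the other hand every node weighs at least $L_{i-1}=\eps L_i$, so combined with $\adm(\mx)\leq 2L_i$ this gives $|\mv(\mx)|\leq 2/\eps+1=O(1/\eps)$.

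For the runtime, I would argue as follows. Step~5A iterates over the small trees of $\Ftilde^{(5)}_i$; for each one we pick a witnessing $\msttilde_i$-edge out of $\Ttilde$ to a subgraph in $\mathbb{X}_1\cup\mathbb{X}_2\cup \mathbb{X}_4$ (existence by \Cref{obs:Clustering-F5}) and attach the whole tree. Finding such an edge takes time linear in the size of $\Ttilde$ (walking along $\Ttilde$ and checking each node's $\msttilde_i$-neighbors outside $\Ftilde^{(5)}_i$), and processing all trees thus costs $O(|\mv(\Ftilde^{(5)}_i)|)$. Step~5B sweeps each long path once, maintaining a running augmented diameter in $O(1)$ per new node; whenever the threshold $L_i$ is crossed we emit the current subpath, check in $O(1)$ whether it is attached via an $\msttilde_i$-edge to an existing subgraph or whether it contains an endpoint of $\Ttilde$, and restart. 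All operations are $O(1)$ per visited node, so the total runtime is $O(|\mv_i|)$.

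The main obstacle I expect is handling the tail of each long path: the greedy may leave a leftover with $\adm<L_i$, which would violate~(2). I would resolve this by merging the leftover into the immediately preceding subpath; using the extension bound again, the merged subpath has $\adm<2L_i+(g{+}1)\eps L_i+L_i$, so one must verify that after appropriate choice of constants (and possibly weakening the stopping criterion to $\adm\geq L_i$ unless fewer than a few nodes remain) the bound $\adm(\mx)\leq 2L_i$ still holds. An alternative is to declare up front that if $\Ttilde$ has only enough length for one subpath (which cannot happen because $\adm(\Ttilde)\geq 6L_i$ forces at least $3$ subpaths), the leftover is nonempty, and to attach any leftover to the last emitted subpath while recomputing constants. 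The rest of the proof is arithmetic bookkeeping with the node/edge weight bounds in $\mg_i$.
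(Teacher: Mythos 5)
Your argument is correct and takes essentially the same route as the paper: the counting for Item~(3) is identical (every node weighs at least $L_{i-1}$, giving the $O(1/\eps)$ upper bound, and each node-plus-edge increment is at most $(g+1)L_{i-1}$, giving the $\Omega(1/\eps)$ lower bound), and the single-pass sweep for the runtime matches the paper's analysis. Your concern about the leftover tail of the greedy partition is a genuine subtlety that the paper's own proof also elides by simply asserting that the greedy produces subpaths with augmented diameter in $[L_i,2L_i]$; the standard fix — stop cutting once the remaining suffix has augmented diameter below $2L_i$, which is always possible since long paths have $\adm(\Ttilde)\geq 6L_i$ — is the right resolution and avoids the constant blow-up your merge-into-previous variant would incur.
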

\begin{proof} 
	Items (1) and (2) follow directly from the construction. For Item (3), we observe the following facts:   $\adm(\mx)\geq L_i$,   each edge has weight of at most $L_{i-1}$, and each node has weight of at most $g L_{i-1}$. Thus,  $|\mv(\mx)|\geq \frac{L_i}{(1+g)L_{i-1}} = \Omega(\epsilon^{-1})$. By the same argument, since each node has weight at least $L_{i-1}$ by \hyperlink{P3'}{property (P3')},  $|\mv(\mx)|\leq \frac{2L_i}{L_{i-1}} = O(1/\eps)$; this implies Item (3). 
	
	We now focus on the construction time. We observe that for every tree $\Ttilde\in \Ftilde^{(5)}_i$, computing its augmented diameter can be done in $O(|\mv(\Ttilde)|)$ time. Thus, we can identify all trees of $\Ftilde^{(5)}_i$ of augmented diameter at least $6L_i$ to process in Step 5B in $O(|\mv(\Ftilde^{(5)}_i)|) = O(|\mv_i|)$ time.   Breaking each path $\Ttilde$  in Step 5B into a collection of subpaths $\{\Ptilde_1,\ldots, \Ptilde_k\}$ greedily can be done in $O(|\mv(\Ttilde)|)$ time. For each $j\in [k]$, to check whether $\Ptilde_j$ is connected by an $\widetilde{\mst}_i$ edge to subgraph in $\mathbb{X}_1\cup \mathbb{X}_2\cup \mathbb{X}_4$, we examine each node $\varphi \in \Ptilde_j$ and all $\widetilde{\mst}_i$ edges incident to $\varphi$. In total, there are at most $|\mv(\Ftilde^{(5)}_i)|$ nodes and $|\widetilde{\mst}_i| = |\mv_i|-1$ edges to examine; this implies the claimed time bound. 
\end{proof}

Finally, we construct the collection $\mathbb{X}$ of subgraphs of $\mg_i$ as follows:
\begin{equation}\label{eq:MathbbX}
	\mathbb{X} = \mathbb{X}_1 \cup \mathbb{X}_2\cup \mathbb{X}_4 \cup \mathbb{X}_5^{\internal} \cup \mathbb{X}_5^{\prefix}.
\end{equation}
We note that in the above equation, $ \mathbb{X}_1, \mathbb{X}_2$, and $\mathbb{X}_4$ are the set of subgraphs after being modified in Steps 3 and 5. To complete the proof of \Cref{lm:Clustering}, we need to:
\begin{enumerate}
	\item show that subgraphs in  $\mathbb{X}$ satisfies three properties: \hyperlink{P1'}{(P1')}, \hyperlink{P2'}{(P2')}, and \hyperlink{P3'}{(P3')}, and that $|\me_i\cap \me(\mx)| = O(|\mv(\mx)|)$. This implies Item (5) of \Cref{lm:Clustering}. We present the proof in \Cref{subsec:PropX}.
	\item  construct a partition $\{\mv_i^{\high},\mv_i^{\lowp}, \mv_i^{\lowm}\}$  of $\mv_i$, show Items (1)-(4) and the running time bound as claimed by \Cref{lm:Clustering}. We present the proof in \Cref{subsec:PartitionMvi}
\end{enumerate}

\subsection{Properties of \texorpdfstring{$\mathbb{X}$}{X}}\label{subsec:PropX}

In this section, we prove the following lemma.

\begin{lemma}\label{lm:XProp} Let $\mathbb{X}$ be the set of subgraphs as defined in \Cref{eq:MathbbX}. For every subgraph $\mx \in \mathbb{X}$, $\mx$ satisfies the three properties (\hyperlink{P1'}{P1'})-(\hyperlink{P3'}{P3'}) with $g = 31$ and $\eps \leq \frac{1}{8(g+1)}$, and $|\me(\mx)\cap \me_i| = O(|\mv(\mx)|)$. Furthermore, $\mathbb{X}$ can be constructed in $O((|\mv_i| + |\me_i|)\eps^{-1})$ time.
\end{lemma}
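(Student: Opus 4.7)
My plan is to verify each of the four claims---\hyperlink{P1'}{(P1')}, \hyperlink{P2'}{(P2')}, \hyperlink{P3'}{(P3')}, and $|\me(\mx) \cap \me_i| = O(|\mv(\mx)|)$---by walking through Steps 1--5 and invoking the per-step lemmas \Cref{lm:Clustering-Step1,lm:Clustering-Step2,lm:Clustering-Step3,lm:Clustering-Step4,lm:Clustering-Step5}. Disjointness of $\{\mv(\mx)\}_{\mx \in \mathbb{X}}$ is immediate, since Steps~1, 2, 4, and 5 each operate on what survives the previous steps, and Steps~3 and 5A only augment already-formed subgraphs using previously untouched nodes. Total coverage of $\mv_i$ can then be read off by tracking the residual forest: Step~1 absorbs $\mv_i^{\high+}$; Step~2 carves long branching trees from $\Ftilde^{(2)}_i$; Step~3 removes the remaining $\msttilde_i$-branching nodes that lie inside long trees of $\Ftilde^{(3)}_i$, so that every long tree of $\Ftilde^{(4)}_i$ becomes a path of $\msttilde_i$-degree-$\leq 2$ nodes; Step~4 carves out blue-blue intervals from such long paths; and Step~5 then distributes every remaining component either into a pre-existing subgraph (Step~5A, or the attached-subpath case of Step~5B) or into $\mathbb{X}_5^{\internal} \cup \mathbb{X}_5^{\prefix}$. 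This settles \hyperlink{P1'}{(P1')}.

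For \hyperlink{P2'}{(P2')}, each of the per-step lemmas already guarantees $|\mv(\mx)| = \Omega(1/\eps)$ at the time of formation (in fact $\geq 2g/\eps$ for Step~1 and $\Theta(1/\eps)$ for the others, under $\eps \leq 1/(8(g+1))$), and later augmentations only enlarge $\mv(\mx)$. The edge count $|\me(\mx)\cap \me_i|$ is equally direct: Step~1 contributes at most $O(|\mv(\mx)|)$ edges of $\me_i$ (the star edges at the center, plus at most one per node added in its sub-steps (2)--(3)); Steps~2 and 5 contribute none (their subgraphs are subtrees of $\msttilde_i$); and Step~4 contributes exactly one by \Cref{lm:Clustering-Step4}(1). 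All augmentation edges used in Steps~3 and 5 come from $\msttilde_i$, so they add nothing to $\me(\mx)\cap \me_i$ and the $O(|\mv(\mx)|)$ bound persists.

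The main technical point is \hyperlink{P3'}{(P3')}, the upper bound $\adm(\mx) \leq g L_i = 31 L_i$. The lower bound $\adm(\mx) \geq L_i$ holds at formation time by each of \Cref{lm:Clustering-Step1,lm:Clustering-Step2,lm:Clustering-Step4,lm:Clustering-Step5} and is monotone under augmentation. For the upper bound, I will analyze each final subgraph as a ``core'' produced in Step~1, 2, 4, or 5, together with pendants attached in Steps~3, 5A, and 5B. The hypothesis $\eps \leq 1/(8(g+1))$ forces both $\bar{w} \leq \eps L_i$ and the maximum node weight $g\eps L_i$ to be at most $L_i/8$, and hence the maximum pendant ``depth''---its own augmented diameter plus the connecting edge and connecting node weight---is bounded by $6L_i + \bar{w} + g\eps L_i$. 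Since the longest path in $\mx$ is either internal to the core or traverses from one pendant through the core to another, we get $\adm(\mx) \leq \adm(\text{core}) + 2(6 L_i + \bar{w} + g\eps L_i)$. With core diameters at most $13L_i$ (Step~1), $2L_i$ (Step~2), $5L_i$ (Step~4), and $2L_i$ (Step~5)---and Step~5 cores receiving no augmentation at all---the worst case is Step~1 plus both kinds of augmentation, yielding $13 L_i + 2(6 + \tfrac{1}{256} + \tfrac{1}{8}) L_i < 31 L_i$, so $g=31$ indeed suffices.

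Finally, the running time is additive over the steps: $O(|\mv_i| + |\me_i|)$ for Step~1, $O(|\mv_i|)$ for Steps~2, 3, and 5, and $O((|\mv_i|+|\me_i|)\eps^{-1})$ for Step~4, summing to the claimed $O((|\mv_i| + |\me_i|)\eps^{-1})$. The only conceptually delicate part of the argument is the diameter bookkeeping in \hyperlink{P3'}{(P3')}: one must argue that pendants attached to a single core do not stack up beyond twice the deepest pendant (rather than summing over all of them), which crucially exploits the smallness of $\eps$ to keep edge and node weights negligible compared to $L_i$. Everything else is a direct invocation of the per-step lemmas together with the coverage/disjointness argument above.
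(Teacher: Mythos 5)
Your proof is correct and follows essentially the same strategy as the paper's, but with a different bookkeeping of the diameter bound. The paper folds the Step~3 augmentation into the ``core'' (proving $\adm(\mx) \le 13L_i + 4L_i = 17L_i$ after Steps~1--3, citing \Cref{lm:Clustering-Step3} for the $+4L_i$ term), and only treats Step~5 attachments as pendants, yielding $17L_i + 2\bar{w} + 12L_i \le 31L_i$ with the loose estimate $\bar{w} < L_i$. You instead keep the Step~1/2/4/5 cores bare and treat \emph{both} Step~3 and Step~5 attachments as pendants, getting the sharper bound $\approx 25.3L_i$ by exploiting $\bar w, g\eps L_i \ll L_i$. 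Both roads lead to $g = 31$, but there is a small imprecision in yours: when a Step~5 tree attaches to a Step~3 node (which \Cref{obs:Clustering-F5} allows, since after Step~3 the set $\mathbb{X}_1 \cup \mathbb{X}_2$ already contains the added branching nodes), the chain contributes two connecting $\msttilde_i$ edges, so the pendant depth should read $6L_i + 2\bar w + g\eps L_i$, not $6L_i + \bar w + g\eps L_i$. You flag exactly this stacking concern in your closing remark, and since $\bar w \le \eps L_i \le L_i/256$ the extra $\bar w$ changes nothing numerically, but the formula as written silently assumes Step~5 pendants always attach directly to the pre-Step-3 core. The paper sidesteps this by absorbing Step~3 into the core diameter first, which is arguably the cleaner bookkeeping even if it produces the weaker estimate. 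Everything else --- (P1') via coverage of the residual forest, (P2') from the per-step cardinality bounds, the $\me_i$-edge count noting that Steps~2/5 contribute nothing and Steps~3/5 augment only with $\msttilde_i$ edges, and the running time being additive over steps --- matches the paper.
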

\begin{proof} We observe that \hyperlink{P1'}{property (P1')} follows directly from the construction. Additionally,  \hyperlink{P2'}{property (P2')} follows from Item (4) of \Cref{lm:Clustering-Step1}, Items (3) of \Cref{lm:Clustering-Step2}, \Cref{lm:Clustering-Step4}, and \Cref{lm:Clustering-Step5}. The lower bound $L_i$ on the augmented diameter  of a subgraph $\mx \in \mathbb{X}$ follows from Item (3) of \Cref{lm:Clustering-Step1}, Items (2) of \Cref{lm:Clustering-Step2}, \Cref{lm:Clustering-Step4}, and \Cref{lm:Clustering-Step5}. Thus, to complete the proof of \hyperlink{P3'}{property (P3')}, it remains to show that $\adm(\mx) \leq gL_i$ with $g = 31$ and $\eps \leq \frac{1}{8(g+1)}$. Observe that  the condition that $\eps \leq \frac{1}{8(g+1)}$ follows by considering all constraints on $\eps$ in \Cref{lm:Clustering-Step1,lm:Clustering-Step2,lm:Clustering-Step3,lm:Clustering-Step4,lm:Clustering-Step5}.

	If $\mx$ is formed in Step 5B, that is $\mx \in \mathbb{X}^{\internal}_5\cup \mathbb{X}^{\prefix}_5$,  then $\adm(\mx)~\leq~ 2L_i$ by \Cref{lm:Clustering-Step5}. Otherwise, excluding any augmentation to $\mx$ due to Step 5, \Cref{lm:Clustering-Step1},  \Cref{lm:Clustering-Step2} and \Cref{lm:Clustering-Step3} yield $\adm(\mx) \leq 13L_i + 4L_i \leq 17 L_i$ where the $4L_i$ term is due to the augmentation in Step 3 (see  \Cref{lm:Clustering-Step3}). By \Cref{lm:Clustering-Step4}, $\adm(\mx) \leq \max(17L_i,5L_i) = 17L_i$.

	 We then may augment $\mx$ with trees of diameter at most $6L_i$ (Step 5A) and/or with subpaths of diameter at most $2L_i$ (Step 5B). A crucial observation is that any augmented tree or subpath is connected by an $\widetilde{\mst}_i$ edge to a node that was grouped to $\mx \in \mathbb{X}_1\cup \mathbb{X}_2\cup \mathbb{X}_4$. If we denote the resulting subgraph by $\mx^+$, then 
	\begin{equation*}
		\adm(\mx^+) \leq \adm(\mx) + 2\bar{w} + 12 L_i \leq \adm(\mx) + 14L_i  \leq 31L_i. \end{equation*}
	In the above equation, term $2\bar{w}$ is from the two $\widetilde{\mst}_i$ edges connecting two augmented trees (or paths), and $12L_i$ is the upper bound on the sum of the augmented diameters of two augmented trees (or paths). \hyperlink{P3'}{Property (P3')} now follows.
	
	The fact that $|\me(\mx)\cap \me_i| = O(|\mv(\mx)|)$ and the running time bound follow directly from  \Cref{lm:Clustering-Step1}, \Cref{lm:Clustering-Step2}, \Cref{lm:Clustering-Step3}, \Cref{lm:Clustering-Step4} and \Cref{lm:Clustering-Step5}. Recall that the augmentation in Step 3 is in a star-like way and hence, no cycle is formed in subgraphs of $\mathbb{X}_1\cup \mathbb{X}_2$ after the augmentation.  
	%\qued
\end{proof}

\subsection{Constructing  a Partition of \texorpdfstring{$\mv_i$}{Vi}}\label{subsec:PartitionMvi}

We first consider the degenerate case where $\mathbb{X}_1\cup \mathbb{X}_2\cup \mathbb{X}_4 = \emptyset$. 
\begin{figure}[!htb]
    \begin{center}
			\includegraphics[width=0.4\textwidth]{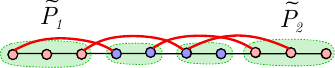}
		\end{center}
		\caption{Red edges are edges in $\me_i$; every edge is incident to at least one red node.}
		\label{fig:exception}
\end{figure}

\begin{lemma}[Structure of Degenerate Case]\label{lm:exception}
	If  $\mathbb{X}_1\cup \mathbb{X}_2\cup \mathbb{X}_4 = \emptyset$, then
	$\Ftilde^{(5)}_i =  \msttilde_{i}$, and $\msttilde_{i}$  is a single (long) path.   	Moreover, every edge $\mbe \in \me_i$ must be incident to a  node in $\Ptilde_1\cup \Ptilde_2$,
	where $\Ptilde_1$ and $\Ptilde_2$ are the prefix and suffix subpaths of $\msttilde_{i}$ of augmented diameter at most $L_i$. Consequently, we have that $|\me_i| = O(1/\epsilon^2)$.
\end{lemma}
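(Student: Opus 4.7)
The plan is to trace through Steps~1--4 and verify that the hypothesis $\mathbb{X}_1 \cup \mathbb{X}_2 \cup \mathbb{X}_4 = \emptyset$ forces each successive forest to equal $\msttilde_i$ and pins down its structure. Since Step~1 would produce a subgraph in $\mathbb{X}_1$ around every element of the maximal independent set $\mathcal{I} \subseteq \mv^{\high}_i$ it constructs, the assumption $\mathbb{X}_1 = \emptyset$ forces $\mv^{\high}_i = \emptyset$, hence $\mv^{\high+}_i = \emptyset$ and $\Ftilde^{(2)}_i = \msttilde_i$. The greedy procedure of Step~2 picks a subgraph whenever some long tree in $\Ftilde^{(2)}_i$ contains a branching node, so $\mathbb{X}_2 = \emptyset$ implies that no long tree of $\msttilde_i$ has a branching node. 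Combined with $\msttilde_i$ being a single connected tree (Item~(2) of \Cref{def:GiProp}), this pins down $\msttilde_i$ as a single (long) simple path. The same ``no branching in any long tree'' observation yields $\mathcal{A} = \emptyset$ in Step~3, giving $\Ftilde^{(3)}_i = \Ftilde^{(4)}_i = \msttilde_i$, and $\mathbb{X}_4 = \emptyset$ finally gives $\Ftilde^{(5)}_i = \msttilde_i$.

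For the edge incidence claim, I would apply the \hyperlink{RBColoring}{Red/Blue Coloring} to the path $\msttilde_i$: red nodes are exactly those within augmented distance $L_i$ of an endpoint, constituting the prefix $\Ptilde_1$ and suffix $\Ptilde_2$. Any $\me_i$-edge with both endpoints blue would be picked by Step~4's greedy procedure, contradicting $\mathbb{X}_4 = \emptyset$; hence every edge of $\me_i$ has at least one endpoint in $\Ptilde_1 \cup \Ptilde_2$.

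For the cardinality bound, property~(\hyperlink{P3'}{P3'}) gives every node of $\msttilde_i$ weight at least $L_{i-1} = \epsilon L_i$, so each of $\Ptilde_1, \Ptilde_2$ has at most $L_i/L_{i-1} = 1/\epsilon$ nodes. Since $\mv^{\high}_i = \emptyset$, each node is incident to fewer than $2g/\epsilon$ edges of $\me_i$. Summing these degrees over $\Ptilde_1 \cup \Ptilde_2$ yields $|\me_i| \le (2/\epsilon) \cdot (2g/\epsilon) = O(1/\epsilon^2)$.

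The main obstacle is confirming the converse direction of each greedy step: Steps~1, 2, and~4 each use a procedure guaranteed to produce a subgraph \emph{whenever} the corresponding structural trigger (a high-degree node; a branching node in a long tree; a blue--blue $\me_i$-edge on a long path) is present, so the vacuity of $\mathbb{X}_k$ genuinely rules out that structural configuration rather than merely signaling the procedure did not fire. Once those converses are in hand, the rest is a short counting argument.
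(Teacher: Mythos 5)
Your proposal follows the same route as the paper's proof: tracing through Steps 1--4 to show each intermediate forest equals $\msttilde_i$ and is a single long path, then invoking the Red/Blue Coloring to confine $\me_i$-edges to the prefix/suffix, and finishing with the same node-count-times-degree bound. The only cosmetic difference is that you re-derive the ``no blue--blue edge'' fact from $\mathbb{X}_4 = \emptyset$, whereas the paper simply cites Item~(5) of \Cref{lm:Clustering-Step4}; your added remark about needing the converse direction of each greedy trigger is correct and is implicitly used by the paper as well.
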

\begin{proof}	By the assumption of the lemma, no subgraph is formed in Steps 1-4.

	Since no subgraph is formed in Step 1, $\Ftilde^{(2)}_i = \msttilde_i$. Since no subgraph is formed in Step 2, there is no branching node in $\Ftilde^{(2)}_i$; thus $\Ftilde^{(3)}_i = \Ftilde^{(2)}_i$ and it is a  single (long) path. 	Since $\mathbb{X}_1 \cup \mathbb{X}_2  = \emptyset$, there  is no augmentation in Step 3.  Since no subgraph is formed in Step 4, $\Ftilde^{(5)}_i = \Ftilde^{(4)}_i$ and both are equal to $\msttilde_i$, which is a long path (see Figure~\ref{fig:exception}).

	By Item (5) in \Cref{lm:Clustering-Step4}, any edge $\mbe \in \me_i$ must be incident to a red node. The augmented distance from any red node to at least one endpoint of $\msttilde_{i}$ is at most $L_i$ by the definition of \hyperlink{RBColoring}{Red/Blue Coloring}, and hence every red node belongs to $\Ptilde_1\cup \Ptilde_2$. Since each node has weight of at least $L_{i-1}$ by \hyperlink{P3'}{property (P3')}, we have:
	\begin{equation*}
		|\mv(\Ptilde_1\cup \Ptilde_2)|\leq  \frac{2L_i}{L_{i-1}} = \frac{2}{\epsilon}
	\end{equation*}
	Since each node of $\Ptilde_1\cup \Ptilde_2$ is incident to at most $\frac{2g}{\epsilon}$ edges in $\me_i$ (as there is no subgraph formed in Step 1; $\mv^{\high}_i = \emptyset$), it holds that $|\me_i| = O(1/\epsilon^2)$, as desired.	%\qued
\end{proof}

We are now ready to describe the construction of the partition  $\{\mv^{\high}_i, \mv^{\lowp}_i, \mv^{\lowm}_i\}$ of $\mv_i$

\begin{tcolorbox}
	\hypertarget{PartitionV}{}
	\textbf{Construct Partition $\{\mv^{\high}_i, \mv^{\lowp}_i, \mv^{\lowm}_i\}$:} In the degenerate case, we define $\mv^{\lowm}_i = \mv_i$ and $\mv^{\high}_i = \mv^{\lowp}_i = \emptyset$. Otherwise, we define $\mv^{\high}_i$ to be the set of all nodes that are incident to at least $2g/\eps$ edges in $\me_i$,  $\mv^{\lowm}_i = \cup_{\mx \in \mathbb{X}^{\internal}_5}\mv(\mx)$ and $\mv^{\lowp}_i = \mv_i\setminus (\mv^{\high}_i \cup \mv^{\lowm}_i )$. 
\end{tcolorbox}

We show the following properties of $\{\mv^{\high}_i, \mv^{\lowp}_i, \mv^{\lowm}_i\}$, which is equivalent to Item (4) in \Cref{lm:Clustering}.

\begin{lemma}\label{lm:Xlowm} 
	\begin{enumerate}
		\item[(1)] If $\mx$ contains a node in $\mv^{\lowm}$, then $\mv(\mx)\subseteq \mv^{\lowm}$.
		\item[(2)]	There is no edge in $\me_i$ between a node in $\mv^{\high}_i$ and a node in $\mv^{\lowm}_i$.
		\item[(3)]  If there exists an edge   $(\varphi_{C_u},\varphi_{C_v}) \in \me_i$ such that both $\varphi_{C_u}$ and $\varphi_{C_v}$ are in 
		$\mv_i^{\lowm}$, then we are in the degenerate case.
	\end{enumerate}
\end{lemma}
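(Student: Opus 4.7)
The plan is to address the three items in order, leveraging three ingredients already established: the vertex-disjointness of $\mathbb{X}$ (property \hyperlink{P1'}{(P1')}), the containment $\cup_{\mx \in \mathbb{X}_1}\mv(\mx) = \mv^{\high+}_i$ from \Cref{lm:Clustering-Step1}, and the ``no edge between blue nodes'' property of \Cref{lm:Clustering-Step4}(5).

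For Item~(1), the degenerate case is trivial since $\mv^{\lowm}_i = \mv_i$. In the non-degenerate case, if $\mx \in \mathbb{X}$ contains a node $\varphi \in \mv^{\lowm}_i = \cup_{\mx' \in \mathbb{X}^{\internal}_5} \mv(\mx')$, then $\varphi \in \mv(\mx')$ for some $\mx' \in \mathbb{X}^{\internal}_5 \subseteq \mathbb{X}$. Since subgraphs in $\mathbb{X}$ form a partition of $\mv_i$ by \hyperlink{P1'}{(P1')}, we must have $\mx = \mx'$, and so $\mv(\mx) \subseteq \mv^{\lowm}_i$.

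For Item~(2), the degenerate case is again immediate since then $\mv^{\high}_i = \emptyset$. In the non-degenerate case, take any $\varphi \in \mv^{\high}_i$ and any $\me_i$-neighbor $\varphi'$ of $\varphi$. By the definition of $\mv^{\high+}_i$, both $\varphi$ and $\varphi'$ belong to $\mv^{\high+}_i$, and by \Cref{lm:Clustering-Step1}(2) both are grouped into subgraphs in $\mathbb{X}_1$. By vertex-disjointness of $\mathbb{X}$, neither $\varphi$ nor $\varphi'$ can lie in any subgraph of $\mathbb{X}^{\internal}_5$, hence $\varphi' \notin \mv^{\lowm}_i$, as desired.

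For Item~(3), I would argue by contrapositive: assume the non-degenerate case and derive a contradiction. Suppose $(\varphi_{C_u}, \varphi_{C_v}) \in \me_i$ has both endpoints in $\mv^{\lowm}_i$. Then each endpoint lies in some internal subpath $\mx \in \mathbb{X}^{\internal}_5$, which was formed in Step~5B as a non-prefix, non-suffix subpath of some long path $\Ttilde \subseteq \Ftilde^{(5)}_i$ in the greedy break into subpaths of augmented diameter at least $L_i$. The crux is to verify that every node $\varphi$ of such an internal subpath is a \emph{blue} node of $\Ttilde$: any simple path in $\Ttilde$ from $\varphi$ to either endpoint of $\Ttilde$ must fully traverse the preceding prefix (resp. following suffix) subpath produced by the greedy break, each of which has augmented diameter at least $L_i$; hence the augmented distance from $\varphi$ to each endpoint of $\Ttilde$ exceeds $L_i$, so $\varphi$ is blue under the \hyperlink{RBColoring}{Red/Blue Coloring}. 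Consequently both endpoints of the edge $(\varphi_{C_u}, \varphi_{C_v})$ are blue nodes of $\Ftilde^{(5)}_i$, contradicting \Cref{lm:Clustering-Step4}(5). The main subtlety will be the boundary bookkeeping in this coloring argument---formally checking that the inequality is strict enough to place these internal-subpath nodes on the blue side of the $L_i$ threshold---while Items~(1) and~(2) amount to straightforward applications of the partition property of $\mathbb{X}$ and the neighborhood-sweeping guarantee of Step~1.
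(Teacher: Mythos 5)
Your proof is correct and follows the same route as the paper: Items (1) and (2) are reduced to the vertex-disjointness of $\mathbb{X}$ together with the $\mv^{\high+}_i$ sweep of Step 1, and Item (3) is proved by showing that all $\mv^{\lowm}_i$ nodes are blue and invoking \Cref{lm:Clustering-Step4}(5). The only difference is that you supply an explicit justification (traversing a whole prefix/suffix subpath of augmented diameter at least $L_i$) for the claim that internal-subpath nodes are blue, which the paper simply asserts ``by the construction''; this is a sound and welcome elaboration of the same argument.
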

\begin{proof}
	Item (1) follows directly from the construction. We now show Item (2).  By the construction of Step 1 (\Cref{lm:Clustering-Step1}), any neighbor, say $\varphi$, of a node in $\mv^{\high}_i$ is in $\mv^{\high+}_i$. Thus, $\varphi$ will not be considered after Step 1. It follows that there is no edge between a node in $\mv^{\high}_i$ and a node in $\mv^{\lowm}_i$ since nodes in $\mv^{\lowm}_i$ are in Step 5. 
	
	To show Item (3), we observe by the construction that every node, say $\varphi_{C_u}$, in $\mv^{\lowm}$ is a blue node of some long path $\Ptilde$ in $\Ftilde^{(5)}_i$. In a non-degenerate case, then by Item (5) of \Cref{lm:Clustering-Step4}, every edge  $(\varphi_{C_u},\varphi_{C_v})$ must have the node $\varphi_{C_v}$ being a red node of $\Ptilde$. But then by Step \hyperlink{5B}{5B} of the algorithm, $\varphi_{C_v}$ belongs to some subgraph of $\mathbb{X}_5^{\prefix}$ and hence is not in $\mv^{\lowm}$. 
\end{proof}

Next, we focus on bounding the corrected potential change $\Delta^+_i(\mx)$ of every cluster $\mx\in \mathbb{X}$. Specifically, we show that: 
	\begin{itemize}[noitemsep]
		\item if $\mx \in \mathbb{X}_1$, then $\Delta^+_{i+1}(\mx) = \Omega(|\mv(\mx)|L_i\epsilon)$; the proof is in \Cref{lm:Step1Potential}.
		\item if $\mx \in \mathbb{X}_2$, then $\Delta^+_{i+1}(\mx) = \Omega(|\mv(\mx)|L_i\epsilon^2)$; the proof is in \Cref{lm:Step2Potential}. 
		\item  if $\mx \in \mathbb{X}_4$, then $\Delta^+_{i+1}(\mx) = \Omega(|\mv(\mx)|L_i\epsilon^2)$; the proof is in \Cref{lm:Step4Potential}. 
		\item the corrected potential change is non-negative, and we provide a lower bound of the average corrected potential change for subgraphs in $\mathbb{X}\setminus \mathbb{X}^{\lowm}$ in \Cref{lm:Step5PotentialChange}. 
	\end{itemize}

\begin{lemma}\label{lm:Step1Potential} For every subgraph $\mx\in \mathbb{X}_1$, it holds that $ \Delta^+_{i+1}(\mx)\geq   \frac{|\mv(\mx)|L_i\epsilon}{2}$. 
\end{lemma}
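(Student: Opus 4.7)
The plan is to prove Lemma~\ref{lm:Step1Potential} via a direct calculation using the structural bounds from \Cref{lm:Clustering-Step1} together with the node-weight lower bound coming from \hyperlink{P3'}{property (P3')}. I will not need the corrective term $\sum_{\mathbf{e} \in \msttilde_i \cap \me(\mx)} w(\mathbf{e})$ at all — the bound holds already for $\Delta_{i+1}(\mx)$ itself, which makes the argument clean since that term is non-negative.

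The first step is to expand the definition: by \Cref{eq:LocalPotential},
\begin{equation*}
\Delta_{i+1}(\mx) \;=\; \Bigl(\sum_{\varphi_C \in \mv(\mx)} \omega(\varphi_C)\Bigr) - \adm(\mx).
\end{equation*}
For the positive summation I will use that, by construction (\Cref{eq:Level1Poten} for level $1$ and \Cref{eq:SetPotential-i} together with \hyperlink{P3'}{property (P3')} for higher levels), every node weight $\omega(\varphi_C) = \Phi(C)$ satisfies $\omega(\varphi_C) \geq L_{i-1} = \eps L_i$. Hence
\begin{equation*}
\sum_{\varphi_C \in \mv(\mx)} \omega(\varphi_C) \;\geq\; |\mv(\mx)|\,\eps L_i.
\end{equation*}
For the subtracted term I invoke Item~(3) of \Cref{lm:Clustering-Step1}, which gives $\adm(\mx) \leq 13 L_i$.

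The second step is to combine these with Item~(4) of \Cref{lm:Clustering-Step1}, which yields $|\mv(\mx)| \geq 2g/\eps$. With the choice $g = 31$ fixed in \Cref{lm:XProp}, this gives $|\mv(\mx)| \geq 62/\eps$, so $13 L_i \leq \tfrac{13}{62}\cdot |\mv(\mx)|\,\eps L_i \leq \tfrac{1}{2}\,|\mv(\mx)|\,\eps L_i$. Therefore
\begin{equation*}
\Delta^+_{i+1}(\mx) \;\geq\; \Delta_{i+1}(\mx) \;\geq\; |\mv(\mx)|\,\eps L_i - 13 L_i \;\geq\; \tfrac{1}{2}\,|\mv(\mx)|\,\eps L_i,
\end{equation*}
which is the desired inequality.

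There is no real obstacle here: the lemma is essentially a bookkeeping step, justified by the fact that Step~1 only produces subgraphs that are large enough (at least $2g/\eps$ nodes) to absorb the diameter term $\adm(\mx) = O(L_i)$ once each node contributes its minimum possible potential $\eps L_i$. The only thing to be careful about is not to confuse the bounds coming from different steps (e.g., using the $\adm(\mx) \leq 13L_i$ bound of Step~1 rather than the post-augmentation bound $\adm(\mx) \leq 31 L_i$ from Step~5A/5B, which does not apply to the raw $\mathbb{X}_1$ subgraphs considered here).
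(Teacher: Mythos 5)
Your calculation has the right shape and reaches the correct constant, but the final remark---that the lemma concerns the \emph{raw} $\mathbb{X}_1$ subgraphs, so that the Step-1 bound $\adm(\mx)\le 13L_i$ is the right one and the post-augmentation $g L_i$ bound ``does not apply''---misidentifies what the lemma is claiming. In the paper's construction, $\mathbb{X}_1$ is a mutable set: subgraphs placed into it in Step~1 are subsequently enlarged in Step~3 and Steps~5A/5B, and the $\mathbb{X}$ appearing in \Cref{lm:Clustering} (and used in \Cref{lm:Step5PotentialChange}) is the union of the \emph{final} subgraphs, $\mathbb{X}=\mathbb{X}_1\cup\mathbb{X}_2\cup\mathbb{X}_4\cup\mathbb{X}_5^{\internal}\cup\mathbb{X}_5^{\prefix}$. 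So \Cref{lm:Step1Potential} must hold for the augmented subgraphs, and the paper's proof accordingly bounds $\adm(\mx)\le gL_i$ via \hyperlink{P3'}{property (P3')} (which is established for the final $\mathbb{X}$ in \Cref{lm:XProp}), not by Item~(3) of \Cref{lm:Clustering-Step1}. As written, your argument proves a statement about a different (pre-augmentation) object, which would not support the downstream averaging in \Cref{lm:Step5PotentialChange}.

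That said, the arithmetic is robust to the fix and in fact the paper's proof is tightly calibrated to it: replace $13L_i$ by the correct bound $g L_i = 31L_i$, note that $|\mv(\mx)|\ge 2g/\eps = 62/\eps$ continues to hold after augmentation (Steps~3 and~5 only add nodes), and then
\begin{equation*}
\Delta^+_{i+1}(\mx) \;\ge\; |\mv(\mx)|\,L_{i-1} - gL_i \;=\; \tfrac{1}{2}|\mv(\mx)|\,L_{i-1} + \Bigl(\tfrac{1}{2}|\mv(\mx)|\,L_{i-1} - gL_i\Bigr) \;\ge\; \tfrac{1}{2}|\mv(\mx)|\,\eps L_i,
\end{equation*}
since the parenthesized term is nonnegative exactly when $|\mv(\mx)|\ge 2g/\eps$. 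This is precisely the paper's proof; your remaining steps (dropping the nonnegative $\msttilde_i$-edge corrective term, lower-bounding each $\omega(\varphi_C)\ge L_{i-1}=\eps L_i$ via \hyperlink{P3'}{(P3')}) match it.
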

\begin{proof}
	 Let $\mx \in \mathbb{X}_1$ be a subgraph formed in Step 1, which could possibly be augmented in Steps  3 and 5. By Item (4) of \Cref{lm:Clustering-Step1}, $|\mv(\mx)| \geq \frac{2g}{\epsilon}$.  Observe by the definition of the corrected potential change (\Cref{def:corrected-potential}),  $\Delta^+_{i+1}(\mx)  \geq \Delta_{i+1}(\mx)$ and hence:
	\begin{equation}\label{eq:Step1-potential}
		\begin{split}
			\Delta^+_{i+1}(\mx) &\geq \sum_{\varphi \in \mv(\mx)}\omega(\varphi)  - \adm(\mx) \stackrel{\mbox{\hyperlink{P3'}{\tiny{(P3')}}}}{\geq} \sum_{\varphi \in \mv(\mx)}L_{i-1} - gL_i \\ &= \frac{|\mv(\mx)| L_{i-1}}{2} + \underbrace{(\frac{|\mv(\mx)| L_{i-1}}{2} - gL_i)}_{\geq 0 \mbox{ since }|\mv(\mx)|\geq (2g)/\epsilon}\\
			&\geq  \frac{|\mv(\mx)| L_{i-1}}{2}  =  \frac{|\mv(\mx)|\epsilon L_{i}}{2},
		\end{split}
	\end{equation} 
as claimed. %\qued
\end{proof}

 When analyzing the corrected potential change, it is instructive to keep in mind the worst-case example, where the subgraph is a path of $\widetilde{\mst}_i$; in this case, it is not hard to verify
(see \Cref{lm:Step5PotentialChange}) that the corrected potential change is $0$. 
However, the key observation is that the worst-case example cannot happen for subgraphs formed in Step 2, 
as any such subgraph (a subtree of $\widetilde{\mst}_i$) is a \emph{$\mx$-branching node}; such a node has at least three neighbors. Consequently, we can show that any subgraph formed in Step 2 has a sufficiently large corrected potential change, as formally argued next.  

\begin{lemma}\label{lm:Step2Potential}For every subgraph $\mx\in \mathbb{X}_2$,  
		$\Delta^+_{i+1}(\mx)\ = \Omega\left( |\mv(\mx)|L_i\epsilon^2\right)$.
\end{lemma}
\begin{proof}
	Let $\mx$ be a subgraph that is initially formed in Step 2 and could possibly be augmented in Steps  3 and 5.   Recall that in the augmentation in Step 3, we add to $\mx$ nodes of $\mv_i$ via  $\widetilde{\mst}_i$ edges, and in the augmentation done in Step 5, we add to $\mx$ subtrees of $\widetilde{\mst}_i$ via $\widetilde{\mst}_i$ edges. Thus, the resulting subgraph after the augmentation remains, as prior to the augmentation, a subtree of $\widetilde{\mst}_i$. That is, $\me(\mx)\subseteq \msttilde_{i}$. Letting $\md$ be an augmented diameter path of $\mx$, we have by the definition of augmented diameter that
	\begin{equation*}
		\adm(\mx)  = \sum_{\varphi \in \md}\omega(\varphi) + \sum_{\mbe \in \me(\md)} \omega(e)
	\end{equation*}
	Let $\my = \mv(\mx)\setminus \mv(\md)$. Then $|\my| \geq 1$  since $\mx$ has a $\mx$-branching node by Item (1) of \Cref{lm:Clustering-Step2} and that 
	\begin{equation}\label{eq:Step2-Potential}
		\begin{split}
			\Delta^+_{i+1}(\mx) =  \left(\sum_{\varphi \in \mx}\omega(\varphi) + \sum_{ e\in \me(\mx)} \omega(e)\right) - \adm(\mx) \geq \sum_{\varphi \in \my}\omega(\varphi) \stackrel{\mbox{\hyperlink{P3'}{\tiny{(P3')}}}}{\geq} |\my|L_{i-1}
		\end{split}
	\end{equation} 
	Note that $\me(\mx)\subseteq \widetilde{\mst}_i$. 	By  \hyperlink{P3'}{property (P3')},  $\adm(\md) \leq gL_i$ while each node has weight of at least $L_{i-1}$. Thus, we have:
	\begin{equation}\label{eq:Step2-D-size}
		|\mv(\md)| \leq \frac{g L_i}{L_{i-1}} = O(\epsilon^{-1})  = O(\frac{|\my|}{\epsilon}),
	\end{equation}
	since $ |\my|\geq 1$. By combining \Cref{eq:Step2-Potential} and ~\Cref{eq:Step2-D-size}, we have
	\begin{equation*}
		\Delta^+_{i+1}(\mx) \geq \frac{|\my| L_{i-1}}{2} +  \Omega(\epsilon|\mv(\md)|L_{i-1}) = \Omega((|\my| + \mv(\md))\epsilon L_{i-1}) = \Omega(|\mv(\mx)|\epsilon^2 L_i),
	\end{equation*}
as claimed.	%\qued
\end{proof}

\begin{lemma}\label{lm:Step4Potential} For every subgraph $\mx\in \mathbb{X}_4$,  it holds that 
	$\Delta^+_{i+1}(\mx) = \Omega\left( |\mv(\mx)|L_i\epsilon^2\right)$.
\end{lemma}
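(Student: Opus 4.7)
The plan is to split along the two cases hinted in the proof sketch: (a) $\mi(\nu)\cap\mi(\mu)=\emptyset$ and (b) $\mi(\nu)\cap\mi(\mu)\neq\emptyset$. In both cases I first rewrite the corrected potential change in a useful form. Since $\mx=\mi(\nu)\cup\mi(\mu)\cup\{(\nu,\mu)\}$, and every edge of $\mx$ other than the single $\me_i$-edge $(\nu,\mu)$ is an $\msttilde_i$-edge lying on one of the two interval paths, the sum $\sum_{\varphi\in\mv(\mx)}\omega(\varphi)+\sum_{\mbe\in\msttilde_i\cap\me(\mx)}\omega(\mbe)$ equals the total augmented weight of the $\msttilde_i$-portion of $\mx$. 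Because a path's total augmented weight coincides with its augmented diameter, this sum equals $\adm(\mi(\nu))+\adm(\mi(\mu))$ in case (a), and $\adm(\Ptilde')$ in case (b), where $\Ptilde':=\mi(\nu)\cup\mi(\mu)$ is a subpath of $\Ptilde$.

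In case (a), the two intervals are connected only through the shortcut edge $(\nu,\mu)$, so $\adm(\mx)$ is realized by the two far endpoints $a\in\mi(\nu)$ and $b\in\mi(\mu)$ through the shortcut: $\adm(\mx)=d^+_{\mi(\nu)}(a,\nu)+\omega(\nu,\mu)+d^+_{\mi(\mu)}(\mu,b)$. The gap $\adm(\mi(\nu))-d^+_{\mi(\nu)}(a,\nu)$ equals (modulo $\omega(\nu)$) the augmented distance from $\nu$ to the opposite endpoint of $\mi(\nu)$, which is at least $(1-2(g+1)\epsilon)L_i$ by the half-interval estimate in part~(b) of the Claim inside the proof of \Cref{lm:Clustering-Step4}; the symmetric bound holds for $\mi(\mu)$. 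Combining with $\omega(\nu,\mu)<L_i$, a direct calculation yields $\Delta^+_{i+1}(\mx)\geq 2(1-2(g+1)\epsilon)L_i-L_i=(1-4(g+1)\epsilon)L_i=\Omega(L_i)$ for $\epsilon\leq \tfrac{1}{8(g+1)}$. Since $|\mv(\mx)|=\Theta(1/\epsilon)$ by Item~(3), this is $\Omega(\epsilon^2|\mv(\mx)|L_i)$, as required.

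In case (b), $\Ptilde'$ is a single subpath of $\Ptilde$ containing both $\nu$ and $\mu$, and $(\nu,\mu)$ provides a shortcut across the sub-subpath $\msttilde_i[\nu,\mu]$. A short case analysis over vertex pairs in $\mx$ (using $\adm(\mi(\nu)),\adm(\mi(\mu))\leq 2L_i$, and noting that for pairs crossing between the two intervals the shortcut route is always at most as long as the $\Ptilde'$-route, while within-interval distances are dominated by the shortcut $(a,b)$-distance) shows that $\adm(\mx)$ is again realized by $(a,b)$ through the shortcut. Setting $M:=d^+_\Ptilde(\nu,\mu)-\omega(\nu)-\omega(\mu)$, the identities $\adm(\Ptilde')=d^+_\Ptilde(a,\nu)+M+d^+_\Ptilde(\mu,b)$ and $\adm(\mx)=d^+_\Ptilde(a,\nu)+\omega(\nu,\mu)+d^+_\Ptilde(\mu,b)$ give $\Delta^+_{i+1}(\mx)=\adm(\Ptilde')-\adm(\mx)=M-\omega(\nu,\mu)$.

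The main obstacle, and the key step, is to extract $\Omega(\epsilon L_i)$ from $M-\omega(\nu,\mu)$. This is where non-removability (property~(3) of \Cref{def:GiProp}) enters. Every node of the long path $\Ptilde\subseteq\Ftilde^{(4)}_i$ has $\msttilde_i$-degree at most $2$ by Item~(2a) of \Cref{lm:Clustering-Step3}, so condition~(3a) of \Cref{def:GiProp} holds for the edge $(\nu,\mu)$; hence (3b) must fail, i.e., $\omega(\msttilde_i[\nu,\mu])>t(1+6g\epsilon)\omega(\nu,\mu)$, which translates to $\omega(\nu)+M+\omega(\mu)>(1+6g\epsilon)\omega(\nu,\mu)$. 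Using $\omega(\nu,\mu)\geq L_i/(1+\epsilon)$ from property~(3) of \Cref{def:ClusterGraph-Param} and $\omega(\nu),\omega(\mu)\leq g\epsilon L_i$ from \hyperlink{P3'}{property (P3')}, I obtain $M-\omega(\nu,\mu)>6g\epsilon\cdot L_i/(1+\epsilon)-2g\epsilon L_i=\Omega(\epsilon L_i)$ for $\epsilon\leq \tfrac{1}{8(g+1)}$, which is $\Omega(\epsilon^2|\mv(\mx)|L_i)$ since $|\mv(\mx)|=O(1/\epsilon)$, closing case~(b).
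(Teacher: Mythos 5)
Your proposal only addresses the subgraphs $\mx\in\mathbb{X}_4$ as they are \emph{created in Step~4}; the paper's Lemma~\ref{lm:Step4Potential}, as its proof makes explicit, concerns these subgraphs \emph{after possible augmentation in Step~5} (Steps 5A/5B attach low-diameter trees and prefix/suffix subpaths of long paths to subgraphs of $\mathbb{X}_4$ in a star-like way). The paper reduces the post-augmentation bound to the pre-augmentation one (Item~(4) of \Cref{lm:Clustering-Step4}) by a short argument showing that each added node contributes $\Omega(\eps L_i)$ of extra corrected potential while increasing $|\mv(\mx)|$ only proportionally, so that $\Delta^+_{i+1}(\mx^+)\geq \Omega(\eps L_i\,|\my|)+\Delta^+_{i+1}(\mx)$ for $\my$ the augmented nodes off the diameter path, and then repackages everything as $\Omega(|\mv(\mx^+)|\eps^2 L_i)$. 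You omit this step entirely, so what you have proved is the (easier, and separate) statement of Item~(4) in \Cref{lm:Clustering-Step4}, not \Cref{lm:Step4Potential} itself; this is the concrete gap.

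For the pre-augmentation bound your route does differ from the paper's. The paper (in Section~\ref{subsec:ProofStep3}) fixes a diameter path $\md$ of $\mx$, sets $\my=\mx\setminus\mv(\md)$, and lower-bounds the ``off-diameter'' potential $\Phi^+(\my)$, splitting on whether $\md$ contains the unique $\me_i$-edge $\mbe=(\nu,\mu)$; this works regardless of which pair realizes the diameter. You instead try to pin down $\adm(\mx)$ exactly, asserting it is realized by the end-to-end pair $(a,b)$ through the shortcut, and then compute $\Phi^+(\mx)-\adm(\mx)$ in closed form. That identity is cleaner when it holds, but the claim that $(a,b)$ realizes $\adm(\mx)$ is not immediate and you assert it without proof: in case~(a) the diameter could a priori be realized by a pair within a single interval, and in case~(b) by a pair inside the cycle created by the shortcut (an antipodal pair on the cycle). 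Both possibilities can be ruled out (in~(a) by observing that if both realizers lie in one interval, then $\Delta^+_{i+1}(\mx)$ equals the other interval's augmented diameter, which is still $\Omega(L_i)$; in~(b) by comparing $d^+_\mx(a,b)\approx A+C+B$ against the cycle's antipodal distance $\approx(M'+C)/2+\max\{A,B\}$ and using $A,B\geq(1-O(g\eps))L_i$, $C\geq L_i/(1+\eps)$, $M'\leq 2L_i$, which yields a gap of $\Omega(L_i)$ for $\eps\leq 1/(8(g+1))$). But as written, the proposal leaves this as an unverified claim, and moreover in case~(b) you assert \emph{equality} $\Delta^+_{i+1}(\mx)=M-\omega(\nu,\mu)$, which fails (in the useful direction $\Delta^+<M-\omega(\nu,\mu)$) if $\adm(\mx)>d^+_\mx(a,b)$; so the burden of proving the diameter-pair claim is real, not cosmetic.

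Closing the two issues would give a complete and somewhat more direct argument than the paper's: first verify the diameter-pair claim (a few lines of computation); then add the paper's short Step~5 bookkeeping to upgrade the bound from the pre-augmentation $\mx$ to the final $\mx^+$.
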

\begin{proof}  Let $\mx\in \mathbb{X}_4$ be a subgraph initially formed in Step 4; $\mx$ is possibly augmented in Step 5. Let $\mx^+$ be $\mx$ after the augmentation (if any).  Let $\md^+$ be the augmented diameter path of $\mx^+$ and $\md = \md^+\cap \mx$. Since the augmentation in Step 5 is by attaching trees to $\mx$ via edges, $\md$ is a path in $\mx$. (Note that $\mx$ might contain a cycle, and if there is a cycle, the cycle must contain the single edge of $\mx$ in $\me_i$.) First, we observe that $|\mv(\md^+)| = O(\frac{1}{\eps})$ by the same argument as in \Cref{eq:Step2-D-size}. Furthermore, $|\mv(\mx)| = \Omega(\frac{1}{\eps})$ by Item (3) in \Cref{lm:Clustering-Step4}. Thus, $|\md^+| = O(|\mv(\mx)|)$.

 Let $\my = \mv(\mx^+)\setminus \mv(\md^+)$. Since $|\md^+| = O(|\mv(\mx)|)$, by Item (4) in \Cref{lm:Clustering-Step4}, it holds that:
 
 \begin{equation}\label{eq:Step4PotentialXplus}
 	\Delta_{i+1}^{+}(\mx)  = \Omega(|\mv(\mx)|\eps^2 L_i) =  \Omega(|\mv(\mx) \cup \mv(\md^+)|\eps^2 L_i).
 \end{equation}
Furthermore,
 \begin{equation*}
 	\begin{split}
 		\Delta_{i+1}^{+}(\mx^+) & = \sum_{\varphi \in \mx^+}\omega(\varphi) + \sum_{\mathbf{e}\in \me(\mx^+)\cap \msttilde_i} \omega(\mathbf{e}) - \omega(\md^+)\\
 		&\geq  \sum_{\varphi \in \my}\omega(\varphi) + \sum_{\varphi \in \mx}\omega(\varphi) + \sum_{\mathbf{e}\in \me(\mx)\cap \msttilde_i} \omega(\mathbf{e}) - \omega(\md) \\
 		& \geq \Omega(L_{i}\eps |\my|) +  \Delta_{i+1}^{+}(\mx) \stackrel{\mbox{\tiny{~\cref{eq:Step4PotentialXplus}}}}{=} \Omega(|\my|\eps L_i)  + \Omega(|\mv(\mx) \cup \mv(\md^+)|\eps^2 L_i)\\
 		&= \Omega(|\mv(\mx) \cup \mv(\md^+) \cup \my|\eps^2 L_i) = \Omega(|\mv(\mx^+)|\eps^2 L_i),\\
 	\end{split}
 \end{equation*} 
 as claimed. %\qued
\end{proof}

Next, we show Item (3)  of \Cref{lm:Clustering} regarding the corrected potential changes of subgraphs in $\mathbb{X}$.

\begin{lemma}\label{lm:Step5PotentialChange}  $\Delta_{i+1}^+(\mx) \geq 0$ for every $\mx \in \mathbb{X}$, and
	\begin{equation*}
		\sum_{\mx \in \mathbb{X}\setminus \mathbb{X}^{\lowm}} \Delta_{i+1}^+(\mx) = \sum_{\mx \in \mathbb{X}\setminus \mathbb{X}^{\lowm}} \Omega(|\mv(\mx)|\eps^2 L_i). 
	\end{equation*}
\end{lemma}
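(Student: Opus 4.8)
\textbf{Proof plan for \Cref{lm:Step5PotentialChange}.}

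The plan is to first establish the pointwise non-negativity $\Delta^+_{i+1}(\mx) \geq 0$ for every subgraph $\mx \in \mathbb{X}$, and then to establish the aggregate lower bound on $\sum_{\mx \in \mathbb{X}\setminus\mathbb{X}^{\lowm}}\Delta^+_{i+1}(\mx)$ by case analysis on which step created each subgraph. For the non-negativity, recall $\Delta^+_{i+1}(\mx) = \left(\sum_{\varphi_C\in\mv(\mx)}\omega(\varphi_C)\right) - \adm(\mx) + \sum_{\mbe\in\msttilde_i\cap\me(\mx)}w(\mbe)$. The subgraphs formed in Steps 1, 2, 4 already have $\Delta^+_{i+1}(\mx) = \Omega(|\mv(\mx)|L_i\eps^k) \geq 0$ by \Cref{lm:Step1Potential,lm:Step2Potential,lm:Step4Potential} (with $k=1$ or $2$), so those are done. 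The remaining subgraphs are exactly those in $\mathbb{X}^{\internal}_5 \cup \mathbb{X}^{\prefix}_5$ (up to possible augmentation of Step-1/2/4 subgraphs, which only adds nodes and $\msttilde_i$-edges and hence, by the same bookkeeping as in \Cref{lm:Step2Potential}, can only increase $\Delta^+_{i+1}$). For a subgraph $\mx$ formed in Step 5, we have $\me(\mx)\subseteq\msttilde_i$ (it is a subpath of $\msttilde_i$, possibly with star-augmentations, never using an $\me_i$-edge), so $\Delta^+_{i+1}(\mx) = \sum_{\varphi_C\in\mv(\mx)}\omega(\varphi_C) + \sum_{\mbe\in\me(\mx)}w(\mbe) - \adm(\mx)$. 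Since $\adm(\mx)$ is, by definition, the augmented weight of \emph{some} path inside $\mx$, and all node/edge weights are non-negative, we get $\adm(\mx) \leq \sum_{\varphi_C\in\mv(\mx)}\omega(\varphi_C) + \sum_{\mbe\in\me(\mx)}w(\mbe)$; hence $\Delta^+_{i+1}(\mx)\geq 0$. (This is the ``worst-case example'' mentioned in the text preceding \Cref{lm:Step2Potential}, where the bound can be exactly $0$.)

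For the aggregate bound, I would partition $\mathbb{X}\setminus\mathbb{X}^{\lowm}$. By the definition of the partition $\{\mv^{\high}_i,\mv^{\lowp}_i,\mv^{\lowm}_i\}$, we have $\mv^{\lowm}_i = \cup_{\mx\in\mathbb{X}^{\internal}_5}\mv(\mx)$, and by Item (1) of \Cref{lm:Xlowm} this means $\mathbb{X}^{\lowm} = \mathbb{X}^{\internal}_5$; hence $\mathbb{X}\setminus\mathbb{X}^{\lowm} = \mathbb{X}_1\cup\mathbb{X}_2\cup\mathbb{X}_4\cup\mathbb{X}^{\prefix}_5$ (again modulo the star-augmentations, which attach Step-5 subpaths/subtrees onto Step-1/2/4 subgraphs and are absorbed into them). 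For the subgraphs in $\mathbb{X}_1,\mathbb{X}_2,\mathbb{X}_4$, \Cref{lm:Step1Potential,lm:Step2Potential,lm:Step4Potential} give $\Delta^+_{i+1}(\mx) = \Omega(|\mv(\mx)|\eps^2 L_i)$ directly (Step 1 even gives $\Omega(|\mv(\mx)|\eps L_i)$, which is stronger). The only genuinely new piece is to handle $\mathbb{X}^{\prefix}_5$: each such $\mx$ is a prefix/suffix subpath of a long path, of augmented diameter in $[L_i,2L_i]$ and with $|\mv(\mx)| = \Theta(1/\eps)$ by \Cref{lm:Clustering-Step5}. For these, I claim $\Delta^+_{i+1}(\mx)$ may be as small as $0$ — but there are only $O(1)$ such prefix/suffix subgraphs \emph{per long path}, and each long path, being a tree in $\Ftilde^{(4)}_i$, is itself disjoint from the others and from all Step-1/2/4 subgraphs. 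Actually the cleanest route: for $\mx\in\mathbb{X}^{\prefix}_5$ we simply fall back on $\Delta^+_{i+1}(\mx)\geq 0$ and observe that the claimed identity $\sum_{\mx\in\mathbb{X}\setminus\mathbb{X}^{\lowm}}\Delta^+_{i+1}(\mx) = \sum_{\mx\in\mathbb{X}\setminus\mathbb{X}^{\lowm}}\Omega(|\mv(\mx)|\eps^2 L_i)$ is an \emph{equation between two sums}, i.e. the left side is both $\geq$ and $\leq$ a constant times the right side; the $\leq$ direction is immediate from $\adm(\mx)\geq L_i = \Omega(|\mv(\mx)|\eps^2 L_i \cdot \eps^{-?})$... here I need to be careful. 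Re-reading: $|\mv(\mx)| = O(1/\eps)$ for every $\mx$, so $|\mv(\mx)|\eps^2 L_i = O(\eps L_i)$, while $\Delta^+_{i+1}(\mx)\leq \sum\omega(\varphi) \leq \adm(\mx)+\,(\text{corrections})$; using \hyperlink{P3'}{(P3')}, $\adm(\mx)\leq gL_i$ and each node weight $\leq gL_{i-1} = g\eps L_i$, so $\sum_{\varphi\in\mv(\mx)}\omega(\varphi) = O(|\mv(\mx)|\eps L_i) = O(\eps L_i \cdot |\mv(\mx)|)$... wait, that is $O(|\mv(\mx)|\eps L_i)$, not $O(|\mv(\mx)|\eps^2 L_i)$. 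So the $\leq$ direction is \emph{not} trivially the same order unless $|\mv(\mx)| = O(1/\eps)$ is used to convert $|\mv(\mx)|\eps L_i$ into $L_i$ and $|\mv(\mx)|\eps^2 L_i$ into $\eps L_i$ — these differ by a factor $\eps$. I think the intended reading is that the identity is really just the $\Omega$ (lower) bound dressed up as an equation, and the matching upper bound $O$ is what is \emph{separately} used elsewhere; so I would prove only $\sum_{\mx\in\mathbb{X}\setminus\mathbb{X}^{\lowm}}\Delta^+_{i+1}(\mx) = \Omega\!\big(\sum_{\mx\in\mathbb{X}\setminus\mathbb{X}^{\lowm}}|\mv(\mx)|\eps^2 L_i\big)$ and note that the reverse $O$-direction for the \emph{non-$\mathbb{X}^{\prefix}_5$} part follows from the same node-count bounds, treating $\sum_{\mx\in\mathbb{X}^{\prefix}_5}|\mv(\mx)|\eps^2 L_i$ as a lower-order term that is absorbed: since there are $O(1)$ prefix-subgraphs per long path and the long paths partition a subset of $\mv_i$, $\sum_{\mx\in\mathbb{X}^{\prefix}_5}|\mv(\mx)| = O(1/\eps)\cdot(\#\text{long paths})$, and each long path of augmented diameter $\geq 6L_i$ forces $\Omega(1/\eps)$ nodes, at least one of which must lie in a Step-1/2/4 subgraph or a prefix... hmm, this requires care.

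\textbf{Main obstacle.} The delicate point — and the step I expect to need the most care — is the treatment of $\mathbb{X}^{\prefix}_5$ in the aggregate bound. Individually these subgraphs can have zero corrected potential change, so the bound must be amortized: one must argue that every long path in $\Ftilde^{(5)}_i$ contributing prefix-subgraphs is ``anchored'' (via \Cref{obs:Clustering-F5} / Item (2b) of \Cref{lm:Clustering-Step3}, \Cref{lm:Clustering-Step4}) to some Step-1/2/4 subgraph whose potential surplus $\Omega(|\mv(\mx)|\eps^2 L_i) = \Omega(\eps L_i)$ can ``pay for'' the $O(1/\eps)$ prefix-nodes, each of weight $O(\eps L_i)$, i.e. $O(\eps L_i)\cdot O(1/\eps) = O(L_i)$ of excess potential per long path — but that is a factor $1/\eps$ larger than the $\Omega(\eps L_i)$ surplus available, which does not close. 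So the correct accounting must instead be: either (i) $\mathbb{X}^{\prefix}_5$ is empty in the degenerate/near-degenerate regime covered by \Cref{lm:exception}, where the whole statement is handled by observing $\mathbb{X}\setminus\mathbb{X}^{\lowm}$ restricted appropriately is also degenerate; or (ii) in the non-degenerate case, Step 5B only produces $O(1)$ prefix subpaths per long tree $\Ttilde$, and each such $\Ttilde$ of augmented diameter $\geq 6L_i$ must have had a Step-4 subgraph or a Step-5A attachment carved from a portion of it of comparable total weight, so that $\sum_{\mx\in\mathbb{X}^{\prefix}_5}|\mv(\mx)|$ is dominated by $\sum_{\mx\in\mathbb{X}_4\cup\mathbb{X}^{\internal}_5}|\mv(\mx)|$ up to constants, and the latter's potential surplus covers it. Pinning down exactly which earlier lemma (likely \Cref{lm:Clustering-Step4} Item (2)/(5) combined with the $\Theta(1/\eps)$-node bounds, and the fact that a long path of augmented diameter $\geq 6L_i$ gets chopped into $\Omega(1)$ pieces of which only the two end pieces are ``prefix'') supplies this domination is the crux; I would isolate it as a short counting sub-claim before assembling the final inequality.
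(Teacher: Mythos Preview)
Your non-negativity argument and the decomposition $\mathbb{X}\setminus\mathbb{X}^{\lowm}=\mathbb{X}_1\cup\mathbb{X}_2\cup\mathbb{X}_4\cup\mathbb{X}^{\prefix}_5$ are correct and match the paper. The gap is precisely in your ``Main obstacle'' paragraph, and it is an arithmetic slip that makes you believe the amortization fails when in fact it closes exactly.

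You compute the potential you need to \emph{pay for} a prefix subgraph $\Ptilde_1\in\mathbb{X}^{\prefix}_5$ as ``$O(1/\eps)$ nodes, each of weight $O(\eps L_i)$, i.e.\ $O(L_i)$''. But the target in the lemma is $\Omega(|\mv(\mx)|\eps^2 L_i)$, so each node of $\Ptilde_1$ needs only $\Omega(\eps^2 L_i)$, not $\Omega(\eps L_i)$. For $|\mv(\Ptilde_1)|=\Theta(1/\eps)$ nodes this is $\Theta(\eps L_i)$ total --- exactly the order of surplus you identified, and the accounting closes.

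The paper supplies the surplus in a clean way that you circled but did not land on. By Item~(2b) of \Cref{lm:Clustering-Step3}, at least one endpoint of the long path $\Ttilde$ is connected via an $\msttilde_i$-edge to a subgraph in $\mathbb{X}_1\cup\mathbb{X}_2$. Hence in Step~5B the end-subpath $\Ptilde_2$ at that endpoint is \emph{augmented} into some $\mx\in\mathbb{X}_1\cup\mathbb{X}_2\cup\mathbb{X}_4$ (it does \emph{not} enter $\mathbb{X}^{\prefix}_5$), so at most one end-subpath $\Ptilde_1$ per long path lands in $\mathbb{X}^{\prefix}_5$. The $\Theta(1/\eps)$ nodes of $\Ptilde_2$ now sit inside a Step-1/2/4 subgraph, where \Cref{lm:Step1Potential,lm:Step2Potential,lm:Step4Potential} (which already account for Step-3/5 augmentation) give average corrected potential $\Omega(\eps^2 L_i)$ per node, i.e.\ $\Omega(\eps L_i)$ over $\Ptilde_2$. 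Redistributing half of this to the $\Theta(1/\eps)$ nodes of $\Ptilde_1$ gives each node $\Omega(\eps^2 L_i)$, which is the bound. Your alternatives (i) and (ii) are unnecessary detours once this pairing is seen.

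One minor point: you worry about the ``$\leq$'' direction of the displayed equation. The statement uses ``$=\Omega(\cdot)$'' as a one-sided lower bound on the sum; there is no matching upper bound to prove.
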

\begin{proof}
	If $\mx \in \mathbb{X}_1\cup \mathbb{X}_2\cup \mathbb{X}_4$, then $\Delta^+_{i+1}(\mx)\geq 0$ by \Cref{lm:Step1Potential,lm:Step2Potential,lm:Step4Potential}. Otherwise, $\mx \in \mathbb{X}^{\prefix}_5\cup \mathbb{X}^{\internal}_5$, and hence $\mx$ is a subpath of $\msttilde_{i}$. Thus, by definition, $\Delta^+_{i+1}(\mx) = \sum_{\varphi\in \mx}\omega(\varphi) + \sum_{\mathbf{e}\in \me(\mx)\cap \msttilde_{i}}\omega(\mathbf{e}) - \adm(\mx) = 0$. That is, $\Delta^+_{i+1}(\mx)\geq 0$ in every case.
	
	We now show a lower bound on the average potential change of subgraphs in $\mathbb{X}\setminus \mathbb{X}^{\lowm}$. We assume that we are not in the degenerate case; otherwise, $\mathbb{X}\setminus \mathbb{X}^{\lowm} = \emptyset$ and there is nothing to prove. 	By Item (1) of \Cref{lm:Xlowm}, $\mathbb{X}^{\lowm} =\mathbb{X}^{\internal}_5$ and only subgraphs in $\mathbb{X}^{\prefix}_5$ may have negative potential change. By \Cref{lm:Step1Potential,lm:Step2Potential,lm:Step4Potential}, on average, each node $\varphi$ in any subgraph $\mx \in \mathbb{X}_1\cup \mathbb{X}_2\cup \mathbb{X}_4$ has $\Omega(\eps^2 L_i)$ corrected potential change, denoted by $\overline{\Delta}(\varphi)$.
	
	By construction, a subgraph in $\mathbb{X}^{\prefix}_{5}$ is a prefix (or suffix), say $\Ptilde_1$, of a long path $\Ptilde$. The other suffix, say $\Ptilde_2$, of $\Ptilde$ is attached to a subgraph, say $\mx \in \mathbb{X}_1\cup \mathbb{X}_2\cup \mathbb{X}_4$ by the construction of Step \hyperlink{5B}{5B} and Item (2) \Cref{lm:Clustering-Step3}. Since $|\mv(\Ptilde_2)| = \Omega(1/\eps)$ by Item (3)  of \Cref{lm:Clustering-Step5}, $\sum_{\varphi \in \Ptilde_2}\overline{\Delta}(\varphi) = \Omega(1/\eps)(\eps^2 L_i) = \Omega(\eps L_i)$. We distribute half this corrected potential change to all the nodes in $\Ptilde_1$, by Item (3)  of \Cref{lm:Clustering-Step5}, each gets $\Omega(\frac{\eps L_i}{1/\eps}) = \Omega(\eps^2 L_i)$. This implies:
		\begin{equation*}
			\sum_{\mx \in \mathbb{X}\setminus \mathbb{X}^{\lowm}} \Delta_{i+1}^+(\mx) = \sum_{\varphi \in \mv_i\setminus \mv^{\lowm}_i} \Omega(\eps^2 L_i) =  \sum_{\mx \in \mathbb{X}\setminus \mathbb{X}^{\lowm}} \Omega(|\mv(\mx)|\eps^2 L_i),
		\end{equation*}
	as desired.
	 %\qued	
\end{proof}

We are now ready to prove \Cref{lm:Clustering} that we restate below.

\Clustering*
\begin{proof}
	We observe that Items (1), (2) and (4) follow directly \Cref{lm:exception} and  \Cref{lm:Xlowm}. Item (5) follows from \Cref{lm:XProp}. Item (3) follows from \Cref{lm:Step5PotentialChange}. The construction time is asymptotically the same as the construction time of $\mathbb{X}$, which is $O((|\mv_i| + |\me_i|)\eps^{-1})$ by \Cref{lm:XProp}. 
	
	Finally, we compute the augmented diameter of each subgraph $\mx \in \mathbb{X}$. We observe that the augmentations in Step 3 and Step 5 do not create any cycle. Thus, if $\mx$ is initially formed in Steps 1, 2 or 5B, then $\mx$ is eventually a tree. It follows that the augmented diameter of $\mx$ can be computed in $O(|\mv(\mx)|)$ time by a simple tree traversal\footnote{The same algorithm as in \Cref{lm:level1Const} applies: root the tree at an arbitrary node and visit it in post-order. For each node $\varphi$ in the tree, keep track of (the weight of) the path in the subtree rooted at $\varphi$ that has maximum diameter and ends at $\varphi$.}. If $\mx$ is formed in Step 4, then it has exactly one edge $\mbe$ not in $\msttilde_{i}$ by Item (1) in \Cref{lm:Clustering-Step4}  and that  $\mx$ contains at most one cycle. Let $\mz$ be such a cycle (if any); $\mz$ has $O(1/\eps)$ edges by Item (3) in \Cref{lm:Clustering-Step4}. Thus, we can reduce computing the diameter of $\mx$ to computing the diameter of trees by guessing an edge of $\mz$ that does not belong to the diameter path of $\mx$ and remove this edge from $\mx$; the resulting graph is a tree. There are $O(\frac{1}{\eps})$ guesses, and for each guess, computing the diameter takes $O(|\mv(\mx)|)$ time, which implies $O(|\mv(\mx)|\eps^{-1})$ time\footnote{It is possible to compute the augmented diameter of $\mx$ in $O(|\mv(\mx)|)$ time using a more involved approach.} to compute $\adm(\mx)$. Thus, the total running time to compute the augmented diameter is $\sum_{\mx\in \mathbb{X}} O(|\mv(\mx)|\eps^{-1}) = O(|\mv_i|\eps^{-1})$.  
	%\qued
\end{proof}

\subsection{Completing the Proof of \texorpdfstring{\Cref{lm:Clustering-Step4}}{Step 4}} \label{subsec:ProofStep3}

In this section, we complete the proof of Item (4) in \Cref{lm:Clustering-Step4}. We consider two cases: (Case 1) $\mi(\nu) \cap \mi(\mu) =\emptyset$ and (Case 2) $\mi(\nu) \cap \mi(\mu) \not= \emptyset$. We reuse the notation in \Cref{lm:Clustering-Step4} here.

\begin{wrapfigure}{r}{0.5\textwidth}
	\begin{center}
		\includegraphics[width=0.35\textwidth]{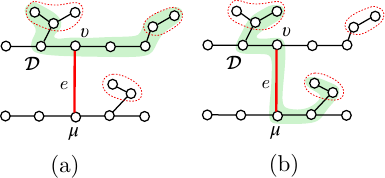}
	\end{center}
	\caption{\footnotesize{$\mathcal{D}$ is the diameter path and enclosed trees are augmented to a Step-4 subgraph in Step 5A. The green shaded regions contain nodes in $\mathcal{D}$.  (a) $\mathcal{D}$ does not contain $\mathbf{e}$. (b) $\mathcal{D}$ contains $\mathbf{e}$.}}
	\vspace{-15pt}
	\label{fig:Step3-diam}
\end{wrapfigure}

\paragraph{Case 1: $\mi(\nu) \cap \mi(\mu) =\emptyset$} Let $\mx = (\nu,\mu)\cup \mi(\nu) \cup \mi(\mu)$ where $\mbe = (\nu,\mu)$ is the only edge in $\me_i$ contained in $\mx$.  For any subgraph $\mz$ of $\mx$, we define: 
\begin{equation}\label{eq:Potential-Y-D}
	\Phi^+(\mz) = \sum_{\alpha \in \mz}\omega(\alpha) + \sum_{\mbe' \in \widetilde{\mst}_i\cap \me(\mz)} \omega(\mbe)
\end{equation}
to be the total weight of nodes and $\widetilde{\mst}_i$ edges in $\mz$.  Let $\md$ be an augmented diameter path of $\mx$, and $\my = \mx \setminus \mv(\md)$ be the subgraph obtained from $\mx$ by removing nodes on $\md$. Let $\mi(\nu)$ and $\mi(\mu)$ be two intervals in the construction in Step 4 that are connected by an edge $\mbe = (\nu,\mu)$.

\begin{claim}\label{clm:PotentialY-bound} $\Phi^+(\my) =  \frac{5 L_i}{4} +\Omega(|\mv(\my)|\epsilon L_i)$.
\end{claim}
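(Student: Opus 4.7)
The plan is a case analysis on the shape of the augmented-diameter path $\md$ of $\mx$. Since $\mi(\nu)$ and $\mi(\mu)$ are vertex-disjoint subpaths of $\msttilde_i$ connected only through the non-tree edge $\mbe = (\nu,\mu)$, and since both $\nu$ and $\mu$ are interior (blue) vertices of their respective intervals, $\mx$ is a tree and $\md$ must fall into one of three patterns: (a) $\md \subseteq \mi(\nu)$; (b) $\md \subseteq \mi(\mu)$; or (c) $\md$ crosses $\mbe$ and takes the form $P_\nu \circ \mbe \circ P_\mu$, where $P_\nu$ is a subpath of $\mi(\nu)$ ending at $\nu$ and $P_\mu$ is a subpath of $\mi(\mu)$ ending at $\mu$.

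In cases (a) and (b), $\adm(\md) \leq \max\{\adm(\mi(\nu)), \adm(\mi(\mu))\} \leq 2L_i$ by Claim~\ref{clm:Interval-node}(a). To rule these out, I would exhibit a case-(c) candidate path with larger augmented weight: pick $P_\nu$ and $P_\mu$ to be the longer of the two sides $\mi_1, \mi_2$ (as in Claim~\ref{clm:Interval-node}(b)) of $\nu$ and $\mu$ respectively, yielding augmented weight at least $\omega(\mbe) + 2(1-2(g+1)\eps)L_i \geq L_i/(1+\eps) + 2(1-2(g+1)\eps)L_i$, which strictly exceeds $2L_i$ once $\eps \leq 1/(8(g+1))$. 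Hence $\md$ must sit in case (c); since $\md$ maximizes augmented weight, it uses one \emph{entire} side of $\nu$ (together with $\nu$) and one entire side of $\mu$, so $\my = \mv(\mx)\setminus \mv(\md)$ equals the union of the two unused sides, a pair of vertex-disjoint subpaths of $\msttilde_i$.

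Because all edges of $\my$ lie in $\msttilde_i$, $\Phi^+(\my)$ equals the sum of the augmented diameters of these two unused sides. By Claim~\ref{clm:Interval-node}(b), each contributes at least $(1-2(g+1)\eps)L_i$, so $\Phi^+(\my) \geq 2(1-2(g+1)\eps)L_i \geq 3L_i/2 = 5L_i/4 + L_i/4$ for $\eps \leq 1/(8(g+1))$. To repackage the surplus $L_i/4$ into the required form, note that each unused side has augmented diameter at most $L_i$ while each node weighs at least $L_{i-1} = \eps L_i$ by \hyperlink{P3'}{property (P3')}, so $|\mv(\my)| \leq 2/\eps$ and hence $|\mv(\my)|\eps L_i \leq 2L_i$. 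Therefore $L_i/4 \geq (|\mv(\my)|\eps L_i)/8 = \Omega(|\mv(\my)|\eps L_i)$, which yields the claimed bound.

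The main technical step is the case-(c)-beats-(a)/(b) inequality, which requires carefully combining the lower bound $\omega(\mbe) \geq L_i/(1+\eps)$, the side-augmented-diameter lower bound from Claim~\ref{clm:Interval-node}(b), and the $\leq 2L_i$ upper bound on $\adm(\mi(\cdot))$ from Claim~\ref{clm:Interval-node}(a). Once case (c) is established, the remainder is a direct calculation from Claim~\ref{clm:Interval-node}.
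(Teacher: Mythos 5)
Your proof is correct and uses the same core ingredients as the paper's — the interval bounds from Claim~\ref{clm:Interval-node}, the lower bound $\omega(\mbe)\ge L_i/(1+\psi)$, and the observation that $|\mv(\my)|\le O(1/\eps)$ so that a constant $L_i$-surplus can be repackaged as $\Omega(|\mv(\my)|\eps L_i)$ — but it is organized somewhat differently. The paper simply does a two-way split on whether $\md$ contains $\mbe$, and in \emph{each} branch it bounds $\Phi^+$ of the part of $\my$ inside $\mi(\nu)\cup\mi(\mu)$ from below by roughly $2L_i$ (either a whole untouched interval, or two untouched half-intervals). You instead first \emph{rule out} the non-crossing possibilities (a)/(b) by exhibiting a crossing leaf-to-leaf candidate path of augmented weight strictly above the $2L_i$ ceiling from Claim~\ref{clm:Interval-node}(a); the numerical check is fine for $\eps\le 1/(8(g+1))$ ($2(1-2(g+1)\eps)L_i + L_i/(1+\eps) \ge 3L_i/2 + L_i/2 = 2L_i$, with room to spare). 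This gives you the cleaner structural fact that $\my$ is exactly the union of the two unused half-intervals, from which the bound falls out by a single application of Claim~\ref{clm:Interval-node}(b). Both arguments are valid; your exclusion step is a nice observation the paper does not state, at the cost of one extra calculation, while the paper's direct treatment of both cases is marginally more robust. One cosmetic note: the paper's proof carries an $\ma = \my\setminus(\mi(\nu)\cup\mi(\mu))$ term, which in this setting (unaugmented Step-4 subgraph, $\mv(\mx)=\mv(\mi(\nu))\cup\mv(\mi(\mu))$) is vacuously empty; your version sidesteps that distraction.
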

\begin{proof}
	Let $\ma = \my \setminus (\mi(\nu)\cup \mi(\mu))$ be the subgraph of $\my$ obtained by removing every node in $\mi(\nu)\cup \mi(\mu)$ from $\my$, and $\mb = \my \cap (\mi(\nu)\cup \mi(\mu))$ be the subgraph of $\my$ induced by nodes of $\my$ in $(\mi(\nu)\cup \mi(\mu))$. Since every node has weight of at least $L_{i-1}$ by \hyperlink{P3'}{property (P3')}, we have 		\begin{equation}
		\Phi^+(\ma) \geq |\mv(\ma)| L_{i-1} = |\mv(\ma)| \epsilon L_i
	\end{equation}
	
	We consider two cases:
	\begin{itemize}
		\item \emph{Case 1: $\md$ does not contain the edge $(\nu,\mu)$.} See Figure~\ref{fig:Step3-diam}(a). In this case, $\md \subseteq  \widetilde{\mst}_i$, and  that $\mathcal{I}(\nu)\cap\md =\emptyset$ or $\mathcal{I}(\mu)\cap \md = \emptyset$ since $\mi(\nu)$ and $\mi(\mu)$ are connected only by $\mbe$. Focusing on $\mathcal{I}(\nu)$ (w.l.o.g), since $\mi(\nu)\subseteq \widetilde{\mst}_i$,  $\Phi^+(\mb)\geq \adm(\mathcal{I}(\nu)) \geq (2-(3g+2)\epsilon) L_i$ by Claim~\ref{clm:Interval-node}.
		
		\item  \emph{Case 2: $D$ contains the edge $(\nu,\mu)$.} See Figure~\ref{fig:Step3-diam}(b). In this case at least two sub-intervals, say $\mi_1,\mi_2$,  of four intervals $\{\mathcal{I}(\nu)\setminus \nu, \mathcal{I}(\mu)\setminus \mu\}$ are disjoint from $\md$.  
		By Claim~\ref{clm:Interval-node}, then $\Phi^+(\mb)\geq \adm(\mi_1) + \adm(\mi_2) \geq (2- 4(g+1)\epsilon) L_i$ by Claim~\ref{clm:Interval-node}.
	\end{itemize}	
	In both cases, $\Phi^+(\mb) \geq (2- 4(g+1)\epsilon) L_i \geq \frac{3 L_i}{2} $ when $\epsilon  \leq \frac{1}{8(g+1)}$.

	By Claim~\ref{clm:Interval-node}, $|\mv(\mb)|  = O(\epsilon^{-1})$. 	 This implies that:
	\begin{equation*}
		\begin{split}
			\Phi^+(\my) &= \Phi^+(\ma) + \Phi^+(\mb) \geq \Phi^+(\ma) + \frac{3 L_i}{2}~=~  \frac{5 L_i}{4} +  |\mv(\ma)|(\epsilon L_i) + \frac{L_i}{4}\\
			&= \frac{5 L_i}{4}+  |\mv(\ma)|(\epsilon L_i) + \Omega(|\mv(\mb)| \epsilon L_i)\\
			&=  \frac{5 L_i}{4}+ \Omega((|\mv(\ma)| + |\mv(\mb)|)\epsilon L_i) =\frac{5 L_i}{4}+ \Omega(|\mv(\my)|\epsilon L_i),
		\end{split}
	\end{equation*}
	which concludes the proof of Claim~\ref{clm:PotentialY-bound}.%\qued		 	
\end{proof}

Note that $\mv(\md) \leq \frac{gL_i}{L_{i-1}}  = O(\epsilon^{-1})$ since every node has weight at least $L_{i-1}$ by \hyperlink{P3'}{property (P3')}. Thus, we have:
\begin{equation*}
	\begin{split}
		\Delta_{i}^+(\mx)  &= \Phi^+(\md) + \Phi^+(\my) - \adm(\mx) = \Phi(\my) - \omega(\mbe)\\ & \geq L_i/4 +  \Omega(|\mv(\my)|\epsilon L_i) \qquad \mbox{(by Claim~\ref{clm:PotentialY-bound})}\\
		&= \Omega( |\mv(\md)| \epsilon L_i) + \Omega(|\mv(\my)|\epsilon L_i)= \Omega(|\mv(\mx)|\epsilon L_i)~.
	\end{split}
\end{equation*}

Thus, Item (4) of \Cref{lm:Clustering-Step4} follows.

	\begin{figure}[!ht]
		\begin{center}
			\includegraphics[width=0.8\textwidth]{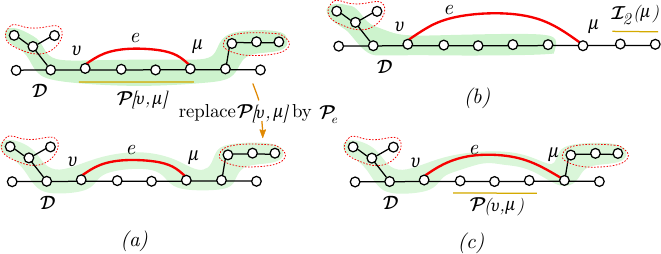}
		\end{center}
     %\vspace{-15pt}
		\caption{Nodes enclosed in dashed red curves are attached to $\mx$ in Step 4.  }
		\label{fig:replace-e-vsPe}
	\end{figure}

\paragraph{Case 2: $\mi(\nu) \cap \mi(\mu) \not=\emptyset$}
	Let $\md$ be a diameter path of $\mx$, and $\my = \mx \setminus \mv(\md)$. Recall that $\mx$ contains only one edge $\mbe = (\nu,\mu) \in \me_i$ by item (1) of \Cref{lm:Clustering-Step4}. Let $\mathcal{P}_{\mbe} = (\nu,\mbe,\mu)$ be the path that consists of only edge $\mbe$ and its endpoints. Let  $\mathcal{P}[\nu,\mu]$ be the subpath of $\msttilde_{i}$ between $\nu$ and $\mu$.

	We observe that $\mbe$ is not removable by Item (3) of \Cref{def:GiProp}, and by the fact that the path  $\mathcal{P}[\nu,\mu]$ is a path in $\msttilde_i$ in which every node has degree at most $2$ in $\msttilde_{i}$ (see (2a) in \Cref{lm:Clustering-Step3}), $\omega(\mathcal{P}[\nu,\mu]) \geq t(1+6g\eps)\omega(\mbe)  \geq (1+6g\eps)\omega(\mbe) $ since $t\geq 1$. Then it follows that:

\begin{equation}\label{eq:p-vs-pe}
		\begin{split}
				\omega(\mathcal{P}[\nu,\mu]) - 	\omega(\mathcal{P}_{\mbe}) ) &> 6g\epsilon  \cdot \omega(\mbe)    - w(\nu) - w(\mu) \\
				&> 6g\epsilon L_i/2 - 2g\epsilon L_i= g\epsilon L_i
		\end{split}
	\end{equation}
	In particular, this means that $	\omega(\mathcal{P}(\nu,\mu)) \geq 	\omega(\mbe)$.

	Thus, if $\md$ contains both $\nu$ and $\mu$, then it must contain $\mbe$, since otherwise, $\md$ must contain $\mathcal{P}[\nu,\mu]$ and by replacing $\mathcal{P}[\nu,\mu]$ with $\mathcal{P}_{\mbe}$ we obtain a shorter path by Equation~\eqref{eq:p-vs-pe} (see Figure~\ref{fig:replace-e-vsPe}(a)).

	\begin{claim}\label{clm:D-P-size}
		$|\mv(\mp[\nu,\mu])|\leq \frac{4}{\epsilon}$ and $|\mv(\md)|\leq \frac{g}{\epsilon}$.
	\end{claim}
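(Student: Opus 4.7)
\medskip

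\noindent\textbf{Proof plan.} My plan is to derive both inequalities from a single elementary ingredient: each node in $\mv_i$ carries an augmented weight of at least $L_{i-1} = \varepsilon L_i$ by property (\hyperlink{P3'}{P3'}). Thus, once I have an upper bound on the augmented diameter of the path in question, dividing by $\varepsilon L_i$ immediately yields the desired bound on the number of nodes.

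For the bound $|\mv(\mp[\nu,\mu])| \leq 4/\varepsilon$, I would first argue that the path $\mp[\nu,\mu]$ lies entirely inside $\mi(\nu) \cup \mi(\mu)$. This is the only slightly delicate point. Recall that in Case 2 we are assuming $\mi(\nu) \cap \mi(\mu) \neq \emptyset$, and both intervals are sub-paths of a common long path $\Ptilde \subseteq \Ftilde^{(4)}_i \subseteq \msttilde_i$ (by the construction of Step 4). Since two sub-paths of a single path that share at least one node have their union equal to a single sub-path, $\mi(\nu) \cup \mi(\mu)$ is itself a sub-path of $\Ptilde$ containing both $\nu$ and $\mu$; hence the unique $\msttilde_i$-path between $\nu$ and $\mu$ is contained in this union. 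By Claim~\ref{clm:Interval-node}(a), $\adm(\mi(\nu)), \adm(\mi(\mu)) \leq 2 L_i$, so the union has augmented diameter at most $4L_i$, and therefore so does $\mp[\nu,\mu]$.

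For the bound $|\mv(\md)| \leq g/\varepsilon$, I would simply invoke property (\hyperlink{P3'}{P3'}), already established for $\mx \in \mathbb{X}_4$ in the proof of Lemma~\ref{lm:Clustering-Step4} (or more crudely use the $5L_i$ bound from Item~(2) of that lemma). Either way, since $\md$ is an augmented diameter path of $\mx$, $\adm(\md) = \adm(\mx) \leq g L_i$, and dividing by the per-node weight lower bound $L_{i-1} = \varepsilon L_i$ gives the bound.

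There is really no substantial obstacle here; the one step that requires any thought is recognizing that in Case 2 the two overlapping intervals $\mi(\nu),\mi(\mu)$, together with the fact that both are sub-paths of the same tree path $\Ptilde \subseteq \msttilde_i$, force $\mp[\nu,\mu]$ to stay inside $\mi(\nu) \cup \mi(\mu)$. Once this containment is justified, both bounds reduce to dividing augmented diameters by $L_{i-1}$.
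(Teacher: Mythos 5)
Your proposal is correct and follows essentially the same approach as the paper: bound the augmented diameter of the path in question, then divide by the per-node weight lower bound $L_{i-1}=\varepsilon L_i$ from property (P3'). The only difference is that the paper states the containment $\mp[\nu,\mu]\subseteq\mi(\nu)\cup\mi(\mu)$ without justification, while you correctly spell out why it holds (two overlapping sub-paths of a common tree path have a sub-path as their union, so the unique $\msttilde_i$-path between $\nu$ and $\mu$ stays inside it) — a reasonable addition, but not a different route.
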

	\begin{proof}
		Observe that $\adm(\mp[\nu,\mu])  \leq 4L_i$ since $\mp[\nu,\mu]\subseteq \mi(\nu)\cup \mi(\mu)$. Thus,  $|\mv(\mp[\nu,\mu])| \leq \frac{4L_i}{L_{i-1}} = \frac{4}{\epsilon}$ since each node of $\mp[\nu,\mu]$ has weight of at least $L_{i-1}$ by \hyperlink{P3'}{property (P3')}.  Similarly, $\adm(\md)\leq gL_i$ by \hyperlink{P3'}{property (P3')} while each node has weight at least $L_{i-1}$. Thus,  $|\mv(\md)|\leq \frac{gL_I}{L_{i-1}} = \frac{g}{\epsilon}$. %\qued
	\end{proof}

	\noindent We consider two cases:
	\begin{itemize}
		\item \emph{Case 1} If $\md$ does not contain edge $\mbe$ (see Figure~\ref{fig:replace-e-vsPe}(b)), then (a) $\md\subseteq \widetilde{\mst}_i$  and (b) $|\{\nu,\mu\} \cap \mathcal{D}| \leq 1$. From (a), we have: 
        \begin{equation}\label{eq:Dmx-vsmy}
			\begin{split}
				\Delta_{i+1}^+(\mx) &\geq \adm(\md) + \Phi^+(\my) - \adm(\mx) =  \Phi^+(\my) 
				\end{split}
		\end{equation}
    Suppose w.l.o.g.\ that $\nu \in \mathcal{D}$. Let $\mi_2(\mu)$ be a subpath of  $\mi(\mu)\setminus \{\mu\}$ such that  $\mi_2(\mu)\cap \md = \emptyset$; $\mi_2(\mu)$ exists since $\mu \not\in \mathcal{D}$.  Also recall by item (b) of \Cref{clm:Interval-node} that $\adm(\mi_2(\mu)) \geq (1-2(g+1)\epsilon)L_i \geq L_i/2$ when $\eps \leq \frac{1}{4(g+1)}$. Also, by item (b) of \Cref{clm:Interval-node} we have $|\mv (\mi_2(\mu))|  = O(1/\eps)$ and hence $|\mv(\mi_2(\mu))| \eps L_i = O(L_i)$. Continuing \Cref{eq:Dmx-vsmy}, we have:
    
    \begin{equation}
    \begin{split}
         \Delta_{i+1}^+(\mx) &\geq  \Phi^+(\my) \geq \adm(\mi_2(\mu))+ \Phi^+(\my\setminus \mi_2(\mu))\\
          &\geq L_i/2 + |\mv(\my\setminus \mi_2(\mu))|\epsilon L_{i}\\
          &= L_i/4 +  \Omega(|\mv(\mi_2(\mu))| + |\mv(\my\setminus \mi_2(\mu))|) \eps L_i\\
          &=   L_i/4 +  \Omega(|\mv(\my)|)\eps L_{i}\\
          &= \Omega(|\mv(\md)|)\eps L_i +   \Omega(|\mv(\my)|)\eps L_{i}  \qquad \mbox{(by Claim~\ref{clm:D-P-size})} \\
          &= \Omega(|\mv(\mx)|\epsilon L_i)
    \end{split}
    \end{equation}
		
		\item \emph{Case 2} If $\md$ contains $\mbe$  (see Figure~\ref{fig:replace-e-vsPe}(c)), then $\md \cap \mp(\nu,\mu) = \emptyset$; here $\mp(\nu,\mu)$ is the path obtained from $\mp[\nu,\mu]$ by removing its endpoints. It follows that
		\begin{equation}
			\begin{split}
				\Delta_{i+1}^+(\mx) &\geq \adm(\md) + \Phi^+(\my) - \adm(\mx) =  \Phi(\my) - w(\mbe) \\
				&\geq \adm(\mp[\mu,\nu]) + \Phi^+(\my\setminus \mp[\mu,\nu]) - w(\mbe)\\
				&\geq g\epsilon L_i 		+  |\mv(\my\setminus \mp[\mu,\nu])|L_{i-1} \qquad\mbox{(by Equation~\eqref{eq:p-vs-pe})} \\
				&\geq \Omega((|\mv(\mp[\mu,\nu])| + |\mv(\md)|)\epsilon^2 L_i) +   |\mv(\my\setminus \mp[\mu,\nu])|\epsilon L_{i} \\
				&= \Omega(|\mv(\mx)|\epsilon^2 L_i)
			\end{split}
		\end{equation}
	\end{itemize}
    where the penultimate inequality is due to Claim~\ref{clm:D-P-size}. In both cases, we have $\Delta^+_{i+1}(\mx) = \Omega(|\mv(\mx)|\epsilon^2 L_i)$ as claimed in Item (4) of \Cref{lm:Clustering-Step4}.

\paragraph{Acknowledgement.~}  Hung Le is supported by the NSF CAREER award CCF-2237288, the NSF grants CCF-2121952 and CCF-2517033, and a Google Research Scholar Award. Shay Solomon is funded by the European Union (ERC, DynOpt, 101043159). Views and opinions expressed are however those of the author(s) only and do not necessarily reflect those of the European Union or the European Research Council. Neither the European Union nor the granting authority can be held responsible for them. Shay Solomon is also funded by a grant from the United States-Israel Binational Science Foundation (BSF), Jerusalem, Israel, and the United States National Science Foundation (NSF). Shay Solomon was also funded by the Israel Science Foundation grant No.1991/19 when this work was done. We thank Oded Goldreich for his suggestions concerning the presentation of this work, and we thank Lazar Milenković for his support. We thank anonymous reviewers for their exceptionally thorough comments on the presentation of this paper. 

\bibliographystyle{plain}
\bibliography{spanner}
%\pagebreak
\appendix

\section{The Algebraic Computation Tree Model} \label{algapp}
 In this appendix, we give a brief description of the {\em algebraic computation tree (ACT)} model. 
 (Refer to \cite{BenOr83} and Chapter 3 in the book \cite{NS07} for a more detailed description.)
%(Refer to \cite{BenOr83,NS07} for a more detailed description.)

An ACT is a binary computation tree where each leaf is associated with an output and each internal node is either (\emph{i}) labeled with a variable $f_x$ determined by $f_x = a_1\circ a_2$ or $f_x = \sqrt{a_1}$ where $\circ \in \{+,-,\times,\div\}$ and each $a_i$, $i \in \{1,2\}$, is either a value of a proper ancestor of $x$, an input element or a constant in $\mathbb{R}$, or (\emph{ii}) labeled with a comparison $a \bowtie 0$, where $a$ is either a value of a proper ancestor of $x$, or an input element, and the left (resp. right) child is labeled with ``$\le$''  (resp. ``$>$''). An ACT tree $T$ corresponds to an algorithm $\mathcal{A}_T$, which traverses a path down the tree starting at the root and either (\emph{i}) evaluates $f_x$ if the node has one child, or (\emph{ii}) selects left or right child depending on the outcome of the comparison. When $\mathcal{A}_T$ reaches the leaf, it evaluates the expression by replacing all the variables with the input values and terminates. It is required that no input lead to undefined behavior, such as division by 0, or taking a square root of a negative number; furthermore, for each leaf $w$ there should be an input on which $\mathcal{A}_T$ terminates in $w$. 
A problem $\mathcal{P}$ is solvable in the ACT model if there exists an ACT tree $T$ such that for any valid input for $\mathcal{P}$, $\mathcal{A}_T$ returns the value of $\mathcal{P}$ on that input. 

The ACT model is particularly convenient for proving lower bounds, as it represents explicitly all possible execution paths of an algorithm. For most algorithmic purposes, one can alternatively consider the real RAM model; one can prove an equivalence between the two models, which holds up to some subtle issues of uniformity that lie outside the scope of the current paper.

\end{document}